\title{Hiding pebbles when the output alphabet is unary}
\author{Gaëtan Douéneau-Tabot}{Université Paris Cité, CNRS, IRIF, F-75013, Paris, France
\and Direction générale de l'armement - Ingénierie des projets, Paris, France
}{doueneau@irif.fr}{}{}
\authorrunning{G. Douéneau-Tabot}
\keywords{polyregular functions, pebble transducers, rational series, factorization forests,
Cauchy product, Hadamard product}
\newtheorem{sublemma}[theorem]{\bfseries Sublemma}
\newtheorem{depro}[theorem]{\bfseries Proposition-Definition}
\newcommand{\mb}[1]{\mathbb{#1}}
\newcommand{\mc}[1]{\mathcal{#1}}
\newcommand{\mf}[1]{\mathfrak{#1}}
\newcommand{\vide}{\varepsilon}
\newcommand{\defined}{\coloneqq}
\renewcommand{\phi}{\varphi}
\renewcommand{\epsilon}{\textcolor{green}{\varepsilon}}
\newcommand{\Nat}{\mb{N}}
\newcommand{\Rel}{\mb{Z}}
\newcommand{\myBlue}{blue!90}
\newcommand{\tred}[1]{\textcolor{\myBlue}{#1}}
\newcommand{\ce}{\mc{L}}
\newcommand{\de}{\mc{R}}
\newcommand{\Le}{\mb{T}}
\newcommand{\hei}{{3|M|}}
\newcommand{{\lmark}}{\vdash}
\newcommand{{\rmark}}{\dashv}
\newcommand{\con}{{\ddagger}}
\newcommand{\indic}[1]{\chi_{#1}}
\newcommand{\pre}[1]{{\widetilde{#1}}}
\newcommand{\ov}[1]{{\overline{#1}}}
\newcommand{\apar}[1]{{\widehat{#1}}}
\newcommand{\facto}[2]{\operatorname{\normalfont \textsf{Fact}}_{#1}(#2)}
\newcommand{\cro}[1]{ \tred{\bm{\llbracket} #1\bm{\rrbracket}}}
\newcommand{\lefttree}{\langle}
\newcommand{\righttree}{\rangle}
\newcommand{\tree}[1]{\lefttree #1 \righttree}
\newcommand{\marq}[1]{\underline{#1}}
\newcommand{\neutral}{1}
\newcommand{\deutral}{{-}1}
\newcommand{\nod}{\mf{t}}
\newcommand{\Nod}{\mf{T}}
\newcommand{\forest}{\mc{F}}
\newcommand{\gorest}{\mc{G}}
\newcommand{\set}[1]{\{#1\}}
\newcommand{\lin}[2]{\operatorname{\normalfont\textsf{lin}}^{#1}(#2)}
\newcommand{\multi}[1]{\{\!\!\{#1\}\!\!\}}
\newcommand{\Archs}[2]{\operatorname{\normalfont\textsf{Arcs}}_{#1}^{#2}}
\newcommand{\tnorm}[1]{\text{\normalfont {#1}}}
\newcommand{\pc}{\prec}
\newcommand{\arch}[2]{\operatorname{\normalfont\textsf{arc}}^{#1}(#2)}
\newcommand{\shape}[1]{\operatorname{\normalfont\textsf{shape}}(#1)}
\newcommand{\Shapes}{\operatorname{\normalfont\textsf{Shapes}}}
\newcommand{\Indep}{\operatorname{\normalfont\textsf{Ind}}}
\newcommand{\Indeps}[1]{\Indep^k(#1)}
\newcommand{\Deps}[1]{\operatorname{\normalfont\textsf{Dep}}^k(#1)}
\newcommand{\sumd}[1]{\operatorname{\normalfont\textsf{sum-dep}}_{#1}}
\newcommand{\sumi}[1]{\operatorname{\normalfont\textsf{sum-ind}}_{#1}}
\newcommand{\fr}[2]{\operatorname{\normalfont \textsf{Fr}}_{#2}(#1)}
\newcommand{\frm}[2]{\operatorname{\normalfont \textsf{Fr}}^{-1}_{#2}(#1)}
\newcommand{\Ske}[1]{\operatorname{\normalfont\textsf{Skel}}(#1)}
\newcommand{\depth}[2]{\operatorname{\normalfont\textsf{depth}}^{#1}(#2)}
\newcommand{\idem}[1]{{\eta_{#1}}}
\newcommand{\rod}[3]{\operatorname{\normalfont \textsf{prod}}_{#1}^{#2}\left(#3\right)}
\newcommand{\prodd}{\operatorname{\normalfont \textsf{prod}}}
\newcommand{\Nodes}[1]{\operatorname{\normalfont \textsf{Nodes}}(#1)}
\newcommand{\Types}[2]{\operatorname{\normalfont \textsf{Types}}^{#1}(#2)}
\newcommand{\itera}[1]{\operatorname{\normalfont \textsf{Iters}}(#1)}
\newcommand{\lefe}{\operatorname{\normalfont \textsf{left}}}
\newcommand{\rige}{\operatorname{\normalfont \textsf{right}}}
\newcommand{\fact}[1]{\operatorname{\normalfont \textsf{forest}}_{#1}}
\newcommand{\Facts}[3]{\operatorname{\normalfont \textsf{Forests}}_{#1}^{#2}(#3)}
\newcommand*\circled[1]{\tikz[baseline=(char.base)]{
            \node[shape=circle,draw,inner sep=1pt] (char) {#1};}}
\newcommand{\had}{\mathop{\circled{\footnotesize $\operatorname{\normalfont\scalebox{0.6}{\textsf{H}}}$}}}
\newcommand{\cau}{\mathop{\circled{\footnotesize $\operatorname{\normalfont\scalebox{0.6}{\textsf{C}}}$}}}
\newcommand{\oras}{\mf{H}}
\newcommand{\exte}{\mf{h}}
\newcommand{\trans}{\mc{T}}
\newcommand{\mach}{\mc{M}}
\newcommand{\ar}{\mb{A}}
\newcommand{\br}{\mb{B}}
\newcommand{\cou}{\operatorname{\normalfont\textsf{count}}}
\newcommand{\nb}{\operatorname{\normalfont\textsf{nb}}}
\newcommand{\itpow}[1]{{#1}\operatorname{\textsf{-powers}}}
\renewcommand{\implies}{\Rightarrow}
\begin{document}
\maketitle

\begin{abstract} 
Pebble transducers are nested two-way transducers
which can drop marks (named ``pebbles'')  on their input word.
 Blind transducers have been introduced by Nguy{\^{e}}n et al.
 as a subclass of pebble transducers, which can nest two-way transducers
 but cannot drop pebbles on their input.
 
 In this paper, we study the classes of functions
 computed by pebble and blind transducers,
 when the output alphabet is unary.
 Our main result shows how to decide if a function computed
  by a pebble transducer can be computed by a blind
  transducer. We also provide characterizations
  of these classes in terms of Cauchy and Hadamard products,
  in the spirit of rational series. Furthermore, pumping-like characterizations
  of the functions computed by blind transducers are given.
  \end{abstract}



\section{Introduction}

Transducers are finite-state machines obtained by adding outputs
 to finite automata. In this paper, we assume that
these machines are always deterministic and have finite inputs, hence they compute
functions from finite words to finite words. In particular,
a \emph{deterministic two-way transducer} consists of a two-way automaton
which can produce outputs. This model
computes the class of \emph{regular functions}, which is often considered
as one of the functional counterparts of \emph{regular languages}. It has been
largely studied for its numerous regular-like properties:
closure under composition~\cite{chytil1977serial}, equivalence
with logical transductions~\cite{engelfriet2001mso} or
regular transducer expressions~\cite{dave2018regular}, decidable
equivalence problem~\cite{gurari1982equivalence}, etc.

\subparagraph*{Pebble transducers and blind transducers.}
Two-way transducers can only describe functions
whose output size is at most linear in the input size.
A possible solution to overcome this limitation
is to consider nested two-way transducers. In particular,
the nested model of \emph{pebble transducers} has been studied
for a long time (see e.g. \cite{globerman1996complexity,engelfriet2015two}).

A \emph{$k$-pebble transducer} is built by induction
on $k \ge 1$. For $k=1$, a $1$-pebble transducer
is just a two-way transducer. For $k \ge 2$, a $k$-pebble transducer
is a two-way transducer that, when 
on any position $i$ of its input word, can launch
a $(k{-}1)$-pebble transducer. This submachine works
on the original input where position $i$ is marked by a ``pebble''.
The original two-way transducer then outputs the concatenation
of all the outputs returned by the submachines
that it has launched along its computation.
The intuitive behavior of a $3$-pebble transducer is
depicted in Figure~\ref{fig:blind}. It can be seen
as program with $3$ nested  loops.
The class of word-to-word functions computed
by $k$-pebble transducers for some $k \ge 1$
is known as \emph{polyregular functions}.
It has been quite intensively studied over the past
few years due to its regular-like properties such as
closure under composition~\cite{engelfriet2015two},
equivalence with logical interpretations~\cite{bojanczyk2019string}
or other transducer models~\cite{bojanczyk2018polyregular}, etc.

\begin{figure}[h!]

\centering
\begin{tikzpicture}{scale=1}

	\newcommand{\couleur}{blue}
	\newcommand{\texte}{\small \bfseries \sffamily \mathversion{bold} }

	\draw (-0.25,4.2) rectangle (8.25,4.5);
	\node[above] at (6,4.1) {$\substack{\text{Input word}}$};
	\node[above] at (0.05,4.1) {$\lmark$};
	\node[above] at (7.95,4.1) {$\rmark$};

	\node[above] at (-1.7,3.55) {\textcolor{\couleur}{\texte Main machine}};
	\draw[-,thick,\couleur](0,4) -- (7,4);
	\draw[-,thick, \couleur] (7,4) arc (90:-90:0.1);
	\draw[-,thick,\couleur](7,3.8) -- (1.5,3.8);	
	\draw[-, thick,\couleur] (1.5,3.8) arc (90:270:0.1);
	\draw[-,thick,\couleur](1.5,3.6) -- (2.5,3.6);	
	\fill[fill = \couleur,even odd rule] (2.5,3.6) circle (0.08);


	\draw (-0.25,3) rectangle (8.25,3.3);
	\node[above] at (6,2.9) {$\substack{\text{Input word}}$};
	\node[above] at (0.05,2.9) {$\lmark$};
	\node[above] at (7.95,2.9) {$\rmark$};
	\fill[fill = \couleur] (2.5,3.15) circle (0.15);

	\node[above] at (-2.5,2.35) {\textcolor{red!60}{\texte Submachine launched in $\textcolor{blue}{\bullet}$} };
	\draw[-,thick,red!60](0,2.8) -- (3,2.8);
	\draw[-,thick, red!60] (3,2.8) arc (90:-90:0.1);
	\draw[-,thick,red!60](3,2.6) -- (1,2.6);	
	\draw[-, thick,red!60] (1,2.6) arc (90:270:0.1);
	\draw[-,thick,red!60](1,2.4) -- (5,2.4);	
	\fill[fill = red!60,even odd rule] (5,2.4) circle (0.08);

        \draw[->,very thick,\couleur,dashed](2.5,3.6) to[out= -120, in = 60] (0,2.85);	
	
	\node[above,\couleur] at (2.5,2.67) {$\substack{\text{pebble}}$};


	\draw (-0.25,1.8) rectangle (8.25,2.1);
	\node[above] at (6,1.7) {$\substack{\text{Input word}}$};
	\node[above] at (0.05,1.7) {$\lmark$};
	\node[above] at (7.95,1.7) {$\rmark$};
	\fill[fill = red!60] (5,1.95) circle (0.15);
	\fill[fill = \couleur] (2.5,1.95) circle (0.15);

	\node[above] at (-2.74,1.15) {\textcolor{brown}
	{\texte Subsubmachine launched in $\textcolor{red!60}{\bullet}$} };
	\draw[-,thick,brown](0,1.6) -- (1,1.6);
	\draw[-,thick, brown](1,1.6) arc (90:-90:0.1);
	\draw[-,thick,brown] (1,1.4) -- (0.5,1.4);	
	\draw[-, thick,brown] (0.5,1.4) arc (90:270:0.1);
	\draw[-,thick,brown] (0.5,1.2) -- (6,1.2);	
	\fill[fill = brown,even odd rule] (6,1.2) circle (0.08);

        \draw[->,very thick,red!60,dashed](5,2.4) to[out= -150, in = 40] (0,1.65);	
	
	\node[above,red!60] at (5,1.47) {$\substack{\text{pebble}}$};
	\node[above,\couleur] at (2.5,1.47) {$\substack{\text{pebble}}$};

\end{tikzpicture}

\caption{\label{fig:pebble} Behavior of a $3$-pebble transducer}

\end{figure}
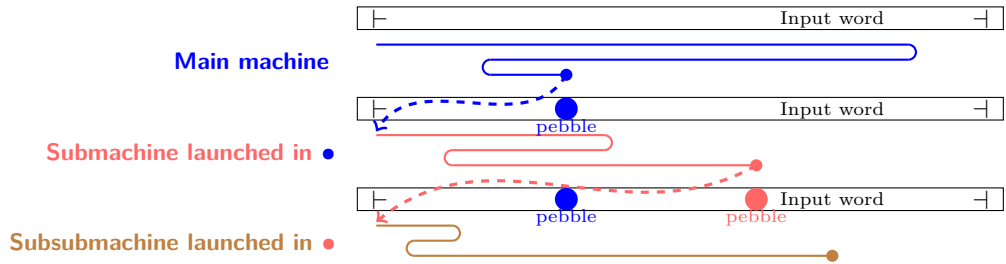

A subclass of pebble transducers named 
\emph{blind transducers} was recently
introduced in~\cite{nous2020comparison}.
For $k=1$, a $1$-blind transducer
is just a two-way transducer. For $k \ge 2$, a $k$-blind transducer
is a two-way transducer that can launch a
$(k{-}1)$-blind transducer like a $k$-pebble transducer.
However, there is no pebble marking the input of the submachine
 (i.e. it cannot see the position $i$ from which it was called).
 The behavior of a $3$-blind transducer is depicted in Figure~\ref{fig:blind}.
 It can be seen as a program with $3$ nested loops
 which cannot see the upper loop indexes.
We call \emph{polyblind functions} the class
of functions computed by blind transducers.
It is closed under composition
and deeply related to lambda-calculus~\cite{nguyen2021implicit}.

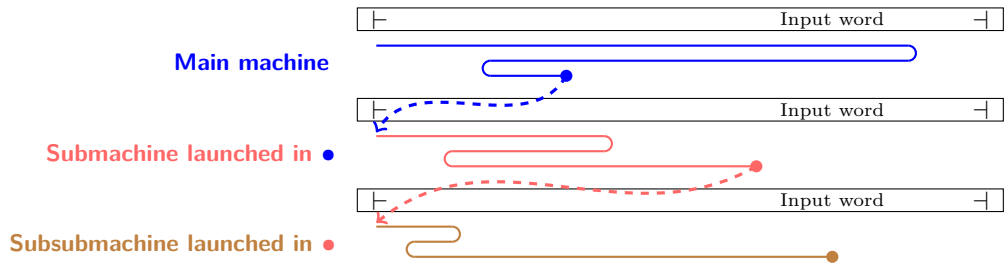
\begin{figure}[h!]

\centering
\begin{tikzpicture}{scale=1}

	\newcommand{\couleur}{blue}
	\newcommand{\texte}{\small \bfseries \sffamily \mathversion{bold} }

	
	\draw (-0.25,4.2) rectangle (8.25,4.5);
	\node[above] at (6,4.1) {$\substack{\text{Input word}}$};
	\node[above] at (0.05,4.1) {$\lmark$};
	\node[above] at (7.95,4.1) {$\rmark$};

	\node[above] at (-1.7,3.55) {\textcolor{\couleur}{\texte Main machine}};
	\draw[-,thick,\couleur](0,4) -- (7,4);
	\draw[-,thick, \couleur] (7,4) arc (90:-90:0.1);
	\draw[-,thick,\couleur](7,3.8) -- (1.5,3.8);	
	\draw[-, thick,\couleur] (1.5,3.8) arc (90:270:0.1);
	\draw[-,thick,\couleur](1.5,3.6) -- (2.5,3.6);	
	\fill[fill = \couleur,even odd rule] (2.5,3.6) circle (0.08);

	\draw (-0.25,3) rectangle (8.25,3.3);
	\node[above] at (6,2.9) {$\substack{\text{Input word}}$};
	\node[above] at (0.05,2.9) {$\lmark$};
	\node[above] at (7.95,2.9) {$\rmark$};

	\node[above] at (-2.5,2.35) {\textcolor{red!60}{\texte Submachine launched in $\textcolor{blue}{\bullet}$}};
	\draw[-,thick,red!60](0,2.8) -- (3,2.8);
	\draw[-,thick, red!60] (3,2.8) arc (90:-90:0.1);
	\draw[-,thick,red!60](3,2.6) -- (1,2.6);	
	\draw[-, thick,red!60] (1,2.6) arc (90:270:0.1);
	\draw[-,thick,red!60](1,2.4) -- (5,2.4);		
	\fill[fill = red!60,even odd rule] (5,2.4) circle (0.08);

        \draw[->,very thick,\couleur,dashed](2.5,3.6) to[out= -120, in = 60] (0,2.85);	
        

	\draw (-0.25,1.8) rectangle (8.25,2.1);
	\node[above] at (6,1.7) {$\substack{\text{Input word}}$};
	\node[above] at (0.05,1.7) {$\lmark$};
	\node[above] at (7.95,1.7) {$\rmark$};

	\node[above] at (-2.74,1.15) {\textcolor{brown}
	{\texte Subsubmachine launched in $\textcolor{red!60}{\bullet}$} };
	\draw[-,thick,brown](0,1.6) -- (1,1.6);
	\draw[-,thick, brown](1,1.6) arc (90:-90:0.1);
	\draw[-,thick,brown] (1,1.4) -- (0.5,1.4);	
	\draw[-, thick,brown] (0.5,1.4) arc (90:270:0.1);
	\draw[-,thick,brown] (0.5,1.2) -- (6,1.2);	
	\fill[fill = brown,even odd rule] (6,1.2) circle (0.08);

        \draw[->,very thick,red!60,dashed](5,2.4) to[out= -150, in = 40] (0,1.65);

\end{tikzpicture}
\caption{\label{fig:blind}  Behavior of a $3$-blind transducer}
\end{figure}

We study here polyregular and polyblind functions 
whose output alphabet is unary. Up to identifying
a word with its length, we thus consider 
 functions from finite words to $\Nat$.
 With this restriction, we show that one can decide if a polyregular function
 is polyblind, and connect
 these classes of functions to \emph{rational series}.

\subparagraph*{Relationship with rational series.}
Rational series over the semiring $(\Nat,+, \times)$.
are a well-studied class of
 functions from finite words to $\Nat$.
 They can be defined as the closure
 of (unary output) regular functions under sum $+$, Cauchy product
 $\cau$ (product for formal power series) and Kleene star $*$
(iteration of Cauchy products). It is also
 well known that rational series are closed under Hadamard product $\had$
 (component-wise product) \cite{berstel2011noncommutative}.

The first result of this paper states that polyregular functions
 are exactly the subclass of rational series ``without star'',
 that is the closure of regular functions under $+$ and $\cau$
($\had$ can also be used but it is not necessary).
This theorem is obtained by combining several former works.
Our second result establishes that polyblind
functions are exactly the closure of regular functions under $+$ and $\had$.
It is shown in a self-contained way.

The aforementioned classes are depicted in Figure~\ref{fig:conclu}.
All the inclusions are strict and this paper
provides a few separating examples (some of
them were already known in \cite{doueneau2021pebble}).

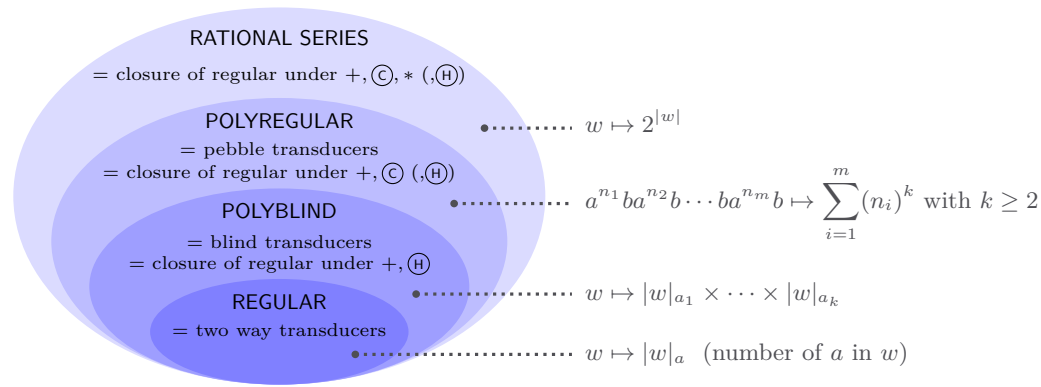
\begin{figure}[h!]

    	\begin{center}
	\hspace*{-0.4cm}
        \begin{tikzpicture}{scale=0.9}
        
            \fill[fill = blue!14]  (0,1.8) ellipse (3.5cm and 2.5cm);
            \fill[fill = blue!26]  (0,1.2) ellipse (3cm and 1.9cm);
            \fill[fill = blue!38]  (0,0.6) ellipse (2.5cm and 1.3cm);
            \fill[fill = blue!50] (0,0) ellipse (1.7cm and 0.7cm);

            \draw(0,3.9) node  {\scalebox{0.8}{$\textsf{RATIONAL SERIES}$}};
            \draw(0,3.4) node {$\substack{\text{= closure
            of regular under $+, \cau, *$ (,$\had$)}}$};

            \draw(0,0.4) node  {\scalebox{0.8}{$\textsf{REGULAR}$}};
            \draw(0,0) node {$\substack{\text{= two way transducers}}$};
            
          \draw(0,1.6) node  {\scalebox{0.8}{$\textsf{POLYBLIND}$}};     
           \draw(0,1.2) node {$\substack{\text{= blind
            transducers}}$};   
            \draw(0,0.9) node {$\substack{\text{= closure of
            regular under $+, \had$}}$};
            
            \draw(0,2.8) node  {\scalebox{0.8}{$\textsf{POLYREGULAR}$}};
            \draw(0,2.4) node {$\substack{\text{= pebble
            transducers}}$};    
            \draw(0,2.1) node {$\substack{\text{= closure
            of regular under $+, \cau$ (,$\had$)}}$};

             \draw[darkgray,dotted,very thick] (1,-0.3) -- (3.85,-0.3);
             \fill[darkgray] {(1,-0.3) circle (0.05)};
             \node[right] at (3.9,-0.3) {\small \textcolor{darkgray}{$w \mapsto |w|_a$~~(number of $a$ in $w$)}};

             \draw[darkgray,dotted,very thick] (1.8,0.5) -- (3.85,0.5);
             \fill[darkgray] {(1.8,0.5) circle (0.05)};
             \node[right] at (3.9,0.5) {\small \textcolor{darkgray}{$w \mapsto |w|_{a_1} \times
             \cdots \times  |w|_{a_k}$}};

             \draw[darkgray,dotted,very thick] (2.3,1.7) -- (3.85,1.7);
             \fill[darkgray] {(2.3,1.7) circle (0.05)};
             \node[right] at (3.9,1.7) {\small \textcolor{darkgray}{$\displaystyle a^{n_1} b a^{n_2} b \cdots  b a^{n_m} b \mapsto \sum_{i=1}^m  (n_i)^k$ with $k \ge 2$}};

             \draw[darkgray,dotted,very thick] (2.7,2.7) -- (3.85,2.7);
             \fill[darkgray] {(2.7,2.7) circle (0.05)};
             \node[right] at (3.9,2.8) {\small \textcolor{darkgray}{$w \mapsto 2^{|w|}$}};

             
        \end{tikzpicture}
        \end{center}
        
    \caption{\label{fig:conclu} Classes of functions from finite words to $ \Nat$ studied in this paper}
\end{figure}

\subparagraph*{Class membership problems.}
We finally show how to decide whether
a polyregular function with unary output is polyblind. It is by far the most
involved and technical result of this paper. Furthermore,
the construction of a blind transducer is effective, hence
this result can be viewed as program
optimization. Indeed, given a program with nested loops,
our algorithm is able to build an equivalent program using ``blind'' loops
if it exists.

In general, decision problems for transductions
are quite difficult to solve, since contrary to regular languages,
there are no known ``canonical'' objects (such as a minimal automaton)
to represent (poly)regular functions. It is thus complex to
decide an intrinsic property of a function,
since it can be described in several seemingly unrelated
manners. Nevertheless, the membership problem from rational series to
polyregular (resp. regular) functions was shown to be decidable
in  \cite{doueneau2020register, doueneau2021pebble}.
It is in fact equivalent to checking if the output of the rational series
is bounded by a polynomial (resp. a linear function)
in its input's length.

However, both polyregular and polyblind functions can have
polynomial growth rates. To discriminate between them,
we thus introduce the new notion of \emph{repetitiveness}
(which is a pumping-like property
for functions) and show that it exactly
captures the polyregular functions
that are polyblind. The proof is a rather complex induction
on the depth $k \ge 1$ of the $k$-pebble transducer representing
the function. We show at the same time that repetitiveness
is decidable and that a blind transducer can effectively be built
whenever this property holds.
Partial results were obtained in \cite{doueneau2021pebble}
to decide ``blindness'' of the functions computed by
$2$-pebble transducers.
Some of our tools are inspired by this paper,
such as the use of bimachines and factorization forests.
Nevertheless, our general result requires new proof techniques (e.g. the induction
techniques which insulate the term of
``highest growth rate'' in the function) and concepts
(e.g. repetitiveness).

\subparagraph*{Outline.}
We first describe in Section~\ref{sec:prelim}
the notions of pebble and blind transducers.
In the case of unary outputs, we recall the equivalent
models of pebble, marble and blind bimachines introduced in
\cite{doueneau2021pebble}. These bimachines 
are easier to handle in the proofs, since they manipulate 
a monoid morphism instead of having two-way moves.
In Section~\ref{sec:rational} we recall the definitions
of sum $+$, Cauchy product $\cau$, Hadamard product
$\had$ and Kleene star $*$ for rational series. We
then show how to describe polyregular and polyblind functions
with these operations. Finally, we claim in Section~\ref{sec:membership}
that the membership problem from polyregular to polyblind
functions is decidable. The proof of this technical result
is sketched in sections~\ref{sec:tech1} and~\ref{sec:tech2}.
Due to space constraints we focus on
the most significant lemmas.

\section{Preliminaries}

\label{sec:prelim}

$\Nat$ is the set of nonnegative integers.
If $ i \le j$, the set $[i{:}j]$ is $\{i, i{+}1, \dots, j\} \subseteq \Nat$
(empty if $j <i$). The capital letter $A$ denotes an alphabet,
i.e. a finite set of letters.
The empty word is denoted by $\vide$. 
If $w \in A^*$, let $|w| \in \Nat$ be its length,
and for $1 \le i \le |w|$ let $w[i]$ be its $i$-th letter.
If $I = \{i_1 < \cdots < i_{\ell}\} \subseteq [1{:}|w|]$,
 let $w[I] \defined w[i_1] \cdots w[i_{\ell}]$.
If $a \in A$, let $|w|_a$ be the number of letters $a$ occurring in $w$. 
We assume that the reader is familiar with the basics of automata theory,
in particular one-way and two-way automata, and monoid morphisms.

\subparagraph*{Two-way transducers.}
A deterministic two-way transducer is a deterministic
two-way automaton (with input in $A^*$) enhanced with the ability to produce
outputs (from $B^*$) when performing a transition. The output of the transducer is
defined as the concatenation of these productions along the unique accepting
run on the input word (if it exists): it thus describes
a (partial) function  $A^* \rightarrow B^*$. Its behavior is 
depicted in Figure~\ref{fig:twotwo}.
A formal definition can be found e.g. in \cite{doueneau2021pebble}.
These machines compute the class of \emph{regular functions}.

\begin{figure}[h!]

\centering
\begin{tikzpicture}{scale=1}

	\newcommand{\couleur}{blue}
	\newcommand{\texte}{\small \bfseries \sffamily \mathversion{bold} }

	
	\draw (-0.25,4.2) rectangle (8.25,4.5);
	\node[above] at (6,4.1) {$\substack{\text{Input word}}$};
	\node[above] at (0.05,4.1) {$\lmark$};
	\node[above] at (7.95,4.1) {$\rmark$};

	\node[above] at (-1.7,3.55) {\textcolor{\couleur}{\texte Run of the machine}};
	\draw[-,thick,\couleur](0,4) -- (7,4);
	\draw[-,thick, \couleur] (7,4) arc (90:-90:0.1);
	\draw[-,thick,\couleur](7,3.8) -- (1.5,3.8);	
	\draw[-, thick,\couleur] (1.5,3.8) arc (90:270:0.1);
	\draw[-,thick,\couleur](1.5,3.6) -- (2.5,3.6);	
	\fill[fill = \couleur,even odd rule] (2.5,3.6) circle (0.08);

\end{tikzpicture}
\caption{\label{fig:twotwo}  Behavior of a two-way transducer}
\end{figure}
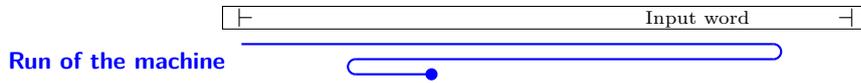

\begin{example}
The function $a_1 \cdots a_n \mapsto a_1 \cdots a_n\#a_n \cdots a_1$ can be
computed by a two-way transducer which reads its input
word from left to right and then from right to left.
\end{example}
From now on, the output alphabet $B$ of our machines will always
be a singleton. By identifying $B^*$ and $\Nat$, we assume that
the functions computed have type  $A^* \rightarrow \Nat$.
\begin{example} Given $a \in A$,
the function $\nb_{a}: A^* \rightarrow \Nat, w \mapsto |w|_a$
is regular.
\end{example}

\subparagraph*{Blind and pebble transducers.} Blind and pebble transducers
extend two-way transducers by allowing to ``nest'' such machines.
A $1$-blind (resp. $1$-pebble) transducer
is just a two-way transducer.
For $k \ge 2$, a $k$-blind (resp. $k$-pebble) transducer
is a two-way transducer which, when performing a transition
from a position $1 \le i \le |w|$ of its input $w \in A^*$, can launch
a $(k{-}1)$-blind (resp. $(k{-}1)$-pebble) transducer with input $w$
(resp. $w[1{:}i{-}1] \marq{w[i]} w[i{+}1{:}|w|]$
i.e. $w$ where position $i$ is marked).
The two-way transducer then uses the
output of this submachine as if it was the output produced
along its transition. The intuitive behaviors are depicted
for $k=3$ in figures~\ref{fig:pebble} and~\ref{fig:blind}.
Formal definitions can be found e.g.
 in~\cite{nous2020comparison,
doueneau2021pebble}.

\begin{example} \label{ex:nb} Let $a_1, \dots, a_k \in A$, then
$\nb_{a_1, \dots, a_k}: w \mapsto |w|_{a_1} \times \cdots \times |w|_{a_k}$
can be computed by a $k$-blind transducer. The main transducer
processes its input from left to right, and it calls inductively  a $(k{-}1)$-blind transducer
for  $\nb_{a_1, \dots, a_{k-1}}$ each time it it sees an $a_k$.
\end{example}

\begin{example} \label{ex:itpow} The function $\itpow{2}: a^{n_1} b \cdots  a^{n_m} b
\mapsto \sum_{i=1}^{m} n_i^2$ can be computed by a $2$-pebble transducer.
Its main transducer ranges over all the $a$ of the input,
and calls a $1$-pebble (= two-way) transducer for each $a$,
which produces $n_i$ if the $a$ is in the $i$-th block
(it uses the pebble to detect which block is concerned).
Similarily, the function $\itpow{k}: a^{n_1} b \cdots  a^{n_m} b \mapsto \sum_{i=1}^{m} n_i^k$
for $k \ge 1$ can be computed by a $k$-pebble transducer. 
\end{example}

\begin{definition}
We define the class of \emph{polyregular functions}
(resp. \emph{polyblind functions}) as the class of
functions computed by a $k$-pebble (resp. $k$-blind) transducer 
 for some $k \ge 1$.
\end{definition}
It is not hard to see that polyblind functions are a subclass of polyregular functions.
Indeed, a blind transducer is just a pebble transducer ``without pebbles''.

\subparagraph*{Bimachines.} 
In this paper, we shall describe formally the regular, polyregular and polyblind functions with
another computation model. A \emph{bimachine} is a transducer which makes
a single left-to-right pass on its input,
but it can use a morphism into a finite monoid to check regular properties
of the prefix (resp. suffix) ending  (resp. starting) in the current position.
This notion of bimachines enable us to easily use algebraic techniques
in the proofs, and in particular \emph{factorization forests}
over finite monoids.

\begin{definition}[] \label{def:bimachine-s}
A bimachine  $\mach \defined (A, M, \mu, \lambda)$ is:
\begin{itemize}
\item an input alphabet $A$, a finite monoid $M$ and a monoid morphism $\mu: A^* \rightarrow M$;
\item an output function $\lambda: M \times A \times M \rightarrow \Nat$.
\end{itemize}
\end{definition}
$\mach$ computes $f: A^* \rightarrow \Nat$
defined by
$
f(w) \defined \sum_{i=1}^{|w|} \lambda(\mu(w[1{:}i{-}1]), w[i], \mu(w[i{+}1{:}|w|]))
$
for $w \in A^*$.
The production of each term of this sum is depicted intuitively
in Figure~\ref{fig:bim-s}.

 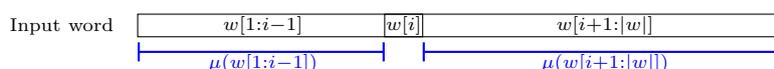
\begin{figure}[h!]

\centering
\begin{tikzpicture}{scale=1}

	\newcommand{\couleur}{blue}
	\newcommand{\texte}{\small \bfseries \sffamily \mathversion{bold} }

	\draw (-0.25,4.2) rectangle (8.25,4.5);
	\draw (3,4.2) rectangle (3.5,4.5);
	\node[above] at (-1.25,4.1) {$\substack{\text{Input word}}$};

	\draw[|-|,thick,\couleur](-0.25,4) -- (3,4);
	\draw[|-|,thick,\couleur](3.5,4) -- (8.25,4);	
	\node[above] at (3.25,4.105) {$\substack{w[i]}$};
	\node[above] at (1.4,4.105) {$\substack{w[1{:}i{-}1]}$};
	\node[above] at (5.9,4.105) {$\substack{w[i{+}1{:}|w|]}$};
	\node[above,\couleur] at (1.4,3.6) {$\substack{\mu(w[1{:}i{-}1])}$};
	\node[above,\couleur] at (5.9,3.6) {$\substack{\mu(w[i{+}1{:}|w|])}$};

\end{tikzpicture}

\caption{\label{fig:bim-s} Behavior of a bimachine when producing
$\lambda(\mu(w[1{:}i{-}1]), w[i], \mu(w[i{+}1{:}|w|))$}
\end{figure}

When dealing with bimachines, we consider without loss of generality total functions
such that $f(\vide) = 0$ (the domains of two-way transducers are
regular languages \cite{shepherdson1959reduction},
and the particular image of $\vide$ does not matter).
In this context, it is well known that regular
functions of type $A^* \rightarrow \Nat$ are exactly those computed by bimachines
(in other words, it means that regular functions 
and rational functions are the same).
Now we recall how \cite{doueneau2021pebble} has generalized this result to
$k$-blind and $k$-pebble transducers.
Intuitively, a \emph{bimachine with external functions} is a bimachine
enriched with the possibility to launch a submachine
for each letter of the input (it outputs the sum of all the outputs 
returned by these submachines).

\begin{definition}[\cite{doueneau2021pebble}] \label{def:bimachine-external}
	A \emph{bimachine with external pebble (}resp. \emph{external blind},
	resp. \emph{external marble)} \emph{functions} $\mach = (A,M,\mu,\oras,\lambda)$ consists of:
	\item
	\begin{itemize}
	\item an input alphabet $A$;
	\item a finite monoid $M$ and a morphism $\mu: A^* \rightarrow M$;
	\item a finite set $\oras$ of external functions $\exte: (A \uplus \marq{A})^* \rightarrow \Nat$
	(resp. $A^* \rightarrow \Nat$, resp. $A^* \rightarrow \Nat$);
	\item an output function  $\lambda: M\times A \times M \rightarrow \oras$.
	\end{itemize}
	\vspace*{0.2cm}
\end{definition}
\label{page:extei}
Given $1 \le i \le |w|$ a position of $w \in A^*$, let
 \mbox{$\exte_i \defined \lambda(\mu(w[1{:}{i{-}1}]),w[i],\mu(w[i{+}1{:}|w|])) \in \oras$}.
A bimachine $\mach$ with external pebble functions computes
a function $f: A^* \rightarrow \Nat$
defined by $f(w) \defined \sum_{1 \le i \le |w|} \exte_i(w[1{:}i{-}1]\marq{w[i]}w[i{+}1{:}|w|])$. 
Intuitively, it means that for each position $1 \le i \le |w|$, $\mach$ calls an external
function $\exte_i$ (depending on a regular
property of $w[1{:}i{-}1]$,$w[i]$ and $w[i{+}1{:}|w|]$)
with input $w[1{:}i{-}1]\marq{w[i]}w[i{+}1{:}|w|]$ (that is ``$w$ with a pebble
on position $i$''),
and uses the result $\exte_i(w[1{:}i{-}1]\marq{w[i]}w[i{+}1{:}|w|])$
of this function in its own output.

For a bimachine with external blind (resp. marble) functions,
the output is defined by $f(w) \defined \sum_{1 \le i \le |w|} \exte_i(w)$
(resp. $f(w) \defined \sum_{1 \le i \le |w|} \exte_i(w[1{:}i])$). In this case
$\mach$ calls $\exte_i$ with argument $w$ (resp. the prefix of $w$ ending in 
position $i$).

\begin{definition}[\cite{doueneau2021pebble}]
Given $k \ge 1$, a $k$-pebble (resp. $k$-blind, resp. $k$-marble) bimachine is:
\begin{itemize}
\item for $k = 1$, a bimachine (withtout external functions, see Definition~\ref{def:bimachine-s});
\item for $k \ge 2$, it a bimachine with external pebble (resp. external blind,
resp. external marble) functions
(see Definition~\ref{def:bimachine-external})
which are computed by $(k{-}1)$-pebble (resp. $(k{-}1)$-blind, resp. $(k{-}1)$-marble) bimachines.
These $(k{-}1)$-bimachines are implicitly fixed and given by the
external functions of the main bimachine.
\end{itemize}
\end{definition}

\begin{remark} For pebble bimachines,
a natural question is whether the inner bimachines can ask which
pebble was dropped by which ancestor, or whether they can only see
that ``there is a pebble''. Both models are in fact equivalent,
since the number of pebbles is bounded.
\end{remark}
Now we recall in Theorem~\ref{theo:equiv}
that pebble and blind bimachines are respectively equivalent to the
aforementioned pebble and blind transducers. More interestingly,
the marble bimachines (which call ``prefixes'')
also correspond to pebble transducers.
In our proofs, we shall preferentially use the
marble model to represent polyregular functions.

\begin{theorem}[\cite{doueneau2021pebble}] \label{theo:equiv}
For all $k \ge 1$, $k$-pebble transducers, $k$-pebble bimachines and
$k$-marble bimachines compute the same class
of functions. Furthermore, $k$-blind transducers
and $k$-blind bimachines compute the same class
of functions.
\end{theorem}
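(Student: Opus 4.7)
The proof proceeds by simultaneous induction on $k \ge 1$. For $k = 1$, all three bimachine variants collapse to the same object—a bimachine without external calls, since there is no level $k{-}1$ to descend to—and a $1$-pebble or $1$-blind transducer is just a two-way transducer; the claim at $k=1$ is therefore the classical equivalence between two-way transducers and bimachines with unary output. This is obtained via Shepherdson's crossing-sequence argument: the sequence of states in which a two-way run visits position $i$ is a regular function of $w[1{:}i{-}1]$ and $w[i{+}1{:}|w|]$, so the total output produced at $i$ can be computed by a bimachine.

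For the inductive step, assume the equivalences hold at level $k{-}1$, and consider the $k$-pebble (resp.\ $k$-blind) transducer versus bimachine equivalence. The direction from bimachine to transducer is routine: a two-way transducer first precomputes suffix monoid images by a right-to-left pass and stores them in its state, then scans left-to-right applying the output function and launching at each position the $(k{-}1)$-pebble (resp.\ $(k{-}1)$-blind) transducer provided by the induction hypothesis. The reverse direction uses the crossing-sequence argument on the outer two-way layer: at each position $i$ we aggregate the contributions of all visits, each contribution being the output of a $(k{-}1)$-level submachine which by IH is already available as a $(k{-}1)$-bimachine of the correct type.

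It remains to show that $k$-marble and $k$-pebble bimachines compute the same class. The inclusion marble $\subseteq$ pebble is straightforward: a marble call $\exte_i(w[1{:}i])$ is simulated by a $(k{-}1)$-pebble bimachine (obtained from the IH) that uses the pebble to detect position $i$ and contributes zero beyond it. The converse is the delicate direction. Given a $k$-pebble bimachine, its outer layer already knows $\mu(w[i{+}1{:}|w|])$ through its $M$-component, so we may have $\lambda$ select a different $(k{-}1)$-pebble function for each monoid element. For each fixed $m \in M$, the $(k{-}1)$-pebble call on the marked word is, by IH, a $(k{-}1)$-marble bimachine, and all its nested descendants inspect only prefixes of their input; one then shows that once $m$ is fixed, every descendant that would otherwise have crossed into the suffix can be replayed symbolically using $m$, so the entire computation factors through $w[1{:}i]$. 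This yields a $k$-marble bimachine.

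The main obstacle is precisely this pebble-to-marble absorption. It works only because the simultaneous induction provides, at each level, the fact that pebble submachines are already marble: nested sub-calls then only ever look at prefixes, and can be redirected to prefixes of the appropriate enlarged input rather than to the full marked word. Without the interplay between the pebble and marble equivalences at consecutive levels, this last step does not go through.
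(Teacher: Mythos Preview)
The paper does not prove this theorem: it is quoted verbatim as a result of \cite{doueneau2021pebble}, with no argument supplied. There is thus nothing in the present paper to compare your proposal against.

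As to the content of your sketch: the overall strategy (induction on $k$, crossing-sequence reduction of the outer two-way layer, and absorption of the suffix into the monoid element for the pebble-to-marble direction) is the right one and matches the approach of \cite{doueneau2021pebble}. However, your last two paragraphs hand-wave the step that carries all the weight. When the outer layer calls a $(k{-}1)$-pebble submachine on $w[1{:}i{-}1]\marq{w[i]}w[i{+}1{:}|w|]$, applying the induction hypothesis gives a $(k{-}1)$-marble bimachine \emph{on that marked word}, whose own nested calls take prefixes of the marked word, not prefixes of $w$; in particular they may extend past position $i$. Saying that ``every descendant that would otherwise have crossed into the suffix can be replayed symbolically using $m$'' is the whole theorem, not an observation: one must split each nested production over a tuple of positions into the part lying in $[1{:}i]$ and the part lying in $[i{+}1{:}|w|]$, and argue that the latter contribution is determined by $m = \mu(w[i{+}1{:}|w|])$ together with the monoid images of the intermediate pieces. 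This decomposition has to be threaded through all $k{-}1$ inner levels simultaneously, which is why the simultaneous induction is needed, but your text does not actually perform it. As written, the final step is a restatement of what is to be proved rather than an argument.
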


\section{From pebbles to rational series}

\label{sec:rational}

The class of \emph{rational series} (which are
total functions $A^* \rightarrow \Nat$)
is the closure of regular functions
under sum, Cauchy product and Kleene star.
It is well known that it can be described by \emph{weighted automata},
furthermore this class is closed under Hadamard product (see e.g.
\cite[Theorem 5.5]{berstel2011noncommutative}).
Let us recall the definition of these operations for $f,g: A^* \rightarrow \Nat$:
\begin{itemize}
\item the sum $f+g: w \mapsto f(w) + g(w)$;
\item the Cauchy product $ f \cau g:w \mapsto \sum_{i=0}^{|w|} f(w[1{:}i])g(w[i{+}1{:}|w|])$;
\item the Hadamard product  $f \had g: w \mapsto f(w) g(w)$;
\item if and only if $f(\vide) = 0$, the Kleene star
$ f^* \defined \sum_{n \ge 0} f^n$ where
$f^{0}: \vide \mapsto 1, u \neq \vide \mapsto 0$
is neutral for Cauchy product
and $f^{n+1} \defined f \cau f^{n}$.
\end{itemize}

\begin{example} In  Example~\ref{ex:nb} we have
$\nb_{a_1, \dots, a_k} = \nb_{a_1} \had \cdots \had \nb_{a_k}$.
\end{example}

\begin{example} Let $f,g: \{a,b\}^* \rightarrow \Nat$ defined
by $f(wa) = 2$ for $w \in A^*$ and $f(w) = 0$ otherwise;
 $g(a^n b w) = n$ for $w \in A^*$ and $g(w) = 0$ otherwise.
It is easy to see that $f \cau g(a^{n_1} b \cdots  a^{n_m} b)
= \sum_{i=1}^{m} n_i(n_i{-}1)$.
Hence $\itpow{2} = \nb_{a} + f \cau g$
(see Example~\ref{ex:itpow}).
\end{example}
We are now ready to state the first main results of this paper.
The first one shows that polyregular functions
correspond to the subclass of rational series where the use
of star is disallowed (and it also corresponds to rational
series whose ``growth'' is polynomial).

\begin{theorem} \label{theo:cauchy}
Let $f:A^* \rightarrow \Nat$, the following conditions are (effectively) equivalent:
\begin{enumerate}
\item \label{it:pr:poly} $f$ is polyregular;
\item   \label{it:pr:rat} $f$ is a rational series
with polynomial growth, i.e. $f(w) = \mc{O}(|w|^k)$ for some $k \ge 1$;
\item \label{it:pr:clos} $f$ belongs to
the closure of regular functions under sum and Cauchy product;
\item  \label{it:pr:closh}  $f$ belongs to 
the closure of regular functions under sum,
Cauchy and Hadamard products.
\end{enumerate}
\end{theorem}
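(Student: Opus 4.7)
The plan is to establish the equivalences via the cycle $(1) \Rightarrow (3) \Rightarrow (4) \Rightarrow (2) \Rightarrow (1)$. The inclusion $(3) \Rightarrow (4)$ is immediate, and $(2) \Rightarrow (1)$ is already known from Douéneau-Tabot~\cite{doueneau2020register, doueneau2021pebble}, where rational series with polynomial growth are shown to coincide with polyregular functions of unary output. For $(4) \Rightarrow (2)$ I would observe that every regular function $A^* \to \Nat$ is a rational series (a unary-output two-way transducer is a special case of a weighted automaton over $(\Nat,+,\times)$), and that rational series are closed under $+$, $\cau$ and $\had$ by Schützenberger's theorem~\cite{berstel2011noncommutative}. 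Polynomial growth is then preserved inductively over the expression: regular functions have linear growth, $f \cau g$ has growth $\mathcal{O}(|w|^{d_f + d_g + 1})$, and $f \had g$ has growth $\mathcal{O}(|w|^{d_f + d_g})$.

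The technical heart is $(1) \Rightarrow (3)$. By Theorem~\ref{theo:equiv} it suffices to show by induction on the depth $k \ge 1$ of a $k$-marble bimachine that the computed function belongs to the closure of regular functions under $+$ and $\cau$. The base case $k=1$ is direct. For the inductive step, a $(k{+}1)$-marble bimachine computes
\[
f(w) \;=\; \sum_{i=1}^{|w|} h_{\tau_i}(w[1{:}i]),\qquad \tau_i \defined (\mu(w[1{:}i{-}1]),\, w[i],\, \mu(w[i{+}1{:}|w|])),
\]
where $h_\tau$ ranges over finitely many $k$-marble bimachine functions. I would split $f = \sum_\tau S_\tau$ according to the value of $\tau = (m_1,a,m_2)$, and apply the induction hypothesis to each $h_\tau = \sum_j g_{j,1} \cau \cdots \cau g_{j,k}$, reducing the problem to expressing each $S_\tau$ as a finite sum of Cauchy products of regular functions.

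The main obstacle is to fold the triple indicator $[\tau_i = \tau]$ into the Cauchy-product structure without using Hadamard. I would achieve this by refining each regular factor $g_{j,\ell}$ according to its $\mu$-value, setting $g_{j,\ell}^{(m)}(v) \defined g_{j,\ell}(v) \cdot [\mu(v)=m]$, which remains regular as regular functions are closed under multiplication by regular indicators. The left-context constraint $\mu(w[1{:}i{-}1]) = m_1$ then becomes a finite disjunction over decompositions $m^{(1)} \cdots m^{(k)} = m_1$ in $M$, and the letter constraint $w[i]=a$ is absorbed into an extra refinement of the last factor $g_{j,k}$. The right-context constraint $\mu(w[i{+}1{:}|w|]) = m_2$ is itself a regular indicator, so it becomes an additional $(k{+}1)$-th Cauchy factor. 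This exhibits $S_\tau$ as a finite sum of $(k{+}1)$-fold iterated Cauchy products of regular functions, closing the induction. Since every step is constructive, the four conditions are effectively equivalent.
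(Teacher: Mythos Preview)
Your argument is correct, and your handling of $(3)\Rightarrow(4)$, $(4)\Rightarrow(2)$, $(2)\Rightarrow(1)$ matches the paper's (the paper is slightly terser for $(4)\Rightarrow(2)$, leaving implicit that regular functions are rational series and that this class is closed under $+,\cau,\had$, but you spell this out, which is fine).

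The genuine difference is in how you reach condition~(3). The paper does \emph{not} prove $(1)\Rightarrow(3)$ directly; instead it goes $(2)\Rightarrow(3)$ by invoking a known fact from \cite[Chapter~9, Exercise~1.2]{berstel2011noncommutative}: every $\Nat$-rational series of polynomial growth is a finite sum of Cauchy products of characteristic series of regular languages (hence of regular functions). This is a one-line citation. Your route is a hands-on induction on the depth of a marble bimachine: you unfold the outermost layer, push the triple $(\mu(w[1{:}i{-}1]),w[i],\mu(w[i{+}1{:}|w|]))$ into refined regular factors (using that $g\cdot[\mu(\cdot)=m]$ stays regular), and append the right-context indicator as an extra Cauchy factor. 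This is more work but more self-contained and yields an explicit inductive construction of the Cauchy expression from the bimachine, without appealing to the structure theory of rational series. Two minor points worth tightening in a final write-up: the ``$k$'' in your decomposition $h_\tau=\sum_j g_{j,1}\cau\cdots\cau g_{j,k}$ need not coincide with the marble depth (the induction hypothesis only gives \emph{some} finite sum of iterated Cauchy products), and the phrase ``a unary-output two-way transducer is a special case of a weighted automaton'' is not literally true---what you need, and what holds, is that unary-output regular functions are $\Nat$-rational series.
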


\begin{proof}[Proof sketch.] $\ref{it:pr:poly} \Leftrightarrow~\ref{it:pr:rat}$ follows from
\cite{doueneau2020register,doueneau2021pebble}.
For $\ref{it:pr:rat} \implies~\ref{it:pr:clos}$,
it is shown in \cite[Exercise~1.2 of Chapter~9]{berstel2011noncommutative}
that the rational series of polynomial growth can be obtained
by sum and Cauchy products from the characteristic series
of rational languages (which are regular functions).
Having $\ref{it:pr:clos} \implies~\ref{it:pr:closh}$ is obvious.
Finally for $\ref{it:pr:closh} \implies~\ref{it:pr:rat}$, it
suffices to note that regular functions have polynomial growth, 
and this property is preserved by $+, \cau$ and $\had$.
\end{proof}
Note that Hadamard products do not increase
the expressive power of this class. However,
removing Cauchy products gives polyblind functions,
as shown in Theorem~\ref{theo:hadamard}.

\begin{theorem} \label{theo:hadamard}
Let $f:A^* \rightarrow \Nat$, the following conditions are equivalent:
\begin{enumerate}
\item \label{bd:it:poly} $f$ is polyblind;
\item \label{bd:it:clos} $f$ belongs to 
the closure of regular functions under sum
and Hadamard product.
\end{enumerate}
\end{theorem}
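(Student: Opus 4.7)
The plan is to leverage the bimachine characterization of Theorem~\ref{theo:equiv} and to proceed by induction on the depth $k$ of the blind bimachine. Both directions hinge on the same decomposition. Let $\mach = (A, M, \mu, \oras, \lambda)$ be a $k$-blind bimachine computing some $f$. Group the positions of an input $w$ by their \emph{type} $\tau = (m_1, a, m_2) \in M \times A \times M$, and for each $\tau$ let $\exte_\tau \defined \lambda(\tau)$ be the associated external $(k{-}1)$-blind function. Writing $\indic{\tau}(w)$ for the number of positions of $w$ of type $\tau$ (a regular function, computable by a plain bimachine), the definition of a blind bimachine yields
\[
f(w) \;=\; \sum_{i=1}^{|w|} \exte_{\tau_i}(w) \;=\; \sum_{\tau \in M \times A \times M} \indic{\tau}(w) \cdot \exte_\tau(w) \;=\; \sum_{\tau} (\indic{\tau} \had \exte_\tau)(w).
\]

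For the implication \ref{bd:it:poly}~$\Rightarrow$~\ref{bd:it:clos}, I would induct on $k$. The base case $k=1$ amounts to observing that a plain bimachine computes a regular function, which lies trivially in the claimed closure. In the inductive step, the displayed decomposition writes $f$ as a finite sum of Hadamard products $\indic{\tau} \had \exte_\tau$, each factor being either regular ($\indic{\tau}$) or $(k{-}1)$-blind ($\exte_\tau$); the latter belong to the closure by induction hypothesis, hence so does $f$.

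For the converse \ref{bd:it:clos}~$\Rightarrow$~\ref{bd:it:poly}, regular functions are $1$-blind by definition, so it remains to show that polyblind functions are closed under $+$ and $\had$. For $+$, a product-monoid construction on bimachines yields a $k$-blind bimachine for $f_1 + f_2$ whose external functions are pointwise sums of the original ones; these are $(k{-}1)$-blind by an inner induction on depth. For $\had$, which is the main point, I would apply the decomposition above: given a $k$-blind bimachine for $f$ and a polyblind $g$, one has $f \had g = \sum_\tau \indic{\tau} \had (\exte_\tau \had g)$. By induction on $k$ each $\exte_\tau \had g$ is polyblind, and the sought bimachine for $f \had g$ reuses the monoid and morphism of $\mach$ while calling $\exte_\tau \had g$ as its external function on positions of type $\tau$; summing over $\tau$ gives exactly $f(w) \cdot g(w)$.

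The only delicate point is to keep track of depths in the Hadamard construction: the external functions of the newly built bimachine must themselves be blind, of depth one less than the current induction level, which is precisely what the inductive hypothesis supplies. Beyond this routine bookkeeping on $k$ I do not foresee any substantive obstacle.
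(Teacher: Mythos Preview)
Your proposal is correct and follows essentially the same route as the paper. Your decomposition $f=\sum_\tau \indic{\tau}\had\exte_\tau$ is the paper's $f=\sum_{\exte\in\oras} f'_\exte\had\exte$ re-indexed by the domain of $\lambda$ rather than its image, and your induction for $\had$ is identical; the only cosmetic difference is that the paper handles closure under $+$ via the blind \emph{transducer} model (run one machine, then the other) rather than your product-monoid construction, and it spells out the base case $k=1$ for $\had$ explicitly (replace each output $n$ by a call to $n\cdot g$), which in your write-up is left to ``routine bookkeeping''.
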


\begin{proof}[Proof sketch]
We first show $\ref{bd:it:poly} \implies~\ref{bd:it:clos}$ by induction on $k \ge 1$
when $f$ is computed by a $k$-blind bimachine.
For $k = 1$, it is obvious. Let us
describe the induction step from $k \ge 1$ to $k{+}1$.
Let $f$ be computed by a bimachine $\mach = (A,M, \mu, \oras, \lambda)$,
with external blind functions
computed by $k$-blind bimachines.
 Given $\exte \in \oras$, let $f'_{\exte}$ be the function
 which maps $w\in A^*$ to the cardinal
 $|\{1 \le i \le |w|: \lambda(\mu(w[1{:}i{-}1]), w[i], \mu(w[i{+}1{:}|w|])) = \exte \}|$.
 Then $f'_{\exte}$ is a regular function and
$ f = \sum_{\exte \in \oras} f'_{\exte} \had \exte$.
 The result follows by induction hypothesis.
 
For $\ref{bd:it:clos} \implies~\ref{bd:it:poly}$, we show
that functions computed by blind bimachines
are closed under sum and Hadamard product.
For the sum, we use the blind transducer model
to simulate successively the computation
of the two terms of a sum.
For the Hadamard product, we show by induction
on $k \ge 1$ that if $f$ is computed
by a $k$-blind bimachine and $g$ is polyblind, 
then $f \had g$ is polyblind. If $k = 1$, we transform
the bimachine for $f$ in a blind bimachine
computing $f \had g$ by replacing each output $n \in \Nat$
by a call to a machine computing $n \times g$. If $k \ge 2$,
using  the notations of the previous paragraph we have
$ f = \sum_{\exte \in \oras} f'_{\exte} \had \exte$, thus
$ f \had g = \sum_{\exte \in \oras} f'_{\exte} \had (\exte \had g)$.
By induction hypothesis, each $\exte \had g$
is polyblind. Hence we compute $f \had g$
by replacing in $\mach$ each external function
$\exte$ by the function $\exte \had g$.
\end{proof}

We conclude this section by recalling that polyregular
(resp. polyblind) functions with unary output enjoy a
``pebble minimization'' property, which allows to reduce
the number of nested layers depending on the growth
rate of the output.

\begin{definition} \label{def:growth}
We say that a function $f: A^* \rightarrow \Nat$
has \emph{growth at most $k$} if $f(w) = \mc{O}(|w|^k)$.
\end{definition}

\begin{theorem}[\cite{doueneau2020register,doueneau2021pebble, nous2020comparison}]
\label{theo:minimize}
A a polyregular (resp. polyblind) function $f$
can be computed by a $k$-pebble (resp. $k$-blind) transducer
if and only it has growth at most $k$.
Furthermore this property can be decided
and the construction is effective.
\end{theorem}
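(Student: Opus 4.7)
The plan is to split the equivalence into its two directions and to treat decidability separately. The forward direction (growth bound) is a straightforward induction on the nesting depth; the backward direction (``pebble minimization'') is the hard part, but it is exactly the main construction of the cited papers, so I would appeal to them directly rather than redo the proofs.

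For the easy direction, I would show by induction on $k \ge 1$ that every function computed by a $k$-pebble (resp. $k$-blind) transducer has growth at most $k$. For $k=1$, the crossing-sequence argument for two-way automata bounds the number of transitions on input $w$ by $\mc{O}(|w|)$, so the output has linear size. For the induction step, a $k$-pebble transducer performs $\mc{O}(|w|)$ transitions along its main run and at each one may launch a $(k{-}1)$-pebble submachine whose output has size $\mc{O}(|w|^{k-1})$ by induction hypothesis; concatenating these yields $\mc{O}(|w|^k)$. The same argument works verbatim for the blind model, since blindness only restricts what the submachine can see, not how many times it is launched.

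For the backward direction, I would invoke the pebble minimization theorems of \cite{doueneau2020register, doueneau2021pebble} for the polyregular case and of \cite{nous2020comparison} for the polyblind case. Concretely: given a polyregular $f$ of growth at most $k$, which by Theorem~\ref{theo:cauchy} is a rational series of polynomial growth, the factorization-forest-based constructions in \cite{doueneau2021pebble} produce an equivalent $k$-marble bimachine, which by Theorem~\ref{theo:equiv} converts to a $k$-pebble transducer. The polyblind analogue in \cite{nous2020comparison} uses the same skeleton but restricts the resulting machine to the blind nesting discipline. Both constructions are effective.

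For decidability, I would reduce to computing the growth rate of a rational series. Using Theorem~\ref{theo:cauchy} one builds from $f$ an equivalent weighted automaton over $(\Nat,+,\times)$, and the minimum $k$ with $f(w) = \mc{O}(|w|^k)$ is computable from this automaton by standard results on the growth of rational series (see \cite{berstel2011noncommutative}); in the polyblind case one first tests polyblindness, which is the subject of Section~\ref{sec:membership}, and then applies the same growth-rate test. The only genuine obstacle is the backward construction, and since it is entirely imported from the cited literature, the present theorem is essentially a packaging of those results together with the elementary growth upper bound.
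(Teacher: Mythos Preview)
The paper does not give its own proof of this theorem: it is stated as a citation to \cite{doueneau2020register,doueneau2021pebble,nous2020comparison}, so there is nothing to compare against beyond ``appeal to the cited papers''. Your forward direction (the induction giving the $\mc{O}(|w|^k)$ bound) and your appeal to the literature for the backward direction are both fine and match how the result is used here.

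There is, however, a genuine error in your decidability paragraph. In the polyblind case, the input to the decision procedure is already a blind transducer computing $f$; the question is only whether $f$ has growth at most $k$. There is no need to ``first test polyblindness''. Worse, invoking Section~\ref{sec:membership} to do so would be circular: Theorem~\ref{theo:minimize} is used downstream in the proof of Theorem~\ref{theo:readable-car} (via Proposition~\ref{prop:xxx}, which relies on Theorem~\ref{theo:minimize} to pass from growth at most $k{-}1$ to a $(k{-}1)$-marble bimachine). The correct argument is simply that a polyblind function is in particular polyregular, hence a rational series, and the growth-rate test you describe for the polyregular case applies verbatim. Drop the polyblindness test entirely.
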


\begin{remark} The result also holds for blind
transducers with non-unary outputs \cite{nous2020comparison}.
However, it turns out to be false for pebble transducers 
with non-unary outputs (unpublished
work of the author of \cite{bojanczyk2018polyregular}).
\end{remark}
In the next section, we complete the decidability
picture by solving the membership problem
from polyregular to polyblind functions.

\section{Membership problem from polyregular to polyblind}

\label{sec:membership}

In this section, we state the most technical and interesting
result of this paper, which consists in deciding
if a polyregular function is polyblind. We also give in Theorem~\ref{theo:readable-car}
a semantical characterization of polyregular functions
which are polyblind, using the notion of \emph{repetitiveness}.
Intuitively, a $k$-repetitive function is a function which,
when given two places in a word where the same
factor is repeated, cannot distinguish between the first and the second place.
Hence its output only depends on the total number
of repetitions of the factor.

\begin{definition}[Repetitive function]
\label{def:repetitive} Let $k \ge 1$.
We say that $f: A^* \rightarrow \Nat$ is \emph{$k$-repetitive}
if there exists $\idem{} \ge 1$,
such that the following holds for all
$\alpha,\alpha_0,  {u_1}, \alpha_1, \dots,  {u_k}, \alpha_k, \beta \in A^*$
and $\omega \ge 1$ multiple of $\idem{}$.
Let  $W: \Nat^{k} \rightarrow A^*$ defined by:
\begin{equation*}
W : X_1, \dots, X_k \mapsto 
\left(\alpha_0 \prod_{i=1}^k  {u_i}^{\omega X_i}\alpha_i\right).
\end{equation*}
and let $w \defined W(1, \dots, 1)$.
Then there exists a function
$F: \Nat^{k} \rightarrow \Nat$ such that
for all $\ov{X} \defined X_1, \dots, X_k \ge 3$
and $\ov{Y} \defined Y_1, \dots, Y_k \ge 3$, we have:
\begin{equation*}
f(\alpha w^{2\omega- 1}W(\ov{X}) w^{\omega-1} W(\ov{Y}) w^\omega \beta)
= F(X_1 + Y_1, \dots, X_k + Y_k)
\end{equation*}
\end{definition}

\begin{remark}
If $f$ is $k$-repetitive, $f$ is also $(k{-}1)$-repetitive,
by considering an empty $u_k$.
\end{remark}
Let us now give a few examples
in order so see when this criterion holds, or not.

\begin{example} The function $\nb_{a_1, \dots, a_k}$ (see Example~\ref{ex:nb})
is $k$-repetitive for all $k \ge 1$.
\end{example}

\begin{example} \label{ex:repun}
If $f: A^* \rightarrow \Nat$ where $A = \{a\}$ is a singleton, then $f$
is $k$-repetitive for all $k \ge 1$. Indeed,
$\alpha w^{2\omega- 1}W(\ov{X}) w^{\omega-1} W(\ov{Y}) w^\omega \beta \in A^*$
is itself a function of $X_1 + Y_1, \dots, X_k+Y_k$, hence so
is its image $f(\alpha w^{2\omega- 1}W(\ov{X}) w^{\omega-1} W(\ov{Y}) w^\omega)$.
\end{example}

\begin{example} \label{ex:not-repetitive} For all $k \ge 2$, the function
$\itpow{k}: a^{n_1} b \cdots  a^{n_m} b \mapsto \sum_{i=1}^{m} n_i^k$
is not $1$-repetitive. Let us choose any $\omega \ge 1$
and fix $\alpha = \beta \defined \vide$,
$u_1  \defined a$
and $\alpha_0 = \alpha_1 \defined b$,
then:
\begin{equation*}
\begin{aligned}
\itpow{k}(W(X_1, Y_1))
&= \itpow{k}\left((ba^\omega b)^{2\omega-1} b a^{\omega X_1} b (b a^\omega b)^{\omega-1} b a^{\omega Y_1} b(ba^\omega b)^{\omega}\right)\\
& = \omega^{k} (4\omega -2) + \omega^{k} X_1^{k} +  \omega^{k} Y_1^{k}  \\
\end{aligned}
\end{equation*}
which is \emph{not} a function of $X_1 + Y_1$ for $k \ge 2$.
\end{example}
We are now ready to state our main result of this paper.

\begin{theorem}
\label{theo:readable-car}
Let $k \ge 1$ and $f: A^* \rightarrow \Nat$ be computed by a $k$-pebble transducer.
Then $f$ is polyblind if and only if it is $k$-repetitive.
Furthermore, this property can be decided and one can
effectively build a blind transducer for $f$ when it exists.
\end{theorem}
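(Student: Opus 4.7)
The plan is to prove the theorem by splitting it into a necessity direction and a sufficiency direction. The necessity direction is comparatively straightforward whereas the sufficiency direction is the technical heart of the paper and proceeds by induction on the pebble depth $k$; decidability will drop out from the effective nature of the sufficiency construction.

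For necessity I would in fact establish the stronger claim that every polyblind function is $k$-repetitive for all $k\ge 1$. By Theorem~\ref{theo:hadamard} it is enough to verify this on regular functions and then check that $+$ and $\had$ preserve $k$-repetitiveness. For the base case, take a bimachine computing a regular $f$ with morphism $\mu\colon A^*\to M$ and let $\idem{}$ be the idempotent power of $M$. The key observation is that, for any $\omega$ multiple of $\idem{}$, $\mu(u_i^{\omega X_i})=\mu(u_i)^{\idem{}}$ is fixed and, because $2\omega-1\equiv 3\omega-1\pmod{\idem{}}$, the prefix and suffix monoid types of ``corresponding'' positions inside $W(\ov{X})$ and $W(\ov{Y})$ coincide and do not depend on $\ov{X},\ov{Y}$. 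The total output then decomposes into a constant boundary term plus contributions of the form (coefficient)$\cdot(X_i+Y_i)$, which yields a witness $F$. Closure of $k$-repetitiveness under $+$ is immediate after taking a common multiple of the two $\idem{}$'s, and under $\had$ because the product of two functions of $(X_i+Y_i)$ is again such a function.

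The sufficiency direction proceeds by induction on $k$, with $k=1$ immediate since a $1$-pebble transducer is already blind. For the induction step I would pass to the marble bimachine model given by Theorem~\ref{theo:equiv}: assume $f$ is computed by a $(k{+}1)$-marble bimachine $\mach=(A,M,\mu,\oras,\lambda)$ whose external functions $\exte\in\oras$ are $k$-marble bimachines, so that $f(w)=\sum_{i=1}^{|w|}\exte_i(w[1{:}i])$ with $\exte_i$ selected by $\lambda$ from the monoid types around position $i$. I would then (i) decompose each prefix $w[1{:}i]$ along a bounded-depth factorization forest over $M$; (ii) use the $(k{+}1)$-repetitiveness hypothesis to argue that within any idempotent block of the forest the local contributions can be aggregated into a blind call that no longer depends on the cutoff~$i$; (iii) isolate the top-growth-rate term $f_{\tnorm{top}}$, realize it as a polyblind function (obtained by calling submachines on $w$ rather than on $w[1{:}i]$) and check that the residue $f-f_{\tnorm{top}}$ has strictly smaller growth rate and inherits $k$-repetitiveness, so that Theorem~\ref{theo:minimize} and the induction hypothesis can be applied to it at lower pebble depth.

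The main obstacle is steps (ii)--(iii). Repetitiveness is precisely the hypothesis that should let us ``forget'' the dividing position $i$, but turning it into an effective blind construction requires synchronising the idempotent power of the factorization forest with the witness $\idem{}$ of repetitiveness, and then verifying that peeling off $f_{\tnorm{top}}$ preserves the $k$-repetitiveness of the residue --- otherwise the induction cannot be closed at lower depth. Once these steps are carried through, decidability follows from the effectivity of every ingredient: given a pebble bimachine for $f$, the required $\idem{}$ is bounded in terms of the monoid $M$, so the repetitiveness test reduces to a finite set of equalities in $M$, and each peeling step either returns a blind bimachine for $f$ or exhibits a concrete tuple of inputs witnessing the failure of $k$-repetitiveness.
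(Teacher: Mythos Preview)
Your necessity argument is essentially the paper's: Proposition~\ref{prop:blind-confuse} is proved exactly by showing regular functions are $k$-repetitive via the idempotent index of the bimachine monoid and then closing under $+$ and $\had$.

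Your sufficiency outline is also in the right spirit (induction on $k$, marble bimachines, factorization forests, peel off a polyblind top term, push the residue to a $(k{-}1)$-machine), and the paper does all of these things. But you are missing the central intermediate concept that makes the argument go through: \emph{permutability} (Definition~\ref{def:factor-perm}). The paper does not use $k$-repetitiveness directly inside the forest construction. Instead it first proves (Proposition~\ref{prop:factor-perm}) that a $k$-marble bimachine computing a $k$-repetitive function is $K$-permutable for every $K$, and it is permutability --- a purely syntactic property of productions on bounded iterators --- that is actually exploited to show that productions over independent node multisets depend only on their architecture (Proposition-Definition~\ref{depro:prod-archi}) and hence that the top term $f'$ is polyblind (Lemma~\ref{lem:sumi}). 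Your steps (ii)--(iii), which you correctly flag as the obstacle, are precisely where permutability enters, and without it there is no mechanism to ``forget the cutoff $i$'' in an effective, finitely checkable way.

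This gap bites hardest in your decidability claim. You assert that ``the repetitiveness test reduces to a finite set of equalities in $M$'', but repetitiveness quantifies over all $\alpha,\alpha_i,u_i,\beta$ and all $X_i,Y_i\ge 3$; there is no evident bound on the words involved, and the paper never decides repetitiveness directly. What is decided at each stage of the induction is $2^{3|M|}$-permutability of the current bimachine, which \emph{is} a finite check over bounded iterators (end of Section~\ref{ssec:prod-marbles}). The algorithm is: test permutability; if it holds, build $f'$ polyblind and $f''$ by a $(k{-}1)$-marble bimachine (Proposition~\ref{prop:xxx}); recurse on $f''$. Repetitiveness of $f''$ is guaranteed semantically (Claim~\ref{claim:soustraction}: $f'' = f - f'$ with both $f,f'$ $(k{-}1)$-repetitive), but it is never tested. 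So your decidability argument, as written, does not work; you need to introduce a decidable syntactic proxy for repetitiveness, and permutability is that proxy.
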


\begin{remark} By Theorem~\ref{theo:minimize}, we can
even build a $k$-blind transducer.
\end{remark}
Theorem~\ref{theo:readable-car} provides
 a tool to show that some polyregular
functions are not polyblind:

\begin{example}
\label{ex:norma}
By Example~\ref{ex:not-repetitive}, the polyregular function $\itpow{k}$
 is not $k$-repetitive for $k \ge 2$. Therefore it is not polyblind.
Also note that it is computable by a $k$-pebble transducer,
but not by an $\ell$-pebble transducer for $\ell < k$
(Theorem~\ref{theo:minimize}).
\end{example}
An immediate consequence of
Theorem~\ref{theo:readable-car} is obtained below
for functions which have both a unary output alphabet and
a unary input alphabet.
This result was already obtained by
the authors of \cite{nous2020comparison}
using other techniques, in an unpublished work.

\begin{corollary} \label{cor:unary} Polyblind and polyregular functions
over a unary input alphabet coincide.
\end{corollary}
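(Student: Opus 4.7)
The plan is to deduce the corollary directly from the main result Theorem~\ref{theo:readable-car} together with the observation already recorded in Example~\ref{ex:repun}. Only one inclusion requires work, since every blind transducer is a pebble transducer by definition, so every polyblind function is polyregular regardless of the input alphabet.

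For the nontrivial direction, fix a polyregular function $f : A^* \to \Nat$ with $A$ a singleton, and pick $k \ge 1$ such that $f$ is computed by some $k$-pebble transducer (such a $k$ exists by the definition of polyregular; one may even take $k$ to be the minimum such value given by Theorem~\ref{theo:minimize}). By Theorem~\ref{theo:readable-car}, to conclude that $f$ is polyblind it suffices to verify that $f$ is $k$-repetitive.

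The key observation, which is precisely Example~\ref{ex:repun}, is that $k$-repetitiveness is automatic when the input alphabet is unary. Indeed, all the parameters $\alpha, \alpha_0, u_1, \alpha_1, \dots, u_k, \alpha_k, \beta$ in Definition~\ref{def:repetitive} are words over the singleton $A$, so each of them is determined by its length. Consequently, for any fixed choice of these parameters and any fixed $\omega \ge 1$, the word
\begin{equation*}
\alpha\, w^{2\omega-1}\, W(\ov{X})\, w^{\omega-1}\, W(\ov{Y})\, w^{\omega}\, \beta
\end{equation*}
has a length that is an affine function of $X_1 + Y_1, \dots, X_k + Y_k$. Since $f$ takes a unary input, its value depends only on the length of the input, and therefore $f$ applied to the above word is itself a function of $X_1 + Y_1, \dots, X_k + Y_k$, witnessing $k$-repetitiveness with $\idem{} = 1$.

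Combining the two steps, every polyregular function over a unary input alphabet is $k$-repetitive for the relevant $k$, hence polyblind by Theorem~\ref{theo:readable-car}. There is no real obstacle here: the work is entirely absorbed into Theorem~\ref{theo:readable-car}, and the present corollary is an application that exploits the fact that a unary alphabet trivially collapses the pumping pattern in Definition~\ref{def:repetitive}.
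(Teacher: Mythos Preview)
Your proof is correct and follows exactly the paper's approach: the paper's own proof is a one-liner citing Example~\ref{ex:repun} (which you have spelled out in full) and implicitly invoking Theorem~\ref{theo:readable-car}. Your added remark that the trivial inclusion holds because blind transducers are pebble transducers, and your explicit choice $\idem{} = 1$, are correct elaborations but not departures from the paper's argument.
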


\begin{proof}[Proof.] Polyregular functions with unary input are $k$-repetitive
by Example~\ref{ex:repun}.
\end{proof}

\section{Repetitive functions and permutable bimachines}

\label{sec:tech1}

The rest of this paper is devoted to the proof of Theorem~\ref{theo:readable-car}.
Since $k$-pebble transducers and $k$-marble bimachines
compute the same functions (Theorem~\ref{theo:equiv}),
it follows directly from Theorem~\ref{theo:car} below
(the notions used are defined in the next sections).

From now on, we assume that a $k$-marble bimachine
uses the same morphism $\mu: A^* \rightarrow M$ in its main
bimachine, and inductively in all its submachines computing the external functions.
We do not lose any generality here, since it is always possible
to get this situation by taking the product of all the morphisms
used. We also assume that $\mu$ is surjective
(we do no lose any generality, since
it is possible to replace $M$ by the image $\mu(M)$).

\begin{theorem}
\label{theo:car}
Let $k \ge 2$ and $f: A^* \rightarrow \Nat$ be computed by
$\mach = (A,M, \mu, \oras, \lambda)$ a $k$-marble bimachine.
The following conditions are equivalent:
\begin{enumerate}
\item \label{it:char:poly} $f$ is polyblind;
\item  \label{it:char:k-rep} $f$ is $k$-repetitive;
\item \label{it:char:syntax} $\mach$ is $2^{\hei}$-permutable (see Definition~\ref{def:factor-perm})
and the function $f''$ (built from Proposition~\ref{prop:xxx}) is polyblind.
\end{enumerate}
Furthermore this property is decidable
and the construction is effective.
\end{theorem}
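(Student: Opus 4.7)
The plan is to establish the cycle of implications $\ref{it:char:poly} \implies \ref{it:char:k-rep} \implies \ref{it:char:syntax} \implies \ref{it:char:poly}$ together with decidability, by induction on $k \geq 2$ (or more precisely on a complexity measure of the marble bimachine $\mach$); the recursive call will be on the polyblindness of $f''$, which by Proposition~\ref{prop:xxx} is a strictly simpler instance of the same problem.

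For $\ref{it:char:poly} \implies \ref{it:char:k-rep}$ I would proceed by induction on the blind-nesting depth. A blind submachine has no access to the calling position, so on the pumped word $\alpha w^{2\omega-1} W(\ov{X}) w^{\omega-1} W(\ov{Y}) w^\omega \beta$ (with $\omega$ a suitable multiple of the idempotent index $\idem{}$ of the underlying monoid), every blind submachine launched from within a $u_i^{\omega X_i}$ block produces the same value as one launched from the corresponding position of the matching $u_i^{\omega Y_i}$ block, which is $\mu$-equivalent. The outer bimachine then sums these contributions in a way that depends only on each $X_i + Y_i$, yielding $k$-repetitiveness.

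The hard direction $\ref{it:char:k-rep} \implies \ref{it:char:syntax}$ is the main obstacle. The idea is to use Simon's factorization forests of bounded height for $\mu: A^* \to M$ to analyze $\mach$ on inputs containing deeply iterated idempotent blocks. Applying $k$-repetitiveness along many carefully chosen pumping patterns $(\alpha, \alpha_0, u_1, \dots, u_k, \alpha_k, \beta)$ forces the output function $\lambda$, together with the inductive outputs of the external functions $\exte \in \oras$ at corresponding positions inside two copies of the same $\mu$-idempotent factor, to agree. The threshold $2^{\hei}$ in Definition~\ref{def:factor-perm}, with $\hei = 3|M|$, is precisely the depth in the factorization forest up to which this agreement must be enforced, and the extraction of enough pumping sites inside any sufficiently deep factor will rely on a Ramsey-style pigeonhole over the finite monoid $M$. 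Once $2^{\hei}$-permutability is established, Proposition~\ref{prop:xxx} provides an explicit decomposition of $f$ as the sum of a manifestly polyblind part and a residual function $f''$ that inherits $k$-repetitiveness from $f$.

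For $\ref{it:char:syntax} \implies \ref{it:char:poly}$ we assemble the pieces: Proposition~\ref{prop:xxx} produces, from the permutability of $\mach$, a polyblind function equal to the ``permuted'' contribution of $\mach$, while $f''$ is polyblind by hypothesis; their sum is polyblind since the class is closed under sum by Theorem~\ref{theo:hadamard}. Decidability of the property and effectiveness of the construction then follow because (i) $2^{\hei}$-permutability is a finite syntactic condition on $\mach$, and is therefore decidable; (ii) the decomposition of Proposition~\ref{prop:xxx} and the submachine computing $f''$ are effectively constructible; and (iii) the induction hypothesis applies to $f''$, so the algorithm terminates and, whenever $f$ is polyblind, outputs a concrete blind transducer computing it.
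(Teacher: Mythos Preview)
Your overall skeleton matches the paper's: the cycle $\ref{it:char:poly}\Rightarrow\ref{it:char:k-rep}\Rightarrow\ref{it:char:syntax}\Rightarrow\ref{it:char:poly}$, an induction on $k$ via the $(k{-}1)$-marble bimachine for $f''$ produced by Proposition~\ref{prop:xxx}, and decidability from the finiteness of the $2^{\hei}$-permutability test plus the inductive call. The implication $\ref{it:char:syntax}\Rightarrow\ref{it:char:poly}$ is exactly as in the paper.

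There is, however, a genuine gap in your $\ref{it:char:k-rep}\Rightarrow\ref{it:char:syntax}$ step. You write that $f''$ ``inherits $k$-repetitiveness from $f$'', but this is not automatic: $f''$ is defined from the bimachine $\mach$ (via factorization forests and the splitting $\sumd{\mach}+\sumi{\mach}''$), not directly from $f$, and there is no a priori reason it should be repetitive. The paper closes this gap by a subtraction argument: since $f'$ is polyblind, Proposition~\ref{prop:blind-confuse} (your implication $\ref{it:char:poly}\Rightarrow\ref{it:char:k-rep}$) makes $f'$ $(k{-}1)$-repetitive; $f$ is $(k{-}1)$-repetitive because it is $k$-repetitive; and then Claim~\ref{claim:soustraction} (closure of $(k{-}1)$-repetitiveness under subtraction when the difference is $\ge 0$) gives that $f''=f-f'$ is $(k{-}1)$-repetitive. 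Only then can the induction hypothesis on the $(k{-}1)$-marble bimachine for $f''$ be invoked to conclude that $f''$ is polyblind. Without this step your induction does not go through for $k\ge 3$.

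Two smaller remarks. First, the constant $2^{\hei}$ is not a depth threshold in the factorization forest; it is the bound $K$ on the word lengths $|u_i|$ in the $(x,K)$-iterators of Definition~\ref{def:factor-perm}, coming from $|\fr{\nod}{\forest}|\le 2^{\hei}$ for forests of height $\hei=3|M|$. Second, the paper's proof of $\ref{it:char:poly}\Rightarrow\ref{it:char:k-rep}$ goes through the algebraic characterization (regular functions are $k$-repetitive, and $k$-repetitiveness is preserved by $+$ and $\had$), rather than a direct induction on nesting depth; your sketch is plausible but would need more care to handle the outer bimachine's dependence on $\mu$-classes of prefixes and suffixes.
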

Let us now give the skeleton of the proof of Theorem~\ref{theo:car},
which is rather long and involved. The propositions
on which it relies are shown in the two following sections.

\begin{proof}[Proof of Theorem~\ref{theo:car}.]
$\ref{it:char:poly} \implies~\ref{it:char:k-rep}$
follows from Proposition~\ref{prop:blind-confuse}.
For $\ref{it:char:k-rep} \implies~\ref{it:char:syntax}$,
if $f$ is $k$-repetitive then $\mach$ is $2^{\hei}$-permutable by Proposition~\ref{prop:factor-perm}.
Then Proposition~\ref{prop:xxx} gives $f = f' {+} f''$ where
$f'$ is polyblind and $f''$ is computable by a $(k{-}1)$-marble bimachine.
By Proposition~\ref{prop:blind-confuse}, $f'$ is $(k{-1})$-repetitive.
Since $f$ is also $(k{-1})$-repetitive, then $f'' = f{-}f'$ is also $(k{-}1)$-repetitive
by Claim~\ref{claim:soustraction}. Thus by induction hypothesis
$f''$ is polyblind.
For $\ref{it:char:syntax} \implies~\ref{it:char:poly}$,
since in Proposition~\ref{prop:xxx} we have $f = f'{+}f''$ where $f'$ and $f''$
are polyblind, then $f$ is (effectively) polyblind.
The decidability is also obtained by induction: one has to
check that $\mach$ is $2^{\hei}$-permutable (which is decidable)
and inductively that $f''$ is polyblind.
\end{proof}

Now we present the tools used in this proof.
We first show that polyblind functions are repetitive.
Then we introduce the notion of \emph{permutability},
and show that  repetitive functions are computed by
permutable marble  bimachines.

\subsection{Polyblind functions are repetitive}

Intuitively, a blind transducer cannot distinguish between two
``similar'' iterations of a given factor in a word,
since it cannot drop a pebble for doing so. We get the following
using technical but conceptually easy pumping arguments.

\begin{proposition} \label{prop:blind-confuse}
A polyblind function is $k$-repetitive for all $k \ge 1$.
\end{proposition}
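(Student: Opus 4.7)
The plan is to induct on the blind depth $d \ge 1$ of a bimachine $\mach = (A, M, \mu, \oras, \lambda)$ computing $f$, producing for every $k \ge 1$ a single witness $\eta$ at once. First I fix $\eta_0$ so that $m^{\eta_0}$ is idempotent for every $m \in M$ and the sequence $(m^p)_{p \ge \eta_0}$ is $\eta_0$-periodic. For $d \ge 2$ the induction hypothesis gives an integer $\eta_{\exte}$ for each of the finitely many $\exte \in \oras$, and I take $\eta$ to be a common multiple of $\eta_0$ and of all the $\eta_{\exte}$. Any $\omega$ multiple of this $\eta$ will then automatically satisfy every idempotency and periodicity I need.

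The strategy is to analyze $f(w') = \sum_{i=1}^{|w'|} \exte_i(w')$ on the word $w' = \alpha w^{2\omega-1} W(\ov{X}) w^{\omega-1} W(\ov{Y}) w^\omega \beta$ by grouping positions according to which external function $\lambda$ selects. Since $\mu(u_j^{\omega X_j}) = \mu(u_j)^\omega$ as soon as $X_j \ge 1$, we get $\mu(W(\ov{X})) = \mu(W(\ov{Y})) = \mu(w)$, so the monoid image of any factor of $w'$ that avoids the blocks $u_j^{\omega X_j}$ and $u_j^{\omega Y_j}$ is independent of $\ov{X}, \ov{Y}$. A short idempotent computation ($\mu(w)^{2\omega-1} = \mu(w)^{\omega-1} \mu(w)^\omega = \mu(w)^{3\omega-1}$) then shows that the external context $(P_j, S_j)$ surrounding the $j$-th block of $W(\ov{X})$ coincides with the one surrounding the $j$-th block of $W(\ov{Y})$, and that both are independent of $\ov{X}, \ov{Y}$.

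Next I count. For each $\exte \in \oras$ (or for each scalar output, in the base case) let $N_{\exte}(\ov{X}, \ov{Y})$ denote the number of positions $i$ with $\exte_i = \exte$. Positions lying outside the $u_j$-blocks contribute a fixed constant; inside $u_j^{\omega X_j}$, the $\omega$-periodicity of $(\mu(u_j)^p)_{p \ge \eta_0}$ turns the count into an affine expression $a_{\exte,j} X_j + b_{\exte,j}$, and the symmetry of exterior contexts forces the same coefficients inside $u_j^{\omega Y_j}$. Hence $N_{\exte}(\ov{X}, \ov{Y}) = C_{\exte} + \sum_j a_{\exte,j}(X_j + Y_j)$, a function of the sums alone. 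I then conclude from $f(w') = \sum_{\exte} N_{\exte}(\ov{X}, \ov{Y}) \cdot \exte(w')$: in the inductive step each $\exte(w')$ is itself a function of the sums by the induction hypothesis applied to $\exte$ with the same parameters (valid because $\omega$ is a multiple of $\eta_{\exte}$), and in the base case $\exte$ is just a scalar, so the product collapses to some $F(X_1 + Y_1, \ldots, X_k + Y_k)$.

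The delicate step will be the exterior-context symmetry $P_j^{(X)} = P_j^{(Y)}$ and $S_j^{(X)} = S_j^{(Y)}$: without it the affine coefficients $a_{\exte,j}$ could differ between the $\ov{X}$- and $\ov{Y}$-sides, and $N_{\exte}$ would depend on $X_j$ and $Y_j$ separately rather than on their sum. This is exactly why the definition of repetitiveness uses the particular exponents $2\omega-1$, $\omega-1$, $\omega$: they are chosen so that the two exterior monoid contexts both reduce, via the idempotent $\mu(w)^\omega$, to the same element of $M$.
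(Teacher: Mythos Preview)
Your proof is correct. The paper's own argument is organized slightly differently: it first proves that every \emph{regular} function is $k$-repetitive (via the same affine counting you do in the base case, phrased there through the machinery of productions), then shows separately that $k$-repetitiveness is preserved under sum and Hadamard product, and finally invokes Theorem~\ref{theo:hadamard} (polyblind $=$ closure of regular under $+$ and $\had$). Your direct induction on blind depth unfolds exactly this closure argument, since the decomposition $f(w') = \sum_{\exte} N_{\exte}(\ov{X},\ov{Y})\cdot \exte(w')$ is the Hadamard decomposition $f = \sum_{\exte} f'_{\exte} \had \exte$ from the proof of Theorem~\ref{theo:hadamard}, and your verification that $N_{\exte}$ depends only on the sums $X_j+Y_j$ is the regular base case. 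So the two routes share the same content; yours is more self-contained (it does not appeal to Theorem~\ref{theo:hadamard}), while the paper's is more modular and makes the closure under $+$ and $\had$ explicit as a reusable lemma.
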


\begin{proof}[Proof idea.] We first show that a regular function
(computed by a bimachine without external functions) is
$k$-repetitive for all $k \ge 1$. In this case, $\eta$ is chosen
as the idempotent index of the monoid used by the bimachine.
Then, it is easy to conclude by noting that $k$-repetitiveness
is preserved under sums and Hadamard products.
\end{proof}
The induction which proves Theorem~\ref{theo:car}
also requires the following result. Its proof is obvious
since $k$-repetitiveness is clearly preserved under subtractions.

\begin{claim} \label{claim:soustraction} If $f,g$ are $k$-repetitive
and $f{-}g \ge 0$, then it is $k$-repetitive.
\end{claim}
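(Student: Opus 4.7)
The plan is to unfold Definition~\ref{def:repetitive} for both $f$ and $g$ and verify that all the quantifiers line up so that $f{-}g$ still satisfies the definition. First I would apply $k$-repetitiveness to $f$ and $g$ separately, obtaining witnesses $\idem{f}, \idem{g} \ge 1$ respectively. A natural candidate for the common witness is $\idem{} \defined \idem{f} \idem{g}$, which is a multiple of both.

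Next I would fix arbitrary parameters $\alpha, \alpha_0, u_1, \alpha_1, \dots, u_k, \alpha_k, \beta \in A^*$ and any $\omega \ge 1$ that is a multiple of $\idem{}$, hence of both $\idem{f}$ and $\idem{g}$. The function $W: \Nat^{k} \to A^*$ and the word $w = W(1,\dots,1)$ are defined solely from these parameters, so they are shared between the two applications of the definition. This yields functions $F_f, F_g : \Nat^{k} \to \Nat$ such that, for all $\ov{X}, \ov{Y} \ge 3$,
\begin{align*}
f(\alpha w^{2\omega-1} W(\ov{X}) w^{\omega-1} W(\ov{Y}) w^\omega \beta) &= F_f(X_1 + Y_1, \dots, X_k + Y_k),\\
g(\alpha w^{2\omega-1} W(\ov{X}) w^{\omega-1} W(\ov{Y}) w^\omega \beta) &= F_g(X_1 + Y_1, \dots, X_k + Y_k).
\end{align*}
Subtracting these two equalities shows that $(f{-}g)$ evaluated on the same input equals $F_f - F_g$ applied to $(X_1 + Y_1, \dots, X_k + Y_k)$, so it depends only on the sums $X_i + Y_i$. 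The hypothesis $f{-}g \ge 0$ guarantees that $F_f - F_g$ is non-negative on every relevant tuple, so it is a legitimate function $\Nat^{k} \to \Nat$, which is exactly what the definition demands.

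The main obstacle is essentially bookkeeping rather than mathematical depth. One has to check that the word $w$ appearing in the two equations really is the same (it is, since it is defined from the shared parameters and $\omega$, independently of which function is being tested), and that both applications of Definition~\ref{def:repetitive} quantify over the same range of $\ov{X}, \ov{Y}$. Once these points are noted, no pumping, combinatorial, or algebraic argument is required, which justifies the author's remark that the result is obvious.
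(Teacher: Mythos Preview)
Your proof is correct and matches the paper's approach exactly: the paper simply remarks that the argument for Lemma~\ref{lem:confuse-stability} (taking $\idem{} \defined \idem{f}\idem{g}$, obtaining $F_f$ and $F_g$ for the common parameters, and combining them pointwise) works verbatim with subtraction in place of the Hadamard product. Your observation that $f{-}g \ge 0$ is needed only to ensure $F_f - F_g$ lands in $\Nat$ on the relevant tuples is the one additional point, and you handle it correctly.
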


\subsection{Repetitive functions are computed by permutable machines}

\label{ssec:prod-marbles}

In this subsection, we show that $k$-marble bimachines
which compute $k$-repetitive functions have a specific property
named \emph{permutability} (which turns out to be decidable).

\subparagraph*{Productions.}
We first introduce the notion of \emph{production}.
In the rest of this paper, the notation $\multi{\cdots}$ represents
a multiset (i.e. a set with multiplicities).
If $S$ is a set and $m$ is a multiset, we write $m \subseteq S$
to say that each element of $m$ belongs to $S$
(however, there can be multiplicities in $m$
but not in $S$). For instance $\multi{1,1,2,3,3} \subseteq \Nat$.

\label{page:prod}

\begin{definition}[Production]
Let $\mach = (A, M, \mu, \oras, \lambda)$ be a $k$-marble bimachine,
$w \in A^*$.
We define the \emph{production} of
$\mach$ on $\multi{i_1,  \dots, i_k}\subseteq [1{:}|w|]$ as follows if $i_1 \le \cdots \le i_k$:
\begin{itemize}
\item if $k=1$, $\rod{\mach}{w}{\multi{i_1}} \defined \lambda(\mu(w[1{:}i_1{-}1]),w[i_1],\mu(w[i_1{+}1{:}|w|])) \in \Nat$;
\item if $k \ge 2$, let $\exte \defined \lambda(\mu(w[1{:}i_k{-}1]),w[i_k],\mu(w[i_k{+}1{:}|w|])) \in \oras$
and $\mach_{\exte}$ be the $(k{-}1)$-marble bimachine computing $\exte$. Then
$
\rod{\mach}{w}{\multi{i_1, \cdots, i_k}} \defined \rod{\mach_{\exte}}{w[1{:}i_k]}{\multi{i_1, \dots, i_{k{-}1}}}
$.
\end{itemize}
\vspace{0.2cm}
\end{definition}
The value $\rod{\mach}{w}{\multi{i_1, \cdots, i_k}}$ with $i_1 \le \cdots \le i_k$ thus corresponds to
the value output when doing a call on position $i_k$, then on $i_{k-1}$, etc.
Now let $\multi{I_1, \dots, I_k}$ be a multiset of sets of positions of $w$ (i.e. for all
$1 \le i \le k$ we have $I_i \subseteq [1{:}|w|]$).
We define the production of $\mach$ on
$\multi{I_1, \dots, I_k}$ as the combination of all possible productions
on positions among these sets:
\begin{equation*}
\rod{\mach}{w}{\multi{I_1, \dots, I_k}} \defined
\sum_{\substack{\multi{i_1, \dots, i_k} \subseteq [1{:}|w|] \\ \text{ with }  i_j \in I_j \text{ for } 1 \le j \le k}}
\rod{\mach}{w}{\multi{i_1, \dots, i_k}}.
\end{equation*}
\begin{remark}  Note that we no longer have $i_1 \le \cdots \le i_k$.
\end{remark}
By rewriting the sum which defines the function $f$ from $\mach$, we get the following.

\begin{lemma} \label{lem:coupe-prod}
Let $\mach$ be a $k$-marble bimachine computing a function
$f: A^* \rightarrow \Nat$. Let $w \in A^*$ and $J_1, \dots J_n$ be a partition of $[1{:}|w|]$:
\begin{equation*}
f(w) = \sum_{\multi{I_1, \dots, I_k} \subseteq \{J_1, \dots, J_n\}} \rod{\mach}{w}{\multi{I_1, \dots, I_k} }.
\end{equation*}
\end{lemma}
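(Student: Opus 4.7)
The plan is to first establish, by induction on $k$, the unpartitioned identity
\begin{equation*}
f(w) = \sum_{\multi{i_1, \dots, i_k} \subseteq [1{:}|w|]} \rod{\mach}{w}{\multi{i_1, \dots, i_k}},
\end{equation*}
and then to reorganize the right-hand sum according to which part $J_\ell$ each index belongs to. The first identity is essentially a restatement of the recursive semantics of a $k$-marble bimachine in terms of the production function $\prodd$, while the second identity is pure bookkeeping once that restatement is in place.

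For the base case $k = 1$, the definition of the output of a plain bimachine gives $f(w) = \sum_{i=1}^{|w|} \lambda(\mu(w[1{:}i{-}1]), w[i], \mu(w[i{+}1{:}|w|]))$, which is exactly $\sum_i \rod{\mach}{w}{\multi{i}}$. For the inductive step, the top-level semantics of a $k$-marble bimachine yields $f(w) = \sum_{i=1}^{|w|} \exte_i(w[1{:}i])$, where $\exte_i \in \oras$ is determined by $\lambda$ and is computed by a $(k{-}1)$-marble bimachine $\mach_{\exte_i}$. Applying the induction hypothesis to $\mach_{\exte_i}$ on input $w[1{:}i]$, and then invoking the recursive clause of the definition of $\prodd$ (which identifies $\rod{\mach_{\exte_i}}{w[1{:}i]}{\multi{j_1, \dots, j_{k-1}}}$ with $\rod{\mach}{w}{\multi{j_1, \dots, j_{k-1}, i}}$ whenever $j_1, \dots, j_{k-1} \le i$), I obtain
\begin{equation*}
f(w) = \sum_{i=1}^{|w|} \; \sum_{\multi{j_1, \dots, j_{k-1}} \subseteq [1{:}i]} \rod{\mach}{w}{\multi{j_1, \dots, j_{k-1}, i}}.
\end{equation*}
Setting $i_k \defined i$ to be the maximum of the multiset reparameterises this as a sum over all $k$-element multisets $\multi{i_1, \dots, i_k} \subseteq [1{:}|w|]$, each counted exactly once, which closes the induction.

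For the second step, every multiset $\multi{i_1, \dots, i_k} \subseteq [1{:}|w|]$ determines a unique \emph{type} $\multi{I_1, \dots, I_k} \subseteq \{J_1, \dots, J_n\}$ obtained by replacing each $i_j$ by the unique part of the partition containing it. Grouping the summands of the unpartitioned identity according to this type, and comparing with the definition of $\rod{\mach}{w}{\multi{I_1, \dots, I_k}}$, immediately yields the claimed formula. The only step where one must be careful is verifying that the multiset bookkeeping in the definition of $\prodd$ on a multiset of sets matches the multiplicity with which each concrete multiset of positions arises: since the $J_\ell$ are disjoint, a multiset of positions of type $\multi{I_1, \dots, I_k}$ is counted exactly once by the inner sum, so no over- or under-counting occurs. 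This combinatorial check is the sole subtle point, and I expect it to be the main (but minor) obstacle in writing out the proof cleanly.
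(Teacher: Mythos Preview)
Your proposal is correct and follows essentially the same route as the paper: first the induction on $k$ establishing $f(w)=\sum_{\multi{i_1,\dots,i_k}}\rod{\mach}{w}{\multi{i_1,\dots,i_k}}$ (this is the paper's Sublemma~\ref{slem:f-prod}, with the same ``$i$ is the maximum of the multiset'' reparametrisation), then a regrouping by the parts $J_\ell$. The only cosmetic difference is that the paper carries out the regrouping via an intermediate splitting sublemma (Sublemma~\ref{slem:prod-goes}, decomposing $\rod{\mach}{w}{\multi{I_1\con r_1,\dots}}$ when one $I_j$ is partitioned), stated in that more general form because it is reused later; your direct ``group by type'' argument is the same combinatorics collapsed into one step.
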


Following the definition of bimachines,
the production $\rod{\mach}{w}{\multi{i_1, \cdots, i_k}}$
should only depend on $w[i_1], \dots, w[i_k]$ and of
the image under $\mu$ of the factors between these positions.
Now we formalize this intuition in a more general setting.

\begin{definition}[Multicontext] Given $x \ge 0$,
an $x$-\emph{multicontext} consists of two sequences of words
$v_0, \cdots, v_x \in A^*$ and $u_1, \cdots, u_x \in A^*$. It is denoted
$v_0 \cro{u_1} v_1 \cdots \cro{u_x} v_x$.
\end{definition}
Let $w \defined v_0 u_1 \cdots u_k v_k \in A^*$.
For $1 \le i \le k$, let $I_i \subseteq [1{:}|w|]$
be the set of positions corresponding to $u_i$.
We define the production on the multicontext
$v_0 \cro{u_1} v_1 \cdots \cro{u_k} v_k$ by:
\begin{equation}
\label{eq:facteurs}
\rod{\mach}{}{v_0 \cro{u_1} v_1 \cdots \cro{u_k} v_k} \defined
\rod{\mach}{w}{\multi{I_1, \dots, I_k}}.
\end{equation}
 As stated in Proposition-Definition~\ref{depro:factor-k},
 this quantity only depends on the $u_i$ and the images
of the $v_i$ under the morphism $\mu$ of $\mach$,
which leads to a new notion of productions.

\begin{depro} \label{depro:factor-k}
Let $\mach = (A,M, \mu, \oras, \lambda)$ be a $k$-marble bimachine.
Let  $v_0 u_1 \cdots u_k v_k \in A^*$ and
$v'_0 u_1 \cdots u_k v'_k \in A^*$
be such that $\mu(v_i) = \mu(v'_i)$ for all $0 \le i \le k$.
Then:
$
\rod{\mach}{}{v_0 \cro{u_1} v_1 \cdots \cro{u_k} v_k} =
\rod{\mach}{}{v'_0 \cro{u_1}v'_1 \cdots \cro{u_k} v'_k}
$.
Let $m_i \defined \mu(v_i) = \mu(v'_i)$, we define
$\rod{\mach}{}{m_0 \cro{u_1} m_1 \cdots \cro{u_k} m_k}$
as the previous value.
\end{depro}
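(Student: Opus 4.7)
The plan is to prove the equality claim by induction on $k \ge 1$; the definition of $\rod{\mach}{}{m_0 \cro{u_1} m_1 \cdots \cro{u_k} m_k}$ then follows tautologically. Throughout I rely on two structural facts: (i) by the convention fixed at the start of Section~\ref{sec:tech1}, every bimachine nested inside $\mach$ uses the same morphism $\mu:A^*\to M$; and (ii) since $u_1,\ldots,u_k$ occur in order in $w\defined v_0u_1v_1\cdots u_kv_k$, their position sets $I_1,\ldots,I_k\subseteq[1{:}|w|]$ are disjoint and increasing, so every multiset $\multi{i_1,\ldots,i_k}\subseteq I_1\times\cdots\times I_k$ is automatically ordered $i_1<\cdots<i_k$.

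For the base case $k=1$, I would directly unfold the definition of production to get
\[
\rod{\mach}{w}{\multi{I_1}} \;=\; \sum_{\ell=1}^{|u_1|} \lambda\bigl(\mu(v_0)\,\mu(u_1[1{:}\ell{-}1]),\;u_1[\ell],\;\mu(u_1[\ell{+}1{:}|u_1|])\,\mu(v_1)\bigr),
\]
which manifestly depends on $v_0,v_1$ only through $\mu(v_0),\mu(v_1)$, as required.

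For the inductive step $k-1 \to k$, I would split the defining sum of $\rod{\mach}{w}{\multi{I_1,\ldots,I_k}}$ according to the choice of $i_k\in I_k$, which by (ii) is necessarily the largest chosen position. If $i_k$ corresponds to $u_k[\ell]$, then the external function launched is $\exte_\ell\defined\lambda(m_{<\ell},\,u_k[\ell],\,m_{>\ell})$, where $m_{<\ell}=\mu(v_0)\mu(u_1)\mu(v_1)\cdots\mu(v_{k-1})\mu(u_k[1{:}\ell{-}1])$ and $m_{>\ell}=\mu(u_k[\ell{+}1{:}|u_k|])\mu(v_k)$; both depend on the $v_i$'s only through their $\mu$-images. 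The remaining contribution is $\rod{\mach_{\exte_\ell}}{w[1{:}i_k]}{\multi{I_1,\ldots,I_{k-1}}}$, which is precisely the production of the $(k{-}1)$-marble bimachine $\mach_{\exte_\ell}$, by (i) using the same morphism $\mu$, evaluated on the $(k{-}1)$-multicontext $v_0\cro{u_1}v_1\cdots v_{k-2}\cro{u_{k-1}}\bigl(v_{k-1}\,u_k[1{:}\ell]\bigr)$. By the induction hypothesis, this quantity depends on the $v_i$'s only through $\mu(v_0),\ldots,\mu(v_{k-2})$ and $\mu(v_{k-1}\,u_k[1{:}\ell])=\mu(v_{k-1})\mu(u_k[1{:}\ell])$. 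Summing over $\ell$ concludes the inductive step.

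The only subtlety is the bookkeeping in how the prefix $u_k[1{:}\ell]$ of the outermost marble position gets absorbed into $v_{k-1}$ when descending into the recursive call; once this reshaping into a $(k{-}1)$-multicontext is made explicit, the induction hypothesis applies verbatim and the monoid-indexed notation $\rod{\mach}{}{m_0\cro{u_1}m_1\cdots\cro{u_k}m_k}$ is sound. Assumption (i) that all nested bimachines share $\mu$ is essential here: without it, the inner productions would depend on a different morphism evaluated on $v_{k-1}\,u_k[1{:}\ell]$, and the induction would not close.
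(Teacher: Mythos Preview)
Your proof is correct and follows essentially the same route as the paper: induction on $k$, peel off the outermost marble position, observe that the external function $\exte_\ell$ chosen there depends on the $v_i$ only through their $\mu$-images, and apply the induction hypothesis to the $(k{-}1)$-marble submachine on the truncated word (using crucially that all submachines share the morphism $\mu$). The only structural difference is that the paper first proves the ``single-position'' version (Sublemma~\ref{slem:positi}, where each $u_i$ is a single letter, possibly with a multiplicity $r_i$) and then aggregates via the monotone bijections $\sigma_i:I_i\to I'_i$ to obtain a slightly more general statement (Proposition-Definition~\ref{depro:factors}, with arbitrary multiplicities $\cro{u_i}_{r_i}$), which is needed later in the appendix; you instead work directly with the factors $u_i$ and split only $u_k$ letter by letter, which is perfectly adequate for the statement as given.
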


\begin{remark} In the following, we shall directly manipulate
multicontexts of the form $m_0 \cro{u_1} m_1 \cdots \cro{u_x} m_x$
with $m_i \in M$ and $u_i \in A^*$. Note that an $x$-multicontext
and a $y$-multicontext can be concatenated to obtain
an $(x+y)$-multicontext.
\end{remark}
We finally introduce the notion of $(x,K)$-\emph{iterator},
which corresponds to an $x$-multicontext in which the words $u_i$
have length at most $K \ge 0$ and have idempotent images
under $\mu$. Recall that $e \in M$ is \emph{idempotent} if and only if $ee=e$.

\begin{definition}[Iterator] 
Let $x, K \ge 0$ and
$\mu: A^* \rightarrow M$.
An \emph{$(x,K)$-iterator} for $\mu$
is an $x$-multicontext of the form
$
m_0  \left( \prod_{i=1}^{x} e_i \cro{u_i} e_i m_i \right)
$
such that $m_0, \dots, m_{x} \in M$ and
\mbox{$u_1, \dots, u_{x} \in A^*$} are such that
for all $1 \le i \le x$, $|u_i| \le K$
and $e_i = \mu(u_i) \in M$ is idempotent.
\end{definition}

\subparagraph*{Permutable $k$-marble bimachines.}
We are now ready to state the definition of \emph{permutability}
for marble bimachines. Intuitively, this property means that, under some idempotency conditions,
$\rod{\mach}{}{m_0 \cro{u_1} m_1 \cdots \cro{u_k} m_k}$
only depends on the $1$-multicontexts of each $\cro{u_i}$, which
are $m_0 \mu(u_1) \cdots m_i \cro{u_i}
m_{i+1} \mu(u_{i+1}) \cdots m_k \in M$ for $1 \le i \le k$.
In particular, it does not depend on the relative position of the $u_i$ nor on the $m_i$
which separate them. Hence, it will be possible to simulate a
permutable $k$-marble bimachine by a $k$-blind bimachine
which can only see the $1$-multicontext of one position
at each time.

\begin{definition} \label{def:factor-perm} Let $\mach$ be a $k$-marble bimachine
using a surjective morphism $\mu: A^* \rightarrow M$. Let $K \ge 0$,
we say that $\mach$ is $K$-\emph{permutable} if the following holds
whenever $\ell+x+r= k$:
\begin{itemize}
\item for all $(\ell,K)$-iterator $\ce$  and $(r,K)$-iterator $\de$;
\item for all $(x,K)$-iterator
$
 m_0 \left(\prod_{i=1}^x e_i \cro{u_i} e_i m_i\right)
$
such that $ e \defined m_0 \left(\prod_{i=1}^x e_i m_i\right)$
idempotent;
\item for all $1 \le j \le x$, define the left and right contexts:
\begin{equation*}
        \lefe_j \defined e \left(\prod_{i=1}^{j} m_{i{-1}}  e_i\right) 
        \tnorm{~~and~~}
        \rige_j \defined \left(\prod_{i=j}^{x} e_i m_i\right) e.
\end{equation*}
\end{itemize}
Then if $\sigma$ is a permutation of  $[1{:}x]$, we have:
\begin{equation*}
\begin{aligned}
&\rod{\mach}{}{\ce~e m_0 \left(\prod_{i=1}^x e_i \cro{u_i} e_i m_i\right) e~\de}
&= \rod{\mach}{}{\ce~\left(\prod_{i=1}^x \lefe_{\sigma(i)}
\cro{u_{\sigma(i)}} \rige_{\sigma(i)} \right)~\de}. 
\end{aligned}
\end{equation*}
\end{definition}
An visual description of permutability is depicted in Figure~\ref{fig:ecarte}.

\begin{figure}[h!]

     \begin{subfigure}[b]{1\textwidth}
     \centering
     \begin{tikzpicture}{scale=1}
	\node[above] at (0,-0.25) {$\ce$};
	\node[above] at (11,-0.25) {$\de$};	
	
	\draw[-,very thick,black](0.5,0) -- (10.5,0);
	
	\draw[-,very thick,black](0.5,0) -- (0.5,0.15);
	\draw[-,very thick,black](1,0) -- (1,0.15);
	\draw[-,very thick,black](2,0) -- (2,0.15);

	\draw[-,very thick,black](3,0) -- (3,0.15);
	\draw[-,very thick,black](5,0) -- (5,0.15);

	\draw[-,very thick,black](6,0) -- (6,0.15);
	\draw[-,very thick,black](8,0) -- (8,0.15);

	\draw[-,very thick,black](9,0) -- (9,0.15);
	\draw[-,very thick,black](10,0) -- (10,0.15);
	\draw[-,very thick,black](10.5,0) -- (10.5,0.15);

	\draw[very thick,\myBlue, fill = \myBlue] (2,0) rectangle (3,-0.1);
	\draw[very thick,\myBlue, fill = \myBlue] (5,0) rectangle (6,-0.1);	
	\draw[very thick,\myBlue, fill = \myBlue] (8,0) rectangle (9,-0.1);
	
	\node[above] at (0.75,0) {\small $e$};	
	\node[above] at (1.5,0) {\small $m_0e_1$};		
	\node[above] at (2.5,0) {\small $e_1$};	
	\node[below] at (2.5,-0.1) {\small $\cro{u_1}$};	
	\node[above] at (4,0) {\small $e_1m_1e_2$};	
	\node[above] at (5.5,0) {\small $e_2$};	
	\node[below] at (5.5,-0.1) {\small $\cro{u_2}$};	
	\node[above] at (7,0) {\small $e_2m_2e_3$};
	\node[above] at (8.5,0) {\small $e_3$};	
	\node[below] at (8.5,-0.1) {\small $\cro{u_3}$};	
	\node[above] at (9.5,0) {\small $e_3m_3$};
	\node[above] at (10.25,0) {\small $e$};
	
	\draw[thick, dashed,purple] (1,0.75)--(10,0.75);
	\draw[thick,purple] (1,0.75)--(1,0.5);
	\draw[thick,purple] (10,0.75)--(10,0.5);
	\node[above,purple] at (5.5,0.75) {\small $=e$};

	\draw[thick, dashed,purple] (0.5,-0.75)--(2,-0.75);
	\draw[thick,purple] (0.5,-0.75)--(0.5,-0.5);
	\draw[thick,purple] (2,-0.75)--(2,-0.5);
	\node[below,purple] at (1.25,-0.7) {\small $=: \lefe_1$};

	\draw[thick, dashed,purple] (3,-0.75)--(10.5,-0.75);
	\draw[thick,purple] (3,-0.75)--(3,-0.5);
	\draw[thick,purple] (10.5,-0.75)--(10.5,-0.5);
	\node[below,purple] at (3.75,-0.7) {\small $=: \rige_1$};
	
	\draw[thick, dashed,purple] (0.5,-1.25)--(5,-1.25);
	\draw[thick,purple] (0.5,-1.25)--(0.5,-1);
	\draw[thick,purple] (5,-1.25)--(5,-1);
	\node[below,purple] at (4,-1.2) {\small $=: \lefe_2$};

	\draw[thick, dashed,purple] (6,-1.25)--(10.5,-1.25);
	\draw[thick,purple] (6,-1.25)--(6,-1);
	\draw[thick,purple] (10.5,-1.25)--(10.5,-1);
	\node[below,purple] at (7,-1.2) {\small $=: \rige_2$};
	
	\draw[thick, dashed,purple] (0.5,-1.75)--(8,-1.75);
	\draw[thick,purple] (0.5,-1.75)--(0.5,-1.5);
	\draw[thick,purple] (8,-1.75)--(8,-1.5);
	\node[below,purple] at (7.25,-1.7) {\small $=: \lefe_3$};

	\draw[thick, dashed,purple] (9,-1.75)--(10.5,-1.75);
	\draw[thick,purple] (9,-1.75)--(9,-1.5);
	\draw[thick,purple] (10.5,-1.75)--(10.5,-1.5);
	\node[below,purple] at (9.75,-1.7) {\small $=: \rige_3$};
	
\end{tikzpicture}
\caption{\label{sfig:ec1} Initial multicontext and definition of the $\lefe_j$ / $\rige_j$ for $1 \le j \le 3$}
     \end{subfigure}
     
     \begin{subfigure}[b]{1\textwidth}
     \centering
     \begin{tikzpicture}{scale=1}
	\node[above] at (1,-0.25) {$\ce$};
	\node[above] at (12,-0.25) {$\de$};	
	
	\draw[-,very thick,black](1.5,0) -- (4.5,0);

	\draw[-,very thick,black](1.5,0) -- (1.5,0.15);
	\draw[-,very thick,black](2.5,0) -- (2.5,0.15);
	\draw[-,very thick,black](3.5,0) -- (3.5,0.15);
	\draw[-,very thick,black](4.5,0) -- (4.5,0.15);
	
	\draw[-,very thick,black](5,0) -- (8,0);
	
	\draw[-,very thick,black](5,0) -- (5,0.15);
	\draw[-,very thick,black](6,0) -- (6,0.15);
	\draw[-,very thick,black](7,0) -- (7,0.15);
	\draw[-,very thick,black](8,0) -- (8,0.15);
	
	\draw[-,very thick,black](8.5,0) -- (11.5,0);

	\draw[-,very thick,black](8.5,0) -- (8.5,0.15);
	\draw[-,very thick,black](9.5,0) -- (9.5,0.15);
	\draw[-,very thick,black](10.5,0) -- (10.5,0.15);
	\draw[-,very thick,black](11.5,0) -- (11.5,0.15);

	\draw[very thick,\myBlue, fill = \myBlue] (2.5,0) rectangle (3.5,-0.1);
	\draw[very thick,\myBlue, fill = \myBlue] (6,0) rectangle (7,-0.1);	
	\draw[very thick,\myBlue, fill = \myBlue] (9.5,0) rectangle (10.5,-0.1);

	\node[above] at (2,0) {\small $\lefe_3$};	
	\node[above] at (3,0) {\small $e_3$};	
	\node[below] at (3,-0.1) {\small $\cro{u_3}$};	
	\node[above] at (4,0) {\small $\rige_3$};	

	\node[above] at (5.5,0) {\small $\lefe_1$};	
	\node[below] at (6.5,-0.1) {\small $\cro{u_1}$};
	\node[above] at (6.5,0) {\small $e_1$};
	\node[above] at (7.5,0) {\small $\rige_1$};

	\node[above] at (9,0) {\small $\lefe_2$};		
	\node[above] at (10,0) {\small $e_2$};	
	\node[below] at (10,-0.1) {\small $\cro{u_2}$};	
	\node[above] at (11,0) {\small $\rige_2$};	
	
\end{tikzpicture}
         \caption{Multicontext which separates the factors using the $\lefe_j$ / $\rige_j$ and $\sigma$}
     \end{subfigure}

\caption{\label{fig:ecarte} Productions which must be equal in Definition \ref{def:factor-perm},
with $x=3$ and  $\sigma = (3,1,2)$}
\end{figure}
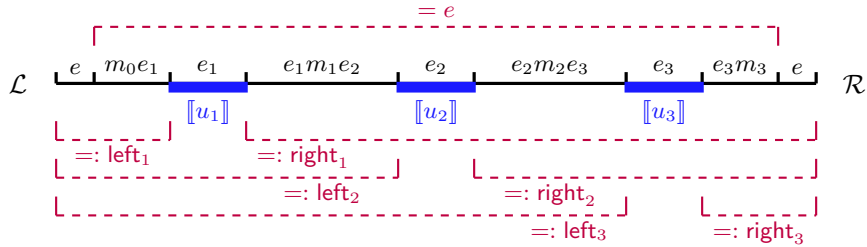
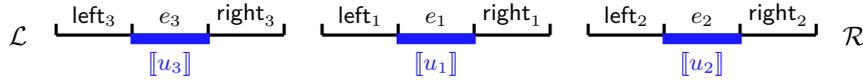%
The next result follows from a technical proof
based on iteration techniques (it can be understood as
a kind of pumping lemma on iterators).
\begin{proposition} \label{prop:factor-perm} Let $\mach$ be a
$k$-marble bimachine computing
a $k$-repetitive function. Then $\mach$ is $K$-permutable
for all $K \ge 0$.
\end{proposition}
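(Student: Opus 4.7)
The plan is to derive $K$-permutability from $k$-repetitiveness by a polynomial-coefficient argument. Given an instance of $K$-permutability with iterators $\ce,\de$ of sizes $\ell,r$, a middle iterator of size $x$ (with $\ell+x+r=k$), and a permutation $\sigma$, I would realize both sides of the equation in Definition~\ref{def:factor-perm} as specific monomial coefficients of $f(\alpha w^{2\omega-1}W(\bar X)w^{\omega-1}W(\bar Y)w^{\omega}\beta)$ viewed as a polynomial in $\bar X,\bar Y$, then use the constraint $f(\ldots)=F(X_1+Y_1,\ldots,X_k+Y_k)$ to force the two coefficients to agree. To build the $k$-repetitiveness instance, I would use surjectivity of $\mu$ to realize every monoid element appearing in the iterators as the $\mu$-image of a concrete word; the $k$ factors $u_j$ would be taken from the iterators (the $\ell$ factors of $\ce$, the $x$ middle factors, then the $r$ factors of $\de$), and the $\alpha_0,\ldots,\alpha_k,\alpha,\beta$ would realize the separating monoid elements together with the outer $e$-padding. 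The exponent $\omega$ would be a sufficiently large common multiple of $\idem{}$ and of the idempotent period of $M$, so that $\mu(w)^{n\omega}$ is a fixed idempotent for all $n\ge 1$ and all relevant powers stabilize.

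\textbf{Polynomial expansion and coefficient extraction.} By Lemma~\ref{lem:coupe-prod}, $f(\alpha w^{2\omega-1}W(\bar X)w^{\omega-1}W(\bar Y)w^{\omega}\beta)$ equals the sum over multisets of $k$ marble positions of their productions. By idempotency of $e_j=\mu(u_j)$ and of the stabilized $\mu(w)^{\omega}$, a marble at a middle position of a pumped block $u_j^{\omega X_i}$ (resp.~$u_j^{\omega Y_i}$) contributes a production depending only on the block, not on the particular copy nor on $X_i$ (resp.~$Y_i$); summing over copies yields a polynomial in $\bar X,\bar Y$ of total degree at most $k$. The top-degree monomials $\prod_j X_j^{a_j}Y_j^{1-a_j}$ with $a_j\in\{0,1\}$ correspond to configurations placing exactly one marble per slot, and each coefficient, up to the factor $\omega^k$, equals an explicit production on a concrete multicontext determined by which $W$ contains each marble. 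The $k$-repetitiveness constraint forces all $2^k$ such coefficients to coincide, since their monomials share the same $(X_j+Y_j)$-profile.

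\textbf{Matching to permutability and main obstacle.} The configuration $(a_j)=(1,\ldots,1)$ (all marbles in $W(\bar X)$) recovers the LHS $\ce\cdot em_0(\prod_i e_i\cro{u_i}e_im_i)e\cdot\de$ of Definition~\ref{def:factor-perm}, the outer $e$'s arising from the stabilized contexts $\mu(w^{2\omega-1})$ and $\mu(w^{\omega})$. Placing the middle marbles in $W(\bar Y)$ while keeping the outer-left and outer-right marbles in $W(\bar X)$ causes the intervening blocks to telescope, via $\mu(W(\bar X))=\mu(W(\bar Y))=e$ and idempotency of $\mu(w)^{\omega}$, into exactly the $\lefe_j$ and $\rige_j$ contexts prescribed in Definition~\ref{def:factor-perm}; this yields the RHS for $\sigma=\mathrm{id}$. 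An arbitrary $\sigma$ is then obtained either by re-running the argument after permuting the middle $u_j$'s in the $k$-repetitiveness setup, or equivalently by applying the identity-$\sigma$ case iteratively to transpositions, which generate $S_x$. The central technical obstacle is this idempotent telescoping: one must verify that when a middle marble is moved between $W(\bar X)$ and $W(\bar Y)$, the accumulated $\mu$-images of all intervening pumped and power blocks collapse \emph{exactly} to the prescribed $\lefe_{\sigma(j)}$ and $\rige_{\sigma(j)}$, which requires careful bookkeeping with the defining relations of the idempotents $e$, $e_j$ and $\mu(w)^{\omega}$. Lower-degree monomials (with multiple marbles in the same slot) must also be compatible with the $(X_j+Y_j)$-constraint; they correspond to productions on smaller multicontexts and can be handled either inductively on $k$ or by subtracting them off before extracting the top-degree part.
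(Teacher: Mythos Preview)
Your overall strategy---realize both sides of Definition~\ref{def:factor-perm} as coefficients of a polynomial obtained by pumping, then use repetitiveness to force them equal---is exactly the paper's strategy. But your specific setup has a genuine positional flaw that makes the coefficient you extract fail to match the right-hand side.

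You place \emph{all} $k$ factors (those of $\ce$, the $x$ middle ones, and those of $\de$) into the word $W$. Two things then go wrong. First, $\mu(w)$ is the product of \emph{all} separators and idempotents, including those coming from $\ce$ and $\de$; it is not the middle idempotent $e=m_0\prod_i e_im_i$. So your claim that ``the outer $e$'s arise from $\mu(w^{2\omega-1})$ and $\mu(w^{\omega})$'' is false in general, and there is no way to absorb $\mu(w)^{\omega}$ into the prescribed $p_0$ (resp.\ $p'_r$) for an arbitrary monoid. Second, and more fatally, when you send the middle marbles to $W(\bar Y)$ while keeping the $\ce$- and $\de$-marbles in $W(\bar X)$, the multicontext has the order $\ce$-marbles, then $\de$-marbles, then middle marbles (since $W(\bar X)$ precedes $W(\bar Y)$). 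The right-hand side of Definition~\ref{def:factor-perm} has the shape $\ce\cdot(\text{middle with }\lefe_j,\rige_j)\cdot\de$, with $\de$ \emph{after} the middle; your configuration never realizes this. Even within the middle block itself, the context between $[u_i]$ and $[u_{i+1}]$ in a single copy of $W$ is $e_im_ie_{i+1}$, not $\rige_i\lefe_{i+1}$: the ``telescoping'' you invoke does not occur.

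The paper avoids this by a different placement. The iterators $\ce$ and $\de$ are realized \emph{outside} the repetitive block, as $\alpha=U(\bar L)$ and $\beta=V(\bar R)$ with their own pumping variables; only the $x$ middle factors live in $W$, so that $\mu(w)=e$ exactly. Then, rather than moving all middle marbles at once, the paper moves \emph{one} factor $u_j$ at a time (Lemma~\ref{lem:ecarte}), using $x$-repetitiveness on the pair $(X_j,X'_j)$ and extracting the degree-$k$ coefficient in $\bar L,\bar X,X'_j,\bar R$ via Sublemma~\ref{slem:bim-big}. Because $\ce$ sits to the left and $\de$ to the right of the entire repetitive block, the extracted $[u_j]$ lands between the residual middle and $\de$, and the surrounding contexts collapse to precisely $\lefe_j$ and $\rige_j$. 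The general permutation is then obtained by iterating this one-step extraction, not by a single $2^k$-coefficient comparison.
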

Let us finally note that being $K$-permutable (for a fixed $K$)
is a decidable property. Indeed,  it suffices to range over all $\ell + x + r = k$
and $(\ell,K)$-, $(x,K)$- and $(r,K)$-iterators (there are finitely many
of them, since they correspond to bounded sequences which alternate
between monoid elements and words of bounded lengths),
and compute their productions.

\section{From permutable bimachines to polyblind functions}

\label{sec:tech2}

The purpose of this section is to show Proposition~\ref{prop:xxx},
which allows us to perform the induction step in the proof of Theorem~\ref{theo:car},
by going from $k$ to $k{-}1$ marbles.

\begin{proposition} \label{prop:xxx}
Let $\mach =(A,M, \mu, \oras, \lambda)$ be a $2^{\hei}$-permutable $k$-marble bimachine
computing a function $f: A^* \rightarrow \Nat$. One
can build a polyblind function $f': A^* \rightarrow \Nat$
and a function  $f'' A^* \rightarrow \Nat$ computed by a $(k{-}1)$-marble bimachine
such that $f = f'+ f''$.
\end{proposition}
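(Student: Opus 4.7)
The plan is to split the sum that defines $f$ using a factorization-forest analysis of the input, and then to use permutability to identify the ``main term'' as polyblind while the ``remainder'' uses fewer marbles. First, I would apply Lemma~\ref{lem:coupe-prod} and write $f(w) = \sum_{\multi{I_1, \dots, I_k} \subseteq \{J_1, \dots, J_n\}} \rod{\mach}{w}{\multi{I_1, \dots, I_k}}$ for a partition $J_1, \dots, J_n$ of $[1{:}|w|]$ derived from a Simon factorization forest of $w$ with respect to $\mu$. Such a forest has height at most $\hei = 3|M|$; each internal node is either binary or idempotent, and one shows by induction on the forest height that the scales at which we must reason are bounded by $2^{\hei}$, matching the permutability parameter of $\mach$.

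Next I would classify each multiset $\multi{I_1,\dots,I_k}$ of ``blocks'' according to whether it lies inside a single idempotent node of the forest or not. In the first case, the corresponding multicontext has exactly the shape $ e\, m_0 \bigl(\prod_{i=1}^x e_i \cro{u_i} e_i m_i\bigr) e$ required by Definition~\ref{def:factor-perm}, with each $|u_i|\le 2^{\hei}$. Permutability then tells us that $\rod{\mach}{}{\cdots}$ is invariant under reshuffling the $k$ positions into the configuration $\prod_i \lefe_{\sigma(i)} \cro{u_{\sigma(i)}} \rige_{\sigma(i)}$, where every position is now ``isolated'' and its production depends only on its own $1$-multicontext. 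Enumerating the finitely many possible idempotent-surrounded $1$-multicontext shapes $\tau$, this reshuffled production becomes a polynomial of degree $k$ in the counters $g_\tau(w) \defined \#\{\text{positions of }w\text{ with }1\text{-multicontext shape }\tau\}$. Each $g_\tau$ is a regular function, so by Theorem~\ref{theo:hadamard}, summing these polynomial contributions over all idempotent nodes of the forest yields a function $f'$ lying in the closure of regular functions under sum and Hadamard product, hence polyblind.

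The residual contribution $f'' \defined f - f'$ collects multisets $\multi{I_1,\dots,I_k}$ that are \emph{not} concentrated in a single idempotent node: at least one position must lie in a ``separating'' block distinguished by the forest structure (a binary split, or a different idempotent than the others). Such a distinguished position can be identified by the main bimachine using $\mu$ alone, since the forest structure around position $i$ is entirely determined by $\mu(w[1{:}i{-}1])$, $w[i]$ and $\mu(w[i{+}1{:}|w|])$. I would thus construct a $(k{-}1)$-marble bimachine whose main layer enumerates the distinguished position $i$, computes the appropriate forest-level information from $\mu$, and launches a $(k{-}1)$-marble submachine (derived from the submachines of $\mach$) to enumerate the remaining $k{-}1$ positions. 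By Theorem~\ref{theo:equiv} this is enough, and one obtains $f=f'+f''$ as required, and subtraction yields $f''\ge 0$ automatically since $f''$ is defined by a non-negative sum of productions.

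The main obstacles I expect are: (i)~precisely aligning the factorization-forest bounds with the $2^{\hei}$-permutability hypothesis, so that every ``single idempotent node'' multicontext actually satisfies the bound $|u_i|\le 2^{\hei}$ required by Definition~\ref{def:factor-perm} — this is where the height-$3|M|$ bound is crucial and may require an explicit unfolding argument; (ii)~the bookkeeping that turns the permutation-invariant production into an explicit polynomial in the regular counters $g_\tau$, including handling repeated contexts via multiset coefficients; and (iii)~ensuring that the ``distinguished position'' in the residual case is uniquely and canonically chosen (e.g., by taking the leftmost or the one at highest forest level), so that each multiset is counted exactly once and $f''$ really is a non-negative marble-computable function.
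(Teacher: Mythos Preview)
Your overall architecture (factorization forest, split into main + residual, invoke permutability on the main term) matches the paper's, but the specific dichotomy you propose is too coarse and the residual does \emph{not} have growth $k-1$. You put into the ``main term'' only those multisets concentrated in a single idempotent node, and into the residual everything else. But consider a forest whose root is binary with two idempotent children, each having many middle subtrees: a multiset that places $\lfloor k/2\rfloor$ positions freely in the left idempotent and $\lceil k/2\rceil$ freely in the right one lands in your residual, yet there are $\Theta(|w|^{k})$ such multisets. No single position is ``distinguished'' here, so your $(k{-}1)$-marble construction for $f''$ breaks. The paper handles exactly this by introducing a \emph{recursive} notion (the \emph{architecture} of an independent multiset), which is a bounded tree whose leaves are either monoid elements or multisets of $1$-linearizations sitting under a common idempotent; permutability is applied only at those leaves, while the global tree structure keeps track of how the pieces are nested across different idempotent nodes. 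The correct dichotomy is thus independent vs.\ dependent multisets of iterable nodes (Definitions in Section~6), where ``dependent'' means two nodes are in an ancestor/adjacent-sibling relation; only those are $\mc{O}(|w|^{k-1})$ many.

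Two further points. First, even inside one idempotent, the reshuffled production $\rod{\mach}{}{\ce\,\prod_i \lefe_{\sigma(i)}\cro{u_{\sigma(i)}}\rige_{\sigma(i)}\,\de}$ is not simply a polynomial in global counters $g_\tau(w)$: the values $\lefe_j,\rige_j$ depend on the outer iterators $\ce,\de$ (equivalently, on where in the architecture tree you are), so you cannot collapse the main term to a single Hadamard combination of regular functions; the paper instead sums over the finite set $\Archs{\mu}{\hei}$ of architectures and shows a separate polyblind counting function $\cou'_{\ar}$ for each. Second, the paper does not build a $(k{-}1)$-marble bimachine for $f''$ directly: it shows $f''$ is polyregular with growth at most $k{-}1$ (Lemmas~\ref{lem:sumd} and~\ref{lem:sumi}) and then invokes pebble minimization (Theorems~\ref{theo:equiv} and~\ref{theo:minimize}). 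Your obstacle (iii) about canonically choosing a distinguished position is therefore a red herring; the real work is the growth bound.
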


\begin{proof}[Proof overview.] It follows from Equation~\ref{eq:i-d} and Lemma~\ref{lem:sumi} that:
\begin{equation*}
\begin{aligned}
f &= (\sumd{\mach} {+} \sumi{\mach})\circ \fact{\mu} \substack{\tnorm{~~(see Theorem~\ref{theo:simon}
for the definition of } \fact{\mu})}\\
&=  \underbrace{\sumi{\mach}'  \circ \fact{\mu}}_{ =: f'}
+ \underbrace{(\sumd{\mach} {+} \sumi{\mach}'') \circ \fact{\mu}}_{ =: f''} \\
\end{aligned}
\end{equation*}
By Lemma~\ref{lem:sumi} one has $f'' \ge 0$. Furthermore, $\sumi{\mach}'$
is polyblind, hence  $f'$ is polyblind since
the class of polyblind functions is closed
under pre-composition by regular functions (even when the outputs
are not unary, see \cite{nous2020comparison}).
Similarly, it follows from lemmas~\ref{lem:sumd} and~\ref{lem:sumi}
that $\sumd{\mach}$ and $\sumi{\mach}''$ are polyregular
with growth at most $k{-}1$ (see Definition~\ref{def:growth}),
hence so is their sum and its pre-composition
by a regular function~\cite{bojanczyk2018polyregular}.
Thus $f''$ is polyregular and has growth at most $k{-}1$.
By theorems~\ref{theo:equiv} and~\ref{theo:minimize},
it can be computed by a  $(k{-}1)$-marble bimachine.
\end{proof}
Now we give the statements and 
the proofs of lemmas~\ref{lem:sumd} and~\ref{lem:sumi}.
An essential tool is the notion of  \emph{factorization forest},
which is recalled below.

\subsection{Factorization forests}

If $\mu:A^* \rightarrow M$ is a morphism into a finite monoid
and $w \in A^*$, a \emph{factorization forest} (called \emph{$\mu$-forest} in the following)
of $w$ is an unranked tree structure defined as follows.

\begin{definition}[Factorization forest \cite{simon1990factorization}] \label{def:facto}
Given a morphism $\mu: A^* \rightarrow M$ and $w\in A^*$, we say that
$\forest$ is a  $\mu$-\emph{forest} of $w$ if:
\item
\begin{itemize}
\item either $\forest = a$ and  $w = a  \in A $;
\item or $\forest = \tree{\forest_1} \cdots \tree{\forest_n}$,
$w = w_1 \cdots w_n$ and for all $1 \le i \le n$,
$\forest_i$ is a $\mu$-forest of $w_i \in A^+$.
Furthermore, if $n \ge 3$ then
$\mu(w_1) = \dots = \mu(w_n)$ is an idempotent of $M$.
\end{itemize}
\end{definition}

\begin{remark} If $\tree{\forest_1} \cdots \tree{\forest_n}$ is a $\mu$-forest,
then so is $\tree{\forest_i} \tree{\forest_{i+1}} \dots \tree{\forest_{j}}$
for $1 \le i \le j \le n$. The empty forest $\vide$ is the unique $\mu$-forest
of the empty word $\vide$.
\end{remark}
We shall use the standard tree vocabulary of ``height'' (a leaf is a tree of
height $1$), ``child'', ``sibling'', ``descendant'' and ``ancestor''
(both defined in a non-strict way: a node is itself
one of its ancestors/descendants), etc.
We denote by $\Nodes{\forest}$ the set of nodes of $\forest$.
In order to simplify the statements, we identify a node $\nod \in \Nodes{\forest}$ with
the subtree rooted in this node. Thus $\Nodes{\forest}$ can also
be seen as the set of subtrees of $\forest$, and $\forest \in \Nodes{\forest}$.
We say that a node is \emph{idempotent} if is has at least
$3$ children (see Definition~\ref{def:facto}).

Given $\mu: A^* \rightarrow M$, we
denote by $\Facts{\mu}{}{w}$ the set of
$\mu$-forests of $w \in A^*$.
If $K \ge 0$, let $\Facts{\mu}{K}{w}$ be the $\mu$-forests
of $w$ of height at most $K$.
Note that $\Facts{\mu}{}{w}$ is a set of tree structures
 which can also be seen as a language over $\apar{A} \defined A \uplus \{\lefttree,\righttree\}$.
Indeed, a forest of $w$ can also be seen as
 ``the word $w$ with brackets'' in Definition~\ref{def:facto}.

\begin{figure}[h!]
	\newcommand{\couleur}{blue!70}
\centering
\begin{tikzpicture}{scale=1}

	\newcommand{\texte}{\small \bfseries \sffamily \mathversion{bold} }

	\node[above] at  (0,0)  {$a$};
	\node[above] at  (1,0)  {$a$};	
	\node[above] at  (2,0)  {$c$};	
	\node[above] at  (3,0)  {$a$};
	\node[above] at  (4,0)  {$c$};
	\node[above] at  (5,0)  {$b$};
	\node[above] at  (6,0)  {$b$};
	\node[above] at  (7,0)  {$c$};
	\node[above] at  (8,0)  {$b$};
	\node[above] at  (9,0)  {$b$};
	\node[above] at  (10,0)  {$b$};
	\node[above] at  (11,0)  {$c$};

	\draw[double] (4,0.975) -- (8,0.975);
	\draw[] (3,1.5) -- (6,1.5);
	\draw[double] (2,1.975) -- (11,1.975);
	\draw[] (0,2) -- (1,2);
	\draw[] (0.5,2.5) -- (6.5,2.5);
	
	\draw[] (0,2) -- (0,0.5);	
	\draw[] (1,2) -- (1,0.5);	
	\draw[] (0.5,2) -- (0.5,2.5);	
	
	\draw[] (2,2) -- (2,0.5);
	\draw[] (9,2) -- (9,0.5);
	\draw[] (10,2) -- (10,0.5);	
	\draw[] (11,2) -- (11,0.5);	

	\draw[] (6.5,2) -- (6.5,2.5);	
	\draw[] (3.5,2.5) -- (3.5,2.75);	

	\draw[] (3,1.5) -- (3,0.5);	
	\draw[] (4,1) -- (4,0.5);	
	\draw[] (5,1) -- (5,0.5);	
	\draw[] (6,1.5) -- (6,0.5);	
	\draw[] (7,1) -- (7,0.5);		
	\draw[] (8,1) -- (8,0.5);	

	\draw[] (4.5,2) -- (4.5,1.5);
	
	\fill[fill=\couleur]  (4.5,1.5)  circle (0.12);
	\fill[fill=\couleur]  (3,0.5)  circle (0.12);
	\fill[fill=\couleur]  (4,0.5)  circle (0.12);
	\fill[fill=\couleur]  (8,0.5)  circle (0.12);
	\fill[fill=\couleur]  (6,1)  circle (0.12);
		
\end{tikzpicture}

\caption{\label{fig:iterable}  The $\mu$-forest $\tree{aa}\tree{c\tree{a\tree{cbbcb}}bbc}$ on $aacacbbcbbbc$}

\end{figure}

\begin{example} \label{ex:factorization} Let $A = \{a,b,c\}$,
$M {=} (\{\neutral,\deutral,0\}, \times)$ with
$\mu(a) \defined \deutral$
and $\mu(b) \defined \mu(c) \defined  0$.
Then
$\forest \defined\tree{aa}\tree{c\tree{a\tree{cbbcb}}bbc} \in \Facts{\mu}{5}{aacacbbcbbbc}$
(we dropped the brackets around single letters for more readability)
is depicted in Figure~\ref{fig:iterable}. Double lines
are used to denote idempotent nodes
(i.e. with more than $3$ children).
\end{example}

A celebrated result states that for any word, a forest of  bounded height
always exist and it can be computed by a bimachine
(with non-unary output alphabet), or a two-way transducer.
The following theorem can also be found in~\cite[Lemma~6.5]{bojanczyk2018polyregular}.

\begin{theorem}[\cite{simon1990factorization}] \label{theo:simon}
Given a morphism $\mu: A^* \rightarrow M$, we have 
$\Facts{\mu}{\hei}{w} \neq \varnothing$ for all $w \in A^*$.
Furthermore, one can build a two-way transducer (with a non-unary output alphabet)
which computes a total
function $\fact{\mu}: A^* \rightarrow (\apar{A})^*,w \mapsto \forest \in \Facts{\mu}{\hei}{w}$.
\end{theorem}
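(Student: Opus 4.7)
The plan is to prove the two parts of the statement separately: first, the combinatorial existence of a $\mu$-forest of height at most $\hei = 3|M|$ for every input word (Simon's theorem), and then the effective construction of a two-way transducer realizing the function $\fact{\mu}$.

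For existence, I would proceed by induction on the rank of the $\mathcal{J}$-class of $\mu(w)$ in the partial order of $\mathcal{J}$-classes of $M$, with $|w|$ breaking ties. The crucial algebraic tool is a Ramsey-type statement: in any sufficiently long factorization $w = w_1 \cdots w_n$, one can find an infix $w_i \cdots w_j$ whose image under $\mu$ is idempotent. Following the standard dichotomy, two cases arise. Either $w$ admits a split $w = uv$ with $\mu(u) <_{\mathcal{J}} \mu(w)$ or $\mu(v) <_{\mathcal{J}} \mu(w)$, in which case we recurse on $u$ and $v$ separately and concatenate the resulting forests at a single (non-idempotent) root; or no such split exists, in which case a Ramsey argument produces a factorization into many consecutive factors with a common idempotent image $e$, yielding an idempotent node whose children are forests of words whose images lie $\mathcal{J}$-strictly below $\mu(w)$. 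Carefully accounting for the depth contributed per $\mathcal{J}$-class (at most three levels: one for the $\mathcal{J}$-split, one for the idempotent node, and one administrative level) gives the bound $3|M|$.

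For the effective construction, I would exploit closure of regular functions under composition~\cite{chytil1977serial}, building the forest in layers from the leaves to the root. At each layer, a two-way transducer scans its (already partially bracketed) input and inserts one further layer of matched brackets around the canonical (e.g. leftmost-longest) factors identified by the induction above; determining the boundaries only requires evaluating $\mu$ on infixes and comparing $\mathcal{J}$-classes, both of which are regular properties computable by a finite automaton scanning in either direction. Iterating at most $3|M|$ such passes produces a two-way transducer with output alphabet $\apar{A} = A \uplus \{\lefttree, \righttree\}$ that computes a total function mapping each $w \in A^*$ to some $\forest \in \Facts{\mu}{\hei}{w}$. Alternatively, one could directly invoke Lemma~6.5 of~\cite{bojanczyk2018polyregular}, where this construction is carried out in detail.

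The main obstacle is obtaining the tight constant $3|M|$ in the height bound: a naïve application of the Ramsey argument gives a much larger bound, and refining to $3|M|$ requires merging the two induction cases so that each drop in the $\mathcal{J}$-order contributes only three levels of depth to the final forest, rather than accumulating separate costs for the $\mathcal{J}$-split step and the idempotent-merging step. For the transducer part, the subtlety is ensuring that each layer's bracketing is a genuine function (not a relation) of the input, which forces us to fix a deterministic canonical factorization that can nevertheless be detected by local regular information; once this is set up, the composition-based argument closes the proof cleanly.
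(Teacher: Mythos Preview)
The paper does not prove this theorem at all: it is stated as a known result, attributed to Simon~\cite{simon1990factorization}, with the remark that it ``can also be found in~\cite[Lemma~6.5]{bojanczyk2018polyregular}.'' So there is no proof in the paper to compare your proposal against. Your sketch is the standard $\mathcal{J}$-class induction argument for Simon's theorem together with a layered bracketing realised by composing two-way transducers, and your closing remark that one could ``directly invoke Lemma~6.5 of~\cite{bojanczyk2018polyregular}'' is precisely the route the paper takes.
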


\subsection{Iterable nodes and productions}

We define the iterable nodes $\itera{\forest} \subseteq \Nodes{\forest}$
as the set of nodes which have both a left
and a right sibling. Such nodes are thus exactly the middle children
of idempotent nodes.

\begin{definition} Let $\forest \in \Facts{\mu}{}{w}$, we define inductively
the \emph{iterable nodes} of $\forest$:
\begin{itemize}
\item if $\forest = a\in A$ is a leaf, $\itera{\forest} \defined \varnothing$;
\item otherwise if $\forest = \tree{\forest_1} \cdots \tree{\forest_n}$, then:
\begin{equation*}
\itera{\forest} \defined \{\forest_i: 2 \le i \le n{-}1\} \cup  \bigcup_{1 \le i \le n} \itera{\forest_i}.
\end{equation*}
\end{itemize}
\end{definition}
Now we define a notion of \emph{skeleton} which selects
the right-most and left-most children.

\begin{definition} Let $\forest \in \Facts{\mu}{}{w}$ and $\nod \in \Nodes{\forest}$, we define
the \emph{skeleton} of $\nod$ by:
\item
\begin{itemize}
\item if $\nod = a\in A$ is a leaf, then $\Ske{\nod} \defined \{ \nod \}$;
\item otherwise if $\nod = \tree{\forest_1} \cdots \tree{\forest_n}$, then
$\Ske{\nod} \defined \{\nod\} \cup \Ske{\forest_1} \cup \Ske{\forest_n}$.
\vspace{0.2cm}
\end{itemize}
\end{definition}
Intuitively, $\Ske{\nod} \subseteq \Nodes{\forest}$ contains all the descendants
of $\nod$ except those which are descendant of a middle child.
We then define the frontier of $\nod$, denoted
$\fr{\nod}{\forest} \subseteq \{1, \dots, |w|\}$
as the set of positions of $w$ which belong to $\Ske{\nod}$
(when seen as leaves of $\forest$).

\begin{example} \label{ex:iterable} In Figure~\ref{fig:iterable},
the top-most blue node $\nod$ is iterable.
Furthermore $\Ske{\nod}$ is the set of blue nodes,
$\fr{\nod}{\forest} = \{4,5,9\}$
and $w[\fr{\nod}{\forest} ] = acb$.
\end{example}
Using the frontiers, we can naturally lift the notion of productions of a $k$-marble
bimachine from multisets of sets of positions to multisets of nodes in a forest.

\begin{definition} Let $\mach = (A,M, \mu, \oras, \lambda)$ be a $k$-marble bimachine,
$w \in A^*$, $\forest \in \Facts{\mu}{}{w}$
 and $\nod_1, \dots, \nod_k \in \Nodes{\forest}$.
 We let 
 $
 \rod{\mach}{\forest}{\multi{\nod_1,\dots, \nod_k}}
 \defined \rod{\mach}{w}{\multi{\fr{\nod_1}{\forest}, \dots, \fr{\nod_k}{\forest}}}
$.
\end{definition}
Using Lemma~\ref{lem:coupe-prod}, we can recover
the function $f$ from the productions over 
the nodes. Lemma~\ref{lem:coupe-iterables} roughly
ranges over all possible tuples of calling positions
of $\mach$.

\begin{lemma} \label{lem:coupe-iterables} Let $f: A^* \rightarrow \Nat$ be computed
by a $k$-marble bimachine $\mach = (A,M, \mu, \oras, \lambda)$. 
If $w \in A^*$ and $\forest \in \Facts{\mu}{}{w}$, we have:
\begin{equation*}
f(w) = \sum_{ \multi{\nod_1, \dots, \nod_k}\subseteq \itera{\forest} \cup \{\forest\}}
\rod{\trans}{\forest}{\multi{\nod_1, \dots, \nod_k}}.
\end{equation*}
\end{lemma}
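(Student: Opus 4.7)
The strategy is to apply Lemma~\ref{lem:coupe-prod} to a carefully chosen partition of $[1{:}|w|]$ built from the forest $\forest$. For each leaf position $p \in [1{:}|w|]$, let $\pi(p)$ denote the lowest ancestor of $p$ in $\forest$ that belongs to $\itera{\forest} \cup \{\forest\}$; this is well-defined since $\forest$ itself is always an ancestor. The plan is to show that the family $\bigl(\fr{\nod}{\forest}\bigr)_{\nod \in \itera{\forest} \cup \{\forest\}}$ forms a partition of $[1{:}|w|]$, where some parts may be empty, and then to rewrite the formula of Lemma~\ref{lem:coupe-prod} using the bijection between parts and nodes.

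For \emph{existence}, I would show $p \in \fr{\pi(p)}{\forest}$, i.e. $p \in \Ske{\pi(p)}$. By minimality of $\pi(p)$, every node strictly between $p$ and $\pi(p)$ in the ancestor chain is non-iterable, hence either a leftmost or a rightmost child of its parent; an easy induction on the length of that chain then yields $p \in \Ske{\pi(p)}$ from the recursive definition of the skeleton. For \emph{uniqueness}, suppose $p \in \fr{\nod}{\forest}$ for some $\nod \ne \pi(p)$ in $\itera{\forest} \cup \{\forest\}$. Both are ancestors of $p$, hence comparable. If $\nod$ is a strict descendant of $\pi(p)$, this contradicts the minimality of $\pi(p)$. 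If $\nod$ is a strict ancestor of $\pi(p)$, then $\pi(p)$ lies on the path from $\nod$ down to $p$; being in $\itera{\forest}$, it is a middle child of its parent, contradicting $p \in \Ske{\nod}$.

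Once the partition is established, Lemma~\ref{lem:coupe-prod} expresses $f(w)$ as a sum of $\rod{\mach}{w}{\multi{I_1, \dots, I_k}}$ over multisets of parts. Translating through $\nod \mapsto \fr{\nod}{\forest}$ and unfolding the definition $\rod{\mach}{\forest}{\multi{\nod_1, \dots, \nod_k}} = \rod{\mach}{w}{\multi{\fr{\nod_1}{\forest}, \dots, \fr{\nod_k}{\forest}}}$ yields the desired formula. The only subtlety is that two distinct nodes in $\itera{\forest} \cup \{\forest\}$ may share the same (necessarily empty) frontier, so the node-indexed multiset may generate more terms than the set-indexed one; however, such terms contribute zero on both sides, so the re-indexing preserves the total sum. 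The real technical content lies in verifying the partition claim; the rest is bookkeeping.
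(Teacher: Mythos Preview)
Your proof is correct and follows the same route as the paper: show that the frontiers $(\fr{\nod}{\forest})_{\nod \in \itera{\forest} \cup \{\forest\}}$ partition $[1{:}|w|]$ and then invoke Lemma~\ref{lem:coupe-prod}. The paper imports this partition fact from~\cite{doueneau2021pebble} (restated later as Lemma~\ref{lem:partition}) whereas you prove it directly; your caveat about possibly empty frontiers is unnecessary, since every node of a $\mu$-forest of a non-empty word has a non-empty skeleton containing at least one leaf, so distinct nodes in $\itera{\forest}\cup\{\forest\}$ have distinct (non-empty, disjoint) frontiers.
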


\begin{proof}
It follows from \cite{doueneau2021pebble} that for all
$w \in A^*$ and  $\forest \in \Facts{\mu}{}{w}$, the set of frontiers 
$ \{\fr{\nod}{\forest}: \nod \in \itera{\forest} \cup \{\forest\}\}$
is a partition of $[1{:} |w|]$. We then apply Lemma~\ref{lem:coupe-prod}.
\end{proof}

\subsection{Dependent multisets of nodes}

The multisets $\multi{\nod_1, \dots, \nod_k}\subseteq \itera{\forest} \cup \{\forest\}$ of 
Lemma~\ref{lem:coupe-iterables} will be put into two categories. The \emph{independent}
multisets are those whose nodes are distinct and ``far enough'' in $\forest$.
The remaining ones are said \emph{dependent};
their number is bounded by a polynomial of degree $k{-}1$.

\begin{definition}[Independent multiset] Let $\mu: A^* \rightarrow M$, $w \in A^*$ and
$\forest \in \Facts{\mu}{}{w}$, we say that a
multiset $\Nod \defined \multi{\nod_1, \dots, \nod_k} \subseteq \itera{\forest}$ is \emph{independent}
if for all $1 \le i \neq j \le k$:
\begin{itemize}
\item $\nod_i$ is not an ancestor of $\nod_j$;
\item $\nod_i$ is not the immediate left sibling of an ancestor of $\nod_j$;
\item $\nod_i$ is not the immediate right sibling of an ancestor of $\nod_j$.
\end{itemize}
\vspace{0.2cm}
\end{definition}
Note that if $\Nod$ is independent, then $\forest \not \in \Nod$ since it
is not an iterable node. We denote by $\Indeps{\forest}$ the set of independent multisets.
Conversely, let $\Deps{\forest}$ be the set of
multisets $\multi{\nod_1, \dots, \nod_k} \subseteq \itera{\forest}\cup \{\forest\}$
which are \emph{dependent} (i.e. not independent).
By Lemma~\ref{lem:coupe-iterables}, if $\mach = (A,M, \mu, \oras, \lambda)$
computes $f: A^* \rightarrow \Nat$ and $\forest \in \Facts{\mu}{}{w}$, then:
\begin{equation}
\label{eq:i-d}
\begin{aligned}
f(w) =& \sum_{ \Nod \in \Indeps{\forest}}
\rod{\mach}{\forest}{\Nod}
+ \sum_{ \Nod \in \Deps{\forest}}
\rod{\mach}{\forest}{\Nod}
\end{aligned}
\end{equation}
The idea is now to compute these two sums 
separately. We begin with the second one.

\begin{lemma} \label{lem:sumd} Given a $k$-marble bimachine
$\mach = (A,M, \mu, \oras, \lambda)$, the following function
is (effectively) polyregular and has growth at most $k{-}1$:
\begin{equation*}
\sumd{\mach}: (\apar{A})^* \rightarrow \Nat, \forest \mapsto
\left\{
    \begin{array}{l}
        \displaystyle \sum_{\Nod \in \Deps{\forest}}
        \rod{\mach}{\forest}{\Nod}
	\substack{\tnorm{{}~if } \forest \in \Facts{\mu}{\hei}{w} \tnorm{{} for some } w \in A^*;} \\
        0 \substack{\tnorm{{} otherwise.}}\\
    \end{array}
\right.
\end{equation*}
\end{lemma}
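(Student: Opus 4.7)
The plan is to decompose $\Deps{\forest}$ into finitely many ``dependency patterns,'' each contributing a polyregular function of growth at most $k{-}1$, and then sum them.

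\textbf{Setup.} The set of well-formed $\mu$-forests of height at most $\hei$, viewed as a language over $\apar{A}$, is regular; outside it $\sumd{\mach}$ outputs $0$, which is polyregular. So I can freely assume $\forest \in \Facts{\mu}{\hei}{w}$ for some $w$ and recognize this domain by state information.

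\textbf{Patterns.} A multiset $\multi{\nod_1, \dots, \nod_k} \in \Deps{\forest}$ has one of finitely many ``types'': either one of the $\nod_i$ equals the root $\forest$, or two of them stand in one of the three listed close relations (ancestor, or immediate left/right sibling of an ancestor). Since $\forest$ has height at most $\hei$, each such close relation is described by a path of length at most $\hei$ in the forest. So there are only finitely many pattern types, parameterized by (i) which indices equal $\forest$, (ii) which pairs are close, and (iii) the exact bounded-length path between them.

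\textbf{Per-pattern contribution.} Fix a pattern $p$ and let $S_p(\forest)$ denote the sum of $\rod{\mach}{\forest}{\Nod}$ over $\Nod$ matching $p$. Any such $\Nod$ has at most $k{-}1$ truly free nodes; the remaining ones are recoverable from the free ones by bounded tree navigation (a regular operation on $\forest$). So $S_p$ can be computed by a $(k{-}1)$-marble bimachine that enumerates the free nodes using $k{-}1$ nested marbles and, at the innermost level, reconstructs the constrained nodes and evaluates the production by calling into a suitable composition of $\mach$'s submachines. By Theorems~\ref{theo:equiv} and~\ref{theo:minimize}, $S_p$ is polyregular of growth at most $k{-}1$. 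Either by inclusion--exclusion, or by partitioning $\Deps{\forest}$ into disjoint classes using e.g.\ the lexicographically first dependency that occurs, $\sumd{\mach}$ is a finite linear combination of such $S_p$, hence polyregular of growth at most $k{-}1$ by Theorem~\ref{theo:cauchy}.

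\textbf{Main obstacle.} The hardest point is the construction of the $(k{-}1)$-marble bimachine computing $S_p$. For positions constrained by the pattern, we must inject their role into the nested submachines of $\mach$ without spending a marble on them, by composing each level of $\mach$ with the regular navigation that locates the constrained positions from the free ones. Care is also needed to avoid double-counting when a multiset matches several patterns, and to ensure the reduction produces machines that are truly $(k{-}1)$-marble rather than $k$-marble.
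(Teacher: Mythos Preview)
Your decomposition into finitely many dependency patterns is sound, and the observation that a dependent multiset has at most $k{-}1$ ``free'' nodes is exactly the reason the growth bound holds. However, your proof takes a harder route than necessary, and the obstacle you flag at the end is not the real difficulty.

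The paper's argument is much more direct and splits the two claims. For polyregularity it simply builds a $k$-pebble transducer (not $k{-}1$): it loops over all $k$-tuples $i_1\le\cdots\le i_k$ of leaf positions of $\forest$, recovers the nodes $\nod_j=\frm{i_j}{\forest}$, tests the regular property ``$\multi{\nod_1,\dots,\nod_k}\in\Deps{\forest}$'', and if so outputs $\rod{\mach}{w}{\multi{i_1,\dots,i_k}}$. For the growth bound it argues separately that (a) each production $\rod{\mach}{\forest}{\Nod}$ is bounded by a constant, because frontiers have size at most $2^{\hei}$ and $\rod{\mach}{w}{\multi{i_1,\dots,i_k}}$ is uniformly bounded; and (b) $|\Deps{\forest}|=\mc{O}(|\forest|^{k-1})$ by the same ``one node determined by another'' counting you use. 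That is all the lemma asks for: polyregular \emph{and} growth $\le k{-}1$, not ``computable by a $(k{-}1)$-marble bimachine''.

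By contrast you try to realise the growth bound \emph{syntactically}, building a $(k{-}1)$-marble machine for each $S_p$. This is where your ``Main obstacle'' appears, and it is self-inflicted: since $\rod{\mach}{\forest}{\Nod}$ is a bounded constant depending only on regular information about the marked positions, there is no need to ``call into $\mach$'s submachines'' at all --- the innermost level just looks up which of finitely many values to output. Once you notice this, your per-pattern machine becomes easy, but it is still more bookkeeping (patterns, partitioning to avoid double-counting, navigation to constrained nodes) than the paper's $k$-pebble loop plus counting argument. Your inclusion--exclusion option should also be dropped, since polyregular functions are $\Nat$-valued and subtraction is not available in that class; stick to the disjoint partition.
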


\begin{remark}
For this result, we do not need to assume that $\mach$ is permutable.
\end{remark}

\subsection{Independent multisets of nodes}

In order to complete the description of $f$ from
Equation~\ref{eq:i-d}, it remains to treat the productions
over independent multisets of nodes.
When \mbox{$\multi{\nod_1, \dots, \nod_k} \in \Indeps{\forest}$}, all the $\nod_i$
must be distinct, hence we shall denote it by a set $\set{\nod_1, \dots, \nod_k}$.
We define the counterpart of $\sumd{\mach}$:
\begin{equation*}
\sumi{\mach}: (\apar{A})^* \rightarrow \Nat, \forest \mapsto
\left\{
    \begin{array}{l}
        \displaystyle \sum_{\Nod \in \Indeps{\forest}}
        \rod{\mach}{\forest}{\Nod}
	\substack{\tnorm{{}~if } \forest \in \Facts{\mu}{\hei}{w} \tnorm{{} for some } w \in A^*;} \\
        0 \substack{\tnorm{{} otherwise.}}\\
    \end{array}
\right.
\end{equation*}

\begin{lemma} \label{lem:sumi} Given a $k$-marble bimachine $\mach$
which is $2^{\hei}$-permutable,
one can build a polyblind function $\sumi{\mach}': (\apar{A})^* \rightarrow \Nat$
and a polyregular function $\sumi{\mach}'': (\apar{A})^* \rightarrow \Nat$
with growth at most $k{-}1$, such that
$\sumi{\mach} = \sumi{\mach}' + \sumi{\mach}''$.
\end{lemma}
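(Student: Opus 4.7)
The plan is to exploit $2^{\hei}$-permutability to show that for an independent multiset $\set{\nod_1, \dots, \nod_k} \in \Indeps{\forest}$, the production $\rod{\mach}{\forest}{\set{\nod_1, \dots, \nod_k}}$ depends only on a bounded amount of information attached to each node individually. This factorization will match the algebraic structure of Hadamard products of regular functions, so the main piece will be polyblind by Theorem~\ref{theo:hadamard}; the remainder will collect a polynomially smaller number of ``degenerate'' $k$-tuples and will have growth at most $k{-}1$.

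First, I would associate to every iterable node $\nod$ of a $\mu$-forest $\forest \in \Facts{\mu}{\hei}{w}$ a \emph{type} $\tau(\nod)$ taken in a fixed finite set $T$, recording a bounded amount of data: a summary of $\Ske{\nod}$ together with the sequence of monoid images of the left and right siblings along the ancestor path from $\forest$ to $\nod$. Since $\hei = 3|M|$ is a fixed bound, only finitely many types arise, and for each $\tau \in T$ the function $g_\tau \colon (\apar{A})^* \to \Nat$ counting iterable nodes of $\forest$ of type $\tau$ is regular, computable by a single left-to-right pass over the bracketed encoding $\forest \in (\apar{A})^*$.

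Next, I would apply permutability. When $\set{\nod_1,\dots,\nod_k} \in \Indeps{\forest}$, the independence condition ensures that the frontiers $\fr{\nod_i}{\forest}$ are separated in $w$ by idempotent blocks contributed by the siblings of each iterable node. The multicontext representing $\rod{\mach}{\forest}{\set{\nod_1,\dots,\nod_k}}$ can thus be rearranged, after grouping consecutive siblings, into an iterator satisfying the hypotheses of Definition~\ref{def:factor-perm} with $K \le 2^{\hei}$ (enough to absorb any bounded-height context up to the root). Proposition~\ref{prop:factor-perm} then yields that this production depends only on the multiset $\multi{\tau(\nod_1),\dots,\tau(\nod_k)}$, not on the relative arrangement of the $\nod_i$. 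Writing $c_{\multi{\tau_1,\dots,\tau_k}} \in \Nat$ for this common value, I would set
\begin{equation*}
\sumi{\mach}'(\forest) \defined \sum_{\multi{\tau_1,\dots,\tau_k}} c_{\multi{\tau_1,\dots,\tau_k}} \cdot \bigl(g_{\tau_1} \had \cdots \had g_{\tau_k}\bigr)(\forest),
\end{equation*}
which is polyblind by Theorem~\ref{theo:hadamard}. The remainder $\sumi{\mach}'' \defined \sumi{\mach} - \sumi{\mach}'$ then collects those $k$-tuples of iterable nodes that are \emph{not} independent (two coordinates coincide, or are linked by the ancestor/immediate-sibling relation); each such dependency fixes one coordinate as a function of another, so there are only $\mc{O}(|\forest|^{k-1})$ such tuples, each contributing a bounded value. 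Hence $\sumi{\mach}''$ has growth at most $k{-}1$ and is polyregular by a direct bimachine construction analogous to Lemma~\ref{lem:sumd}.

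The main obstacle will be verifying that the ``independence'' constraint can genuinely be traded for a Hadamard-product expression at a cost of growth strictly less than $k$. Concretely, I would have to show that the difference between counting independent $k$-tuples and counting all $k$-tuples of iterable nodes with prescribed types is itself a polyregular function of growth at most $k{-}1$: each dependency cuts one degree of freedom in the enumeration, and the bounded height $\hei$ ensures that this reduction is uniform and witnessed by a finite-state scan of $\forest$. This is also the point where the permutability threshold $2^{\hei}$ turns out to be the right choice, since it is precisely enough to absorb the entire surrounding context of any collection of independent nodes inside a single iterator.
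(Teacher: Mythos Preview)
Your overall strategy---reduce $\rod{\mach}{\forest}{\set{\nod_1,\dots,\nod_k}}$ to a bounded invariant, then express $\sumi{\mach}$ as a finite combination of Hadamard products plus a lower-order remainder---is the right one and matches the paper's. But two steps do not go through as written.

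First, a single application of permutability does not give what you claim. Definition~\ref{def:factor-perm} lets you permute the $\cro{u_i}$ inside an $(x,K)$-iterator only when its aggregate image $e=m_0\prod_i e_i m_i$ is idempotent and the outer iterators $\ce,\de$ are held fixed. For the full linearization $\lin{\forest}{\set{\nod_1,\dots,\nod_k}}$ this image is $\mu(\forest)$ (item~\ref{po:il2} of Lemma~\ref{lem:indep-lin}), which is \emph{not} idempotent in general (e.g.\ when the root has exactly two children). The paper therefore introduces a tree-structured invariant, the \emph{architecture} $\arch{\forest}{\Nod}$, which recursively peels off leftmost/rightmost subtrees until it reaches an idempotent node where permutability actually applies; only there does one get a multiset-type reduction (Proposition-Definition~\ref{depro:prod-archi}). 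Your ``sequence of sibling images along the ancestor path'' may well contain enough data to reconstruct the architecture, but the argument that the production depends only on it still requires this inductive unravelling, not a one-shot appeal to permutability.

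Second, and more concretely, your $\sumi{\mach}'$ is an \emph{over}-count, so $\sumi{\mach}'' = \sumi{\mach} - \sumi{\mach}'$ is negative and does not land in $\Nat$. The Hadamard product $g_{\tau_1}\had\cdots\had g_{\tau_k}$ counts ordered $k$-tuples of iterable nodes with repetitions, whereas $\sumi{\mach}$ sums only over independent $k$-sets; the discrepancy carries a factor $k!$ in the leading term, not merely a lower-order correction. The paper fixes this in Lemma~\ref{lem:counts}: for each architecture $\ar$ it constructs a polyblind \emph{under}-count $\cou'_\ar \le \cou_\ar$, essentially by counting $r$-subsets of nodes of a given type with no two of them ``close'' (the number of such subsets among $X$ nodes being $\binom{X-r+1}{r}$, a polynomial in $X$ expressible as a Hadamard product of regular functions divided by $r!$). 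The nonnegative remainder $\cou''_\ar = \cou_\ar - \cou'_\ar$ then has growth at most $k{-}1$. Getting a polyblind under-approximation---not an over-approximation---is the crux of the lemma, and your proposal does not address it.
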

If $\sumi{\mach}$ was a polynomial, then
$\sumi{\mach}'$ should roughly be its term of highest degree
and $\sumi{\mach}''$ corresponds to
 a corrective term.

The rest of this section is devoted
to the proof of Lemma~\ref{lem:sumi}.
In order to simplify the notations, we extend a morphism
$\mu$ to its $\mu$-forests by $\mu(\forest) \defined \mu(w)$
when $\forest \in \Facts{\mu}{}{w}$.
Given $\nod \in \Nodes{\forest}$, we denote by $\depth{\forest}{\nod}$ its
depth in the tree structure $\forest$ (the root has depth $1$, and it is defined
inductively as usual). Now we introduce the notion of \emph{linearization}
of $\nod \in \Nodes{\forest}$, which is used to describe
$w[\fr{\nod}{\forest}]$ as a $1$-multicontext.

\begin{definition}[Linearization] \label{def:lili-node}
Let $\mu: A^* \rightarrow M$,
$w \in A^*$ and $\forest \in \Facts{\mu}{}{w}$.
The \emph{linearization} of $\nod \in \Nodes{\forest}$
is a $1$-multicontext $m\cro{u}m'$ defined by induction:
\item
\begin{itemize}
\item if $\nod = \forest$ then
$\lin{\forest}{\nod} \defined \cro{w[\fr{\forest}{\forest}]}$;
\item otherwise $\forest = \tree{\forest_1} \cdots \tree{\forest_n}$,
and $\nod \in \Nodes{\forest_i}$ for some $1 \le i \le n$. We define:\\
$
\lin{\forest}{\nod} \defined \mu(\forest_1)\cdots \mu(\forest_{i-1})
\lin{\forest_i}{\nod} \mu(\forest_{i+1})\cdots \mu(\forest_{n}).
$
\end{itemize}
\vspace*{0.2cm}
\end{definition}
We finally introduce the notion of \emph{architecture}. Intuitively, it is
a simple tree which describes the positions of a set of nodes $\Nod \in \Indeps{\forest}$
in its forest $\forest$. We build it inductively on the example depicted in Figure~\ref{fig:archi}.
At the root, we see that there is no node of $\Nod$ in the left subtree, hence we replace
it by its image under $\mu$. The right subtree is an idempotent node
whose leftmost and rightmost subtrees have no node in $\Nod$.
We thus replace this idempotent node by a leaf containing the multiset of the 
linearizations and depths of the $\nod \in \Nod$. Since our machine $\mach$ is
permutable, this simple information will be enough to recover
$\rod{\mach}{\forest}{\Nod}$.

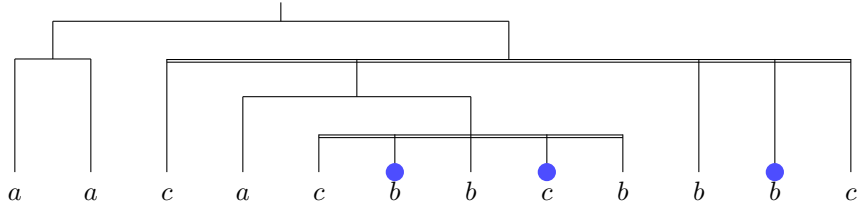
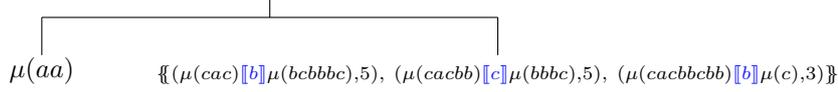
\begin{figure}[h!]

	\newcommand{\couleur}{blue!70}

     \begin{subfigure}[b]{1\textwidth}
     \centering
\begin{tikzpicture}{scale=1}

	\newcommand{\texte}{\small \bfseries \sffamily \mathversion{bold} }

	\node[above] at  (0,0)  {$a$};
	\node[above] at  (1,0)  {$a$};	
	\node[above] at  (2,0)  {$c$};	
	\node[above] at  (3,0)  {$a$};
	\node[above] at  (4,0)  {$c$};
	\node[above] at  (5,0)  {$b$};
	\node[above] at  (6,0)  {$b$};
	\node[above] at  (7,0)  {$c$};
	\node[above] at  (8,0)  {$b$};
	\node[above] at  (9,0)  {$b$};
	\node[above] at  (10,0)  {$b$};
	\node[above] at  (11,0)  {$c$};

	\draw[double] (4,0.975) -- (8,0.975);
	\draw[] (3,1.5) -- (6,1.5);
	\draw[double] (2,1.975) -- (11,1.975);
	\draw[] (0,2) -- (1,2);
	\draw[] (0.5,2.5) -- (6.5,2.5);
	
	\draw[] (0,2) -- (0,0.5);	
	\draw[] (1,2) -- (1,0.5);	
	\draw[] (0.5,2) -- (0.5,2.5);	
	
	\draw[] (2,2) -- (2,0.5);
	\draw[] (9,2) -- (9,0.5);
	\draw[] (10,2) -- (10,0.5);	
	\draw[] (11,2) -- (11,0.5);	

	\draw[] (6.5,2) -- (6.5,2.5);	
	\draw[] (3.5,2.5) -- (3.5,2.75);	

	\draw[] (3,1.5) -- (3,0.5);	
	\draw[] (4,1) -- (4,0.5);	
	\draw[] (5,1) -- (5,0.5);	
	\draw[] (6,1.5) -- (6,0.5);	
	\draw[] (7,1) -- (7,0.5);		
	\draw[] (8,1) -- (8,0.5);	

	\draw[] (4.5,2) -- (4.5,1.5);

	\fill[fill=\couleur]  (5,0.5)  circle (0.12);
	\fill[fill=\couleur]  (7,0.5)  circle (0.12);
	\fill[fill=\couleur]  (10,0.5)  circle (0.12);

\end{tikzpicture}
\caption{In blue, a set $\Nod$ of $3$ independent nodes in the forest from Figure~\ref{fig:iterable}}
     \end{subfigure}

     \begin{subfigure}[b]{1\textwidth}
     \centering\begin{tikzpicture}{scale=1}

	\newcommand{\texte}{\small \bfseries \sffamily \mathversion{bold} }


	\draw (0,9.5) -- (0,9);
	\draw (6,9.5) -- (6,9);
	\draw (0,9.5) -- (6,9.5);
	\draw (3,9.5) -- (3,9.75);

	\node[above] at (0,8.5) {$\mu(aa)$};
	\node[above] at (6,8.5) {$\substack{\multi{(\mu(cac)\cro{b}\mu(bcbbbc),5),~
	(\mu(cacbb)\cro{c}\mu(bbbc),5),~
	(\mu(cacbbcbb)\cro{b}\mu(c),3)}}$};

\end{tikzpicture}
\caption{The corresponding architecture}
     \end{subfigure}
 
\caption{\label{fig:archi} A set of independent nodes and its architecture.}

\end{figure}

\begin{definition}[Architecture]
Let $w \in A^*$, $\forest \in \Facts{\mu}{}{w}$ and $\Nod \in \Indeps{\forest}$.
We define the \emph{architecture} of $\Nod$ in $\forest$ by induction as follows:
\begin{itemize}
\item if $\forest = \vide$, then $k=0$. We define $\arch{\forest}{\Nod} \defined \vide$;
\item if $\forest = a$, then $k=0$. We define $\arch{\forest}{\Nod} \defined \mu(a)$;
\item otherwise $\forest = \tree{\forest_1} \cdots \tree{\forest_n}$ with $n \ge 1$:
\begin{itemize}
\item if $k=0$, we set $\arch{\forest}{\Nod} = \tree{\mu(\forest)}$;
\item else if $\Nod_1 \defined \Nod \cap \Nodes{\forest_1} \neq \varnothing$,
then $\Nod_1 \in \Indep^{|\Nod_1|}(\forest_1)$ (since $\forest_1 \not \in \Nod$ by iterability)
and $\Nod \smallsetminus \Nod_1 \in \Indep^{k-|\Nod_1|}(\tree{\forest_2} \cdots \tree{\forest_n})$
(since $\forest_2 \not \in \Nod$ by independence). We set:\\
$
\arch{\forest}{\Nod} \defined \tree{\arch{\forest_1}{\Nod_1}}
\arch{\tree{\forest_2} \cdots \tree{\forest_n}}{\Nod \smallsetminus \Nod_1}.
$
\item else if $\Nod_n \defined  \Nod {\cap} \Nodes{\forest_n} \neq \varnothing$,
we define symmetrically:\\
$
\arch{\forest}{\Nod} \defined 
\arch{\tree{\forest_1} \cdots \tree{\forest_{n-1}}}{\Nod \smallsetminus \Nod_n}
\tree{\arch{\forest_n}{\Nod_n}}
$
\item else $\Nod_1 = \Nod_n = \varnothing$ but $k > 0$, thus $n \ge 3$
and $\mu(\forest)$ is idempotent. We define:\\
$\arch{\forest}{\Nod} \defined \tree{\multi{(\lin{\forest}{\nod}, \depth{\forest}{\nod}): \nod \in \Nod}}
$
\end{itemize}
\end{itemize}
\vspace{0.2cm}
\end{definition}
Given a morphism, the set of architectures over bounded-height forests is finite.

\begin{claim} \label{claim:bounded-archis-K}
The set
$
\Archs{\mu}{\hei} \defined \{\arch{\forest}{\Nod}: \Nod
\in \Indeps{\forest}, \forest \in \Facts{\mu}{\hei}{w}, w \in A^* \}
$ is finite, given $k \ge 0$ and a morphism $\mu: A^* \rightarrow M $.
\end{claim}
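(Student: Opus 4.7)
The plan is to prove finiteness by bounding separately the tree-shape of an architecture and the alphabet of labels that can decorate its nodes.

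The key preliminary fact is a bound on the words $u$ that appear in the linearizations $\lin{\forest}{\nod} = m \cro{u} m'$. I claim that, for every $\forest \in \Facts{\mu}{\hei}{w}$ and every $\nod \in \Nodes{\forest}$, one has $|u| = |\fr{\nod}{\forest}| \le 2^{\hei-1}$. The proof is a direct induction on the recursive clause $\Ske{\nod} = \{\nod\} \cup \Ske{\forest_1} \cup \Ske{\forest_n}$: at every internal node of $\Ske{\nod}$ only the leftmost and rightmost children are retained, so $\Ske{\nod}$ is a tree of height at most $\hei$ in which every internal node has at most two children, hence has at most $2^{\hei-1}$ leaves. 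The positions of these leaves in $w$ form $\fr{\nod}{\forest}$, giving the claimed bound. Consequently each $\lin{\forest}{\nod}$ lies in the finite set $M \times A^{\le 2^{\hei-1}} \times M$, and in particular the leaf-multisets that decorate architecture leaves---which have cardinality at most $k$ and whose entries pair a linearization with a depth in $[1{:}\hei]$---are drawn from a finite set. Together with the fact that the only other labels are the brackets $\lefttree,\righttree$ and elements of $M$, this shows that the label alphabet of architectures is finite.

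Second, I would bound the tree-size of $\arch{\forest}{\Nod}$ by a function $T(k,\hei)$ via structural induction. The base cases $\forest = \vide$ and $\forest = a$, the $k=0$ case, and the leaf-multiset case each produce a constant-size piece. In the remaining recursive cases the architecture is a juxtaposition of a vertical child $\arch{\forest_1}{\Nod_1}$ (on a forest of height strictly less than $\hei$ and with $\Nod_1 \neq \varnothing$) and a horizontal sibling on $\tree{\forest_2}\cdots\tree{\forest_n}$ (on a forest of the same height but with the strictly smaller independent multiset $\Nod \smallsetminus \Nod_1$); the symmetric case is identical. A double induction on $(k, \hei)$ immediately closes the bound.

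Combining both ingredients, every element of $\Archs{\mu}{\hei}$ is a bounded-size tree labelled over a finite alphabet, so $\Archs{\mu}{\hei}$ is finite. The main obstacle is the first step: one must recognise that the skeleton, although built from a forest of unbounded arity, inherits only binary branching, which is precisely what yields the exponential-in-$\hei$ bound on $|u|$ and motivates the $2^{\hei}$-permutability hypothesis used elsewhere in the paper.
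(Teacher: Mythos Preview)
Your proposal is correct and follows essentially the same approach as the paper: bound the label alphabet (via the $2^{\hei}$ bound on $|\fr{\nod}{\forest}|$ coming from the binary branching of skeletons) and bound the tree shape. The only cosmetic difference is that the paper argues the shape bound by giving explicit height ($\le \hei$) and branching ($\le k{+}3$) bounds rather than your double induction on $(k,\hei)$; both yield a finite set of possible tree shapes and are interchangeable here.
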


\begin{claimproof}
The architectures from $\Archs{\mu}{\hei}$ are tree structures of height at most
$\hei$. Furthermore they have a branching bounded by $k{+}3$
and their leaves belong to a finite set (they are
either idempotents $e \in M$, or multisets of at most
$k$ elements of the form $(m\cro{u}m',d)$
with $m,m' \in M$, $|u| \le 2^{\hei}$
and $1 \le d \le \hei$).
\end{claimproof}
Using the permutability of the $k$-marble bimachine, we show that the production
over a set of independent nodes only depends on its architecture.
This result enables us to define the notion of production over
an architecture.

\begin{depro} \label{depro:prod-archi}
Let $\mach = (A,M, \mu, \oras, \lambda)$ be a $2^{\hei}$-permutable $k$-marble bimachine.
Let $w,w' \in A^*$, $\forest\in \Facts{\mu}{\hei}{w}$ and $\forest' \in \Facts{\mu}{\hei}{w'}$,
$\Nod \in \Indeps{\forest}$
and $\Nod'\in \Indeps{\forest'}$
such that
$\ar \defined \arch{\forest}{\Nod} = \arch{\forest'}{\Nod'}$.
Then
$
\rod{\mach}{\forest}{\Nod} = \rod{\mach}{\forest'}{\Nod'}
$.\\
We define $\rod{\mach}{}{\ar}$ as the above value.
\end{depro}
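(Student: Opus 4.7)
The plan is to prove the equality $\rod{\mach}{\forest}{\Nod} = \rod{\mach}{\forest'}{\Nod'}$ whenever the architectures agree, by structural induction on $\arch{\forest}{\Nod}$, following the cases in its definition. The backbone of the argument is Proposition-Definition~\ref{depro:factor-k}: after unfolding $\rod{\mach}{\forest}{\Nod}$ as a sum over $k$-tuples of positions $(i_1,\ldots,i_k)$ with $i_j \in \fr{\nod_j}{\forest}$ (one per node of $\Nod$), each summand $\rod{\mach}{w}{\multi{i_1,\dots,i_k}}$ depends only on the letters at the chosen positions and the monoid images of the intervening factors. Thus it suffices to show that the architecture retains exactly this information, up to permutations justified by permutability.

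For the recursive cases of the architecture (when $\Nod$ splits between the leftmost subforest $\forest_1$ and the rest, or symmetrically the rightmost $\forest_n$), the positions in $\forest_1$ all precede those in $\tree{\forest_2}\cdots\tree{\forest_n}$. So the sorted $k$-tuples split cleanly into those from each side, and the architecture $\tree{\arch{\forest_1}{\Nod_1}}\arch{\tree{\forest_2}\cdots\tree{\forest_n}}{\Nod \smallsetminus \Nod_1}$ records precisely the monoid images of the separating factors together with the two sub-architectures. The iterability assumption is what ensures $\forest_1 \notin \Nod$ (and $\forest_2 \notin \Nod$ by independence), making the decomposition valid. Applying the induction hypothesis on each part and concatenating the resulting multicontexts delivers the equality.

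The main obstacle is the idempotent case: $\forest = \tree{\forest_1}\cdots\tree{\forest_n}$ with $n \ge 3$ and $\mu(\forest)$ idempotent, where all nodes of $\Nod$ lie strictly among the middle children and the architecture collapses to a single leaf containing the multiset $\multi{(\lin{\forest}{\nod}, \depth{\forest}{\nod}) : \nod \in \Nod}$, forgetting which middle children contain the nodes. To justify that this multiset determines $\rod{\mach}{\forest}{\Nod}$, I realize the production as one on an $(x,2^{\hei})$-iterator for $\mu$: each $\nod \in \Nod$ contributes marked factors $u_i$ coming from $w[\fr{\nod}{\forest}]$, whose lengths are bounded by $2^{\hei}$ since $\forest$ has height at most $\hei$ and the skeleton of a subtree of height $d$ contains at most $2^d$ leaves; the separators between consecutive marked factors are monoid elements in the $\mu(\forest)$-class, which is idempotent. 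The $2^{\hei}$-permutability of $\mach$ then lets me permute the marked factors freely and replace each separator by the canonical $\lefe_j, \rige_j$ built only from the idempotent $\mu(\forest)$, showing precisely that the production depends only on the multiset of pairs $(\lin{\forest}{\nod}, \depth{\forest}{\nod})$. The depth is recorded alongside the linearization because a single $\nod$ does not contribute a single marked position but a nested iterator whose internal shape is controlled by $\depth{\forest}{\nod}$; equal depths are what allow this inner structure to be reconstructed identically on both $\forest$ and $\forest'$.

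Once the three cases are handled, the induction yields well-definedness of $\rod{\mach}{}{\ar}$ as the common value, concluding the proof.
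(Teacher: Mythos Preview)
Your high-level strategy---structural induction on the architecture, with the idempotent leaf case handled via permutability---matches the paper. But there is a genuine gap in how you set up the induction.

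You state the induction hypothesis as ``same architecture implies equal production $\rod{\mach}{\forest}{\Nod} = \rod{\mach}{\forest'}{\Nod'}$''. This is too weak for the recursive case $\ar = \tree{\ar_1}\br$. There, writing $\forest = \tree{\forest_1}\gorest$ and $\Nod = \Nod_1 \uplus \Nod_2$ with $\Nod_1 \subseteq \Nodes{\forest_1}$, the production you must compare is $\rod{\mach}{}{\lin{\forest_1}{\Nod_1}~\lin{\gorest}{\Nod_2}}$, a $k$-multicontext. To swap $\lin{\forest_1}{\Nod_1}$ for $\lin{\forest'_1}{\Nod'_1}$ you need the induction hypothesis applied to $\ar_1$ (whose rank is $x_1 < k$), but with the remaining $k-x_1$ marked factors sitting in a \emph{context} $\de = \lin{\gorest}{\Nod_2}$ on the right; and then symmetrically a context $\ce = \lin{\forest'_1}{\Nod'_1}$ on the left when you swap the second part. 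Your phrase ``applying the induction hypothesis on each part and concatenating the resulting multicontexts'' glosses over exactly this: the sub-architecture alone does not determine a $k$-marble production, only an $x_1$-multicontext sitting inside a larger one. The paper fixes this by proving the strengthened statement: for every rank-$x$ architecture $\ar$ and every pair of iterator contexts $\ce$, $\de$ with $\ell + x + r = k$, one has $\rod{\mach}{}{\ce~\lin{\forest}{\Nod}~\de} = \rod{\mach}{}{\ce~\lin{\forest'}{\Nod'}~\de}$. Lemma~\ref{lem:indep-lin} guarantees that the sub-linearizations occurring as contexts are indeed iterators of the right shape, so the strengthened hypothesis applies; and the idempotent case now matches Definition~\ref{def:factor-perm} exactly, since permutability is stated with ambient iterators $\ce$, $\de$.

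A smaller point: your explanation of why depth is recorded---``a single $\nod$ does not contribute a single marked position but a nested iterator whose internal shape is controlled by $\depth{\forest}{\nod}$''---is not correct. In the linearization $\lin{\forest}{\Nod}$, each node $\nod$ contributes exactly one marked factor $\cro{w[\fr{\nod}{\forest}]}$ (Lemma~\ref{lem:indep-lin}), and the paper's proof of the idempotent case (Claims~\ref{claim:pe}--\ref{claim:all}) uses only the multiset of linearizations $\lin{\forest}{\nod}$, not the depths. The depth component of $\Le$ is exploited later, in the counting argument for Lemma~\ref{lem:counts} (specifically Sublemma~\ref{slem:decompose-direct}, where one peels off nodes of minimal depth), not here.
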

By using the previous statements, we get for all $w \in A^*$ and $\forest \in \Facts{\mu}{\hei}{w}$:
\begin{equation*}
\begin{aligned}
\sumi{\mach}(\forest) &
= \sum_{\Nod \in \Indeps{\forest}}
\rod{\mach}{\forest}{\Nod} = \sum_{\ar \in \Archs{\mu}{\hei}}
\sum_{\substack{\Nod \in \Indeps{\forest}\\
\arch{\forest}{\Nod} = \ar}}
\rod{\mach}{\forest}{\Nod}
\\
& = \sum_{\ar \in \Archs{\mu}{\hei}}
\rod{\mach}{}{\ar} \times \cou_{\ar}(\forest)\\
\end{aligned}
\end{equation*}
where $\cou_{\ar}(\forest) \defined
|\{\Nod \in \Indeps{\forest}:
\arch{\forest}{\Nod} = \ar\}|$.
It describes the number of multisets of independent nodes
which have a given architecture. Now we show how to compute this function
as a sum of a polyblind function and a polyregular function with lower growth.

\begin{lemma} \label{lem:counts}
Let $\mu: A^* \rightarrow M$.
Given an architecture $\ar \in \Archs{\mu}{\hei}$, one can build:
\begin{itemize}
\item a polyblind function ${\cou'_{\ar}}: (\apar{A})^* \rightarrow \Nat$;
\item a polyregular function ${\cou''_\ar}: (\apar{A})^* \rightarrow \Nat$
with growth at most $k{-}1$;
\end{itemize}
such that
$\cou_{\ar}(\forest) =  {\cou'_{\ar}}(\forest) +  {\cou''_{\ar}}(\forest)$ for all
$\forest \in \Facts{\mu}{\hei}{w}$ and $w \in A^*$.
\end{lemma}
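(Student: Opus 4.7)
I prove the lemma by structural induction on the architecture $\ar$, recalling from Claim~\ref{claim:bounded-archis-K} that architectures range over a finite set of trees of height at most $\hei$. In the base case, when $\ar$ has no multiset leaves (so $k=0$), the function $\cou_\ar(\forest)$ is the regular $\{0,1\}$-valued indicator that $\forest$ admits the shape prescribed by $\ar$; take $\cou'_\ar \defined \cou_\ar$ and $\cou''_\ar \defined 0$. For the recursive case $\ar = \tree{\ar_1}\ar_2$ (or its rightmost counterpart), the leftmost sub-tree $\forest_L$ and the remainder $\forest_R$ of $\forest$ are extracted by a regular function, so that whenever $\forest$ has the correct top-level shape we have $\cou_\ar(\forest) = \cou_{\ar_1}(\forest_L) \cdot \cou_{\ar_2}(\forest_R)$. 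Applying the induction hypothesis to each $\cou_{\ar_i} = \cou'_{\ar_i}+\cou''_{\ar_i}$ and expanding the bilinear product produces a polyblind leading term $\cou'_{\ar_1}(\forest_L) \had \cou'_{\ar_2}(\forest_R)$ (closure of polyblind under $\had$ and under pre-composition by regular, as invoked in the proof of Proposition~\ref{prop:xxx}), together with three polyregular cross-terms each involving a $\cou''$-factor and thus of growth at most $(k_1{+}k_2){-}1 = k{-}1$.

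The core of the argument is the pure multiset-leaf case $\ar = \tree{\multi{(c_1,d_1)^{m_1},\dots,(c_\ell,d_\ell)^{m_\ell}}}$ with $\sum_j m_j = k$, where $\cou_\ar(\forest)$ counts independent $k$-subsets of iterable nodes in middle children of $\forest$ whose multiset of (linearization, depth) matches $\ar$. For each type $j$, let $n_j(\forest)$ be the regular function counting iterable nodes of type $(c_j,d_j)$ in middle children. Ignoring the independence constraint the count would be $P(\forest) \defined \prod_j \binom{n_j}{m_j}$, which is polyblind: each $\binom{n_j}{m_j}$ is realized as a finite sum of Hadamard products of regular functions by splitting $n_j$ into residue classes modulo $m_j!$, so that the rational factor $1/m_j!$ appearing in the formula for $\binom{n_j}{m_j}$ is absorbed into an integer expression on each class. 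The correction $C \defined P - \cou_\ar$ counts configurations where some pair of chosen nodes violates independence (ancestor--descendant, or an immediate sibling of an ancestor), and it can be written as a non-negative integer combination of counts $\cou_{\ar'}$ for strictly simpler architectures $\ar'$ obtained by ``merging'' the offending pair of nodes; each such $\cou_{\ar'}$ has at most $k{-}1$ free node slots and is inductively handled.

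The principal obstacle is that the natural identity $\cou_\ar = P - C$ is a subtraction, whereas the polyblind class is closed only under $+$ and $\had$ and both $\cou'_\ar,\cou''_\ar$ must be $\Nat$-valued. To resolve this, I unfold each $\cou_{\ar'}$ appearing in $C$ via its inductive decomposition $\cou'_{\ar'}+\cou''_{\ar'}$, and cancel the polyblind contributions of $C$ monomial-by-monomial against the monomials of $P$, so that what remains on the polyblind side is a positive integer combination of Hadamard products of regular functions (this defines $\cou'_\ar$); the polyregular residuals, all of growth at most $k{-}1$, are collected into $\cou''_\ar$. The cancellations are organized by an inclusion--exclusion over the possible violating pairs in $\ar$: merging any such pair strictly drops the number of free slots, keeping the residuals polyregular of growth at most $k{-}1$, and the inclusion--exclusion signs together with the interpretation of $P$ as an honest overcount of $\cou_\ar$ ensure that every summand in the final $\cou'_\ar + \cou''_\ar$ expression is non-negative.
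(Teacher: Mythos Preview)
Your base case and the recursive case $\ar = \tree{\ar_1}\br$ match the paper's argument and are fine. The gap is in the multiset-leaf case $\ar = \tree{\Le}$, precisely at the point you flag as ``the principal obstacle''.

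You write $\cou_\ar = P - C$ with $P$ polyblind and $C$ polyregular of growth $\le k{-}1$, and then propose to ``cancel the polyblind contributions of $C$ monomial-by-monomial against the monomials of $P$''. This step does not go through. First, polyblind functions have no canonical monomial decomposition against which to cancel; closure of polyblind under $+$ and $\had$ gives you no handle on differences, and it is exactly the content of this paper that a non-negative difference of polyblind functions need not be polyblind. Second, your claim that $C$ is a non-negative integer combination of counts $\cou_{\ar'}$ for \emph{architectures} $\ar'$ is unsupported: architectures are defined only for independent sets, and a $k$-set with a forced dependent pair is not naturally the preimage of some $\arch{\forest}{\cdot} = \ar'$. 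Even granting such a decomposition, inclusion--exclusion over violating pairs produces alternating signs, and your final sentence asserts rather than proves that the resulting summands are non-negative. In short, you have identified the obstacle but not removed it.

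The paper avoids subtraction entirely. It peels off one type $\tau$ at a time, choosing $\tau$ of \emph{minimal depth} in $\Le$; this guarantees that for any choice of $\Nod_1$ realising $\Le_1 \defined \Le \smallsetminus \multi{\tau^{r}}$, the set $A(\forest)$ of type-$\tau$ iterable nodes that are forbidden by $\Nod_1$ has size at most $3|\Nod_1|$, a bound independent of $\Nod_1$. One then discards exactly $3|\Nod_1|$ nodes from $A(\forest)$ (padding with the $\prec$-smallest allowed nodes) to obtain a set $C_{\Nod_1}(\forest)$ whose cardinality $|A(\forest)|-3|\Nod_1|$ does not depend on $\Nod_1$. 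Counting $r$-subsets of $C_{\Nod_1}(\forest)$ with no two ``close'' elements gives the polyblind factor $g'(\forest) = \binom{|A(\forest)|-3|\Nod_1|-r+1}{r}$, and one obtains the genuine equality $\cou_{\tree{\Le}} = g' \cdot \cou_{\tree{\Le_1}} + g''$ with $g'' \ge 0$ polyregular of growth $\le k{-}1$. The non-negativity of $g''$ is by construction (it counts the remaining configurations), not by cancellation. This is the missing idea in your proposal.
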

To conclude the proof of Lemma~~\ref{lem:sumi}, we define the
function (which is polyblind):
\begin{equation*}
\sumi{\mach}' \defined \sum_{\ar \in \Archs{\mu}{\hei}} \rod{\mach}{}{\ar} \times {\cou'_{\ar}}.
\end{equation*}
We define similarly the following function which is polyregular and has growth at most $k{-}1$:
\begin{equation*}
\sumi{\mach}'' = \sum_{\ar \in \Archs{\mu}{\hei}} \rod{\mach}{}{\ar} \times {\cou''_{\ar}}.
\end{equation*}

\section{Outlook}

This paper provides a technical solution
to a seemingly difficult membership
problem. This result can be interpreted
both in terms of nested transducers (i.e. programs with visible or
blind recursive calls) and in terms of rational series. We conjecture that
the new techniques introduced here (especially
the induction techniques), and the concepts of
productions on  words and forests,
give an interesting toolbox to tackle
other decision problem for transducers
such as equivalence or membership issues.
It could also be interesting to characterize
polyblind functions as the series computed by specific
weighted automata over $(\Nat, +, \times)$.


\newpage

\bibliography{marbles}

\appendix

\newpage

\section{Proof of Proposition \ref{prop:blind-confuse}}

Since polyblind functions correspond to the closure of regular functions
under sums and Hadamard products, the result is an immediate consequence of
lemmas \ref{lem:confuse-regular} and \ref{lem:confuse-stability} below.

\begin{lemma} \label{lem:confuse-regular}
A regular function is $k$-repetitive for all $k \ge 1$.
\end{lemma}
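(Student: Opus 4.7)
The plan is to use that a regular function $f$ is computed by a bimachine $\mach = (A,M,\mu,\lambda)$, and to take $\idem{}$ to be the idempotent power of $M$ (so $m^{\idem{}}$ is idempotent for every $m \in M$). The first observation is that, because $\omega$ is a multiple of $\idem{}$, one has $\mu(u_i)^{\omega X} = \mu(u_i)^{\idem{}} =: e_i$ whenever $X \ge 1$; hence $\mu(W(\ov X)) = \mu(w)$ for every $\ov X$ whose coordinates are $\ge 1$, and likewise for $\ov Y$. Writing $e \defined \mu(w)^{\idem{}} = \mu(w)^{\omega}$, which is idempotent, this forces every ``macroscopic'' $\mu$-image arising from $v \defined \alpha w^{2\omega-1} W(\ov X) w^{\omega-1} W(\ov Y) w^\omega \beta$ to be independent of $\ov X$ and $\ov Y$.

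I will then expand $f(v)$ as the sum over positions $i$ of $\lambda(\mu(v[1{:}i{-}1]), v[i], \mu(v[i{+}1{:}|v|]))$ and split the positions into the seven zones corresponding to the factorization of $v$. Since $\mu(W(\ov X)) = \mu(W(\ov Y)) = \mu(w)$, all $\mu$-images of prefixes and suffixes cut at a zone boundary are fixed, so the five factors $\alpha$, $w^{2\omega-1}$, $w^{\omega-1}$, $w^\omega$, $\beta$ jointly contribute a constant $C$ independent of $\ov X, \ov Y$.

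The crux is to match the $\mu$-contexts on either side of the two $W$-blocks. I will verify the identity $\mu(w)^{2\omega-1} = \mu(w)^{3\omega-1}$ by writing both as $\mu(w)^{\omega-1}\cdot e$ using $e^2 = e$; symmetrically, both trailing contexts $\mu(w^{\omega-1} W(\ov Y) w^\omega \beta)$ and $\mu(w^\omega \beta)$ collapse to $e\mu(\beta)$. Therefore the contribution of the positions of $W(\ov X)$ (resp.\ of $W(\ov Y)$), seen through these identical contexts, is produced by one and the same ``inner contribution'' function $\phi$ of the exponent vector, yielding $f(v) = C + \phi(\ov X) + \phi(\ov Y)$.

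It remains to show that $\phi$ is affine on $\ov X \ge 3$, since then $\phi(\ov X) + \phi(\ov Y) = 2A + \sum_j b_j(X_j + Y_j)$ depends only on the $X_j+Y_j$, and the lemma follows by setting $F(Z_1, \dots, Z_k) \defined C + 2A + \sum_j b_j Z_j$. Positions inside the $\alpha_j$ factors of $W(\ov X)$ contribute fixed amounts, and within each block $u_j^{\omega X_j}$ I will split the $\omega X_j$ copies into a middle bulk of length $\omega X_j - 2\idem{}$ and two boundary regions of $\idem{}$ copies: in the middle both $\mu(u_j)^{r-1}$ and $\mu(u_j)^{\omega X_j - r}$ depend only on $r \bmod \idem{}$ (using that $\omega X_j$ is a multiple of $\idem{}$), so the middle contribution is linear in $X_j$ while the boundaries contribute fixed amounts; crucially, the surrounding context does not depend on $X_i$ for $i \ne j$, because $\mu(u_i)^{\omega X_i} = e_i$. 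The main subtlety is really the matching of the two $W$-contexts via the idempotent identity above; the affineness of $\phi$ then follows by routine bookkeeping on the productions of the bimachine.
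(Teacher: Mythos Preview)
Your proof is correct and follows essentially the same route as the paper's. Both arguments take $\idem{}$ to be the idempotent index of the monoid of a bimachine computing $f$, use idempotency to show that the $\mu$-contexts surrounding the two $W$-blocks coincide (the paper writes them as $\mu(\alpha)\,e\,p^{\omega-1}\cro{\,\cdots\,}e\,\mu(\beta)$, matching your identity $p^{2\omega-1}=p^{3\omega-1}=p^{\omega-1}e$), and then show that the contribution of each $W$-block is an affine function of its exponent vector. The only stylistic difference is that the paper packages the decomposition and the affineness step through its production machinery (Lemma~\ref{lem:coupe-factors} and the case $r=1$ of Lemma~\ref{lem:u-x}), whereas you compute directly on positions; your periodicity-in-$r$ argument for the middle bulk of $u_j^{\omega X_j}$ is exactly what that machinery encodes.
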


\begin{lemma} \label{lem:confuse-stability}
Let $f,g: A^* \rightarrow \Nat $ be $k$-repetitive for some $k \ge 1$.
Then $f{+}g$ and $ f \had g$ are $k$-repetitive.
\end{lemma}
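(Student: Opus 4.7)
The plan is to exploit the fact that $k$-repetitiveness is a pointwise combinatorial condition on the values of the function, while sum and Hadamard product combine values pointwise. The main technical point is merely to reconcile the two witnessing indices $\idem{f}$ and $\idem{g}$ provided by the respective hypotheses.

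First, let $\idem{f}, \idem{g} \ge 1$ be witnesses of the $k$-repetitiveness of $f$ and $g$ (as in Definition~\ref{def:repetitive}). I set $\idem{} \defined \idem{f} \cdot \idem{g}$ (any common multiple would do). Then any multiple $\omega$ of $\idem{}$ is simultaneously a multiple of $\idem{f}$ and of $\idem{g}$, so the defining property of $k$-repetitiveness can be invoked for both $f$ and $g$ using the same parameters.

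Now fix arbitrary $\alpha, \alpha_0, u_1, \alpha_1, \dots, u_k, \alpha_k, \beta \in A^*$ and $\omega \ge 1$ a multiple of $\idem{}$, and define $W$ and $w$ as in Definition~\ref{def:repetitive}. By the repetitiveness of $f$ (resp.\ $g$) applied to this choice of data, there exist $F_f, F_g : \Nat^k \rightarrow \Nat$ such that for all $\ov{X}, \ov{Y} \ge 3$:
\begin{align*}
f(\alpha w^{2\omega-1}W(\ov{X}) w^{\omega-1} W(\ov{Y}) w^\omega \beta) &= F_f(X_1+Y_1, \dots, X_k+Y_k), \\
g(\alpha w^{2\omega-1}W(\ov{X}) w^{\omega-1} W(\ov{Y}) w^\omega \beta) &= F_g(X_1+Y_1, \dots, X_k+Y_k).
\end{align*}
Setting $F \defined F_f + F_g$ (respectively $F(\ov{Z}) \defined F_f(\ov{Z}) \cdot F_g(\ov{Z})$) immediately witnesses the $k$-repetitiveness of $f{+}g$ (respectively $f \had g$) with the same index $\idem{}$, since the operations $+$ and $\had$ are computed pointwise on the output values.

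There is no serious obstacle here: the result reduces to the trivial identities $(a{+}b) + (c{+}d) = (a{+}c) + (b{+}d)$ and $(ab)(cd) = (ac)(bd)$ applied to the outputs on the specific input words appearing in Definition~\ref{def:repetitive}. The only mildly delicate point is the observation that the existence quantifier on $\idem{}$ in the definition of $k$-repetitiveness forces us to exhibit a \emph{single} witness that works for both operands simultaneously, which is why we multiply $\idem{f}$ and $\idem{g}$ at the start.
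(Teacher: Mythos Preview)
Your proof is correct and follows essentially the same approach as the paper: choose a common multiple $\idem{} = \idem{f}\idem{g}$, apply the $k$-repetitiveness of $f$ and $g$ with the same data to obtain $F_f$ and $F_g$, and combine them pointwise. The paper only writes out the case of $f\had g$ explicitly and remarks that the others are similar, whereas you spell out both cases; otherwise the arguments are the same.
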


The proof of Claim \ref{claim:soustraction} (for $f{-}g$) is exactly the same
as that of Lemma \ref{lem:confuse-stability}.

\subsection{Proof of Lemma~\ref{lem:confuse-regular}}

Let $f: A^* \rightarrow \Nat$ be computed
by a bimachine $\mach = (A,M, \mu, \lambda)$.
In this proof, we shall use the notion of \emph{productions}
of a $k$-marble bimachine introduced in Subsection~\ref{ssec:prod-marbles},
and the properties of this productions shown in Appendix~\ref{app:prod-marbles}.
Note that we only use them for $k=1$
since we deal here with bimachines, i.e. $1$-marble bimachines.

Let $\idem{}$ be the idempotence index of $M$,
that is the smallest $\idem{} > 0$ such that $m^\idem{}$ is idempotent
for all $m \in M$. 
Let $k \ge 1$, $\alpha,\alpha_0,  {u_1}, \alpha_1, \dots,  {u_k}, \alpha_k, \beta \in A^*$
and $\omega \ge 1$ be a multiple of $\idem{}$.
Let  $W: \Nat^{k} \rightarrow A^*$ defined by:
\begin{equation*}
W : X_1, \dots, X_k \mapsto 
\left(\alpha_0 \prod_{i=1}^k  {u_i}^{\omega X_i}\alpha_i\right).
\end{equation*}
and let $w \defined W(1, \dots, 1)$.
By definition of $\idem{}$, we have for all $1 \le i \le k$
that $e_i \defined \mu( {u_i}^\omega)$ is an idempotent.
Hence
$
p 
\defined \mu(w) = \mu(\alpha_0) \prod_{i=1}^k e_i \mu(\alpha_i)
 = \mu\left(\alpha_0 \prod_{i=1}^k  {u_i}^{\omega X_i}\alpha_i\right)
$
is independent of $ X_1, X_2, \dots, X_k \ge 1$ and
$e \defined p^{\omega}$ is idempotent.

Let $\ov{X} \defined X_1 \dots, X_k \ge 3$ and
$\ov{Y} \defined Y_1 \dots, Y_k \ge 3$, then we can decompose
by Lemma~\ref{lem:coupe-factors}:
\begin{equation*}
\begin{aligned}
&f(\alpha w^{2\omega- 1}W(\overline{X}) w^{\omega-1} W(\overline{Y}) w^\omega \beta)\\
&= \rod{\mach}{}{\cro{\alpha w^{2\omega-1}} p p^{\omega-1} p p^\omega \mu(\beta)}
 +  \rod{\mach}{}{\mu(\alpha)p^{2\omega-1} \cro{\alpha_0 \prod_{i=1}^k  {u_i}^{\omega X_i}\alpha_i}p^{\omega-1} p p^\omega \mu(\beta)} \\
&+ \rod{\mach}{}{\mu(\alpha) p^{2\omega-1} p \cro{w^{\omega-1}} p p^\omega \mu(\beta)}
 +  \rod{\mach}{}{\mu(\alpha)p^{2\omega-1} p p^{\omega-1} \cro{\alpha_0 \prod_{i=1}^k  {u_i}^{\omega Y_i}\alpha_i} p^\omega \mu(\beta)} \\
& + \rod{\mach}{}{\mu(\alpha) p^{2\omega-1} p p^{\omega-1} p \cro{w^\omega \beta}} \\
\end{aligned}
\end{equation*}
\begin{equation*}
\begin{aligned}
&= \rod{\mach}{}{\cro{\alpha w^{2\omega-1}} e \mu(\beta)}
+ \rod{\mach}{}{\mu(\alpha) e \cro{w^\omega \beta}}
 + \rod{\mach}{}{\mu(\alpha) e \cro{w^{\omega-1}} p e \mu(\beta)}\\
& +  \rod{\mach}{}{\mu(\alpha)e p^{\omega-1} \cro{\alpha_0 \prod_{i=1}^k  {u_i}^{\omega X_i}\alpha_i} e \mu(\beta)}  +  \rod{\mach}{}{\mu(\alpha)e p^{\omega-1}  \cro{\alpha_0 \prod_{i=1}^k  {u_i}^{\omega Y_i}\alpha_i} e\mu(\beta)}.
\end{aligned}
\end{equation*}

The three first terms are constants, we only need
to focus on the two last ones. For this, we decompose
their productions using the fact that the $e_i = \mu(u_i^{\omega})$ are
idempotents.

\begin{sublemma} \label{slem:final-confuse}
For $m,n \in M$, there exists a polynomial
$L \in \Rel[X_1, \dots, X_k]$ of degree $1$ such that
for all $\ov{X} = X_1, \dots, X_k \ge 3$ we have:
\begin{equation*}
\rod{\mach}{}{m \cro{\alpha_0 \prod_{i=1}^k  {u_i}^{\omega X_i}\alpha_i} n}  = L(\ov{X}).
\end{equation*}
\end{sublemma}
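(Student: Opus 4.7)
The plan is to expand the definition of $\rod{\mach}{}{m \cro{v} n}$ as a sum over positions inside $v = \alpha_0 \prod_{i=1}^k u_i^{\omega X_i} \alpha_i$, and to show that this sum decomposes into a constant term plus a term linear in each $X_j$. The key observation is that, since each $e_i = \mu(u_i^\omega)$ is idempotent, the powers $e_i^s$ collapse to either $1$ (when $s=0$) or $e_i$ (when $s \ge 1$), so the left/right monoid contexts seen by the bimachine take only a bounded number of values as $\ov{X}$ varies.

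First I would split the positions of $v$ into those lying inside some $\alpha_i$ and those lying inside some copy of $u_j$ within the block $u_j^{\omega X_j}$. For a position inside $\alpha_i$, the left-context image is of the form $m \cdot \mu(\alpha_0) \cdot \prod_{s < i} e_s \mu(\alpha_s) \cdot e_i^{X_i}$ times a prefix of $\mu(\alpha_i)$, which (using $e_i^{X_i} = e_i$ for $X_i \ge 1$) is independent of $\ov{X}$; the right-context image is symmetrically independent of $\ov{X}$. Summing over all positions in all $\alpha_i$ yields some constant $c_0 \in \Nat$.

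Next, for a position inside the $c$-th copy of $u_j$ ($1 \le c \le \omega X_j$) at local index $\ell$, the left-context image is
\begin{equation*}
L_j \cdot e_j^{c-1} \cdot \mu(u_j[1{:}\ell{-}1]),
\end{equation*}
where $L_j \defined m \mu(\alpha_0) \prod_{s < j} e_s \mu(\alpha_s)$ is constant in $\ov{X}$, and the right-context image is symmetrically $\mu(u_j[\ell{+}1{:}|u_j|]) \cdot e_j^{\omega X_j - c} \cdot R_j$ for a constant $R_j$. By idempotency, $e_j^{c-1} = 1$ if $c=1$ and $e_j^{c-1} = e_j$ otherwise; symmetrically $e_j^{\omega X_j - c} = 1$ if $c = \omega X_j$ and $= e_j$ otherwise. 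Since $X_j \ge 3$ forces $\omega X_j \ge 3$, the $\omega X_j$ copies split into: the first copy ($c=1$), the last copy ($c=\omega X_j$), and $\omega X_j - 2$ middle copies ($2 \le c \le \omega X_j - 1$), all middle copies contributing the same value per $\ell$.

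Summing the output function $\lambda$ over local positions $\ell$ in $u_j$ and over $c$, the contribution of the $j$-th block takes the form $a_j + b_j \cdot (\omega X_j - 2)$ for constants $a_j, b_j \in \Nat$, which is a linear function of $X_j$. Adding $c_0$ and the contributions from each $j$ gives
\begin{equation*}
\rod{\mach}{}{m \cro{\alpha_0 \prod_{i=1}^k u_i^{\omega X_i}\alpha_i} n} = c_0 + \sum_{j=1}^k \left( a_j + b_j(\omega X_j - 2) \right),
\end{equation*}
which is a polynomial $L \in \Rel[X_1, \dots, X_k]$ of degree (at most) $1$. The main obstacle is simply the careful bookkeeping of monoid elements at block boundaries, but once the idempotency collapse is applied, everything else is an elementary rewriting.
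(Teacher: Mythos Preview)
Your overall strategy---split the positions of $v$ into those in the $\alpha_i$'s and those in the iterated blocks, then use idempotency to collapse the monoid contexts---is exactly the paper's approach, and the treatment of positions inside the $\alpha_i$'s is fine. However, there is a concrete error in the block analysis.

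You index over the $\omega X_j$ individual copies of $u_j$ and write the left context as $L_j \cdot e_j^{c-1} \cdot \mu(u_j[1{:}\ell{-}1])$. But the actual factor is $\mu(u_j)^{c-1}$, not $e_j^{c-1}$: only $e_j = \mu(u_j)^{\omega}$ is idempotent, while $\mu(u_j)$ itself need not be. Hence $\mu(u_j)^{c-1}$ depends on $c \bmod \omega$, and your claim that ``all middle copies contribute the same value per $\ell$'' is false as stated; different residues of $c$ modulo $\omega$ give different monoid contexts.

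The fix is immediate: take $u_j^{\omega}$ (not $u_j$) as the repeated unit, so that $u_j^{\omega X_j} = (u_j^{\omega})^{X_j}$ consists of $X_j$ copies. Now the left factor for the $c$-th copy is genuinely $e_j^{c-1}$, which collapses as you describe, and the first/middle/last split yields a contribution $a_j + b_j(X_j - 2)$ linear in $X_j$. This is precisely what the paper does: it applies Lemma~\ref{lem:coupe-factors} to isolate $\cro{(u_j^{\omega})^{X_j}}$ and then item~\ref{po:u-x:2} of Lemma~\ref{lem:u-x} (whose proof is exactly your first/middle/last argument, packaged for the idempotent unit $u_j^{\omega}$) to get degree~$1$ in $X_j$.
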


\begin{proof}
By cutting the productions with Lemma~\ref{lem:coupe-factors}, we get:
\begin{equation*}
\begin{aligned}
&\rod{\mach}{}{m \cro{\alpha_0 \prod_{i=1}^k  {u_i}^{\omega X_i}\alpha_i} n}
= \sum_{j=0}^k \rod{\mach}{}{m \left(\prod_{i=0}^{j{-}1} \mu(\alpha_i)  {e_{i{+}1}}\right)
\cro{\alpha_j} \left(\prod_{i=j{+}1}^{k}  {e_i} \mu(\alpha_i)\right) n}\\
&+ \sum_{j=1}^k \rod{\mach}{}{m \left(\mu(\alpha_0)\prod_{i=1}^{j{-}1}  {e_i }\mu(\alpha_i) \right)
\cro{( {u_j}^\omega)^{X_j}} \mu(\alpha_j) \left(\prod_{i=j{+}1}^{k}  {e_i} \mu(\alpha_i)\right) n}.
\end{aligned}
\end{equation*}
Since $e_j = \mu(u_j^\omega)$, by item~\ref{po:u-x:2}
of Lemma~\ref{lem:u-x}, we see that each term of the second line
is a polynomial of degree at most $1$ in $X_j$.
Furthermore the sum in the first line is a constant.
\end{proof}

Let $L$ be the polynomial given by applying Sublemma~\ref{slem:final-confuse}
to any of the two last terms of its equation.
It follows that for all $\overline{X}, \overline{Y} \ge 3$ we have
for some $C \in \Nat$:
\begin{equation*}
\begin{aligned}
f(\alpha w^{2\omega- 1}W(\overline{X}) w^{\omega-1} W(\overline{Y}) w^\omega \beta)
&= L(\overline{X}) + L(\overline{Y}) + C.
\end{aligned}
\end{equation*}

Since $L$ is a polynomial of degree $1$, we finally
obtain the function $F$ of Lemma~\ref{lem:confuse-regular}
(which is in fact a polynomial) by grouping the
terms in $X_i$ and $Y_i$ together.

\subsection{Proof of Lemma~\ref{lem:confuse-stability}}

Let $f$ and $g$ be two $k$-repetitive functions for some $k \ge 1$.
We show that $f \had g \ge 0$ is $k$-repetitive (the other cases
are very similar). Let $\idem{f}$ and $\idem{g}$ be the $\idem{}$ of each functions
given in Definition \ref{def:repetitive},
we set $\idem{f \had g} \defined \idem{f} \times  \idem{g}$.

Let $\alpha,\alpha_0,  {u_1}, \alpha_1, \dots,  {u_k}, \alpha_k, \beta \in A^*$
and $\omega \ge 1$ multiple of $\idem{f \had g}$. Note that $\omega$ is 
also a multiple of $\idem{f}$ and $\idem{g}$.
Let $w \defined \alpha_0 \prod_{1 \le i \le k}  {u_i}^{\omega} \alpha_i$
and define $W: \Nat^{2k} \rightarrow A^*$:
\begin{equation*}
W: X_1, Y_1, \dots, X_k, Y_k \mapsto \alpha w^{2\omega- 1}
\left(\alpha_0 \prod_{i=1}^k  {u_i}^{\omega X_i}\alpha_i\right)
w^{\omega - 1}
\left(\alpha_0 \prod_{i=1}^k  {u_i}^{\omega Y_i} \alpha_i\right)
w^{\omega} \beta
\end{equation*}

Then there exists $F_f, F_g$ such that
for all $X_1, Y_1, \dots, X_k, Y_k \ge 3$:
\begin{equation*}
\left\{
    \begin{array}{l}
        f(W(X_1, Y_1, \dots, X_k, Y_k)) = F_f(X_1 + Y_1, \dots, X_k + Y_k) \\
	g(W(X_1, Y_1, \dots, X_k, Y_k)) = F_g(X_1 + Y_1, \dots, X_k + Y_k).
    \end{array}
\right.
\end{equation*}
Hence we have 
\begin{equation*}
\begin{aligned}
        (f\had g)(W(X_1, Y_1, \dots, X_k, Y_k))
        &= F_f(X_1 + Y_1, \dots, X_k + Y_k) \times
	F_g(X_1 + Y_1, \dots, X_k + Y_k)\\
	&= (F_f \had F_g)(X_1 + Y_1, \dots, X_k + Y_k)
    \end{aligned}
\end{equation*}
which concludes the proof with $F_{f \had g} \defined F_f \had F_g$.

\section{Properties of productions}

In the forthcoming appendices, we denote by $\multi{s_1 \con r_1, \dots, s_n \con r_n}$
a multiset with $n$ distinct elements $s_1, \dots, s_n$, in which $r_i \ge 0$ is the
multiplicity of $s_i$ (thus the cardinal is $r_1 + \cdots + r_n$).

\label{app:prod-marbles}

\subsection{Proof of Lemma~\ref{lem:coupe-prod}}

We want to show that if $J_1, \dots J_n$ is a partition of $[1{:}|w|]$,
then:
\begin{equation*}
f(w) = \sum_{\multi{I_1, \dots, I_k} \subseteq \{J_1, \dots, J_n\}} \rod{\mach}{w}{\multi{I_1, \dots, I_k} }.
\end{equation*}

\begin{sublemma} \label{slem:f-prod}
Let $\mach$ be a $k$-marble bimachine which computes
a function $f: A^* \rightarrow \Nat$. Then
for all $w \in A^*$ we have:
\begin{equation*}
f(w) = \sum_{\multi{i_1, \dots, i_k} \subseteq [1{:}|w|]}
\rod{\mach}{w}{\multi{i_1, \dots, i_k}}
= \rod{\mach}{w}{\multi{[1{:}|w|]{\con}k}}.
\end{equation*}
\end{sublemma}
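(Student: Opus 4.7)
The plan is to prove both equalities by induction on $k \ge 1$. The first equality,
$f(w) = \sum_{\multi{i_1, \dots, i_k} \subseteq [1{:}|w|]} \rod{\mach}{w}{\multi{i_1, \dots, i_k}}$,
will come from unfolding the recursive structure of productions, while the second equality with
$\rod{\mach}{w}{\multi{[1{:}|w|] \con k}}$ will follow directly by specializing the definition of the production on a multiset of sets to the case where every set equals $[1{:}|w|]$.
The base case $k = 1$ is immediate: $\mach$ is a bimachine in the sense of Definition~\ref{def:bimachine-s}, so
$f(w) = \sum_{i=1}^{|w|} \lambda(\mu(w[1{:}i{-}1]), w[i], \mu(w[i{+}1{:}|w|]))$ coincides termwise with $\sum_{i=1}^{|w|} \rod{\mach}{w}{\multi{i}}$ by the first clause of the production definition.

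For the inductive step, I would write $\mach = (A, M, \mu, \oras, \lambda)$ and let $\mach_\exte$ denote the $(k{-}1)$-marble bimachine computing $\exte \in \oras$. Setting $\exte_i \defined \lambda(\mu(w[1{:}i{-}1]), w[i], \mu(w[i{+}1{:}|w|]))$, the semantics of $\mach$ gives $f(w) = \sum_{i=1}^{|w|} \exte_i(w[1{:}i])$. The induction hypothesis applied to $\mach_{\exte_i}$ on input $w[1{:}i]$ then rewrites each $\exte_i(w[1{:}i])$ as $\sum_{\multi{i_1, \dots, i_{k-1}} \subseteq [1{:}i]} \rod{\mach_{\exte_i}}{w[1{:}i]}{\multi{i_1, \dots, i_{k-1}}}$; and the second clause of the production definition rewrites each summand as $\rod{\mach}{w}{\multi{i_1, \dots, i_{k-1}, i}}$, because appending $i$ (which dominates all the $i_j \in [1{:}i]$) makes it the maximum of the sorted multiset and the external function selected there is precisely $\exte_i$.

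It then remains to reindex the resulting double sum as a single sum over size-$k$ multisets in $[1{:}|w|]$. For this I would exhibit the bijection sending $(i, \multi{i_1, \dots, i_{k-1}})$, with $\multi{i_1, \dots, i_{k-1}} \subseteq [1{:}i]$, to $\multi{i_1, \dots, i_{k-1}, i}$; its inverse picks out the maximum of a size-$k$ multiset of $[1{:}|w|]$ and removes one occurrence of it. This proves the first equality. The second equality is then direct: when $I_1 = \cdots = I_k = [1{:}|w|]$, every size-$k$ multiset of positions satisfies the side condition $i_j \in I_j$ for any ordering, so the defining sum for $\rod{\mach}{w}{\multi{[1{:}|w|] \con k}}$ reduces exactly to the sum just established. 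The main point requiring care, rather than a true difficulty, is the multiplicity bookkeeping in the bijection and the convention under which each multiset is counted once in the definition of the production on a multiset of sets; both are purely combinatorial and pose no conceptual obstacle once the definitions are read carefully.
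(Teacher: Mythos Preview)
Your proposal is correct and follows essentially the same approach as the paper: induction on $k$, unfolding the semantics $f(w)=\sum_i \exte_i(w[1{:}i])$, applying the induction hypothesis to each $\mach_{\exte_i}$, and reindexing the double sum as a single sum over size-$k$ multisets via the ``take the maximum'' bijection; the paper likewise observes that the second equality is immediate from the definition of $\rod{\mach}{w}{\multi{[1{:}|w|]\con k}}$.
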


\begin{proof} The second equality
is obtained by definition of $\rod{\mach}{w}{\multi{[1{:}|w|]{\con}k}}$.
We show the first one by induction on $k \ge 1$. The base
case is trivial, let us assume that $k \ge 2$. Let $\mach = (A, M, \mu, \oras, \lambda)$ 
be the $k$-marble bimachine which computes $f$. For $1 \le i_k \le |w|$,
let $\exte_{i_k} \defined \lambda(\mu(w[1{:}i_k{-}1]),w[i_k],\mu(w[i_k{+}1{:}|w|])) \in \oras$
and $\mach_{\exte_{i_k}}$ be the $(k{-}1)$-marble bimachine which computes it.
Then we have:
\begin{equation*}
\begin{aligned}
f(w) &= \sum_{i_k=1}^{|w|}  \exte_{i_k}(w[1{:}i_k])
\substack{\tnorm{~~by definition of a bimachine with external
marble functions}}\\
& = \sum_{i_k=1}^{|w|} \sum_{\multi{i_1 \le \cdots \le i_{k{-}1}}\subseteq{[1{:}i_k]}}
\rod{\mach_{\exte_{i_k}}}{w[1{:}i_k]}{\multi{i_1, \dots, i_{k{-}1}}}
\substack{\tnorm{~~by induction hypothesis}}\\
& = \sum_{\multi{i_1\le \dots \le i_{k{-}1} \le i_k}\subseteq{[1{:}|w|]}}
\rod{\mach_{\exte_{i_k}}}{w[1{:}i_k]}{\multi{i_1, \dots, i_{k{-}1}}} \\
& = \sum_{\multi{i_1\le \dots \le i_k} \subseteq{[1{:}|w|]}} \rod{\mach}{w}{\multi{i_1, \dots, i_k}}
\substack{\tnorm{~~by definition of } \prodd_\mach.}\\
\end{aligned}
\end{equation*}
\end{proof}

We now show how to decompose a production
when one set is split.

\begin{definition}
We say that $(r_1, \dots, r_n) \in \Nat^n$
is a $k$-sum if $n \ge 0$ and $r_1 + \cdots + r_n = k$.
We denote by $\Sigma_k$ the set of
$k$-sums.
\end{definition}

\begin{sublemma} \label{slem:prod-goes1}
Let $\mach$ be a $k$-marble bimachine, $w \in A^*$,
$I_1, \dots, I_n$ be disjoint subsets of $[1{:}|w|]$
and $(r_1, \dots, r_n) \in \Sigma_k$.
Assume that $I_1 = J \uplus J'$, then:
\begin{equation*}
\rod{\mach}{w}{\multi{I_1 \con r_1, \dots, I_n\con r_n}} = \sum_{{(j,j') \in \Sigma_{r_1}}}
\rod{\mach}{w}{\multi{J{\con}j, J' \con j', I_2{\con}r_2, \dots, I_n{\con}r_n} }.
\end{equation*}
\end{sublemma}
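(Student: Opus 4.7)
The plan is to expand both sides using the defining formula for productions over multisets of sets of positions (see the definition of $\rod{\mach}{w}{\multi{I_1, \dots, I_k}}$ given after Proposition-Definition~\ref{depro:factor-k}) and to observe that the identity is a purely combinatorial regrouping of the underlying sum.

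First I would rewrite the left-hand side as
\begin{equation*}
\rod{\mach}{w}{\multi{I_1 \con r_1, \dots, I_n \con r_n}} = \sum_{\multi{i_1, \dots, i_k}} \rod{\mach}{w}{\multi{i_1, \dots, i_k}},
\end{equation*}
where the sum ranges over those multisets $\multi{i_1, \dots, i_k} \subseteq [1{:}|w|]$ that contain exactly $r_t$ positions in $I_t$ (counted with multiplicity) for each $1 \le t \le n$. This is exactly the unfolding of the notation $\multi{I_1 \con r_1, \dots, I_n \con r_n}$ together with the definition of $\prodd$ on multisets of sets.

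Next, since $I_1 = J \uplus J'$ is a disjoint union, the sub-multiset of $r_1$ positions lying in $I_1$ splits uniquely as $j$ positions in $J$ together with $j' = r_1 - j$ positions in $J'$, with $(j, j') \in \Sigma_{r_1}$. I would partition the outer sum according to this pair $(j, j')$: for each fixed choice, the inner sum is precisely
\begin{equation*}
\rod{\mach}{w}{\multi{J \con j, J' \con j', I_2 \con r_2, \dots, I_n \con r_n}}
\end{equation*}
by applying the same defining formula in reverse. Summing over all $(j, j') \in \Sigma_{r_1}$ then yields the right-hand side.

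The only subtle point is the bookkeeping of multiplicities: if a position $i$ appears several times in the multiset $\multi{i_1, \dots, i_k}$, one has to verify that the decomposition into $J$-occurrences and $J'$-occurrences does not produce double-counting. This is immediate from $J \cap J' = \varnothing$, which forces all copies of $i$ to land in the same half of the split, so each multiset arises from exactly one pair $(j, j')$. I expect no further obstacle, since the remaining components $I_2, \dots, I_n$ are untouched throughout the argument.
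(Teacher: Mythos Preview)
Your proposal is correct and follows essentially the same approach as the paper's proof: both expand the production $\rod{\mach}{w}{\multi{I_1 \con r_1, \dots, I_n \con r_n}}$ via the defining sum over multisets of positions (using disjointness of the $I_t$), and then regroup the sub-multiset sitting in $I_1$ according to how many of its $r_1$ elements fall in $J$ versus $J'$. The paper displays this as an explicit iterated sum $\sum_{\multi{i^1_1,\dots,i^1_{r_1}}\subseteq I_1}\cdots$ before splitting on $0\le j\le r_1$, but the content is identical to your argument.
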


\begin{proof}
By definition of productions and since the sets are disjoint, we have:
\begin{equation*}
\begin{aligned}
& \rod{\mach}{w}{\multi{I_1\con r_1, \dots, I_n \con r_n}}\\
&= \sum_{\substack{\multi{i^1_1, \dots, i^1_{r_1}} \subseteq I_1}}
\cdots \sum_{\substack{\multi{i^n_1, \dots, i^n_{r_n}} \subseteq I_n}}
\rod{\mach}{w}{\multi{i^1_1, \dots, i^n_{r_n}}}\\
&= \sum_{ \substack{0 \le j \le r_1\\ \multi{i^1_1, \dots, i^1_{j}}
\subseteq J\\ \multi{i^1_{j+1}, \dots, i^1_{r_1}} \subseteq J'}}
\sum_{\substack{\multi{i^2_1, \dots, i^2_{r_2}} \subseteq I_2}}
\cdots \sum_{\substack{\multi{i^n_1, \dots, i^n_{r_n}} \subseteq I_n}}
\rod{\mach}{w}{\multi{i^n_1, \dots, i^n_{r_n}}}\\
\end{aligned}
\end{equation*}
The last line is obtained since $J$ and $J'$ are disjoint. It is easy
to see that it corresponds to the expression we are looking for.
\end{proof}

This result is generalized by induction to
any partition of $I_1$ in Sublemma \ref{slem:prod-goes}.

\begin{sublemma} \label{slem:prod-goes}
Let $\mach$ be a $k$-marble bimachine, $w \in A^*$,
$I_1, \dots, I_n$ be disjoint subsets of $[1{:}|w|]$
and $(r_1, \dots, r_n) \in \Sigma_k$.
Assume that $I_1 = J_1 \uplus \cdots \uplus J_p$,
then:
\begin{equation*}
\rod{\mach}{w}{\multi{I_1 \con r_1, \dots, I_n\con r_n}} = \sum_{{(j_1,\dots, j_p) \in \Sigma_{r_1}}}
\rod{\mach}{w}{\multi{J_1{\con}j_1, \dots, J_p \con j_p, I_2{\con}r_2, \dots, I_n{\con}r_n} }.
\end{equation*}
\end{sublemma}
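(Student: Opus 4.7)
The plan is to prove this by induction on the number of parts $p \geq 1$, using Sublemma~\ref{slem:prod-goes1} (the two-part case) as the engine that drives the step.

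For the base case $p = 1$, the partition is trivial with $J_1 = I_1$, and the only $1$-tuple in $\Sigma_{r_1}$ is $(r_1)$, so the right-hand side is a single term equal to the left-hand side. Alternatively, one could start the induction at $p = 2$, which is exactly Sublemma~\ref{slem:prod-goes1}.

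For the induction step from $p$ to $p+1$, suppose $I_1 = J_1 \uplus J_2 \uplus \cdots \uplus J_{p+1}$. First group the last $p$ parts and write $I_1 = J_1 \uplus J'$ where $J' \defined J_2 \uplus \cdots \uplus J_{p+1}$. By Sublemma~\ref{slem:prod-goes1}:
\begin{equation*}
\rod{\mach}{w}{\multi{I_1\con r_1, I_2\con r_2, \dots, I_n\con r_n}}
= \sum_{(j_1, j') \in \Sigma_{r_1}} \rod{\mach}{w}{\multi{J_1\con j_1, J'\con j', I_2\con r_2, \dots, I_n\con r_n}}.
\end{equation*}
Next, for each fixed $(j_1, j')$, apply the induction hypothesis to the partition $J' = J_2 \uplus \cdots \uplus J_{p+1}$ with multiplicity $j'$ (and to the disjoint sets $J_1, J', I_2, \dots, I_n$ with multiplicities $j_1, j', r_2, \dots, r_n$), to obtain:
\begin{equation*}
\rod{\mach}{w}{\multi{J_1\con j_1, J'\con j', I_2\con r_2, \dots, I_n\con r_n}}
= \sum_{(j_2, \dots, j_{p+1}) \in \Sigma_{j'}} \rod{\mach}{w}{\multi{J_1\con j_1, J_2\con j_2, \dots, J_{p+1}\con j_{p+1}, I_2\con r_2, \dots, I_n\con r_n}}.
\end{equation*}

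The final step is a re-indexing of the nested sums: ranging over $(j_1, j') \in \Sigma_{r_1}$ and then $(j_2, \dots, j_{p+1}) \in \Sigma_{j'}$ is the same as ranging over $(j_1, j_2, \dots, j_{p+1}) \in \Sigma_{r_1}$, since $j' = j_2 + \cdots + j_{p+1}$ forces $j_1 + j_2 + \cdots + j_{p+1} = r_1$, and conversely every such tuple arises uniquely from this process by setting $j' \defined j_2 + \cdots + j_{p+1}$. Substituting yields the claimed identity. There is no real obstacle here: everything reduces to Sublemma~\ref{slem:prod-goes1} plus the combinatorial fact that $k$-sums decompose compatibly with refining a partition, and the proof is essentially pure bookkeeping.
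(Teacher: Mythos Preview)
Your proof is correct and follows exactly the approach the paper indicates: induction on $p$ using Sublemma~\ref{slem:prod-goes1} for the step. The paper's own proof is simply the one-line remark ``The result is shown by induction on $p$ with Sublemma~\ref{slem:prod-goes1}'', so you have merely spelled out the details.
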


\begin{proof} The result is shown by induction on $p$
with Sublemma~\ref{slem:prod-goes1}.
\end{proof}

\begin{remark}
This result is in fact stronger than what we need for Lemma \ref{lem:coupe-prod},
but it shall be re-used in the following sections.
\end{remark}

We conclude this subsection by showing Lemma~\ref{lem:coupe-prod}
as follows:
\begin{equation*}
\begin{aligned}
f(w) & = \sum_{(r_1, \dots, r_n) \in \Sigma_k} \rod{\mach}{w}{\multi{J_1 \con r_1, \dots, J_n \con r_n} }
\substack{\tnorm{~~~by Sublemma~\ref{slem:f-prod} and
Sublemma~\ref{slem:prod-goes}}}\\
&= \sum_{\multi{I_1, \dots, I_k} \subseteq \{J_1, \dots, J_n\}} \rod{\mach}{w}{\multi{I_1, \dots, I_k} }.
\end{aligned}
\end{equation*}

\subsection{Proof of Proposition-Definition \ref{depro:factor-k}}

We shall in fact show Proposition-Definition~\ref{depro:factors},
which is a generalization of Proposition-Definition~\ref{depro:factor-k}.
It shall be useful in the rest of these appendices.

\begin{sublemma} \label{slem:positi}
Let $\mach = (A,M, \mu, \oras, \lambda)$ be a $k$-marble bimachine.
Let $a_1, \dots, a_n \in A$, $(r_1, \dots, r_n) \in \Sigma_k$,
 $w \defined v_0 a_1 \cdots a_n v_n \in A^*$ and
$w' \defined v'_0 a_1 \cdots a_n v'_n \in A^*$
such that $\mu(v_j) = \mu(v'_j)$ for all $0 \le j \le n$.
For $1 \le j \le n$ let $i_j \in [1{:}|w|]$ (resp. $i'_j \in [1{:}|w'|]$) be the position
of $a_j$ in $w$ (resp. $w'$). Then
$
\rod{\mach}{w}{\multi{i_1 \con r_1 , \dots, i_n \con r_n}} =
\rod{\mach}{w'}{\multi{i'_1 \con r_1 , \dots, i'_n \con r_n}}.
$
\end{sublemma}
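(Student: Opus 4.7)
The plan is to prove this by induction on $k \ge 1$, the number of marbles. The key observation is that the production depends on the input words only through the images under $\mu$ of the separating factors and the letters at the called positions — and by hypothesis these data coincide for $w$ and $w'$.

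For the base case $k = 1$, exactly one $r_j$ equals $1$ and the others are $0$, so the multiset reduces to $\multi{i_j}$ (resp.\ $\multi{i'_j}$). By definition, $\rod{\mach}{w}{\multi{i_j}} = \lambda(\mu(w[1{:}i_j{-}1]), w[i_j], \mu(w[i_j{+}1{:}|w|]))$. Since $\mu$ is a monoid morphism and $\mu(v_\ell) = \mu(v'_\ell)$ for every $\ell$, we get $\mu(w[1{:}i_j{-}1]) = \mu(v_0)\mu(a_1)\mu(v_1)\cdots\mu(a_{j-1})\mu(v_{j-1}) = \mu(w'[1{:}i'_j{-}1])$, and likewise for the right context, while $w[i_j] = a_j = w'[i'_j]$. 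Thus both productions agree.

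For the induction step at $k \ge 2$, let $j^\star$ be the largest index with $r_{j^\star} \ge 1$, so that $i_{j^\star}$ is the maximum of the sorted multiset and plays the role of the outermost call. By definition of the production, setting $\exte \defined \lambda(\mu(w[1{:}i_{j^\star}{-}1]), w[i_{j^\star}], \mu(w[i_{j^\star}{+}1{:}|w|]))$, one has
\begin{equation*}
\rod{\mach}{w}{\multi{i_1\con r_1, \dots, i_n\con r_n}}
= \rod{\mach_\exte}{w[1{:}i_{j^\star}]}{\multi{i_1\con r_1, \dots, i_{j^\star}\con r_{j^\star}{-}1}},
\end{equation*}
and symmetrically for $w'$ with the analogous $\exte'$. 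The same argument as in the base case shows $\exte = \exte'$ (hence $\mach_\exte = \mach_{\exte'}$). Now $w[1{:}i_{j^\star}] = v_0 a_1 v_1 \cdots v_{j^\star - 1} a_{j^\star}$ and $w'[1{:}i'_{j^\star}] = v'_0 a_1 v'_1 \cdots v'_{j^\star - 1} a_{j^\star}$ fit the hypotheses of the sublemma with $n$ replaced by $j^\star$ (and an extra trailing empty separator, whose $\mu$-image is the identity on both sides). Since the new multiplicities sum to $k - 1$, the induction hypothesis applied to the $(k{-}1)$-marble bimachine $\mach_\exte$ yields the equality.

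I do not foresee a genuine obstacle here: the proof is essentially bookkeeping once one identifies the right recursion. The only mildly delicate point is to ensure that when $r_{j^\star} = 1$ the multiset $\multi{i_1\con r_1, \dots, i_{j^\star}\con 0}$ is simply the multiset with $i_{j^\star}$ dropped, so that the induction hypothesis applies cleanly with sum equal to $k - 1$; and to notice that the trailing empty factor after $a_{j^\star}$ in $w[1{:}i_{j^\star}]$ is matched on the other side by the corresponding trailing empty factor in $w'[1{:}i'_{j^\star}]$, so the morphism condition is trivially preserved.
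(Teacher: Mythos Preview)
Your proof is correct and follows essentially the same approach as the paper's: induction on $k$, with the inductive step peeling off the outermost call at the rightmost position carrying positive multiplicity and applying the induction hypothesis to the $(k{-}1)$-marble submachine on the prefix. The only cosmetic difference is that the paper writes ``without loss of generality assume $r_n \ge 1$'' where you explicitly pick the largest index $j^\star$ with $r_{j^\star} \ge 1$; your formulation is slightly more careful but the content is identical.
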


\begin{proof} The proof is done by induction on $k \ge 1$.
For $k=1$ the base case is obvious, let us assume that $k=2$.
Without loss of generality assume that $r_n \ge 1$.
Let $\exte \defined \lambda(\mu(w[1{:}i_n{-}1]),w[i_n],\mu(w[i_n{+}1{:}|w|]))$
and $\exte' \defined \lambda(\mu(w'[1{:}i'_n{-}1]),w[i'_n],\mu(w[i'_n{+}1{:}|w'|]))$.
Note that $i_1 < \cdots < i_n$ and $i'_1 < \cdots < i'_n$.
Thanks to the hypotheses of Sublemma \ref{slem:positi} we get $\exte = \exte'$.
Let $\mach_{\exte}$ be the submachine computing $\exte$, we get:
\begin{equation*}
\begin{aligned}
\rod{\mach}{w}{\multi{i_1 \con r_1 , \dots, i_n \con r_n}} 
& = \rod{\mach_{\exte}}{w[1{:}i_n]}{\multi{i_1 \con r_1 , \dots, i_n \con (r_n{-}1)}}
\substack{\tnorm{~~~by definition}}\\
&= \rod{\mach_{\exte}}{w'[1{:}i'_n]}{\multi{i'_1 \con r_1 , \dots, i'_n \con (r_n{-}1)}}
\substack{\tnorm{~~~by induction hypothesis}}\\
& = \rod{\mach}{w'}{\multi{i'_1 \con r_1 , \dots, i'_n \con r_n}}
\substack{\tnorm{~~~by definition}}.\\
\end{aligned}
\end{equation*}

\end{proof}

Let us now give an extension of Equation \ref{eq:facteurs}
to state our generalization of  Proposition-Definition \ref{depro:factor-k}.
Let $(r_1, \dots, r_n) \in \Sigma_k$,
 $w \defined v_0 u_1 \cdots u_n v_n \in A^*$, and
for $1 \le i \le n$, let $I_i \subseteq [1{:}|w|]$ be the set of positions corresponding to $w_i$.
We define:
\begin{equation*}
\rod{\mach}{}{v_0 \cro{u_1}_{r_1} v_1 \cdots \cro{u_n}_{r_n} v_n} \defined
\rod{\mach}{w}{\multi{I_1\con r_1, \dots, I_n\con r_n}}.
\end{equation*}
Note that Equation \ref{eq:facteurs} is the case when
$r_i = 1$ for all $1 \le i \le n$ (thus $n=k$). On the other
hand, if $r_i = 0$, then the element $\cro{u_i}_{r_i}$ can be equivalently replaced
by $\mu(u_i)$ (it means that in fact $u_i$ is not
used when making the production).

\begin{depro} \label{depro:factors}
Let $\mach = (A,M, \mu, \oras, \lambda)$  be a $k$-marble bimachine.
Let $(r_1, \dots, r_n) \in \Sigma_k$, $v_0 u_1 \cdots u_n v_n \in A^*$
and $v'_0 u_1 \cdots u_n v'_n \in A^*$
such that $\mu(v_i) = \mu(v'_i)$ for all $0 \le i \le n$.
Then we have
$
\rod{\mach}{}{v_0 \cro{u_1}_{r_1} v_1 \cdots \cro{u_n}_{r_n} v_n} =
\rod{\mach}{}{v'_0 \cro{u_1}_{r_1} v'_1 \cdots \cro{u_n}_{r_n} v'_n}.
$
Let $m_i \defined \mu(v_i) = \mu(v'_i)$, we define
$\rod{\mach}{}{m_0 \cro{u_1}_{r_1} m_1 \cdots \cro{u_n}_{r_n} m_n}$
as the previous value.
\end{depro}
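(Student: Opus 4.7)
The plan is to reduce the statement to the already-proved Sublemma~\ref{slem:positi}, which handles the special case where each marked block consists of a single letter. The reduction proceeds by splitting each $u_j$ into its constituent letters and invoking Sublemma~\ref{slem:prod-goes} to decompose the corresponding production.

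Concretely, I would write each $u_j$ as $u_j = a_{j,1} \cdots a_{j,\ell_j}$ with $\ell_j = |u_j|$. In $w \defined v_0 u_1 \cdots u_n v_n$, let $p_{j,k}$ denote the position of $a_{j,k}$, so the set $I_j \subseteq [1{:}|w|]$ associated with $u_j$ is the disjoint union of the singletons $\{p_{j,1}\}, \dots, \{p_{j,\ell_j}\}$. A repeated application of Sublemma~\ref{slem:prod-goes} yields
\begin{equation*}
\rod{\mach}{w}{\multi{I_1 \con r_1, \dots, I_n \con r_n}}
= \!\!\sum_{\substack{(s_{j,1}, \dots, s_{j,\ell_j}) \in \Sigma_{r_j}\\ 1 \le j \le n}} \!\!
\rod{\mach}{w}{\multi{\{p_{1,1}\}\con s_{1,1}, \dots, \{p_{n,\ell_n}\}\con s_{n,\ell_n}}}.
\end{equation*}
An entirely analogous decomposition is obtained for $w' \defined v'_0 u_1 \cdots u_n v'_n$, with positions $p'_{j,k}$ and the same family of sums $(s_{j,k})_{j,k}$ indexing the sum.

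To equate corresponding summands, I would rewrite each singleton-position production using the ``maximal'' letter-by-letter decomposition: insert an empty intermediate word (whose image under $\mu$ is $\neutral$, the same in $w$ and $w'$) between any two consecutive letters of the same $u_j$. Sublemma~\ref{slem:positi} then applies directly, since the sequence of letters $a_{j,k}$ is common to $w$ and $w'$, and the images of the interleaving factors agree: the neutral ones trivially, and the $v_i$'s by hypothesis $\mu(v_i) = \mu(v'_i)$. Hence each term of the above sum takes the same value on $w$ and on $w'$, which establishes the desired equality. The common value may then be unambiguously denoted $\rod{\mach}{}{m_0 \cro{u_1}_{r_1} m_1 \cdots \cro{u_n}_{r_n} m_n}$ with $m_i \defined \mu(v_i)$.

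There is no substantive obstacle: all the conceptual content is already packaged in Sublemma~\ref{slem:positi} (independence of the production from the specific realisations of the $v_i$) and Sublemma~\ref{slem:prod-goes} (splitting a position block according to a refining partition). The only care required is combinatorial bookkeeping when distributing the multiplicity $r_j$ over the $\ell_j$ singletons and verifying that corresponding terms match on both sides; this is routine and introduces no new ideas.
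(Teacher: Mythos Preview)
Your proposal is correct and follows essentially the same route as the paper: decompose the production into a sum of single-position productions and apply Sublemma~\ref{slem:positi} term by term. The only cosmetic difference is that the paper expands the defining sum of $\rod{\mach}{w}{\multi{I_1\con r_1,\dots,I_n\con r_n}}$ directly (using the monotone bijections $\sigma_i:I_i\to I'_i$), whereas you reach the same decomposition by repeated use of Sublemma~\ref{slem:prod-goes}; the resulting terms and the invocation of Sublemma~\ref{slem:positi} (with empty interleaving words inside each $u_j$) are identical.
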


\begin{proof}
Let $w \defined v_0 u_1 \cdots u_n v_n$ and $w' \defined v_0 u'_1 \cdots u'_n v_n$.
Let $I_1, \dots, I_n \subseteq [1{:}|w|]$ (resp. $I'_1, \dots, I'_n \subseteq [1{:}|w'|]$)
be the sets of positions of $u_1, \dots, u_n$ in $w$ (resp. $w'$). For $1 \le i \le n$, 
let $\sigma_i : I_i \rightarrow I'_i$ be the unique monotone (for $<$) bijection
between these sets.
Then if $\multi{i^1_1, \dots, i^1_{r_1}} \subseteq I_1, \dots, \multi{i^n_1, \dots, i^n_{r_n}} \subseteq I_n$,
it follows from Sublemma \ref{slem:positi} that:
\begin{equation*}
\begin{aligned}
&\rod{\mach}{w}{\multi{i^1_1, \dots, i^1_{r_1}} \uplus \cdots \uplus \multi{i^n_1, \dots, i^n_{r_n}}}\\
&= \rod{\mach}{w'}{\multi{\sigma_{1}(i^1_1), \dots, \sigma_1(i^1_{r_1})}
\uplus \cdots \uplus \multi{\sigma_n(i^n_1), \dots, \sigma_n(i^n_{r_n})})}.
\end{aligned}
\end{equation*}
Finally we get $\rod{\mach}{w}{\multi{I_1\con r_1, \dots, I_n\con r_n}}
= \rod{\mach}{w'}{\multi{I'_1\con r_1, \dots, I'_n\con r_n}}$ by summing
all the terms of the above form.
\end{proof}

\subsection{Further results}

In this subsection, we establish further properties of
productions which are to be used later.
We first give an analogous of Sublemma~\ref{slem:prod-goes},
in order to ``split'' the productions on the factors.

\begin{lemma} \label{lem:coupe-factors}
Let $\mach = (A,M, \mu, \oras, \lambda)$ be a $k$-marble bimachine.\\
Let $v_1 \cdots v_n u_1 \cdots u_{X} v'_1 \cdots v'_{n'} \in A^*$ and
$(\delta_1, \dots, \delta_n,r, \delta'_1, \dots, \delta'_{n'}) \in \Sigma_k$. \\
Let $\ce~\defined \cro{v_1}_{\delta_1} \cdots \cro{v_p}_{\delta_n}$
and $\de \defined \cro{v'_1}_{\delta'_1} \cdots \cro{v'_{p'}}_{\delta'_{n'}}$,
then we have:
\begin{equation*}
\rod{\mach}{}{\ce~\cro{u_1 \cdots u_{X}}_r~\de}
= \sum_{(r_1, \dots, r_X) \in \Sigma_r}
\rod{\mach}{}{\ce~\cro{u_1}_{r_1} \cdots \cro{u_{X}}_{r_X} ~\de}
\end{equation*}
\end{lemma}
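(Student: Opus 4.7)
The plan is to reduce this statement to Sublemma~\ref{slem:prod-goes} by unfolding the bracket notation into the underlying multiset-of-positions notation on which the sublemma is stated.

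First, I would set $w \defined v_1 \cdots v_n u_1 \cdots u_X v'_1 \cdots v'_{n'}$ and, for each bracketed factor, introduce the corresponding set of positions in $[1{:}|w|]$: call them $V_1, \dots, V_n$ for the $v_i$, $V'_1, \dots, V'_{n'}$ for the $v'_i$, and $I$ for the block $u_1 \cdots u_X$. By the very definition of the production on a multicontext (cf.\ Equation~\ref{eq:facteurs} and Proposition-Definition~\ref{depro:factors}),
\begin{equation*}
\rod{\mach}{}{\ce~\cro{u_1 \cdots u_X}_r~\de} = \rod{\mach}{w}{\multi{V_1 \con \delta_1, \dots, V_n \con \delta_n, I \con r, V'_1 \con \delta'_1, \dots, V'_{n'} \con \delta'_{n'}}}.
\end{equation*}

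Next, I would partition $I = I_1 \uplus \cdots \uplus I_X$ according to the positions of each $u_j$ inside the block. Since a production is indexed by an (unordered) multiset of disjoint subsets, Sublemma~\ref{slem:prod-goes} applies not only to splitting the first element $I_1$ but to splitting any element of the multiset. Applying it to $I$ gives
\begin{equation*}
\sum_{(r_1, \dots, r_X) \in \Sigma_r} \rod{\mach}{w}{\multi{V_1 \con \delta_1, \dots, V_n \con \delta_n, I_1 \con r_1, \dots, I_X \con r_X, V'_1 \con \delta'_1, \dots, V'_{n'} \con \delta'_{n'}}}.
\end{equation*}
Re-applying the definition of production on a multicontext to each summand identifies it with $\rod{\mach}{}{\ce~\cro{u_1}_{r_1} \cdots \cro{u_X}_{r_X}~\de}$, which gives the desired identity.

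There is essentially no technical obstacle: the only point to verify is that Sublemma~\ref{slem:prod-goes}, although literally stated for refining the first component $I_1$, also handles the refinement of an arbitrary component. This is immediate because the multiset $\multi{\cdots}$ carries no ordering, so one may always reorder the disjoint subsets to bring the one to be split into first position. No permutability assumption on $\mach$ is needed here, since Lemma~\ref{lem:coupe-factors} is a purely combinatorial rearrangement of the sum defining the production.
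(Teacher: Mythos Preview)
Your proposal is correct and follows essentially the same approach as the paper: both unfold the bracket notation to productions over multisets of position sets in the word $w = v_1 \cdots v_n u_1 \cdots u_X v'_1 \cdots v'_{n'}$, apply Sublemma~\ref{slem:prod-goes} to split the block of positions of $u_1 \cdots u_X$, and then invoke Proposition-Definition~\ref{depro:factors} to return to the multicontext notation. Your remark that the multiset carries no ordering (so the component to be split need not literally be the first) is exactly the small point the paper leaves implicit.
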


\begin{proof} We consider those productions as productions
of sets of positions within the word $v_1 \cdots v_n u_1 \cdots u_{X} v'_1 \cdots v'_{n'}$.
The result follows from Sublemma~\ref{slem:prod-goes}
by choosing $J_1, \dots, J_X$ as the positions
corresponding to $u_1, \dots, u_X$,
and using Proposition-Definition~\ref{depro:factors}.
\end{proof}

We now want to describe in a more precise way
what happens in Lemma~\ref{lem:coupe-factors}
when $u_1 = \cdots = u_X = u$ and $\mu(u)$ is
an idempotent. This is the purpose of Lemma~\ref{lem:u-x}
below. Given $r \ge 0$ and $(r_1, \dots, r_X) \in \Sigma_r$
we define $\shape{r_1, \dots, r_X}$ as the tuple 
obtained from $(r_1, \dots, r_X)$ by replacing
the maximal blocks of the form $0, \dots, 0$
by a single $0$.
For instance $\shape{0,1,0,0,0,1,2} = (0,1,0,1,2) \in \Sigma_4$.
Let $\Shapes_r$ be the set of all
shapes of elements of $\Sigma_r$
(note that $\Shapes_r$ is a finite subset of $\Sigma_r$).

\begin{lemma} \label{lem:u-x} Let $\mach = (A,M, \mu, \oras, \lambda)$ be a $k$-marble bimachine.\\
Let $(\delta_1, \dots, \delta_n, r, \delta'_1, \dots, \delta'_n) \in \Sigma_k$
and $v_1, \dots, v_n,u,v'_1, \dots, v'_{n'} \in A^*$.
Assume that $e \defined \mu(u)$ is an idempotent.
Let $\ce~\defined \cro{v_1}_{\delta_1} \cdots \cro{v_n}_{\delta_n}$
and $\de\defined \cro{v'_1}_{\delta'_1} \cdots \cro{v'_{n'}}_{\delta'_{n'}}$.
If $X \ge 2r{+}1$, then:
\begin{enumerate}
\item
$
\displaystyle
\rod{\mach}{}{\ce~\cro{u^X}_r~\de} = \sum_{s = (s_1, \dots, s_Y) \in \Shapes_r} P_{s} (X) \times 
\rod{\mach}{}{\ce~\cro{u}_{s_1} \cdots \cro{u}_{s_Y}~\de} 
$\\
where $P_s$ is a polynomial in $X$ of degree at most $r$
which is independent of $\ce ,\de$ and $u$;
\item \label{po:u-x:2} $\rod{\mach}{}{\ce~\cro{u^X}_r~\de}$  is a polynomial
in $X$ of degree at most $r$;
\item \label{po:u-x:3} the coefficient in $X$ of
the polynomial $\rod{\mach}{}{\ce~\cro{u^X}_r~\de}$ is:
\begin{enumerate}
\item $0$ if $r=0$; 
\item \label{po:lele} $\rod{\mach}{}{\ce~e\cro{u} e~\de}$ if $r=1$.
\end{enumerate}
\end{enumerate}
\end{lemma}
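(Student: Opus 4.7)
The plan is to decompose the production using Lemma~\ref{lem:coupe-factors} and to exploit the idempotency of $e = \mu(u)$ to collapse equivalent summands into a polynomial in $X$.

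First, I would apply Lemma~\ref{lem:coupe-factors} to the factor $\cro{u^X}_r$, which gives
\begin{equation*}
\rod{\mach}{}{\ce~\cro{u^X}_r~\de} = \sum_{(r_1, \dots, r_X) \in \Sigma_r} \rod{\mach}{}{\ce~\cro{u}_{r_1} \cdots \cro{u}_{r_X}~\de}.
\end{equation*}
The key observation is that whenever $r_i = 0$, the factor $\cro{u}_0$ can be replaced by $\mu(u) = e$ (as noted right before Proposition-Definition~\ref{depro:factors}); since $ee = e$, a maximal run of such factors collapses to a single $e$. Consequently, for every $\overline{r} = (r_1, \dots, r_X) \in \Sigma_r$, the value $\rod{\mach}{}{\ce~\cro{u}_{r_1} \cdots \cro{u}_{r_X}~\de}$ depends only on $\shape{\overline{r}} \in \Shapes_r$. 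I would then define $P_s(X)$ as the number of tuples in $\Sigma_r$ of length $X$ whose shape equals $s$; this immediately yields the decomposition of item~1, with $P_s$ manifestly independent of $\ce$, $\de$ and $u$.

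The heart of the argument is then the counting of these tuples. Given a shape $s = (s_1, \dots, s_Y)$ with $Z$ zero entries, a tuple of length $X$ with shape $s$ is obtained by replacing each zero of $s$ by a nonempty run of zeros while keeping the nonzero entries as singletons. This reduces to counting weak compositions of $X - Y$ into $Z$ parts, giving $\binom{X - Y + Z - 1}{Z - 1}$ when $Z \geq 1$, and a constant when $Z = 0$. Since any shape in $\Shapes_r$ has at most $r$ nonzero entries (summing to $r$) and thus at most $r + 1$ zero entries, we obtain $Z \leq r + 1$ and $Y \leq 2r + 1$; the hypothesis $X \geq 2r + 1$ then ensures $X \geq Y$, so the binomial formula is valid and $P_s(X)$ is a polynomial in $X$ of degree $Z - 1 \leq r$. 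This establishes items~1 and~2 simultaneously.

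Finally, item~3 is obtained by inspection of the small cases. For $r = 0$, the only shape contributing when $X \geq 1$ is $(0)$, with constant count $1$, so the resulting polynomial is constant and its coefficient in $X$ vanishes. For $r = 1$, the four shapes are $(1)$, $(0,1)$, $(1,0)$, $(0,1,0)$; only $(0,1,0)$ has $Z = 2$ and hence contributes a term linear in $X$, namely $(X - 2) \times \rod{\mach}{}{\ce~e\cro{u}e~\de}$, while the other three give constants. Thus the coefficient in $X$ is exactly $\rod{\mach}{}{\ce~e\cro{u}e~\de}$. The main delicate point throughout is tracking the collapse of zero-runs uniformly across tuples of different lengths, including the runs touching the ends of the block; this is precisely why the bound $X \geq 2r + 1$ is imposed, ensuring every shape of $\Sigma_r$ has enough room to be realized inside $u^X$.
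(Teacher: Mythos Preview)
Your proof is correct and follows essentially the same route as the paper: apply Lemma~\ref{lem:coupe-factors}, collapse runs of $\cro{u}_0$ using idempotency of $e$, define $P_s(X)$ as the number of tuples of length $X$ with shape $s$, and count. You are slightly more explicit than the paper in that you give the closed-form $\binom{X-Y+Z-1}{Z-1}$ where the paper merely asserts the polynomial bound by induction on the number of zeros; and you correctly include $(1)$ in $\Shapes_1$ (the paper omits it, but it contributes $0$ for $X\ge 2$, so this is harmless either way).
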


\begin{proof}
We have
$
\displaystyle
\rod{\mach}{}{\ce~\cro{u^X}_r~\de} = \sum_{(r_1, \dots, r_X) \in \Sigma_r}
\rod{\mach}{}{\ce~\cro{u}_{r_1} \cdots \cro{u}_{r_X}~\de}
$ by Lemma~\ref{lem:coupe-factors}.
We shall recombine several terms of this sum thanks to the claim below.

\begin{claim} \label{claim:same-shape} Let $(r_1, \dots, r_X) \in \Sigma_r$
and $(s_1, \dots, s_Y) \defined \shape{r_1, \dots, r_X}$, then:
\begin{equation*}
\rod{\mach}{}{\ce~\cro{u}_{r_1} \cdots \cro{u}_{r_X}~\de}
= \rod{\mach}{}{\ce~\cro{u}_{s_1} \cdots \cro{u}_{s_Y}~\de}
\end{equation*}
\end{claim}

\begin{claimproof}
Transforming several consecutive $\cro{u}_0$ 
in a single one corresponds to transforming the
concatenation of several idempotent $e = \mu(u)$ in a single one.
\end{claimproof}

Given $s \in \Shapes_r$, let $P_s(X)$ be the number of tuples
$(r_1, \dots, r_X) \in \Sigma_r$ such that $\shape{r_1, \dots, r_X} = s$.
This quantity does not depend on $\ce, \de$ nor $u$. Furthermore, it
follows from Claim \ref{claim:same-shape} that:
\begin{equation}
\label{eq:p-s}
\rod{\mach}{}{\ce~\cro{v^X}_r~\de} = \sum_{s = (s_1, \dots, s_Y) \in \Shapes_r} P_{s} (X) \times 
\rod{\mach}{}{\ce~\cro{v}_{s_1} \cdots \cro{v}_{s_Y}~\de} 
\end{equation}
We still need to show that the $P_s$ are polynomials of degree at most $r$,
this is Claim \ref{claim:ps-poly}.

\begin{claim} \label{claim:ps-poly} For $X \ge 2r{+}1$, $P_s$ is a polynomial in $X$
of degree at most $r$.
\end{claim}

\begin{claimproof} If $s \defined (s_1, \dots, s_Y) \in \Shapes_r$, one has
$Y \le 2r{+}1$ since there are no two consecutive $0$.
Let $0 \le z \le r{+}1$ be the number of zeros in $s$,
then $P_s(X)$ is the number of tuples $(q_1, \dots, q_z) \in \Sigma_{X-Y}$
(it describes how many times we count each $0$).
We show by induction on $z \ge 0$ that
it is a polynomial in $X$ of degree at most $z{-}1$
whenever $X-Y \ge 0$.
\end{claimproof}

It follows directly from Claim \ref{claim:ps-poly} and Equation \ref{eq:p-s}
that $\rod{\mach}{}{\ce~\cro{u^X}_r~\de}$  is a polynomial in $X$
for $X \ge 2r{+}1$. It remains to deal specifically with the cases $r=0$
(which is obvious, we only have a constant)
and $r=1$. Note that $\Shapes_1 = \{(0,1), (1,0),  (0,1,0)\}$.
By definition of $P_s$ we get
$P_{(0,1)}(X) = P_{(1,0)}(X) = 1$ and $P_{(0,1,0)} (X) = X-2$
for $X \ge 3$. Hence the coefficient in $X$ is
$\rod{\mach}{}{\ce~\cro{u}_{0} \cro{u}_{1} \cro{u}_{0}~\de}$
which can be rewritten $\rod{\mach}{}{\ce~e\cro{u} e~\de}$.
\end{proof}

\section{Proof of Proposition~\ref{prop:factor-perm}}

In order to show Proposition~\ref{prop:factor-perm},
we shall use Lemma~\ref{lem:ecarte} below.
Intuitively, it says that under the same conditions
as those of Definition \ref{def:factor-perm},
if we extract one factor $u_j$ while preserving 
$\lefe_j$ and $\rige_j$, then the production will be
the same (see Figure \ref{fig:ecartee}).

\begin{lemma} \label{lem:ecarte}
Let $K \ge 0$.
Let $f$ be a $k$-repetitive function computed
by a $k$-marble bimachine $\mach = (A,M, \mu, \oras, \lambda)$.
The following holds whenever $\ell+x+r = k$:
\begin{itemize}
\item let $\ce$ (resp. $\de$) be a $(\ell,K)$-iterator (resp. $(r,K)$-iterator);
\item let
$
\displaystyle m_0 \left(\prod_{i=1}^x e_i \cro{u_i} e_i m_i\right)
$
be an $(x,K)$-iterator with $ \displaystyle e \defined m_0 \left(\prod_{i=1}^x e_i m_i\right)$
idempotent;
\item choose $1 \le j \le x$, and define:
\begin{equation*}
        \lefe_j \defined e \left(\prod_{i=1}^{j} m_{i{-1}}  e_i\right) 
        \substack{\tnorm{~~and~~}}
        \rige_j \defined \left(\prod_{i=j}^{x} e_i m_i\right) e
\end{equation*}
\end{itemize}
then we have:
\begin{equation*}
\begin{aligned}
&\rod{\mach}{}{\ce~e m_0\left(\prod_{i=1}^x e_i \cro{u_i} e_i m_i\right) e~\de}\\
&= \rod{\mach}{}{\ce~em_0\left(\prod_{i=1}^{j-1} e_i \cro{u_i} e_i m_i\right) 
e_j m_j  \left(\prod_{i=j+1}^{x} e_i \cro{u_i} e_i m_i\right)  e
\left( \lefe_{j} \cro{u_{j}} \rige_{j} \right)~\de}. 
\end{aligned}
\end{equation*}

\end{lemma}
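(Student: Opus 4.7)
The plan is to deduce the lemma from the $x$-repetitiveness of $f$ (a consequence of $k$-repetitiveness, by the remark following Definition~\ref{def:repetitive}, since $x \le k$), by reading off both productions as coefficients of related monomials in a polynomial obtained from pumping the $x$ middle factors simultaneously.

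First, I would realize all monoid elements as concrete words via the surjectivity of $\mu$: choose words $\alpha_0, \alpha_1, \ldots, \alpha_x, C, D$ with $\mu(\alpha_0) = e m_0 e_1$, $\mu(\alpha_i) = e_i m_i e_{i+1}$ for $1 \le i \le x-1$, $\mu(\alpha_x) = e_x m_x e$, and $C, D$ realizing $\ce, \de$. Set $W(\bar X) \defined \alpha_0 \prod_i u_i^{\omega X_i} \alpha_i$ and $w \defined W(1,\ldots,1)$, so that $\mu(w) = e$ by idempotency of the $e_i$'s and of $e$. Define
\begin{equation*}
V(\bar X, \bar Y) \defined C \cdot w^{2\omega-1} \cdot W(\bar X) \cdot w^{\omega-1} \cdot W(\bar Y) \cdot w^\omega \cdot D,
\end{equation*}
which has precisely the shape of the $x$-repetitiveness template. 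By assumption, $f(V(\bar X, \bar Y)) = F(X_1+Y_1, \ldots, X_x+Y_x)$ for $\bar X, \bar Y \ge 3$, and since $f \circ V$ is a polynomial in $(\bar X, \bar Y)$ by Lemma~\ref{lem:u-x} applied to each pumped factor, its coefficients of $X_1 X_2 \cdots X_x$ and of $X_1 \cdots X_{j-1} Y_j X_{j+1} \cdots X_x$ must agree.

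Next, I would identify these two coefficients with $\rod{\mach}{}{\mathrm{LHS}}$ and $\rod{\mach}{}{\mathrm{RHS}}$ respectively, by expanding $f(V(\bar X, \bar Y))$ via Lemma~\ref{lem:coupe-prod} and applying item~\ref{po:lele} of Lemma~\ref{lem:u-x} to each pumped $u_i$-block. In the coefficient of $X_1 \cdots X_x$, the leading $X_i$-contribution of each pumped block becomes $e_i \cro{u_i} e_i$; combined with $e_i^2 = e_i$, $\mu(w) = e$, and $e^2 = e$, the adjacent monoid factors collapse, producing precisely the LHS middle multicontext flanked by $\ce$ and $\de$, hence the LHS production. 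A parallel calculation for the coefficient of $X_1 \cdots X_{j-1} Y_j X_{j+1} \cdots X_x$ isolates the $u_j$-bracket in the extracted position, giving $\lefe_j \cro{u_j} \rige_j$ at the end and thus the RHS production. Equating the two coefficients then yields the desired identity.

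The main obstacle lies in controlling the ``extra'' contributions from marbles placed in the buffer $w$-copies, in particular in the $u_i^\omega$ sub-factors of each buffer. Such placements a priori produce further productions which might not match symmetrically between the two target monomials. I expect that the precise choice of buffer multiplicities $(2\omega-1, \omega-1, \omega)$ in Definition~\ref{def:repetitive}, together with iterated applications of Lemma~\ref{lem:coupe-factors} to split productions across the distinguished regions of $V(\bar X,\bar Y)$, will show that all such buffer contributions are in fact symmetric in $(X_j, Y_j)$---since the buffers themselves do not distinguish between $W(\bar X)$ and $W(\bar Y)$---so that the only asymmetric part of the two coefficients is exactly the desired $\mathrm{LHS} - \mathrm{RHS}$, which must therefore vanish.
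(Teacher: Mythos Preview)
Your high-level plan coincides with the paper's: build a pumped word fitting the repetitiveness template, observe that $f$ is a polynomial on it, use repetitiveness to force two specific coefficients to agree, and identify those coefficients with the two productions. Where your proposal diverges from the paper is in \emph{which} factors you pump, and this is exactly where the gap lies.

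You only pump the $2x$ middle blocks (the $u_i$'s in $W(\bar X)$ and $W(\bar Y)$), treating $\ce$ and $\de$ as fixed words $C,D$, and then try to read off the two productions as the coefficients of the degree-$x$ monomials $X_1\cdots X_x$ and $X_1\cdots X_{j-1}Y_jX_{j+1}\cdots X_x$. But $\mach$ is a $k$-marble bimachine with $k=\ell+x+r$; when $\ell+r>0$, a degree-$x$ monomial only pins down where $x$ of the $k$ marbles go. The remaining $\ell+r$ marbles may land in $C$, in $D$, in the $\alpha_i$'s, in the buffer copies of $w$, or even add extra marbles in the pumped blocks (a block carrying two marbles still contributes to the degree-$1$ coefficient in that variable). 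So the coefficient of $X_1\cdots X_x$ is a large sum of productions, only one summand of which is the LHS. Your final paragraph anticipates trouble only from the buffers and appeals to a symmetry that is not substantiated; the extra terms coming from $C$, $D$ and the $\alpha_i$'s have no reason to cancel pairwise between the two monomials.

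The paper fixes this by also pumping $\ce$ and $\de$ through their iterator structure: write $\ce = p_0\prod_{i=1}^{\ell} f_i\cro{v_i}f_i p_i$ and $\de = p'_0\prod_{i=1}^{r} f'_i\cro{v'_i}f'_i p'_i$, realize the $p_i,p'_i$ by words, and insert $v_i^{L_i}$ and ${v'_i}^{R_i}$. In the second copy of $W$, only the $j$-th block is pumped (by $X'_j$), the others being fixed. This yields $\ell+x+1+r=k+1$ pumped idempotent factors, so Sublemma~\ref{slem:bim-big} applies and identifies the coefficient of the degree-$k$ monomial $T X_j$ (resp.\ $T X'_j$), where $T=L_1\cdots L_\ell X_1\cdots X_x R_1\cdots R_r/X_j$, with \emph{exactly} the LHS (resp.\ RHS) production --- no junk terms, because every one of the $k$ marbles is forced into a prescribed block. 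Repetitiveness (in fact $x$-repetitiveness, applied with $\alpha=U(\bar L)$ and $\beta=V(\bar R)$) then shows the polynomial depends only on $X_j+X'_j$, hence the two coefficients agree. Pumping $\ce$ and $\de$ is the missing ingredient that turns your sketch into a proof.
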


Proposition~\ref{prop:factor-perm} is obtained by
 induction on $x$ from Lemma~\ref{lem:ecarte}. The proof is depicted in
 Figure \ref{fig:ecartee}  for $x=3$ and $\sigma = (3,1,2)$.
 Since $u_2$ has to be the last element after substitution,
 we first apply Lemma~\ref{lem:ecarte} with $j=2$ to send it ``on the right'',
 then we do the same with $u_1$.

\begin{figure}[h!]

     \begin{subfigure}[b]{1\textwidth}
     \centering
     \begin{tikzpicture}{scale=1}
	\node[above] at (0,-0.25) {$\ce$};
	\node[above] at (11,-0.25) {$\ce'$};	
	
	\draw[-,very thick,black](0.5,0) -- (10.5,0);
	
	\draw[-,very thick,black](0.5,0) -- (0.5,0.15);
	\draw[-,very thick,black](1,0) -- (1,0.15);
	\draw[-,very thick,black](2,0) -- (2,0.15);

	\draw[-,very thick,black](3,0) -- (3,0.15);
	\draw[-,very thick,black](5,0) -- (5,0.15);

	\draw[-,very thick,black](6,0) -- (6,0.15);
	\draw[-,very thick,black](8,0) -- (8,0.15);

	\draw[-,very thick,black](9,0) -- (9,0.15);
	\draw[-,very thick,black](10,0) -- (10,0.15);
	\draw[-,very thick,black](10.5,0) -- (10.5,0.15);

	\draw[very thick,\myBlue, fill = \myBlue] (2,0) rectangle (3,-0.1);
	\draw[very thick,\myBlue, fill = \myBlue] (5,0) rectangle (6,-0.1);	
	\draw[very thick,\myBlue, fill = \myBlue] (8,0) rectangle (9,-0.1);
	
	\node[above] at (0.75,0) {\small $e$};	
	\node[above] at (1.5,0) {\small $m_0e_1$};		
	\node[above] at (2.5,0) {\small $e_1$};	
	\node[below] at (2.5,-0.1) {\small $\cro{u_1}$};	
	\node[above] at (4,0) {\small $e_1m_1e_2$};	
	\node[above] at (5.5,0) {\small $e_2$};	
	\node[below] at (5.5,-0.1) {\small $\cro{u_2}$};	
	\node[above] at (7,0) {\small $e_2m_2e_3$};
	\node[above] at (8.5,0) {\small $e_3$};	
	\node[below] at (8.5,-0.1) {\small $\cro{u_3}$};	
	\node[above] at (9.5,0) {\small $e_3m_3$};
	\node[above] at (10.25,0) {\small $e$};
	
	\draw[thick, dashed,purple] (1,0.75)--(10,0.75);
	\draw[thick,purple] (1,0.75)--(1,0.5);
	\draw[thick,purple] (10,0.75)--(10,0.5);
	\node[above,purple] at (5.5,0.75) {\small $=e$};
	
	\draw[thick, dashed,purple] (0.5,-0.75)--(5,-0.75);
	\draw[thick,purple] (0.5,-0.75)--(0.5,-0.5);
	\draw[thick,purple] (5,-0.75)--(5,-0.5);
	\node[below,purple] at (2.75,-0.75) {\small $= \lefe_2$};

	\draw[thick,dashed,purple] (6,-0.75)--(10.5,-0.75);
	\draw[thick,purple] (6,-0.75)--(6,-0.5);
	\draw[thick,purple] (10.5,-0.75)--(10.5,-0.5);
	\node[below,purple] at (8.25,-0.75) {\small $= \rige_2$};
	
\end{tikzpicture}
\caption{Initial production with $x=3$}
     \end{subfigure}

     \begin{subfigure}[b]{1\textwidth}
     \centering
     \begin{tikzpicture}{scale=1}
	\node[above] at (0,-0.25) {$\ce$};
	\node[above] at (11.5,-0.25) {$\ce'$};	
	
	\draw[-,very thick,black](0.5,0) -- (7.5,0);
	
	\draw[-,very thick,black](0.5,0) -- (0.5,0.15);
	\draw[-,very thick,black](1,0) -- (1,0.15);
	\draw[-,very thick,black](2,0) -- (2,0.15);

	\draw[-,very thick,black](3,0) -- (3,0.15);
	\draw[-,very thick,black](5,0) -- (5,0.15);

	\draw[-,very thick,black](6,0) -- (6,0.15);
	\draw[-,very thick,black](7,0) -- (7,0.15);
	\draw[-,very thick,black](7.5,0) -- (7.5,0.15);
	
	\draw[-,very thick,black](8,0) -- (11,0);

	\draw[-,very thick,black](8,0) -- (8,0.15);
	\draw[-,very thick,black](9,0) -- (9,0.15);
	\draw[-,very thick,black](10,0) -- (10,0.15);
	\draw[-,very thick,black](11,0) -- (11,0.15);

	\draw[very thick,\myBlue, fill = \myBlue] (2,0) rectangle (3,-0.1);
	\draw[very thick,\myBlue, fill = \myBlue] (5,0) rectangle (6,-0.1);	
	\draw[very thick,\myBlue, fill = \myBlue] (9,0) rectangle (10,-0.1);
	
	\node[above] at (0.75,0) {\small $e$};	
	\node[above] at (1.5,0) {\small $m_0e_1$};		
	\node[above] at (2.5,0) {\small $e_1$};	
	\node[below] at (2.5,-0.1) {\small $\cro{u_1}$};	
	\node[above] at (4,0) {\small $e_1m_1e_2m_2e_3$};	
	\node[above] at (5.5,0) {\small $e_3$};	
	\node[below] at (5.5,-0.1) {\small $\cro{u_3}$};
	\node[above] at (6.5,0) {\small $e_3m_3$};
	\node[above] at (7.25,0) {\small $e$};

	\node[above] at (8.5,0) {\small $\lefe_2$};		
	\node[above] at (9.5,0) {\small $e_2$};	
	\node[below] at (9.5,-0.1) {\small $\cro{u_2}$};	
	\node[above] at (10.5,0) {\small $\rige_2$};	
	
	\draw[thick, dashed,purple] (1,0.75)--(7,0.75);
	\draw[thick,purple] (1,0.75)--(1,0.5);
	\draw[thick,purple] (7,0.75)--(7,0.5);
	\node[above,purple] at (4,0.75) {\small $=e$};
	
	\draw[thick, dashed,purple] (0.5,-0.75)--(2,-0.75);
	\draw[thick,purple] (0.5,-0.75)--(0.5,-0.5);
	\draw[thick,purple] (2,-0.75)--(2,-0.5);
	\node[below,purple] at (1.25,-0.75) {\small $= \lefe_1$};

	\draw[thick, dashed,purple] (3,-0.75)--(7.5,-0.75);
	\draw[thick,purple] (7.5,-0.75)--(7.5,-0.5);
	\draw[thick,purple] (3,-0.75)--(3,-0.5);
	\node[below,purple] at (5.25,-0.75) {\small $= \rige_1$};
	
\end{tikzpicture}
\caption{Production after applying Lemma \ref{lem:ecarte} once with $x=3$}
     \end{subfigure}

     \begin{subfigure}[b]{1\textwidth}
     \centering
     \begin{tikzpicture}{scale=1}
	\node[above] at (-1.5,-0.25) {$\ce$};
	\node[above] at (12,-0.25) {$\ce'$};	
	
	\draw[-,very thick,black](-1,0) -- (4.5,0);

	\draw[-,very thick,black](-1,0) -- (-1,0.15);
	\draw[-,very thick,black](-0.5,0) -- (-0.5,0.15);
	\draw[-,very thick,black](2,0) -- (2,0.15);
	\draw[-,very thick,black](3,0) -- (3,0.15);
	\draw[-,very thick,black](4,0) -- (4,0.15);
	\draw[-,very thick,black](4.5,0) -- (4.5,0.15);
	
	\draw[-,very thick,black](5,0) -- (8,0);
	
	\draw[-,very thick,black](5,0) -- (5,0.15);
	\draw[-,very thick,black](6,0) -- (6,0.15);
	\draw[-,very thick,black](7,0) -- (7,0.15);
	\draw[-,very thick,black](8,0) -- (8,0.15);
	
	\draw[-,very thick,black](8.5,0) -- (11.5,0);

	\draw[-,very thick,black](8.5,0) -- (8.5,0.15);
	\draw[-,very thick,black](9.5,0) -- (9.5,0.15);
	\draw[-,very thick,black](10.5,0) -- (10.5,0.15);
	\draw[-,very thick,black](11.5,0) -- (11.5,0.15);

	\draw[very thick,\myBlue, fill = \myBlue] (2,0) rectangle (3,-0.1);
	\draw[very thick,\myBlue, fill = \myBlue] (6,0) rectangle (7,-0.1);	
	\draw[very thick,\myBlue, fill = \myBlue] (9.5,0) rectangle (10.5,-0.1);

	\node[above] at (-0.75,0) {\small $e$};	
	\node[above] at (0.75,0) {\small $m_0e_1m_1e_2m_2e_3$};

	\node[above] at (2.5,0) {\small $e_3$};	
	\node[below] at (2.5,-0.1) {\small $\cro{u_3}$};	
	\node[above] at (3.5,0) {\small $e_3m_3$};	
	\node[above] at (4.25,0) {\small $e$};	

	\node[above] at (5.5,0) {\small $\lefe_1$};	
	\node[below] at (6.5,-0.1) {\small $\cro{u_1}$};
	\node[above] at (6.5,0) {\small $e_1$};
	\node[above] at (7.5,0) {\small $\rige_1$};

	\node[above] at (9,0) {\small $\lefe_2$};		
	\node[above] at (10,0) {\small $e_2$};	
	\node[below] at (10,-0.1) {\small $\cro{u_2}$};	
	\node[above] at (11,0) {\small $\rige_2$};	
	
	\draw[thick, dashed,purple] (-1,-0.75)--(2,-0.75);
	\draw[thick,purple] (-1,-0.75)--(-1,-0.5);
	\draw[thick,purple] (2,-0.75)--(2,-0.5);
	\node[below,purple] at (0.5,-0.75) {\small $= \lefe_3$};

	\draw[thick, dashed,purple] (3,-0.75)--(4.5,-0.75);
	\draw[thick,purple] (4.5,-0.75)--(4.5,-0.5);
	\draw[thick,purple] (3,-0.75)--(3,-0.5);
	\node[below,purple] at (3.75,-0.75) {\small $= \rige_3$};
	
\end{tikzpicture}
         \caption{Production after applying Lemma \ref{lem:ecarte} once again with $x=2$}
     \end{subfigure}

\caption{\label{fig:ecartee} Proof idea for Proposition~\ref{prop:factor-perm}
with $x=3$ and  $\sigma = (3,1,2)$}
\end{figure}
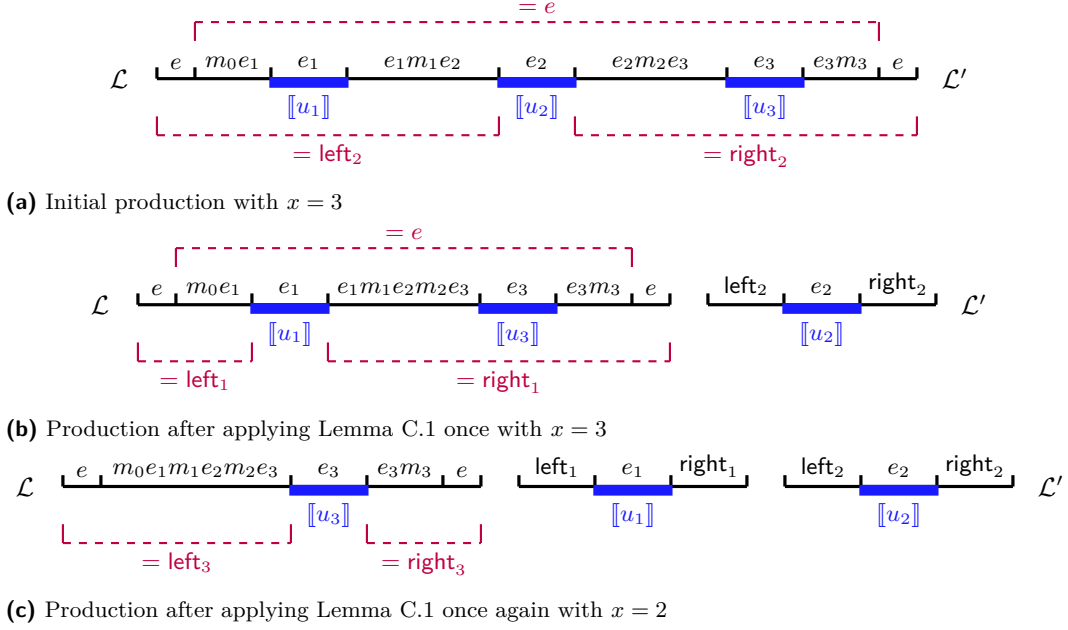

\subparagraph*{Proof of Lemma~\ref{lem:ecarte}.}
The rest of this section is devoted to the proof of Lemma~\ref{lem:ecarte}. The idea is to build
a word in which the two productions compared in
Lemma~\ref{lem:ecarte} occur. Then, by iterating some factors in this word,
we use the $x$-repetitiveness (since $x \le k$) of $f$ to show
that these  productions must be equal. Let $\omega$
be given by Definition \ref{def:repetitive}, assume that $\omega \ge 3$.

Assume that:
\begin{equation*}
\ce = p_0 \left( \prod_{i=1}^{\ell} f_i \cro{v_i} f_i p_i \right)
\substack{\tnorm{~~~and~~~}}
\de = p'_0 \left( \prod_{i=1}^{r} f'_i \cro{v'_i} f'_i p'_i \right)
\end{equation*}
For every $m \in M$ in these notations, we
denote $\pre{m} \in A^*$ a word such that $\mu(\pre{m}) = m$
(it exists since $\mu$ is supposed surjective). We then define:
\begin{itemize}
\item for $\overline{L} \defined L_1, \dots, L_{\ell} \ge 0$, $ \displaystyle U(\overline{L} )  \defined \pre{p_0} \left(\prod_{i=1}^{\ell} v_i^{L_i} \pre{p_i}\right)$ {;}
\item for $\overline{X} \defined  X_1, \dots, X_x \ge 0$, $\displaystyle W(\overline{X} )  \defined
\pre{m_0} \left(\prod_{i=1}^{x} u_i^{\omega X_i} \pre{m_i}\right)${;}
\item for $\overline{R} \defined R_1, \dots, R_{r} \ge 0$ , $\displaystyle V(\overline{R}) 
    \defined \pre{p'_0} \left(\prod_{i=1}^{r} {v'_i}^{R_i} \pre{p'_i}\right)$.
\end{itemize}
Let $w \defined W(1, \dots, 1)$, note that  for $\overline{X} \ge 1$, $\mu(W(\overline{X})) = \mu(w) = e$.
We define:
\begin{equation*}
\begin{array}{lcll}
P: &  \Nat^{k+1} & \rightarrow  & \Nat \\
 &  (\ov{L}, \ov{X}, X'_j, \ov{R}) &
 \mapsto & f\left(U(\overline{L}) w^{2\omega{-}1} W(\overline{X}) w^{\omega-1}
W(3, \dots, 3, X'_j, 3, \dots, 3) w^{\omega} V(\overline{R})\right)
 \end{array}
\end{equation*}
where $X'_j$ is in position $j$ of $W(3, \dots, 3, X'_j, 3, \dots, 3) $.

Let $T \defined L_1 \cdots L_{\ell} X_1 \cdots X_x R_1 \cdots R_{r} / X_j$.
By applying Sublemma~\ref{slem:bim-big} stated in Section~\ref{sec:bim-big} (it does not
use the fact that $f$ is $k$-repetitive) and the definitions 
of our words, we get: 

\begin{claim} \label{claim:c-c}
For  $\overline{L}, \overline{X}, X'_j, \overline{R} \ge 2k+1$,
$P(\overline{L}, \overline{X}, X'_j, \overline{R} )$
is a polynomial and:
\begin{itemize}
\item the coefficient  in $TX_j$ of $P$  is:
\begin{equation*}
c \defined \rod{\mach}{}{\ce~e m_0\left(\prod_{i=1}^x e_i \cro{u_i} e_i m_i\right) e~\de}{;}
\end{equation*}
\item the coefficient in  $TX'_j$ of $P$  is:
\begin{equation*}
c' \defined \rod{\mach}{}{\ce~em_0\left(\prod_{i=1}^{j-1} e_i \cro{u_i} e_i m_i\right) 
e_j m_j  \left(\prod_{i=j+1}^{x} e_i \cro{u_i} e_i m_i\right)  e
\left( \lefe_{j} \cro{u_{j}} \rige_{j} \right)~\de}. 
\end{equation*}
\end{itemize}
\end{claim}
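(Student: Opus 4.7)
The plan is to apply Sublemma~\ref{slem:bim-big} (from Section~\ref{sec:bim-big}) to expand $P$ as a polynomial in its arguments, and then identify the two specific monomial coefficients. The word inside $f$ is a concatenation of iterated factors whose iteration bases all have idempotent $\mu$-images: the $f_i$ (inside $U$), the $e_i$ (inside $W$), the $f'_i$ (inside $V$), and $\mu(w)=e$ for the $w$-power blocks. Under the assumption that every parameter is at least $2k{+}1$, Sublemma~\ref{slem:bim-big} expresses $P(\overline{L},\overline{X},X'_j,\overline{R})$ as a polynomial whose coefficients are sums of productions of $\mach$ on multicontexts in which each iterated factor contributes one marked occurrence per unit of degree of its associated variable.

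For the coefficient in $TX_j$, the monomial has degree $1$ in every $L_i$, every $X_i$ (including $X_j$), and every $R_i$, but degree $0$ in $X'_j$. We therefore mark exactly one position inside each $v_i$ (in $U$), each $u_i$ (in the first $W$), and each $v'_i$ (in $V$), for a total of $\ell+x+r=k$ marked positions. Using Proposition-Definition~\ref{depro:factors}, the unmarked portions contribute only through their $\mu$-image. The $w^{2\omega-1}$, $w^{\omega-1}$, $w^\omega$ blocks and the second $W(3,\dots,3,3,\dots,3)$ all have image $e$, and by idempotency their concatenations with the $\ce$/$\de$ endpoints collapse. What remains is exactly the multicontext $\ce\,e m_0\!\left(\prod_{i=1}^x e_i \cro{u_i} e_i m_i\right)\!e\,\de$, whose production is $c$.

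For the coefficient in $TX'_j$, the monomial has degree $1$ in $X'_j$ and degree $0$ in $X_j$: the marked position in the $j$-th iterator is moved from the first $W$ into the $u_j$ block of the second $W$. In the first $W$ the unmarked $u_j$ now contributes only its image $e_jm_j$; in the second $W$ the left and right contexts of the marked $u_j$ are the accumulated idempotent-valued padding $m_0\prod_{i=1}^{j}(m_{i-1}e_i)$ and $\prod_{i=j}^{x}(e_im_i)$ respectively, and adjoining the surrounding $w$-powers yields $\lefe_j$ on the left and $\rige_j$ on the right. Up to the idempotent collapses allowed by Proposition-Definition~\ref{depro:factors}, this produces exactly the multicontext defining $c'$. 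The main technical obstacle is the simultaneous bookkeeping of contributions across all iterated factors in Sublemma~\ref{slem:bim-big} and verifying that the idempotent padding indeed reduces the complicated raw multicontext to the compact forms stated in the claim; everything else is a direct matching of monomials.
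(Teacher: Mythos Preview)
Your proposal is correct and follows exactly the paper's approach, which simply says ``by applying Sublemma~\ref{slem:bim-big} and the definitions of our words''; you have just unpacked the idempotent simplifications that the paper leaves implicit. One small slip: your left context in the second $W$ is written $m_0\prod_{i=1}^{j}(m_{i-1}e_i)$, which duplicates $m_0$; the intended expression is $\prod_{i=1}^{j} m_{i-1} e_i$, and with the adjoining $w$-power this indeed gives $\lefe_j$.
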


Since $f$ is $k$-repetitive, we also get the following
from our construction.

\begin{claim} \label{claim:grand-F}
There exists a function $F: \Nat^k \rightarrow \Nat$
such that for $\overline{L}, \overline{X}, X'_j, \overline{R} \ge 2k+1$:
\begin{equation*}
P(\overline{L}, \overline{X}, X'_j, \overline{R})
= F(\overline{L}, X_1, \dots, X_{j-1},  X_{j+1}, \dots, X_x, \overline{R}, X_j + X'_j).
\end{equation*}
\end{claim}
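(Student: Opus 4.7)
The plan is to apply the $k$-repetitiveness of $f$ directly to the two consecutive occurrences of the pattern $W(\cdot)$ inside the word on which $P$ is evaluated. First, I would observe that, by the remark following Definition~\ref{def:repetitive} (applied $k{-}x$ times, taking empty factors), $f$ is also $x$-repetitive, and moreover with the same $\idem{}$; hence our fixed $\omega$, which is a multiple of $\idem{}$ satisfying $\omega \ge 3$, remains a valid choice. Note that $1 \le x \le k$ since $\ell + x + r = k$ and $1 \le j \le x$.

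The next step is to fix arbitrary values of $\ov{L}$ and $\ov{R}$ and instantiate $x$-repetitiveness with the parameters $\alpha \defined U(\ov{L})$, $\beta \defined V(\ov{R})$, $\alpha_i \defined \pre{m_i}$ for $0 \le i \le x$, and the factors $u_1, \dots, u_x$ as in the statement of Lemma~\ref{lem:ecarte}. With these choices, the pattern $W$ produced by Definition~\ref{def:repetitive} coincides exactly with our $W$, and $W(1, \dots, 1)$ coincides with our $w$. The $x$-repetitiveness will then yield a function $F_{\ov{L}, \ov{R}} : \Nat^x \rightarrow \Nat$ such that, for all $\ov{X}, \ov{Y} \ge 3$,
\begin{equation*}
f\bigl(\alpha\, w^{2\omega-1} W(\ov{X})\, w^{\omega-1} W(\ov{Y})\, w^{\omega}\, \beta\bigr) = F_{\ov{L}, \ov{R}}(X_1 + Y_1, \dots, X_x + Y_x).
\end{equation*}

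Finally, I would specialize $Y_i \defined 3$ for $i \neq j$ and $Y_j \defined X'_j$. Since $2k{+}1 \ge 3$ for $k \ge 1$, the hypothesis $\ov{L}, \ov{X}, X'_j, \ov{R} \ge 2k{+}1$ guarantees $\ov{X}, \ov{Y} \ge 3$, so the above equation applies and its left-hand side is exactly $P(\ov{L}, \ov{X}, X'_j, \ov{R})$. The right-hand side becomes $F_{\ov{L}, \ov{R}}(X_1{+}3, \dots, X_{j-1}{+}3, X_j{+}X'_j, X_{j+1}{+}3, \dots, X_x{+}3)$, which depends on $X_j$ and $X'_j$ only through their sum. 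Packaging the $F_{\ov{L}, \ov{R}}$'s together by
\begin{equation*}
F(\ov{L}, X_1, \dots, X_{j-1}, X_{j+1}, \dots, X_x, \ov{R}, Z) \defined F_{\ov{L}, \ov{R}}(X_1{+}3, \dots, X_{j-1}{+}3, Z, X_{j+1}{+}3, \dots, X_x{+}3)
\end{equation*}
yields the required function $F : \Nat^k \rightarrow \Nat$.

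I do not anticipate any serious obstacle at this step: the argument reduces to a careful instantiation of Definition~\ref{def:repetitive}, the only points requiring attention being to match the parameters and to verify the bound $\ov{X}, \ov{Y} \ge 3$. The genuine work in the surrounding argument lies rather in Claim~\ref{claim:c-c} (which identifies two specific coefficients of the polynomial~$P$ with the two productions to be equated), and in the final combinatorial extraction that concludes the proof of Lemma~\ref{lem:ecarte}.
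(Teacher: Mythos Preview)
Your proposal is correct and follows essentially the same approach as the paper's proof: reduce from $k$-repetitiveness to $x$-repetitiveness (same $\idem{}$), then instantiate Definition~\ref{def:repetitive} with $\alpha = U(\ov{L})$, $\beta = V(\ov{R})$, $\alpha_i = \pre{m_i}$, and specialize $Y_i = 3$ for $i \neq j$, $Y_j = X'_j$. Your write-up is in fact more careful than the paper's sketch, explicitly checking the arity of $F$ and the bound $\ov{X}, \ov{Y} \ge 3$.
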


\begin{claimproof} The function $f$ is $x$-repetitive
since it is $k$-repetitive and $x \le k$.
Let $\overline{L}$ and $\overline{R}$ be fixed,
using the definition of $P$ and Definition \ref{def:repetitive}
(with $\alpha = U(\overline{L})$ and $\beta = V(\overline{R})$)
we have that $P$ only depends on
$X_1 {+} 3, \dots, X_{j-1} {+} 3, X_j {+} X'_j, X_{j+1} {+} 3 , \dots, X_x {+}3$.
\end{claimproof}

From Claim \ref{claim:grand-F} we deduce that
for $\overline{L}, \overline{X}, X'_j, \overline{R}  \ge 2k+1$:
\begin{equation}
\label{eq:polyly}
\begin{aligned}
&P(\overline{L}, X_1, \dots, X_{j-1},X_j, X_{j+1}, \dots, X_x, X'_{j}, \overline{R})\\
 =~&P(\overline{L}, X_1, \dots, X_{j-1},2k{+}1, X_{j+1}, \dots, X_x, X'_j {+} X_j {-} (2k{+}1), \overline{R}).\\
 \end{aligned}
\end{equation}

By developing the second expression in Equation \ref{eq:polyly},
it is easy to see that the coefficients 
in $TX_j$ and in $TX'_j$ of $P$ are equal. Hence $c=c'$ in Claim \ref{claim:c-c}.

\subsection{Statement and proof of Sublemma~\ref{slem:bim-big}}

\label{sec:bim-big}

Intuitively, the result below says that if we iterate
$k{+}1$ idempotent factors in a word, then the output 
will be a multivariate polynomial in the number of iterations.
Furthermore, specific terms of the polynomial
allow to recover the productions over these factors.

\begin{sublemma} \label{slem:bim-big} Let $\mach = (A,M, \mu, \oras, \lambda)$ be a $k$-marble bimachine.\\
Let  $\alpha_0 u_1 \alpha_1 \cdots u_{k+1} \alpha_{k+1} {\in} A^*$.
Let $m_i \defined \mu(\alpha_i)$, $e_i \defined \mu(u_i)$
and assume that the $e_i$ are idempotent. Then
$
P:X_1, \dots, X_{k+1} \mapsto
f\left( \alpha_0 u_1^{X_1} \alpha_1 \cdots \alpha_k u_{k+1}^{X_{k+1}} \alpha_{k+1} \right)
$
is a polynomial for $X_1, \dots, X_{k+1} \ge 2k+1$.
For $1 \le j \le k+1$, the coefficient of $P$ in ${X_1 \cdots X_{k+1}}/{X_j}$ is:
\begin{equation*}
\rod{\mach}{}{m_0 \left(\prod_{i=1}^{j{-}1} e_i \cro{u_i} e_i m_i  \right) e_j m_j
\left(\prod_{i=j+1}^{k+1} e_i \cro{u_i} e_i m_i \right)}.
\end{equation*}

\end{sublemma}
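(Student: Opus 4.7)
The plan is to use Sublemma~\ref{slem:f-prod} to rewrite $P$ as a single production, split it over the factors with Lemma~\ref{lem:coupe-factors}, and then iteratively apply Lemma~\ref{lem:u-x} to each $\cro{u_i^{X_i}}_{r_i}$ block to extract the polynomial structure and the desired coefficients. Concretely, I would start from
\begin{equation*}
P(X_1, \dots, X_{k+1}) = \rod{\mach}{}{\cro{\alpha_0 u_1^{X_1} \alpha_1 \cdots u_{k+1}^{X_{k+1}} \alpha_{k+1}}_k}
\end{equation*}
and apply Lemma~\ref{lem:coupe-factors} to split the multiplicity $k$ over the $2k+3$ factors of the word, yielding
\begin{equation*}
P(X_1, \dots, X_{k+1}) = \sum_{(s_0, r_1, \dots, r_{k+1}, s_{k+1}) \in \Sigma_k} \rod{\mach}{}{\cro{\alpha_0}_{s_0} \cro{u_1^{X_1}}_{r_1} \cro{\alpha_1}_{s_1} \cdots \cro{u_{k+1}^{X_{k+1}}}_{r_{k+1}} \cro{\alpha_{k+1}}_{s_{k+1}}}.
\end{equation*}

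For every summand I would iteratively apply item~2 of Lemma~\ref{lem:u-x} to each $\cro{u_i^{X_i}}_{r_i}$ in turn, treating the remaining factors as the $\ce$ and $\de$ of that lemma. Because $X_i \ge 2k+1 \ge 2r_i + 1$, this shows that each summand is polynomial in $(X_1, \dots, X_{k+1})$ with degree at most $r_i$ in each $X_i$, hence total degree at most $\sum_i r_i \le k$. Summing over all index tuples then gives that $P$ itself is polynomial of total degree at most $k$.

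To identify the coefficient of the monomial $M := X_1 \cdots X_{k+1}/X_j$, I would observe that since $M$ already has total degree $k$, only summands with $\sum_i r_i = k$ can contribute, which forces all $s_i = 0$. Requiring individual degree at least $1$ in each $X_i$ with $i \ne j$ then pins down the unique tuple $r_i = 1$ for $i \ne j$ and $r_j = 0$. In this surviving summand, the $\cro{\alpha_i}_0$ simplify to $m_i$; the factor $\cro{u_j^{X_j}}_0$ simplifies to $\mu(u_j^{X_j}) = e_j$ by idempotency; and iteratively applying item~3(b) of Lemma~\ref{lem:u-x} to each remaining $\cro{u_i^{X_i}}_1$ replaces it by the block $e_i \cro{u_i} e_i$ when one extracts the linear coefficient in $X_i$. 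The resulting expression matches the announced production. The only subtle point I anticipate is justifying the iterated extraction of linear coefficients: since each use of Lemma~\ref{lem:u-x} returns a sum of $P_s(X_i)$ times a production independent of $X_i$, the extraction operations in the different variables commute and can be carried out in any order, which makes the argument unambiguous.
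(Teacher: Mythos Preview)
Your approach is correct and uses the same three ingredients as the paper (Sublemma~\ref{slem:f-prod}, Lemma~\ref{lem:coupe-factors}, Lemma~\ref{lem:u-x}), but the organisation differs. The paper does not split all $2k{+}3$ factors at once; instead it proves by induction on $\ell$ a strengthened statement about $\rod{\mach}{}{\cro{\beta(X_1,\dots,X_\ell)}_r~\ce}$ for arbitrary right multicontexts $\ce$ and arbitrary $r \le k$, peeling off one block $u_\ell^{X_\ell}\alpha_\ell$ at a time and tracking explicitly the coefficients of $X_1\cdots X_\ell$ and of $X_1\cdots X_\ell/X_j$ through a case analysis on $(r_1,r_2,r_3) \in \Sigma_r$. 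Your version replaces this induction by a single global split followed by a total-degree argument: since each application of item~1 of Lemma~\ref{lem:u-x} contributes degree at most $r_i$ in $X_i$, the only tuple that can hit the degree-$k$ monomial $X_1\cdots X_{k+1}/X_j$ is $s_0=\cdots=s_{k+1}=0$, $r_j=0$, $r_i=1$ otherwise, after which item~3(b) gives the claimed production. This is slicker for the specific coefficient asked for here; the paper's inductive formulation, on the other hand, yields slightly more (it also identifies the coefficient of $X_1\cdots X_\ell$ for each $\ell$) and makes the ``commutation of extractions'' that you flag as subtle completely automatic, since at each step only one variable is being expanded.
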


We shall prove a stronger result by induction.

\subparagraph*{Induction hypothesis.}
Given $1 \le \ell \le k+1$, let
$\beta(X_1, \dots, X_{\ell}) \defined \alpha_0 u_1^{X_1} \cdots {u_{\ell}}^{X_{\ell}} \alpha_{\ell}$.
The induction hypothesis on $\ell$ states that
for all $v_1, \dots, v_n \in A^*$, for all
$(\delta_1, \dots, \delta_n,r) \in \Sigma_k$, if $\ce~= \cro{v_1}_{\delta_1} \cdots \cro{v_n}_{\delta_n}$,
then
$
X_1, \dots, X_{\ell} \mapsto \rod{\mach}{}{\cro{\beta(X_1, \dots, X_{\ell})}_r~\ce}
$\\
is a polynomial for $X_1, \dots, X_{\ell} \ge 2k+1$. Furthermore:
\begin{enumerate}
\item if $\ell \ge 1$, the coefficient in  $X_1 \dots X_{\ell}$ is:
\begin{enumerate}
\item  $0$ if $r < \ell$;
\item $\displaystyle \rod{\mach}{}{m_0 \left(\prod_{i=1}^{\ell} e_i \cro{u_i} e_i m_i  \right)~\ce}$
if  $r = \ell$;
\end{enumerate}
\item if $\ell \ge 2$ and $1 \le j \le \ell$, the coefficient in  ${X_1 \dots X_{\ell}}/{X_j}$ is:
\begin{enumerate}
\item $0$ if $r < \ell {-} 1$;
\item $\displaystyle \rod{\mach}{}{m_0 \left(\prod_{i=1}^{j{-}1} e_i \cro{u_i} e_i m_i  \right) e_j m_j
\left(\prod_{i=j+1}^{\ell} \left(e_i \cro{u_i} e_i m_i \right)\right)\ce}$
 if $r = \ell {-} 1$.
\end{enumerate}
\end{enumerate}

\subparagraph*{Proof by induction.}
Let $\ell \ge 1$ and
assume that the result holds for $\ell{-}1$
(define  $\beta() \defined \alpha_0$, then the
result is true by emptiness for $\ell = 0$).
Then:
\begin{equation}
\label{eq:prod-beta}
\begin{aligned}
&\rod{\mach}{}{\cro{\beta(X_1, \dots, X_{\ell})}_r~\ce}
 = \rod{\mach}{}{\cro{\beta(X_1, \dots, X_{\ell{-}1}) u_{\ell}^{X_{\ell}} \alpha_{\ell}}_r~\ce} \\
& = \sum_{(r_1,r_2,r_3) \in \Sigma_r}  
\rod{\mach}{}{\cro{\beta(X_1, \dots, X_{\ell{-}1})}_{r_1}
\cro{u_{\ell}^{X_{\ell}}}_{r_2} \cro{\alpha_{\ell}}_{r_3}~\ce~}
\substack{\tnorm{~~by Lemma~\ref{lem:coupe-factors}.}}
\end{aligned}
\end{equation}

We now want to cut the factor $u_{\ell}^{X_{\ell}}$ in
several pieces, for this we use item~\ref{po:u-x:2} of Lemma~\ref{lem:u-x}.
It follows from Equation \ref{eq:prod-beta} that for all $X_{\ell} \ge 2k+1$:
\begin{equation}
\label{eq:big-sum}
\begin{aligned}
&\rod{\mach}{}{\cro{\beta(X_1, \dots, X_{\ell})}_r~\ce}\\
& = \sum_{(r_1,r_2,r_3) \in \Sigma_r}  \sum_{s = (s_1, \dots, s_y) \in \Shapes_{r_2}} \Big(P_s(X_{\ell})\\
&\times \rod{\mach}{}{\cro{\beta(X_1, \dots, X_{\ell{-}1})}_{r_1} \cro{u_{\ell}}_{s_1} \cdots \cro{u_{\ell}}_{s_Y}  \cro{\alpha_{\ell}}_{r_3}~\ce~} \Big)
\end{aligned}
\end{equation}

Each last term is by induction hypothesis a polynomial in $X_1, \dots, X_{\ell-1}$
if they are $\ge 2k+1$. Since $P_s$ only depends on $s$ by Lemma~\ref{lem:u-x}, we conclude that
$\rod{\mach}{}{\cro{\beta(X_1, \dots, X_{\ell})}_r~\ce}$ is a polynomial in $X_1, \dots, X_{\ell}$.
Now let us treat the specific cases:
\begin{enumerate}
\item consider the coefficient in
$X_1 \cdots X_{\ell}$ in $\rod{\mach}{}{\cro{\beta(X_1, \dots, X_{\ell})}_r~\ce}$:
\begin{enumerate}
\item if $r <\ell$, then by induction the only term in the first sum
of Equation \ref{eq:big-sum} which contains a factor in $X_1 \cdots X_{\ell-1}$ with a
possibly non-zero coefficient is for $r_1 = r$ and $r_2 = r_3 = 0$,
but then we have no $X_{\ell}$. Hence the coefficient in $X_1 \cdots X_{\ell}$ is $0$;
\item if $r=\ell$, then by induction hypothesis three terms can provide a factor in
$X_1 \dots X_{\ell}$:
\begin{itemize}
\item $r_1 = \ell$, $r_2 = 0$, $r_3 = 0$ or
$r_1 = \ell{-}1$, $r_2 = 0$, $r_3 = 1$, but then we
also have no $X_{\ell}$;
\item $r_1 = \ell{-}1$, $r_2 = 1$ and $r_3 = 0$. Then by
using  item~\ref{po:u-x:3} of Lemma~\ref{lem:u-x} and induction hypothesis,
the coefficient in $X_1 \dots X_{\ell}$  has the good shape;
\end{itemize}
\end{enumerate}
\item now let $\ell \ge 2$ and $1 \le j \le \ell$, and consider
the coefficient in  ${X_1 \dots X_{\ell}}/{X_j}$. We assume that
of $j < \ell$ and $\ell \ge 3$ (the other cases can be treated
using similar techniques):
\begin{enumerate}
\item if $r < \ell {-}1$, then  by induction the only term in Equation \ref{eq:big-sum}
which contains a factor in ${X_1 \cdots X_{\ell-1}}/{X_j}$ with a
possibly non-zero coefficient is for $r_1 = r{-1}$ and $r_2 = r_3 = 0$,
but then we have no $X_{\ell}$. Hence the coefficient in ${X_1 \cdots X_{\ell}}/{X_j}$ is $0$;
\item if $r=\ell{-}1$,
then by induction hypothesis three terms can provide a factor in
$X_1 \dots X_{\ell}$:
\begin{itemize}
\item $r_1 = \ell{-}1$, $r_2 = 0$, $r_3 = 0$
or $r_1 = \ell{-}2$, $r_2 = 0$, $r_3 = 1$, but then we
also no $X_{\ell}$;
\item $r_1 = \ell{-}2$, $r_2 = 1$ and $r_3 = 0$. Then by
using  item~\ref{po:u-x:3} of Lemma~\ref{lem:u-x} and induction hypothesis,
the coefficient in $X_1 \dots X_{\ell}$ has the good shape.
\end{itemize}
\end{enumerate}
\end{enumerate}

\section{Proof of Lemma \ref{lem:sumd}}

Given a $k$-marble bimachine $\mach = (A,M, \mu, \oras, \lambda)$,
we want to show that the following function is polyregular and
has growth at most $k{-}1$:
\begin{equation*}
\sumd{\mach}: (\apar{A})^* \rightarrow \Nat, \forest \mapsto
\left\{
    \begin{array}{l}
        \displaystyle \sum_{\Nod \in \Deps{\forest}}
        \rod{\mach}{\forest}{\Nod}
	\substack{\tnorm{{}~if } \forest \in \Facts{\mu}{\hei}{w} \tnorm{{} for some } w \in A^*;} \\
        0 \substack{\tnorm{{} otherwise.}}\\
    \end{array}
\right.
\end{equation*}
We show the two parts of this statement separately.
We first recall a result from \cite{doueneau2021pebble}.

\begin{lemma}[\cite{doueneau2021pebble}] \label{lem:partition}
If $\mu:A^* \rightarrow M$, $w \in A^*$ and  $\forest \in \Facts{\mu}{}{w}$, then:\\
$ \{\fr{\nod}{\forest}: \nod \in \itera{\forest} \cup \{\forest\}\}$
is a partition of $[1{:} |w|]$.
\end{lemma}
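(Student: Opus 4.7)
The plan is to prove a slightly stronger statement by structural induction on any node $\nod \in \Nodes{\forest}$: the family $\{\fr{\nod'}{\forest} : \nod' \in \itera{\nod} \cup \{\nod\}\}$ partitions the set of positions of $w$ that sit below $\nod$ (i.e., the positions corresponding to the leaves of the subtree rooted at $\nod$). The lemma is recovered as the case $\nod = \forest$.

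For the base case $\nod = a \in A$, one has $\itera{\nod} = \varnothing$ and $\fr{\nod}{\forest}$ is the singleton consisting of the position of $a$ in $w$, so the statement is immediate. For the inductive step, write $\nod = \tree{\forest_1} \cdots \tree{\forest_n}$. Unfolding the definition of skeletons yields $\Ske{\nod} = \{\nod\} \cup \Ske{\forest_1} \cup \Ske{\forest_n}$ (with $\forest_1 = \forest_n$ when $n = 1$), whose leaves are exactly those contained in $\Ske{\forest_1}$ or $\Ske{\forest_n}$; hence $\fr{\nod}{\forest} = \fr{\forest_1}{\forest} \cup \fr{\forest_n}{\forest}$. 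On the iterables side, the inductive definition gives $\itera{\nod} = \{\forest_i : 2 \le i \le n{-}1\} \cup \bigcup_{i=1}^n \itera{\forest_i}$.

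Applying the induction hypothesis to each $\forest_i$, the family $\{\fr{\forest_i}{\forest}\} \cup \{\fr{\nod'}{\forest} : \nod' \in \itera{\forest_i}\}$ partitions the positions under $\forest_i$. For a middle child ($2 \le i \le n{-}1$), the ``root'' block $\fr{\forest_i}{\forest}$ already belongs to the target family since $\forest_i \in \itera{\nod}$. For the extremal children ($i \in \{1,n\}$), the two root blocks $\fr{\forest_1}{\forest}$ and $\fr{\forest_n}{\forest}$ are merged into the single block $\fr{\nod}{\forest}$ via the skeleton equation established above. Summing up the disjoint $\forest_i$-partitions therefore yields exactly $\{\fr{\nod'}{\forest} : \nod' \in \itera{\nod} \cup \{\nod\}\}$, and this is a partition of $\bigcup_{i=1}^n \{\text{positions under } \forest_i\}$, i.e. of the positions under $\nod$.

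The only delicate bookkeeping point is to ensure that no position is claimed by two blocks: the frontiers $\fr{\forest_1}{\forest}$ and $\fr{\forest_n}{\forest}$ produced by the recursive calls must not coexist with $\fr{\nod}{\forest}$. This is guaranteed precisely because they are absorbed into $\fr{\nod}{\forest}$ at the current step rather than being kept as separate blocks, while all other blocks in the inductive partitions live in strictly disjoint subtrees $\Nodes{\forest_i}$ and therefore cannot collide. No serious obstacle is expected; the proof is a routine structural induction driven by the two defining equations for $\Ske{\cdot}$ and $\itera{\cdot}$.
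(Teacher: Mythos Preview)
The paper does not give its own proof of this lemma: it is imported from \cite{doueneau2021pebble} and invoked as a black box, both here and in the proof of Lemma~\ref{lem:coupe-iterables}. Your structural induction is correct and is precisely the natural argument. The two identities you rely on, namely $\fr{\nod}{\forest} = \fr{\forest_1}{\forest} \cup \fr{\forest_n}{\forest}$ (from the skeleton definition) and $\itera{\nod} = \{\forest_i : 2 \le i \le n{-}1\} \cup \bigcup_i \itera{\forest_i}$, are exactly what is needed; the absorption of the extremal children's frontiers into $\fr{\nod}{\forest}$ and the promotion of middle children to iterable nodes at the current level make the bookkeeping go through cleanly, including in the degenerate cases $n \in \{1,2\}$.
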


\subsection{Proof that $\sumd{\mach}(\forest) = \mc{O}(|\forest|^{k{-1}})$}

\label{ssec:bound}

We first show that each term of the sum defining $\sumd{\mach}$
is bounded independently from $w$, $\forest$ and $\Nod$.
This is the following sublemma.

\begin{sublemma} If $\mach = (A,M, \mu, \oras, \lambda)$ is a $k$-marble
bimachine, there exists $K \ge 0$ such that
for all $w \in A^*$, $\forest \in \Facts{\mu}{\hei}{w}$,
for all $\Nod \subseteq \itera{\forest}\cup \{\forest\}$,
we have
$
\rod{\mach}{\forest}{\Nod} \le K.
$
\end{sublemma}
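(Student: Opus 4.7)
The plan is to exploit two independent finiteness properties: the bounded height $\hei$ of $\forest$ forces every skeleton, hence every frontier $\fr{\nod}{\forest}$, to have bounded size; and each single-position production in a marble bimachine is uniformly bounded by a constant depending only on $\mach$.

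First, I would show by induction on the height $h$ of $\nod \in \Nodes{\forest}$ that $|\Ske{\nod}| \le 2^h - 1$. The base case (a leaf of height $1$) is immediate. For $\nod = \tree{\forest_1} \cdots \tree{\forest_n}$, the defining recursion $\Ske{\nod} = \{\nod\} \cup \Ske{\forest_1} \cup \Ske{\forest_n}$ and the inductive hypothesis yield $|\Ske{\nod}| \le 1 + 2(2^{h-1} - 1) = 2^h - 1$. Applied to $\forest \in \Facts{\mu}{\hei}{w}$, every frontier therefore satisfies $|\fr{\nod}{\forest}| \le |\Ske{\nod}| \le 2^{\hei} - 1$.

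Next, I would prove by induction on $k$ that for every $k$-marble bimachine $\mach$ there exists a constant $C_k(\mach)$ with $\rod{\mach}{w}{\multi{i_1, \dots, i_k}} \le C_k(\mach)$ for all $w \in A^*$ and positions $i_1 \le \cdots \le i_k \in [1{:}|w|]$. For $k = 1$ this is just the maximum of $\lambda: M \times A \times M \rightarrow \Nat$ over a finite set. For $k \ge 2$, the definition of the production reduces it to a $(k{-}1)$-marble production on some submachine $\mach_{\exte}$ with $\exte \in \oras$; since $\oras$ is finite, setting $C_k(\mach) \defined \max_{\exte \in \oras} C_{k-1}(\mach_{\exte})$ suffices.

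Combining the two ingredients, for any multiset $\Nod = \multi{\nod_1, \dots, \nod_k} \subseteq \itera{\forest} \cup \{\forest\}$, we expand
\[
\rod{\mach}{\forest}{\Nod} = \rod{\mach}{w}{\multi{\fr{\nod_1}{\forest}, \dots, \fr{\nod_k}{\forest}}}
\]
as a sum ranging over at most $\prod_{j=1}^{k} |\fr{\nod_j}{\forest}| \le (2^{\hei} - 1)^k$ terms of the form $\rod{\mach}{w}{\multi{i_1, \dots, i_k}}$, each bounded by $C_k(\mach)$. Hence $K \defined C_k(\mach) \cdot (2^{\hei} - 1)^k$ is the desired uniform bound. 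The only mildly subtle point is recognizing that the sum defining $\rod{\mach}{w}{\multi{I_1, \dots, I_k}}$ does not blow up with $|w|$: this is precisely where the height bound on $\forest$ enters, through the bound on $|\fr{\nod}{\forest}|$. The remaining two inductions are completely routine.
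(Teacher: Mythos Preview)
Your proof is correct and follows essentially the same approach as the paper: bound the size of each frontier using the height of the forest, bound each single-position production by a constant depending only on $\mach$, and conclude that the sum defining $\rod{\mach}{\forest}{\Nod}$ has boundedly many bounded terms. The paper is terser (it asserts $|\fr{\nod}{\forest}| \le 2^{\hei}$ and the existence of the constant $B$ without spelling out the inductions), but the argument is the same.
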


\begin{proof} Let us assume that $ \Nod = \multi{\nod_1 \con r_1, \dots, \nod_n \con r_n}$.
Then the frontiers of the $\nod_i$ are disjoint by Lemma \ref{lem:partition},
and by definition we have:
\begin{equation}
\label{eq:bor}
\begin{aligned}
 &\rod{\mach}{\forest}{\Nod}
  = \sum_{\substack{\multi{i^1_1, \dots, i^1_{r_1}} \subseteq \fr{\nod_1}{\forest}}}
\cdots \sum_{\substack{\multi{i^n_1, \dots, i^n_{r_n}} \subseteq  \fr{\nod_n}{\forest}}}
\rod{\mach}{w}{\multi{i^1_1, \dots, i^n_{r_n}}}\\
 \end{aligned}
\end{equation}

Note that for $\nod \in \Nodes{\forest}$, we have
$|\fr{\nod}{\forest}{}| \le 2^{\hei}$. Indeed, the frontier are the leaves of 
the skeleton which is a binary tree whose height is bounded by $\hei$
since $\forest \in \Facts{\mu}{\hei}{w}$.
Furthermore, it is easy to see that if $\multi{i_1, \dots, i_k} \subseteq [1{:}|w|]$, one has
$\rod{\mach}{w}{\multi{i_1, \dots, i_k}} \le B$ for some constant
$B$ depending only on $\mach$.
Hence the sums in Equation~\ref{eq:bor} are sums
of bounded terms over a bounded range: they are bounded.
\end{proof}

To conclude the proof, it remains to show
that the number of multisets in $\Deps{\forest}$
grows in $\mc{O}(|\forest|^{k{-}1})$. For this we note
that:
\begin{equation*}
\begin{aligned}
\left|\Deps{\forest}\right|
&\le \left|\{(\nod_1, \dots, \nod_k):\multi{\nod_1, \dots, \nod_k} \in \Deps{\forest} \}\right|\\
&\le \sum_{1 \le i \neq j \le k}
\left|\{(\nod_1, \dots, \nod_k) \in (\Nodes{\forest})^k: \multi{\nod_i, \nod_j} \text{ is not independent}\}\right|\\
&\le 12 |M| k^2 \left|\Nodes{\forest}\right|^{k{-}1} = \mc{O}(|\forest|^{k{-}1})
\end{aligned}
\end{equation*}
Indeed, given $\nod_i \in \itera{\forest} \cup \{\forest\}$, there are
at most $6|M|$ nodes $\nod_j$ whose depth is smaller than $\nod_i$
and such that $\multi{\nod_i, \nod_j}$ is not independent.

\subsection{Proof that $\sumd{\mach}$ is polyregular}

We first note that the set of $\{\forest \in \Facts{\mu}{\hei}{w} : w\in A^*\}$ is regular.
Since regular properties can be checked by pebble transducers
(e.g. by launching a one-way automaton to test whether the property
holds before beginning the computation), we can without loss of generality
assume that our inputs belongs to this set.

Let $w \in A^*$ and $\forest \in \Facts{\mu}{}{w}$.  If
$ 1  \le i \le |w|$, by Lemma~\ref{lem:partition} let $\frm{i}{\forest}$ be the unique
$\nod \in \itera{\forest} \cup \{\forest\}$ such that $i \in \fr{\nod}{\forest}$.
We then describe by Algorithm~\ref{algo:dependent}
the behavior of a $k$-pebble transducer which computes $\sumd{\mach}$.
We show its correctness as follows:
\begin{equation*}
\begin{aligned}
\sumd{\mach}(\forest) &
= \sum_{\multi{\nod_1, \dots, \nod_k} \in \Deps{\forest}}
\sum_{\substack{\multi{i_1, \dots, i_k} \\ \text{with } i_j \in \fr{\nod_j}{\forest}}}
\rod{\mach}{w}{\multi{i_1, \dots, i_k}}\\
&= \sum_{\Nod \in \Deps{\forest}}
\sum_{\substack{\multi{i_1 \le \dots \le i_k} \\ \Nod = \multi{\frm{i_1}{\forest}, \dots, \frm{i_k}{\forest}}}}
\rod{\mach}{w}{\multi{i_1, \dots, i_k}}\\
&= \sum_{1 \le i_1 \le \dots \le i_k \le |w|}
\rod{\mach}{w}{\multi{i_1, \dots, i_k}} \times \indic{\multi{\frm{i_1}{\forest}, \dots, \frm{i_k}{\forest}} \in \Deps{\forest}}\\
\end{aligned}
\end{equation*} 
where $\indic{\cdot \in \cdot}$ denotes the characteristic function.

\begin{algorithm}[h!]
\SetKw{KwVar}{Variables:}
\SetKwProg{Fn}{Function}{}{}
\SetKw{In}{in}
\SetKw{Let}{let}
\SetKw{Out}{Output}
 \Fn{$\sumd{\mach}(\forest)$}{

 		$w \leftarrow \operatorname{\text{word factored by}} \forest$;

		\For{$1 \le i_1 \le \dots \le i_k \le |w|$}{
		
			\Let{$\nod_1, \dots, \nod_k \leftarrow \frm{i_1}{\forest}, \dots, \frm{i_k}{\forest}$}
		
			\If{$\multi{\nod_1, \dots, \nod_k} \in \Deps{\forest}$}{
						
			\Out{$\rod{\mach}{w}{\multi{i_1, \dots, i_k}}$}

			}
			
		}}
	
 \caption{\label{algo:dependent} Computing $\sumd{\mach}$ with a $(k{+}1)$-pebble transducer}
\end{algorithm}

Then we justify that it can be implemented by 
a $k$-pebble transducer. The loop over
$i_1 \le \cdots \le i_k$ is implemented using
$k$ pebbles ranging over the leaves of $\forest$.
Whether $\multi{\nod_1, \dots, \nod_k} \in \Deps{\forest}$
is a regular property which is checked in a standard way
with a lookahead. Finally, the $\rod{\mach}{w}{\multi{i_1, \dots, i_k}}$
are bounded independently from $w$ (see Subsection~\ref{ssec:bound})
and checking if $\rod{\mach}{w}{\multi{i_1, \dots, i_k}} = C$ is also 
a regular property.

\section{Linearizations}

We presented in Defintion~\ref{def:lili-node} the linearization of a node,
which is a $1$-multicontext.
We now generalize it to any set of independent nodes in Definition~\ref{def:lili}.
It is easy to see that coincides on singletons with the former definition.

\begin{definition}[Linearization] \label{def:lili} Let $\mu : A^* \rightarrow M$,
$w \in A^*$ and $\forest \in \Facts{\mu}{}{w}$.
If $\Nod \in \bigcup_{k \ge 0} \Indeps{\forest}$ or $\Nod = \{\forest\}$, we define
its \emph{linearization} by induction:
\item
\begin{itemize}
\item if $|\Nod| = 0$, then $\lin{\forest}{\Nod} \defined \mu(\forest)$;
\item if $\Nod = \set{\forest}$ then
$\lin{\forest}{\Nod} \defined \cro{w[\fr{\forest}{\forest}]}$;
\item otherwise $\forest = \tree{\forest_1} \cdots \tree{\forest_n}$ and $\forest \not \in \Nod$,
then for $1 \le j \le n$ we let $\Nod_j \defined \Nod \cap \Nodes{\forest_j}$.
Then we define
$
\lin{\forest}{\Nod} \defined \lin{\forest_1}{\Nod_1} \cdots  \lin{\forest_n}{\Nod_n}.
$
\end{itemize}
\end{definition}

In the following subsections, we give some properties of linearizations
which shall be useful for the proofs of appendices~\ref{app:depro:prod-archi}
and \ref{app:last}.

\subsection{Productions and linearizations}

We show here that the production on a set of independent nodes
is the same as the production on its linearization.
This result will be extremely useful in order to transform
the permutability of $\mach$ in a property about
the productions on its independent sets of nodes.

\begin{lemma} \label{lem:linear} Let $\mach = (A,M, \mu, \oras, \lambda)$ a $k$-marble bimachine.
Let $w \in A^*$, $\forest \in \Facts{\mu}{}{w}$ and
$\Nod \in \Indeps{\forest}$. Then
$
\rod{\mach}{\forest}{\Nod} = \rod{\mach}{}{\lin{\forest}{\Nod}}.
$
\end{lemma}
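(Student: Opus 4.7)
The proof will proceed by induction on the height of $\forest$, with a slightly strengthened statement so that the recursion goes through: I will prove the equality for every $\Nod$ in the domain of $\lin{\cdot}{\cdot}$, i.e.\ for $\Nod \in \bigcup_{\ell \ge 0} \Indep^{\ell}(\forest)$ or $\Nod = \{\forest\}$, treating $|\Nod| = 0$ as an identity of monoid elements $\lin{\forest}{\varnothing} = \mu(\forest)$ that provides the correct context for neighbouring multicontexts. The base case $\forest = a$ is immediate.

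For the induction step I will write $\forest = \tree{\forest_1} \cdots \tree{\forest_n}$ with $w = w_1 \cdots w_n$, and split $\Nod = \Nod_1 \uplus \cdots \uplus \Nod_n$ via $\Nod_j \defined \Nod \cap \Nodes{\forest_j}$. A short argument using the ancestor clause of independence shows that either $\Nod_j = \{\forest_j\}$ (when $\forest_j \in \Nod$, all other elements of $\Nodes{\forest_j}$ being ruled out) or $\Nod_j \in \Indep^{|\Nod_j|}(\forest_j)$. The induction hypothesis applied to each $\forest_j$ then yields $\rod{\mach}{\forest_j}{\Nod_j} = \rod{\mach}{}{\lin{\forest_j}{\Nod_j}}$, while by definition $\lin{\forest}{\Nod} = \lin{\forest_1}{\Nod_1} \cdots \lin{\forest_n}{\Nod_n}$.

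To glue these sub-equalities together, I will use that for every $\nod_i \in \Nod_j$ the frontier $\fr{\nod_i}{\forest}$ is contained in the block of positions of $w$ occupied by $w_j$ (since $\Ske{\nod_i} \subseteq \Nodes{\forest_j}$) and coincides with $\fr{\nod_i}{\forest_j}$ up to a shift by $|w_1 \cdots w_{j-1}|$. Consequently $\rod{\mach}{w}{\multi{\fr{\nod_1}{\forest}, \dots, \fr{\nod_k}{\forest}}}$ decomposes along the partition $w = w_1 \cdots w_n$; by Lemma~\ref{lem:coupe-factors} together with Proposition-Definition~\ref{depro:factors}---which allows replacing any substring of $w_j$ by another of identical $\mu$-image without affecting the production---each block-level contribution is identified with the production on the sub-multicontext $\lin{\forest_j}{\Nod_j}$, and concatenating them yields $\rod{\mach}{}{\lin{\forest}{\Nod}}$.

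The hardest part will be the singleton case $\Nod_j = \{\forest_j\}$, in which $\lin{\forest_j}{\{\forest_j\}} = \cro{w_j[\fr{\forest_j}{\forest_j}]}$: I must check that the productions of $\mach$ at the \emph{scattered} frontier positions of $w_j$ agree with those at the \emph{contiguous} word $u \defined w_j[\fr{\forest_j}{\forest_j}]$. This will need an auxiliary induction on $\forest_j$ showing that $\mu$ of any prefix (resp.\ suffix) of $w_j$ ending (resp.\ starting) at a frontier position equals $\mu$ of the corresponding prefix (resp.\ suffix) of $u$; this exploits the idempotency at every internal factorization node with $\ge 3$ children, because the $\mu$-images of the middle-children subwords there---which are precisely the non-frontier letters---get absorbed into the boundary idempotents, so only the frontier letters contribute to the effective $\mu$-context.
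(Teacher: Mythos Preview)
Your outline has the right technical core but a structural flaw in the induction. The sentence ``The induction hypothesis applied to each $\forest_j$ then yields $\rod{\mach}{\forest_j}{\Nod_j} = \rod{\mach}{}{\lin{\forest_j}{\Nod_j}}$'' does not type-check: $\mach$ is a $k$-marble bimachine, so $\rod{\mach}{\cdot}{\cdot}$ is only defined on multisets of size exactly $k$, whereas after splitting $\Nod$ along the children you get $|\Nod_j| < k$ in general. Your parenthetical remark that the $|\Nod|=0$ case ``provides the correct context for neighbouring multicontexts'' shows you sense this, but the statement you actually induct on is an equality of productions, and that equality is meaningless for the sub-pieces. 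To make an induction on $\forest$ go through you would have to carry external iterators $\ce,\de$ in the hypothesis (i.e.\ prove $\rod{\mach}{}{\ce~\cdot~\de}$ agrees for the two multicontexts for all $\ce,\de$ making the total bracket count equal to $k$), or---equivalently---induct on the \emph{multicontexts} themselves rather than on their productions.

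The paper avoids this by not inducting on the lemma at all. It argues directly: since the $\nod_i$ have no ancestor relationship, their frontiers $I_i \defined \fr{\nod_i}{\forest}$ are pairwise disjoint intervals-with-gaps; ordering them left to right and writing $v_i$ for the in-between factors, one checks from the recursive definition that $\lin{\forest}{\Nod} = \mu(v_0)\cro{w[I_1]}\mu(v_1)\cdots\cro{w[I_k]}\mu(v_k)$. Then $\rod{\mach}{}{\lin{\forest}{\Nod}}$ is a production over the ``compressed'' word $w' = v_0\, w[I_1]\, v_1 \cdots w[I_k]\, v_k$, and one compares it to $\rod{\mach}{w}{\set{I_1,\dots,I_k}}$ term by term via Sublemma~\ref{slem:positi}. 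The only thing left to verify is that the $\mu$-context at each frontier position of $w$ equals the $\mu$-context at the corresponding position of $w'$; this is isolated as Claim~\ref{claim:fr-monoid}, proved by a short induction on the forest using idempotency. Your ``hardest part'' paragraph is precisely this claim, so you have identified the real content; what is missing is a sound wrapper around it.
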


The rest of this subsection is devoted to the proof of Lemma~\ref{lem:linear}.

Assume that $\Nod = \set{\nod_1, \dots, \nod_n}$ and for all $1 \le i \le k$
let $I_i \defined \fr{\nod_i}{\forest}$.
Since their is no ancestor-relationship in $\set{\nod_1, \dots, \nod_k}$,
their frontiers do not overlap and
one can assume that $\min(I_1) \le \max(I_1) < \min(I_2) \le \max(I_2) < \cdots < \min(I_k) \le \max(I_k)$
(and they exist since the frontiers are always non-empty).
For $0 \le i \le k$, let $v_i \defined w[\max(I_i){+1} {:} \min(I_{i+1}){-1}]$ (with the convention
that $\max(I_0) = 0$ and $\min(I_{k+1}) = |w|{+}1$).
It it easy to check by construction of the linearization that
$
\lin{\forest}{\Nod} = \mu(v_0) \cro{w[I_1]} \mu(v_1) \cdots \cro{w[I_k]} \mu(v_k).
$

Therefore by Proposition-Definition~\ref{depro:factor-k} it follows:
\begin{equation*}
\begin{aligned}
\rod{\mach}{}{\lin{\forest}{\set{\nod_1, \dots, \nod_k}}}
&= \rod{\mach}{}{v_0 \cro{w[I_1]} v_1 \cdots \cro{w[I_k]} v_k}\\
&= \rod{\mach}{w'}{\set{I'_1, \dots, I'_k}}.\\
\end{aligned}
\end{equation*}
where $w' \defined v_0 w[I_1] v_1 \cdots w[I_k] v_k \in A^*$
and $I'_1, \dots, I'_n \subseteq [1{:}|w'|]$ denote the sets of positions of $w[I_1], \dots, w[I_k]$ in $w'$.
On the other hand one has by definition of productions:
\begin{equation*}
\rod{\mach}{\forest}{\set{\nod_1, \dots, \nod_k}}
= \rod{\mach}{w}{\set{I_1, \dots, I_k}}.
\end{equation*}

We now show that all the terms occurring in the sums defining
$\rod{\mach}{w}{\set{I_1, \dots, I_k}}$ and
$\rod{\mach}{w'}{\set{I'_1, \dots, I'_k}}$ are equal.
For $1 \le i \le k$,  let $\sigma_i : I_i \rightarrow I'_i$
be the unique monotone bijection.
To conclude the  proof, we
show that for all $i_1, \dots, i_k \in I_1, \dots, I_k$, we have:
\begin{equation*}
\rod{\mach}{w}{\set{i_1, \dots, i_k}} = \rod{\mach}{w'}{\set{\sigma_1(i_1), \dots, \sigma_k(i_k)}} 
\end{equation*}
For this it is sufficient to show (see Sublemma \ref{slem:positi}) that:
\begin{equation}
\label{eq:mumu}
\begin{aligned}
&\mu(w[1{:}i_1{-}1]) \cro{w[i_1]}  \cdots \cro{w[i_k]} \mu(w[i_{k}{+}1{:}|w|])\\
&= \mu(w'[1{:}\sigma_1(i_1){-}1]) \cro{w'[\sigma_1(i_1)]} 
\cdots \cro{w'[\sigma_k(i_k)]} \mu(w'[\sigma_k(i_{k}){+}1{:}|w'|])\\
\end{aligned}
\end{equation}
Equation \ref{eq:mumu} directly follows from Claim \ref{claim:fr-monoid}.

\begin{claim} \label{claim:fr-monoid} Let $u \in A^*$, $\forest \in \facto{\mu}{u}$,
$I \defined \fr{\forest}{\forest}$,  $u' \defined u[I]$ and $\sigma : I \rightarrow [1{:}|u'|]$
be the unique monotone bijection. Then for all $i \in I$ we have:
\begin{equation*}
\mu(u[1{:}i{-}1])\cro{u[i]}\mu(u[i{+}1{:}|u|])
= \mu(u'[1{:}\sigma(i){-}1])\cro{u'[\sigma(i)]}\mu(u'[\sigma(i){+}1{:}|u'|]).
\end{equation*}
\end{claim}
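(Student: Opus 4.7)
The plan is to proceed by structural induction on $\forest$, using as a stepping stone the auxiliary identity that for any $\forest \in \facto{\mu}{u}$ one has $\mu(u) = \mu(u[\fr{\forest}{\forest}])$. In words, restricting a word to its skeleton positions preserves its image in $M$. The reason this stepping stone holds is simple: at any internal node with $n \ge 3$ children, all children carry the same idempotent image $e$ (Definition~\ref{def:facto}), so removing the middle children turns $e^n$ into $e^2 = e$, leaving the image unchanged. The leaf case is immediate, and the $n \le 2$ case follows directly from the induction hypothesis applied to each child.

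For the main claim, induct on $\forest$. The base case $\forest = a$ reduces both sides of the claim to $\mu(\vide) \cro{a} \mu(\vide)$. For the inductive step, write $\forest = \tree{\forest_1} \cdots \tree{\forest_n}$ with $u = u_1 \cdots u_n$ and, for $j \in \{1,n\}$, set $u'_j \defined u_j[\fr{\forest_j}{\forest_j}]$; thus $u' = u'_1$ if $n = 1$ and $u' = u'_1 u'_n$ if $n \ge 2$. By definition of the skeleton, every $i \in \fr{\forest}{\forest}$ lies either in $\fr{\forest_1}{\forest_1}$ or, up to a suitable shift, in $\fr{\forest_n}{\forest_n}$. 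By symmetry, assume the former, so $\sigma(i)$ falls in the first block of $u'$. The induction hypothesis applied to $\forest_1$ directly matches the $\mu$-image of the prefix $u_1[1{:}i{-}1]$, the marked letter $u_1[i]$, and the $\mu$-image of the sub-suffix $u_1[i{+}1{:}|u_1|]$ with their $u'_1$-counterparts.

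What remains is to compare the contribution of the later siblings: on the left the right-context picks up an extra factor $\mu(u_2)\cdots \mu(u_n)$, whereas on the right it picks up $\mu(u'_n)$ (and nothing if $n=1$, and $\mu(u'_2)$ if $n=2$). For $n \le 2$ this is exactly the auxiliary identity applied to $\forest_j$ for $j \in \{2, n\}$. For $n \ge 3$ every $\mu(u_j) = e$ is idempotent, so the left-hand contribution collapses to $e^{n-1} = e$, while the right-hand contribution equals $\mu(u'_n) = e$ again by the auxiliary identity. Splicing these identities together delivers the claimed equality of $1$-multicontexts.

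The main obstacle is purely combinatorial bookkeeping: aligning the position shifts induced by $\sigma$ across the left and right branches, and making sure the products of $\mu$-images of truncated subtrees agree on both sides. All the algebraic content of the proof is concentrated in the single step where idempotency collapses $e^{n-1}$ to $e$; once the auxiliary identity $\mu(u) = \mu(u[\fr{\forest}{\forest}])$ is in hand, the rest of the argument is a mechanical case analysis.
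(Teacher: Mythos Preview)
Your proof is correct and follows essentially the same approach as the paper's: structural induction on $\forest$, reduction by symmetry to the case where $i$ lies in the first child, application of the induction hypothesis to $\forest_1$, and collapsing the product $\mu(u_2)\cdots\mu(u_n)$ via idempotency when $n \ge 3$. The only organizational difference is that you isolate the auxiliary identity $\mu(u) = \mu(u[\fr{\forest}{\forest}])$ as a separate lemma proved upfront, whereas the paper obtains it as a by-product of the claim itself (their Remark~\ref{rem:morphism-fr}) and invokes it through the induction hypothesis on $\forest_n$.
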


\begin{remark}
\label{rem:morphism-fr}
This result implies in particular that $\mu(u) = \mu(u') = \mu(u[\fr{\forest}{\forest}])$.
\end{remark}

\begin{claimproof} 
We show the result by induction on the factorization as follows:
\begin{itemize}
\item if $\forest = a \in A$, one has $u = u' = a$ and it is obvious;
\item if $\forest = \tree{\forest_1} \cdots \tree{\forest_n}$ and $u = u_1 \cdots u_n$
is the according factorization. Then we have
$u' = u'_1 u'_n$ where $u'_j \defined u_j[\fr{\forest_j}{\forest_j}]$ for $j \in \{1,n\}$.
We suppose that $n \ge 3$ (the other case is easier) hence
$\mu(u_1) = \cdots = \mu(u_n)$ is an idempotent.
Assume by symmetry that $i \in \fr{\forest}{\forest_1}$,
then $1 \le \sigma(i) \le |u'_1|$
and $\sigma$ restricted to $[1,|u_1|]$ is
the monotone bijection between $\fr{\forest}{\forest_1}$ and $[1,|u'_1|]$.
We have:
\begin{equation*}
\begin{aligned}
&\mu(u[1{:}i{-}1])\cro{u[i]}\mu(u[i{+}1{:}|u|])\\
&= \mu(u[1{:}i{-}1]\cro{u_1[i]}\mu(u_1[i{+}1{:}|u_1|]) \mu(u_2) \cdots \mu(u_n)\\
&= \mu(u_1[1{:}i{-}1]\cro{u_1[i]}\mu(u_1[i{+}1{:}|u_1|]) \mu(u_n)
\substack{\tnorm{~~by idempotence of $\mu(u_n)$}}\\
&= \mu(u'_1[1{:}\sigma(i){-}1]
\cro{u'_1[\sigma(i)]}
\mu(u'_1[\sigma(i){+}1{:}|u'_1|]) \mu(u'_n)
\substack{\tnorm{~~by induction hypothesis}}\\
&= \mu(u'[1{:}\sigma(i){-}1])
\cro{u'[\sigma(i)]}
\mu(u'[\sigma(i){+}1{:}|u'|])\\
\end{aligned}
\end{equation*}
\end{itemize}
which concludes the proof.
\end{claimproof}

\subsection{Linearizations are iterators}

Here we show that the linearization of an independent
set of nodes is an iterator. Furthermore, the linearizations
of its elements can (up to re-ordering them) be recovered from it.

\begin{lemma}\label{lem:indep-lin} Let $\mu : A^* \rightarrow M$
and $k, K \ge 0$. Let $w \in A^*$, $\forest \in \Facts{\mu}{K}{w}$ and
$\Nod = \set{\nod_1, \dots, \nod_k} \in \Indep^{k}(\forest)$, then:
\begin{enumerate}
\item \label{po:il1} $\lin{\forest}{\Nod}$ is a $(k,2^{K})$-iterator
of the form
$
m_0 \left( \prod_{i=1}^{k} e_i \cro{u_i} e_i m_i \right)
$;
\item \label{po:il2} $\mu(\forest) = m_0 \left( \prod_{i=1}^{k} e_i  m_i \right)$;
\item \label{po:il3} up to a permutation of $[1{:}k]$ we have for all $1 \le j \le k$:
\begin{equation*}
\lin{\forest}{\nod_j} = m_0 \left(\prod_{i=1}^{j-1} e_i m_{i}  \right)
e_j \cro{u_j} e_j \left(\prod_{j=i+1}^{k} m_{i-1} e_i \right) m_{k}.
\end{equation*}
\end{enumerate}
\end{lemma}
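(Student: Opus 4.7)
The plan is to prove items~1, 2 and~3 simultaneously by structural induction on the $\mu$-forest $\forest$. The base cases are easy: if $\forest = \vide$, $\forest = a \in A$, or more generally $k = 0$, then by definition $\lin{\forest}{\Nod} = \mu(\forest)$, which is a $(0,2^K)$-iterator, so items~2 and~3 hold vacuously.

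For the inductive step, write $\forest = \tree{\forest_1}\cdots\tree{\forest_n}$ and let $\Nod_i \defined \Nod \cap \Nodes{\forest_i}$. Since the root is not iterable, $\forest \notin \Nod$, and by definition $\lin{\forest}{\Nod} = \lin{\forest_1}{\Nod_1} \cdots \lin{\forest_n}{\Nod_n}$. Independence of $\Nod$ yields three consequences: (i) for each $i$ with $\Nod_i \neq \varnothing$, either $\Nod_i = \{\forest_i\}$ (in which case $2 \le i \le n{-}1$ by iterability) or $\forest_i \notin \Nod_i$ and $\Nod_i \in \Indep^{|\Nod_i|}(\forest_i)$; (ii) if $\forest_{j_0} \in \Nod$ then $\Nod_{j_0 \pm 1} = \varnothing$ by the immediate-sibling rule; (iii) as soon as some $\forest_{j_0} \in \Nod$ we have $n \ge 3$, so $e \defined \mu(\forest_1) = \cdots = \mu(\forest_n)$ is a common idempotent. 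I then apply the induction hypothesis inside each $\forest_i$: an empty $\Nod_i$ contributes the monoid element $\mu(\forest_i) \in M$; a non-trivial $\Nod_i \in \Indep^{|\Nod_i|}(\forest_i)$ contributes a sub-iterator already in the canonical form of item~1, with bounded $u_a^{(i)}$'s and idempotent $e_a^{(i)}$'s; a singleton $\Nod_i = \{\forest_i\}$ contributes just a bare bracket $\cro{w_i[\fr{\forest_i}{\forest_i}]}$ whose $\mu$-image is $\mu(\forest_i) = e$ by Remark~\ref{rem:morphism-fr}. Concatenating these factors, merging consecutive monoid elements into fresh $m_a$'s, yields item~1 for $\lin{\forest}{\Nod}$; applying $\mu$ to both sides and collapsing $e_a \cdot \mu(u_a) \cdot e_a = e_a$ by idempotency then gives item~2.

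For item~3, I use the same recursive decomposition applied to the singleton $\{\nod_j\}$: if $\nod_j \in \forest_{i_0}$, then by definition $\lin{\forest}{\nod_j} = \mu(\forest_1)\cdots\mu(\forest_{i_0{-}1}) \cdot \lin{\forest_{i_0}}{\nod_j} \cdot \mu(\forest_{i_0{+}1})\cdots\mu(\forest_n)$. Comparing factor-by-factor with the decomposition of $\lin{\forest}{\Nod}$, one sees that $\lin{\forest}{\nod_j}$ is exactly what results from $\lin{\forest}{\Nod}$ by replacing each $\cro{u_i}$ with $i \neq j$ by $e_i = \mu(u_i)$ and then applying $e_i^2 = e_i$ to collapse each resulting triple $e_i e_i e_i$ into a single $e_i$, while absorbing the adjacent monoid elements into the $m_i$'s of item~1. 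This produces precisely the canonical form of item~3, where the permutation of $[1{:}k]$ is the left-to-right order of the $\nod_j$'s in $\forest$.

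The main obstacle is the bookkeeping at boundaries between adjacent sub-linearizations, especially in the borderline case $\nod_i = \forest_{j_0}$ where the corresponding sub-expression is the bare bracket $\cro{u_i}$: the idempotents $e_i$ required on either side of $\cro{u_i}$ in the global iterator must then be \emph{borrowed} from the neighbouring $\mu(\forest_{j_0 \pm 1})$, and one needs the coincidence $\mu(\forest_{j_0 \pm 1}) = e_i$. This is where the algebraic definition of factorization forest does the real work: iterability of $\forest_{j_0}$ forces $n \ge 3$, all children $\mu(\forest_i)$ then coincide with a common idempotent $e$, and Remark~\ref{rem:morphism-fr} gives $\mu(u_i) = \mu(\forest_{j_0}) = e$, so $e_i = e$ as required.
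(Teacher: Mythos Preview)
Your proof is correct and follows essentially the same approach as the paper: a structural induction on $\forest$, splitting $\Nod$ along the children $\forest_1,\dots,\forest_n$, applying the induction hypothesis to each $\Nod_i \in \Indep^{|\Nod_i|}(\forest_i)$, and handling the borderline case $\Nod_i = \{\forest_i\}$ by observing that $\Nod_{i\pm 1} = \varnothing$ so that the neighbouring $\mu(\forest_{i\pm 1}) = e$ supply the missing idempotents around the bare bracket. The paper's proof is terser (it only spells out item~1 and leaves items~2 and~3 implicit), whereas you make all three items explicit and correctly identify the ``borrowing'' of idempotents as the crux; but the argument is the same.
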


\begin{proof} The proof is done by induction.
For $k = 0$, then $\lin{\forest}{\varnothing} = \mu(\forest)$
and the result is obvious.
Otherwise we have $\forest = \tree{\forest_1} \cdots \tree{\forest_n}$.
For all $1 \le j \le n$ let $\Nod_j \defined \Nod \cap \Nodes{\forest_j}$,
so that we have
$
\lin{\forest}{\Nod} = \lin{\forest_1}{\Nod_1} \cdots  \lin{\forest_n}{\Nod_n}.
$

For the $j$ such that $\Nod_j \neq \{\forest_j\}$, then $\lin{\forest_j}{\Nod_j}$
is $(|\Nod_j|, 2^K)$-iterator by induction hypothesis
(since in that case we have $\Nod_j \in \Indep^{|\Nod_j|}(\forest_j)$).
Now if  $\Nod_j = \{\forest_j\}$, we must have $1 < j < n$ (since this node
is iterable) and $\Nod_{j{-}1} = \Nod_{j+1} = \varnothing$
(by definition of independent multisets). Hence we have:
$
\lin{\forest_{j-1}}{\Nod_{j-1}} \lin{\forest_{j}}{\Nod_{j}} \lin{\forest_{j+1}}{\Nod_{j+1}}
= e \cro{w[\fr{\forest_j}{\forest}]}e
$
$= \lin{\forest}{\Nod_j}$
where $e = \mu(\forest)$ is idempotent since $n \ge 3$ and $e  = \mu(w[\fr{\forest}{\forest_j}])$
by Remark \ref{rem:morphism-fr}. Furthermore, by definition
of the frontier we have $|\fr{\forest}{\forest_j}| \le 2^{K}$.
\end{proof}

\section{Proof of Proposition-Definition \ref{depro:prod-archi}}

\label{app:depro:prod-archi}

Let $\mach = (A,M, \mu, \oras, \lambda)$ be a $2^{\hei}$-permutable $k$-marble bimachine.
We show that the production on a set of independent
nodes only depends on its architecture.

\begin{definition}[Rank] Let $w \in A^*$, $\forest \in \Facts{\mu}{\hei}{w}$,
$\Nod \in  \Indep^{x}(\forest)$. We say that the architecture
$\arch{\forest}{\Nod}$ has rank $x$.
\end{definition}

\begin{remark} An architecture has only one rank (it is
sum of the sizes of its multisets).
\end{remark}

We show by induction on the structure of $\ar$ of rank $0 \le x \le k$ the following result:
\begin{itemize}
\item  if $w,w' \in A^*$ with $\forest \in \Facts{\mu}{\hei}{w}$, $\forest' \in \Facts{\mu}{\hei}{w'}$;
\item  if $\Nod \in \Indep^{x}(\forest)$
and $\Nod' \in \Indep^{x}(\forest')$
such that
$
\ar = \arch{\forest}{\Nod}
=  \arch{\forest'}{\Nod'};
$
\item and if $(r, \ell) \in \Sigma_{k{-}x}$ and 
$\ce$ (resp. $\de$) is a $(\ell, 2^{\hei})$-iterator
(resp. $(r,2^{\hei})$-iterator);
\end{itemize}
then we have
$
\rod{\mach}{}{\ce~\lin{\forest}{\Nod}~\de}
= \rod{\mach}{}{\ce~\lin{\forest'}{\Nod'}~\de}.
$

\subsection{Cases for $x=0$}

Three cases are possible. If $\ar=\tree{m}$ (resp. $\ar=a$, resp. $\ar = \vide$),
then $\forest=\tree{\forest_1} \cdots \tree{\forest_n}$
and $\forest'=\tree{\forest'_1} \cdots \tree{\forest'_{n'}}$
with $n,n' \ge 1$ and $\mu(\forest) = \mu(\forest') = m$
(resp. $\forest = \forest'= a$, resp. $\forest = \forest'= \vide$).
Hence we have $\lin{\forest}{\varnothing} = \lin{\forest'}{\varnothing} = m$
(resp. $= \mu(a)$, resp. $ = \mu(\vide)$).

\subsection{Case $x \ge 1$ and $\ar = \tree{\ar_1} \tree{\ar_2} \cdots \tree{\ar_p}$ with $p \ge 1$}

Let us first assume that $\ar_1$ is an architecture of rank $x_1 \ge 1$
(otherwise, we must have that $\ar_p$ has rank $x_p \ge 1$,
and the reasoning is similar).
Let $y \defined x{-}x_1 \ge 0$ be the rank
of $\br \defined \tree{\ar_2} \cdots \tree{\ar_p}$.
The only way to have $\arch{\Nod}{\forest} = \ar$ for
$\Nod \in \Indeps{\forest}$ is that
$\forest = \tree{\forest_1} \tree{\forest_2} \cdots \tree{\forest_n} $ 
with $n \ge 1$. Let $\gorest \defined \tree{\forest_2} \cdots \tree{\forest_n} $
and $\Nod_1 \defined \Nod \cap \Nodes{\forest_1}$. Then we must have
$\arch{\forest_1}{\Nod_1} = \ar_1$ and $\arch{\gorest}{\Nod\smallsetminus\Nod_1} = \br$
(indeed we have $\Nod\smallsetminus\Nod_1 \in \Indep^y(\gorest))$.
Both are iterators by Lemma~\ref{lem:indep-lin}.
It follows from the definition of linearizations that
 $\lin{\forest}{\Nod} = \lin{\forest_1}{\Nod_1} \lin{\gorest}{\Nod\smallsetminus \Nod_1}$.
 Similar results hold
 for $\forest' = \tree{\forest'_1} \gorest'$ and we have:
\begin{equation*}
\begin{aligned}
\rod{\mach}{}{\ce~\lin{\forest}{\nod}~\de} &=
\rod{\mach}{}{\ce~\lin{\forest_1}{\Nod_1} \lin{\gorest}{\Nod\smallsetminus \Nod_1}~\de}\\
&=\rod{\mach}{}{\ce~\lin{\forest'_1}{\Nod'_1} \lin{\gorest'}{\Nod'\smallsetminus \Nod'_1}~\de}
~~\substack{\tnorm{by induction hypothesis} \\ \tnorm{on $\ar_1$ and then on $\br$}}\\
&=\rod{\mach}{}{\ce~\lin{\forest'}{\Nod'}~\de}.\\
\end{aligned}
\end{equation*}

\subsection{Case $x \ge 1$ and $\ar =\tree{\Le}$}

In this case $\ar= \tree{\Le}$ where $\Le$ is a non-empty
multiset of elements of the form $(m\cro{u}m', d)$.
Note that $x = |\Le|$ and let $e$ be the idempotent such
that $e = m\mu(u)m'$
for all $(m\cro{u}m', d) \in \Le$ (by construction and item~\ref{po:il2} of
Lemma~\ref{lem:indep-lin}
it is the same for all the elements occurring in $\Le$).

It follows from the definitions that we must have 
$
\forest = \tree{\forest_1} \cdots \tree{\forest_{n}}
$
and
$
\forest' = \tree{\forest'_1} \cdots \tree{\forest'_{n'}}.
$
with $n,n' \ge 3$ and $\mu(\forest) = \mu(\forest') = e$.
For all $1 \le i \le n$ (resp. $1 \le i \le n'$), let $\Nod_i \defined \Nod \cap \forest_i$
(resp. $\Nod'_i \defined \Nod' \cap \forest'_i$).
By construction of architectures we must
have $\Nod_1 = \Nod_{n} = \Nod'_{1} = \Nod'_{n'} = \varnothing$,
thus $\Nod \defined \biguplus_{i=2}^{n-1} \Nod_i$
and $\Nod' \defined \biguplus_{i=2}^{n'-1} \Nod'_i$.

\begin{claim} \label{claim:pe}
$
\lin{\forest}{\Nod}$
is a $(x, 2^{\hei})$-iterator of the form
$ e m_0 \left( \prod_{i=1}^{x} e_i \cro{u_i} e_i m_i \right)e$\\
such that
$ m_0 \left( \prod_{i=1}^{x} e_i  m_i \right) = e.$
\end{claim}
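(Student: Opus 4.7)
The plan is to expand $\lin{\forest}{\Nod}$ via its inductive definition and then piece together what each factor contributes using Lemma~\ref{lem:indep-lin}. By definition,
\begin{equation*}
\lin{\forest}{\Nod} = \lin{\forest_1}{\Nod_1}\,\lin{\forest_2}{\Nod_2}\,\cdots\,\lin{\forest_n}{\Nod_n}.
\end{equation*}
Since $n \ge 3$ and $\mu(\forest) = e$ is idempotent, Definition~\ref{def:facto} forces $\mu(\forest_1) = \cdots = \mu(\forest_n) = e$. Combined with $\Nod_1 = \Nod_n = \varnothing$, this gives $\lin{\forest_1}{\Nod_1} = \lin{\forest_n}{\Nod_n} = e$, which provides the two outer $e$'s in the claimed form.

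Next, for each middle index $2 \le i \le n{-}1$ with $\Nod_i \neq \varnothing$, I would apply Lemma~\ref{lem:indep-lin} to $\forest_i \in \Facts{\mu}{\hei}{w_i}$: item~\ref{po:il1} says $\lin{\forest_i}{\Nod_i}$ is an $(x_i, 2^{\hei})$-iterator (where $x_i = |\Nod_i|$), and item~\ref{po:il2} says that collapsing its brackets to their $\mu$-images yields $\mu(\forest_i) = e$. For middle indices with $\Nod_i = \varnothing$, we simply have $\lin{\forest_i}{\Nod_i} = \mu(\forest_i) = e$, which fits as a degenerate $0$-iterator of monoid value $e$.

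Now I would concatenate these factors, merging each pair of adjacent monoid elements at the seams into a single element of $M$: the result is still of iterator shape, with a total of $\sum_{i=2}^{n-1} x_i = x$ bracketed factors, each of length at most $2^{\hei}$, so it is an $(x, 2^{\hei})$-iterator. Writing it in the canonical form $m_0 \prod_{i=1}^x e_i \cro{u_i} e_i m_i$, the initial $e$ from $\lin{\forest_1}{\Nod_1}$ sits to the left of $m_0$ and the final $e$ from $\lin{\forest_n}{\Nod_n}$ sits to the right of $m_x$, exactly giving the form $e m_0 \bigl(\prod_{i=1}^{x} e_i \cro{u_i} e_i m_i\bigr) e$.

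Finally, for the monoid identity $m_0\bigl(\prod_{i=1}^x e_i m_i\bigr) = e$, observe that this product equals the $\mu$-value obtained by replacing each bracket $\cro{u_j}$ by $e_j = \mu(u_j)$ in the concatenated iterator (with the outer $e$'s stripped). Each middle factor contributes exactly $e$ under this evaluation (item~\ref{po:il2} of Lemma~\ref{lem:indep-lin}, or trivially when $\Nod_i = \varnothing$), so the product collapses to $e^{n-2}$, which equals $e$ by idempotence since $n \ge 3$. No step is truly hard here; the only mild subtlety is the bookkeeping of the seam-merges when concatenating iterators, but this is purely syntactic.
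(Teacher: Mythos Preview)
Your decomposition is a reasonable route, but it has a gap. When you apply Lemma~\ref{lem:indep-lin} to a middle subtree $(\forest_i,\Nod_i)$, you need $\Nod_i \in \Indep^{x_i}(\forest_i)$. This fails exactly when $\forest_i \in \Nod$ (which is possible for $2 \le i \le n{-}1$, since $\forest_i$ is iterable in $\forest$): then $\Nod_i = \{\forest_i\}$, but $\forest_i \notin \itera{\forest_i}$, so the lemma's hypothesis is not met and $\lin{\forest_i}{\Nod_i} = \cro{w_i[\fr{\forest_i}{\forest_i}]}$ is a bare bracket with no surrounding idempotents. The fix is short---independence of $\Nod$ forces $\Nod_{i-1} = \Nod_{i+1} = \varnothing$ in that case, so the neighbouring factors are both $e$ and supply the idempotent padding---but you must handle it explicitly (this is precisely the case split inside the proof of Lemma~\ref{lem:indep-lin} itself).

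The paper avoids the case split entirely by applying Lemma~\ref{lem:indep-lin} \emph{once} to the top-level pair $(\forest,\Nod)$, which immediately gives the iterator form $m_0 \prod_{i=1}^{x} e_i \cro{u_i} e_i m_i$ together with the identity $m_0 \prod_i e_i m_i = \mu(\forest) = e$ from items~\ref{po:il1} and~\ref{po:il2}. It then observes that since $\lin{\forest_1}{\varnothing} = e$ is the first factor of the linearization, $m_0$ must already be of the form $e m'_0$, whence $e m_0 = m_0$ by idempotence; symmetrically $m_x e = m_x$. Thus the \emph{same} $m_i$ realise the claimed form $e m_0(\prod_i e_i\cro{u_i}e_i m_i)e$. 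Your approach essentially re-derives the inductive step of Lemma~\ref{lem:indep-lin} by hand; the paper's route uses that lemma as a black box and is shorter.
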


\begin{claimproof} Since $\Nod \in \Indeps{\forest}$,
by applying Lemma~\ref{lem:indep-lin} we get a $(x, 2^{\hei}$)-iterator, therefore
$
\displaystyle \lin{\forest}{\Nod} = m_0 \left( \prod_{i=1}^{x} e_i \cro{u_i} e_i m_i \right)
$ with
$\displaystyle m_0 \left( \prod_{i=1}^{x} e_i  m_i \right) = \mu(\forest) = e.
$\\
But since $\lin{\forest_1}{\Nod_1} = \mu(\forest_1) = \lin{\forest_n}{\Nod_{n}} = \mu(\forest_n) = e$, then
by definition of linearizations we must have $m_0 e_1 = e m'_0$ and $e_x m_x = m'_x e$.
In particular $m_0 e_1 = e m_0 e_1$ and $e_x m_x = e_x m_x e$ since $e$ is idempotent.
Thus we get
$
m_0 \left( \prod_{i=1}^{x} e_i \cro{u_i} e_i m_i \right)
= em_0 \left( \prod_{i=1}^{x} e_i \cro{u_i} e_i m_i \right)e
$
which concludes the proof.
\end{claimproof}

We get with Claim \ref{claim:pe'} a similar result
for $\forest'$ (note that it uses the same $e$).

\begin{claim} \label{claim:pe'}
$
\lin{\forest'}{\Nod'}$
is a $(k, 2^{\hei})$-iterator of the form
$ e m'_0 \left( \prod_{i=1}^{x} e'_i \cro{u'_i} e'_i m'_i \right)e$\\
such that
$ m'_0 \left( \prod_{i=1}^{x} e_i  m'_i \right) = e.$
\end{claim}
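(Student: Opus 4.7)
The plan is to mirror, nearly verbatim, the argument that was just given for Claim~\ref{claim:pe}, taking advantage of the fact that the two claims are completely symmetric: they start from two inputs $(\forest,\Nod)$ and $(\forest',\Nod')$ that share the same architecture $\ar=\tree{\Le}$, and the constant $e$ appearing in both conclusions is determined by $\Le$ alone (it is the common value $m\mu(u)m'$ taken over all $(m\cro{u}m',d)\in\Le$).

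First I would invoke Lemma~\ref{lem:indep-lin} on $\Nod'\in\Indep^{x}(\forest')$ with $\forest'\in\Facts{\mu}{\hei}{w'}$. This immediately yields that $\lin{\forest'}{\Nod'}$ is an $(x,2^{\hei})$-iterator of the form
\begin{equation*}
\lin{\forest'}{\Nod'} \;=\; m'_0\left(\prod_{i=1}^{x} e'_i \cro{u'_i} e'_i m'_i\right),
\end{equation*}
with $m'_0\left(\prod_{i=1}^{x} e'_i m'_i\right)=\mu(\forest')$. The equality $\mu(\forest')=e$ is forced because the architecture $\tree{\Le}$ already determined this monoid value on the $\forest$ side and it was built from $\Le$, which is the same for both.

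Next I would exploit the boundary conditions extracted from $\arch{\forest'}{\Nod'}=\tree{\Le}$: the fact that $\Le$ is non-empty and lives under a single bracket forces $\forest'=\tree{\forest'_1}\cdots\tree{\forest'_{n'}}$ with $n'\ge 3$ and $\Nod'_1=\Nod'_{n'}=\varnothing$. Because $n'\ge 3$, Definition~\ref{def:facto} makes the top node idempotent, so $\mu(\forest'_1)=\cdots=\mu(\forest'_{n'})=e$. Unfolding the linearization as $\lin{\forest'}{\Nod'}=\lin{\forest'_1}{\Nod'_1}\cdots\lin{\forest'_{n'}}{\Nod'_{n'}}$, both outer terms degenerate to $\mu(\forest'_1)=\mu(\forest'_{n'})=e$; hence the leftmost monoid coefficient of the iterator satisfies $m'_0 e'_1 = e\cdot m'_0 e'_1$ and, symmetrically, $e'_{x} m'_{x} = e'_{x} m'_{x}\cdot e$.

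Finally, using $ee=e$ I would absorb these flanking $e$'s back into the iterator to rewrite
\begin{equation*}
m'_0\left(\prod_{i=1}^{x} e'_i \cro{u'_i} e'_i m'_i\right)
\;=\;
e\, m'_0\left(\prod_{i=1}^{x} e'_i \cro{u'_i} e'_i m'_i\right)\, e,
\end{equation*}
which is precisely the form announced in Claim~\ref{claim:pe'}, together with the identity $m'_0\left(\prod_{i=1}^{x} e'_i m'_i\right)=e$ already obtained. I do not anticipate any real obstacle here: the argument is a direct transliteration of the proof of Claim~\ref{claim:pe}, and the key point — that the shared constant $e$ on both sides is really the \emph{same} monoid element — is forced upstream by the fact that architectures carry idempotent data inside their leaf multisets.
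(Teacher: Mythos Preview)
Your proposal is correct and matches the paper's approach exactly: the paper does not give a separate proof for Claim~\ref{claim:pe'} but simply remarks that it is the analogue of Claim~\ref{claim:pe} for $\forest'$ (with the same idempotent $e$), and your argument is precisely the transliteration of the claimproof of Claim~\ref{claim:pe} to primed variables, including the justification that the shared $e$ comes from the common architecture $\tree{\Le}$.
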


We show that the productions
on  $\lin{\forest}{\Nod}$ and $\lin{\forest'}{\Nod'}$
are equal. For $1 \le j \le x$ define:
\begin{equation*}
\left\{
    \begin{array}{l}
       \displaystyle \lefe_j \defined e \left(\prod_{i=1}^{j} m_{i{-1}}  e_i\right)
       \substack{\tnorm{~~~and~~~}}
       \rige_j \defined \left(\prod_{i=j}^{x} e_i m_i\right) e
       \\
       \displaystyle \lefe'_j \defined e \left(\prod_{i=1}^{j} m'_{i{-1}}  e'_i\right)
       \substack{\tnorm{~~~and~~~}}
       \rige'_j \defined \left(\prod_{i=j}^{x} e'_i m'_i\right) e
    \end{array}
\right.
\end{equation*}

\begin{claim} \label{claim:all}
There exists a permutation $\sigma$
of $[1{:}x]$ such that for all $1 \le i \le x$,
$u'_i = u_{\sigma(i)}$, $\lefe'_i = \lefe_{\sigma(i)}$
and $\rige'_i = \rige_{\sigma(i)}$.
\end{claim}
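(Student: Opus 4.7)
The plan is to extract $\sigma$ from the multiset equality forced by $\arch{\forest}{\Nod} = \arch{\forest'}{\Nod'} = \tree{\Le}$. By the definition of the architecture in the present (idempotent) case, $\Le$ is exactly $\multi{(\lin{\forest}{\nod}, \depth{\forest}{\nod}) : \nod \in \Nod}$, and identically for $\Nod'$. Enumerating $\Nod = \set{\nod_1, \ldots, \nod_x}$ in the order furnished by item~\ref{po:il3} of Lemma~\ref{lem:indep-lin}, so that the individual linearizations take the explicit form in terms of the iterator data $m_0, e_1, m_1, \ldots, m_x$ fixed by Claim~\ref{claim:pe}, and doing the analogous thing for $\Nod' = \set{\nod'_1, \ldots, \nod'_x}$ with the data of Claim~\ref{claim:pe'}, the equality of multisets yields a bijection $\sigma : [1{:}x] \to [1{:}x]$ such that $\lin{\forest}{\nod_{\sigma(i)}} = \lin{\forest'}{\nod'_i}$ for every~$i$.

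Next I would show that this equality forces the three desired identities. By item~\ref{po:il3} of Lemma~\ref{lem:indep-lin}, $\lin{\forest}{\nod_j}$ is a $1$-multicontext $L_j \cro{u_j} R_j$ with
\begin{equation*}
L_j = m_0 e_1 m_1 \cdots m_{j-1} e_j
\qquad \text{and} \qquad
R_j = e_j m_j e_{j+1} \cdots e_x m_x,
\end{equation*}
and symmetrically $\lin{\forest'}{\nod'_i} = L'_i \cro{u'_i} R'_i$. Since a $1$-multicontext uniquely determines its bracketed word and the monoid elements framing it, the equality $\lin{\forest}{\nod_{\sigma(i)}} = \lin{\forest'}{\nod'_i}$ gives at once $u_{\sigma(i)} = u'_i$, $L_{\sigma(i)} = L'_i$ and $R_{\sigma(i)} = R'_i$.

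To conclude, I would compare with the definitions of $\lefe_j, \rige_j, \lefe'_i, \rige'_i$ opened at the start of the current case: they unfold to $\lefe_j = e \cdot L_j$ and $\rige_j = R_j \cdot e$, and likewise on the primed side. Crucially, the idempotent $e$ is the same in both expressions, because it is determined by $\Le$ alone (recall $e = m \mu(u) m'$ for every $(m \cro{u} m', d) \in \Le$), and Claims~\ref{claim:pe} and~\ref{claim:pe'} exhibit precisely this $e$. Multiplying $L_{\sigma(i)} = L'_i$ by $e$ on the left and $R_{\sigma(i)} = R'_i$ by $e$ on the right then yields $\lefe_{\sigma(i)} = \lefe'_i$ and $\rige_{\sigma(i)} = \rige'_i$, which together with $u_{\sigma(i)} = u'_i$ finishes the proof.

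The main subtlety, rather mild, is that item~\ref{po:il3} of Lemma~\ref{lem:indep-lin} only fixes the enumeration of $\Nod$ (and of $\Nod'$) up to a permutation, so the two enumerations chosen when applying Claims~\ref{claim:pe} and~\ref{claim:pe'} are a priori independent; it is exactly the multiset equality coming from $\arch{\forest}{\Nod} = \arch{\forest'}{\Nod'}$ that synchronises them through the single permutation~$\sigma$.
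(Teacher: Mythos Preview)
Your proof is correct and follows essentially the same approach as the paper's: both use the multiset equality $\multi{(\lin{\forest}{\nod},\depth{\forest}{\nod}):\nod\in\Nod}=\multi{(\lin{\forest'}{\nod'},\depth{\forest'}{\nod'}):\nod'\in\Nod'}$ coming from $\ar=\tree{\Le}$, and then invoke item~\ref{po:il3} of Lemma~\ref{lem:indep-lin} on each side to read off the bijection $\sigma$. You simply unpack more of the details---identifying $\lin{\forest}{\nod_j}=L_j\cro{u_j}R_j$ explicitly and observing $\lefe_j=eL_j$, $\rige_j=R_j e$ with the common idempotent $e$---whereas the paper compresses this into one sentence.
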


\begin{claimproof} By definition of $\Le$ we have
$
\left\{\lin{\forest}{\set{\nod}} : \nod \in \Nod\right\}
= \left\{\lin{\forest}{\set{\nod'}} : \nod' \in \Nod'\right\}
$.
Hence by applying item~\ref{po:il3} of Lemma~\ref{lem:indep-lin} twice, we get
the suitable bijection $\sigma$.
\end{claimproof}
Let $(\ell,r) \in \Sigma_{k-x}$, and $\ce$ (resp. $\de$)
be a $(\ell,2^{\hei})$-iterator (resp. $(r,2^{\hei})$-iterator), then:
\begin{equation*}
\begin{aligned}
&\rod{\mach}{}{\ce~\lin{\forest}{\Nod}~\de}\\
&= \rod{\mach}{}{\ce~
\left(e m_0 \left( \prod_{i=1}^{x} e_i \cro{u_i} e_i m_i \right) e\right)
~\de}
~~\substack{\tnorm{by Claim \ref{claim:pe}}}\\
\end{aligned}
\end{equation*}
\begin{equation*}
\begin{aligned}
& = \rod{\mach}{}{\ce~
\left(\prod_{i=1}^{x} \lefe_{\sigma(i)} \cro{u_{\sigma(i)}} \rige_{\sigma(i)}
\right) ~\de}
~~\substack{\tnorm{since $|u_i| \le 2^{\hei}$} \\ \tnorm{and $\mach$ is $2^{\hei}$-permutable}}\\
& = \rod{\mach}{}{\ce~
\left(\prod_{i=1}^{x} \lefe'_{i} \cro{u'_i} \rige'_{i}
\right) ~\de}
~~\substack{\tnorm{by Claim \ref{claim:all}}}\\
\end{aligned}
\end{equation*}
\begin{equation*}
\begin{aligned}
& = \rod{\mach}{}{\ce~
\left(e m'_0 \left( \prod_{i=1}^{x} e'_{i} \cro{u'_{i}} e'_{i} m'_i \right) e\right)
~\de}
~~\substack{\tnorm{since $|u'_i| \le 2^{\hei}$} \\ \tnorm{and $\mach$ is $2^{\hei}$-permutable}}\\
& = \rod{\mach}{}{\ce~\lin{\forest'}{\Nod'}~\de}
~~\substack{\tnorm{by Claim \ref{claim:pe'}.}}\\
\end{aligned}
\end{equation*}

\section{Proof of Lemma \ref{lem:counts}}

\label{app:last}

For $\ar \in \Archs{\mu}{3|M|}$ of rank $k \ge 0$,
we had defined
$
\cou_{\ar}(\forest) \defined
|\{\Nod \in \Indeps{\forest}:
\arch{\forest}{\Nod} = \ar\}|.
$

We show by induction on the inductive structure of $\ar $
that one can build two functions
${\cou'_{\ar}}, {\cou''_{\ar}}: (\apar{A})^* \rightarrow \Nat$
such that $\cou_{\ar} =  {\cou'_{\ar}} +  {\cou''_{\ar}}$ and:
\begin{itemize}
\item ${\cou'_{\ar}}$ is polyblind and has growth
at most $k$;
\item  ${\cou''_{\ar}}$ is polyregular and has growth at most $k{-}1$.
\end{itemize}
The most interesting case is that of $\ar = \tree{\Le}$ treated in Subsection \ref{ssec:Types}.

Once more, it is enough to describe our functions on the set
of inputs $\forest \in (\apar{A})^*$ such that $\forest \in \Facts{\mu}{\hei}{w}$
for some $w \in A^*$. Indeed, this domain is regular.

\subsection{Cases for $k=0$}

Three cases occur, and we treat them in a similar
way. If $\ar=\tree{m}$ (resp. $\ar=a$, resp. $\ar = \vide$),
then $\cou_{\ar}(\forest) = 1$ if
$\forest=\tree{\forest_1} \cdots \tree{\forest_n}$
with $n \ge 1$ and $\mu(\forest) = m$
(resp. $\forest = a$, resp. $\forest = \vide$)
and $0$ otherwise. In this cases the function
$\cou_{\ar}$ is the indicator function
of a regular language, thus it is polyblind with growth at most $0$.
We define $\cou'_{\ar} \defined \cou_{\ar}$
and $\cou''_{\ar} \defined 0$.

\subsection{Case $k \ge 1$ and $\ar = \tree{\ar_1} \tree{\ar_2} \cdots \tree{\ar_p}$ with $p \ge 1$}

We assume that $\ar_1$ is an architecture of rank $k_1 \ge 1$
(otherwise, we must have that $\ar_p$ has rank $k_p \ge 1$,
and the reasoning is similar).
Let $b = k{-}k_1 \ge 0$ be the rank
of $\br \defined \tree{\ar_2} \cdots \tree{\ar_p}$.

\begin{claim} \label{claim:count-ai}
We have for $w \in A^*$ and $\forest \in \Facts{3|M|}{\mu}{w}$:
\begin{equation*}
\cou_{\ar}(\forest) = 
\left\{
    \begin{array}{l}
        0 \text{~~if~~} \forest \text{~~is not of the form~~} \tree{\forest_1} \tree{\forest_2} \cdots \tree{\forest_n} \text{~~with~~} n \ge 1\\
	\cou_{\ar_1}(\forest_1)\times \cou_{\br}(\tree{\forest_2} \cdots \tree{\forest_n}) \text{~~otherwise.}
    \end{array}
\right.
\end{equation*}
\end{claim}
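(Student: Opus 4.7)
The plan is to establish a bijection between $\{\Nod \in \Indeps{\forest} : \arch{\forest}{\Nod} = \ar\}$ and the Cartesian product $\{\Nod_1 \in \Indep^{k_1}(\forest_1) : \arch{\forest_1}{\Nod_1} = \ar_1\} \times \{\Nod' \in \Indep^{b}(\gorest) : \arch{\gorest}{\Nod'} = \br\}$, where $\gorest \defined \tree{\forest_2}\cdots\tree{\forest_n}$; taking cardinalities on both sides then yields the multiplicative formula. The degenerate cases $\forest = \vide$ and $\forest = a \in A$ are handled immediately: $\itera{\forest}$ is empty, so the only $\Nod \in \Indeps{\forest}$ has size $0$ and its architecture is $\vide$ or $\mu(a)$ rather than $\ar$, giving $\cou_\ar(\forest) = 0$.

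For the non-trivial case $\forest = \tree{\forest_1}\cdots\tree{\forest_n}$, I first show that any $\Nod$ with $\arch{\forest}{\Nod} = \ar$ must satisfy $\Nod_1 \defined \Nod \cap \Nodes{\forest_1} \neq \varnothing$. The argument inspects the branches of the inductive definition of $\arch{\forest}{\Nod}$: the ``$k=0$'' branch outputs $\tree{\mu(\forest)}$ and the ``central multiset'' branch outputs $\tree{\multi{\cdots}}$, both consisting of a single pair of brackets with flat inner content; the ``rightward-peel'' branch recurses on $\tree{\forest_1}\cdots\tree{\forest_{n-1}}$ while maintaining that the leftmost intersection with $\Nod$ stays empty, and so eventually terminates at one of those two flat forms as its leftmost top-level bracket. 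On the other hand, $\tree{\ar_1}$ with $\ar_1$ of rank $k_1 \ge 1$ contains at least two nested pairs of brackets, since any architecture of positive rank must itself contain at least one bracket. Hence only the leftward-peel branch can produce $\ar$, and the unique bracket decomposition of its output $\tree{\arch{\forest_1}{\Nod_1}}\arch{\gorest}{\Nod \smallsetminus \Nod_1}$ forces $\arch{\forest_1}{\Nod_1} = \ar_1$ and $\arch{\gorest}{\Nod \smallsetminus \Nod_1} = \br$.

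For the reverse direction, let $\Nod_1 \in \Indep^{k_1}(\forest_1)$ and $\Nod' \in \Indep^{b}(\gorest)$ realize $\ar_1$ and $\br$ respectively, and set $\Nod \defined \Nod_1 \uplus \Nod'$. The equality $\arch{\forest}{\Nod} = \ar$ follows directly from running the leftward-peel branch, provided $\Nod \in \Indep^{k}(\forest)$. Independence within $\forest_1$ and within $\gorest$ transfers from the hypotheses; for cross-pairs $(\nod_1, \nod')$ the key observations are that $\nod_1 \in \itera{\forest_1}$ excludes $\nod_1 = \forest_1$, and $\nod' \in \itera{\gorest}$ excludes $\nod' = \forest_2$. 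These two exclusions are precisely what rule out, respectively, $\nod_1$ being the immediate left sibling of $\forest_2$ (an ancestor of $\nod'$) and $\nod'$ being the immediate right sibling of $\forest_1$ (an ancestor of $\nod_1$); all other sibling conditions are vacuous since $\forest$ is the root.

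I expect the main technical obstacle to be the case analysis of paragraph two — in particular, tracking how the rightward-peel branch unwinds to reveal that its leftmost top-level bracket is always flat — together with the cross-pair sibling bookkeeping in paragraph three. Both are rigid consequences of the definitions, but they require patience to enumerate.
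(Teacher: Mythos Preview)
Your proposal is correct and follows essentially the same approach as the paper: both establish the bijection $\Nod \mapsto (\Nod \cap \Nodes{\forest_1},\, \Nod \cap \Nodes{\gorest})$ with inverse $(\Nod_1,\Nod_2) \mapsto \Nod_1 \cup \Nod_2$, and both argue that $\Nod_1 \neq \varnothing$ is forced and that $\forest_2 \notin \Nod$ (you via $\itera{\gorest}$, the paper via the sibling condition on $\Nod$). Your case analysis on the branches of $\arch{}{}$ and your cross-pair independence check are more explicit than the paper's rather terse ``clearly'', but the underlying argument is the same.
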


\begin{claimproof}
The only way to get $\arch{\Nod}{\forest} = \tree{\ar_1} \br$ for
$\Nod \in \Indeps{\forest}$ is $\forest = \tree{\forest_1} \tree{\forest_2} \cdots \tree{\forest_n} $ 
with $n \ge 1$ and $\Nod \cap \Nodes{\forest_1} \neq \varnothing$.
If $\forest$ is of this form, let $\gorest \defined \tree{\forest_2} \cdots \tree{\forest_n}$, we have:
\begin{equation*}
\begin{aligned}
&|\{\Nod \in \Indeps{\forest} : \arch{\forest}{\Nod} = \ar\}|\\
&= |\{(\Nod_1, \Nod_2) : \Nod_1 \in  \Indep^{k_1}(\forest_1),
\arch{\forest_1}{\Nod_1} = \ar_1 \text{~and~}
\Nod_2 \in  \Indep^{b}(\gorest), \arch{\gorest}{\Nod_2} = \br\}|.
\\
\end{aligned}
\end{equation*}
Indeed, the function: $\Nod \mapsto (\Nod\cap \Nodes{\forest_1}),\Nod\cap \Nodes{\gorest})$
is a bijection between these two sets. First note that we have
$\Nod\cap \Nodes{\forest_1} \in \Indep^{k_1}(\forest_1)$ (since $\forest_1 \not \in \Nod$).
Furthermore $\forest_2 \not \in \Nod$ since otherwise it would be
the sibling of an ancestor of a node of $\Nod$,
thus $\Nod\cap \Nodes{\gorest} \in \Indep^{b}(\gorest)$.
The inverse of the bijection
is clearly $(\Nod_1, \Nod_2) \mapsto \Nod_1 \cup \Nod_2$.
\end{claimproof}

By applying Claim \ref{claim:count-ai} and induction hypothesis, we get:
\begin{equation*}
\label{}
\begin{aligned}
\cou_{\ar}(\forest) &= \left({\cou'_{\ar_1}}(\forest_1) +  {\cou''_{\ar_1}}(\forest_1) \right)
\left({\cou'_{\br}}(\gorest) +  {\cou''_{\br}}(\gorest) \right)\\
& = \underbrace{{\cou'_{\ar_1}}(\forest_1) {\cou'_{\br}}(\gorest)}_{\defined \cou'_\ar(\forest)}\\
&+ \underbrace{{\cou'_{\ar_1}}(\forest_1)  {\cou''_{\br}}(\gorest)
+ {\cou''_{\ar_1}}(\forest_1)  {\cou'_{\br}}(\gorest) 
+ \cou''_{\ar_1}(\forest_1)  {\cou''_{\br}}(\gorest)}_{\defined \cou''_\ar(\forest)}.\\
\end{aligned}
\end{equation*}

We first note that the functions
$f_1 : \tree{\forest_1} \gorest \mapsto \cou'_{\ar_1}(\forest_1)$
and $f_2 : \tree{\forest_1} \gorest \mapsto \cou'_{\br}(\gorest)$
(defined only on inputs of the form $ \tree{\forest_1} \gorest \in \Facts{3|M|}{\mu}{w}$
for some $w \in A^*$) are polyblind. 
Indeed, since the height of the forest is bounded,
a bimachine can detect the $\righttree$ which matches
the first $\lefttree$, and simulate the computation of $\cou'_{\ar_1}$
(resp.  $\cou'_{\br}$) on $\forest_1$ (resp. $\gorest$).
Hence  $\cou'_{\ar} = f_1 \had f_2$ is polyblind.
Furthermore $f_1$ (resp. $f_2$) has growth at most $k_1$ (resp. $b$),
thus  $\cou'_{\ar}$ has growth at most $k_1 + b = k$.

Similarly, $\cou''_{\ar} \defined  \cou_{\ar} {-}  \cou'_{\ar}$
is a polyregular function (it also detects $\forest_1$ and $\gorest$). Furthermore each of its
terms has growth at most $k{-}1$ by induction hypothesis.

\subsection{Case $k \ge 1$ and $\ar =\tree{\Le}$} 

\label{ssec:Types}

In this case $\ar= \tree{\Le}$ where $\Le$ is a non-empty
multiset of elements of the form $(m\cro{u}m', d)$.
Note that $x = |\Le|$ and let $e$ be the idempotent such
that $e = m\mu(u)m'$
for all $(m\cro{u}m', d) \in \Le$ (by construction and  item~\ref{po:il2} of
Lemma~\ref{lem:indep-lin}
it is the same for all the elements occurring in $\Le$).

\begin{definition} Let $w \in A^*$, $\forest \in \Facts{\mu}{}{w}$ and
$\Nod \in \Nodes{\forest}$, we define
the multiset
$\Types{\forest}{\Nod} \defined \multi{(\lin{\forest}{\nod}, \depth{\forest}{\nod}) : \nod \in \Nod}
$.
It can have multiplicities even if $\Nod$ is a set.
\end{definition}

By definition of architectures and $\cou$, we have:
\begin{equation*}
\cou_{\tree{\Le}}(\forest) = 
\left\{
    \begin{array}{l}
        0 \text{~~if~~} \forest \text{~~is not of the form~~}
	\tree{\forest_1} \cdots \tree{\forest_n}\\
        \text{~~~with~~} n \ge 3 \text{~~and~~} \mu(\forest_1) = \mu(\forest_n) = e;\\
	 \left|\left\{\Nod \in \Indeps{\forest} : \Nod \subseteq  \Nodes{\tree{\forest_2} \cdots \tree{\forest_{n-1}}}\text{~~and~}
\Types{\forest}{\Nod} = \Le \right\}\right|\\
	\text{~~~otherwise.}\\
    \end{array}
\right.
\end{equation*}

From now, we assume that $\forest = \tree{\forest_1} \cdots \tree{\forest_n}$
where $\mu(\forest) = e$ (this is a regular property which can be checked).
Sublemma \ref{slem:decompose-direct} directly concludes
the proof if it is applied inductively.

\begin{sublemma} \label{slem:decompose-direct}
Let $\tau = (m \cro{u} n, d)$. Assume that $\Le =: \Le_1 \uplus \multi{\tau\con r}$
where $r > 0$,  $\tau \not \in \Le_1$ and
for all $(m' \cro{u'} n', d')  \in \Le_1$, one has 
$d \le d'$. Then one can build:
\begin{itemize}
\item a polyblind function $g' : (\apar{A})^* \rightarrow \Nat$ with growth at most $r$;
\item a polyregular function $g'':  (\apar{A})^* \rightarrow \Nat$ with growth at most $k{-}1$.
\end{itemize} 
such that $\cou_{\tree{\Le}}(\forest) = g'(\forest) \times \cou_{\tree{\Le_1}}(\forest) + g''(\forest)$.
\end{sublemma}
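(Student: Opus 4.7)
The strategy is to split each independent multiset counted by $\cou_{\tree{\Le}}(\forest)$ as $\Nod = \Nod_\tau \uplus \Nod_1$, where $\Nod_\tau$ carries the $r$ occurrences of $\tau$ and $\Nod_1$ the types in $\Le_1$. Uniqueness of this split follows from $\tau \notin \Le_1$. The minimum-depth hypothesis $d \le d'$ is crucial: all $\tau$-typed nodes live at the common depth $d$, so none can be an ancestor of another, nor an immediate sibling of another's ancestor. Hence every $r$-subset of $\tau$-nodes in the middle region $R := \Nodes{\tree{\forest_2} \cdots \tree{\forest_{n-1}}}$ is automatically independent, and the only possible failures of independence for $\Nod_\tau \uplus \Nod_1$ are cross-obstructions between a $\tau$-node and some $\nod' \in \Nod_1$.

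Let $T_\tau(\forest)$ denote the number of $\nod \in R$ with $(\lin{\forest}{\nod}, \depth{\forest}{\nod}) = \tau$. Since both data depend only on the bounded-height context around $\nod$, $T_\tau$ is a regular function. Using $d \le d'$ once more, for any $\nod' \in \Nod_1$ the $\tau$-nodes obstructing independence with $\nod'$ are the unique ancestor of $\nod'$ at depth $d$ together with at most two of its immediate siblings; descendants of $\nod'$ and ancestors of $\nod'$ at depth $< d$ are excluded. So the total number of obstructions is bounded by $C := 3(k-r)$. Writing $T_{\tau, \Nod_1}$ for the number of compatible $\tau$-nodes and $c(\Nod_1) := T_\tau - T_{\tau, \Nod_1}$, we get $0 \le c(\Nod_1) \le C$ and
\[ \cou_{\tree{\Le}}(\forest) \;=\; \sum_{\Nod_1} \binom{T_{\tau, \Nod_1}}{r}. \]

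Define $g'(\forest) := \binom{\max(T_\tau(\forest) - C,\, 0)}{r}$. Since $\max(T_\tau - C, 0)$ is regular and $\binom{X}{r}$ is a degree-$r$ polynomial in $X$ expressible by sums and Hadamard products, $g'$ is polyblind with growth $r$. Because $T_{\tau, \Nod_1} \ge T_\tau - C$, each summand above is $\ge g'(\forest)$, hence
\[ g''(\forest) \;:=\; \cou_{\tree{\Le}}(\forest) - g'(\forest) \times \cou_{\tree{\Le_1}}(\forest) \;=\; \sum_{c=0}^{C} P_c(T_\tau(\forest)) \times \cou_{\tree{\Le_1}}^{(c)}(\forest) \;\ge\; 0, \]
where $P_c(X) := \binom{X-c}{r} - \binom{\max(X-C, 0)}{r}$ is a piecewise polynomial of degree at most $r-1$ in $X$, and $\cou_{\tree{\Le_1}}^{(c)}(\forest)$ enumerates those $\Nod_1 \in \Indep^{k-r}(\forest)$ of architecture $\tree{\Le_1}$ having exactly $c$ $\tau$-obstructions.

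The main obstacle is to show that $g''$ is polyregular with growth at most $k-1$. The growth bound is immediate from $P_c(T_\tau) = \mc{O}(|\forest|^{r-1})$ and $\cou_{\tree{\Le_1}}^{(c)} \le \cou_{\tree{\Le_1}} = \mc{O}(|\forest|^{k-r})$, so each Hadamard product is $\mc{O}(|\forest|^{k-1})$. For polyregularness, each $\cou_{\tree{\Le_1}}^{(c)}$ is computed by a $(k-r)$-pebble transducer that enumerates candidates for $\Nod_1$ using pebbles, verifies the architecture condition by a regular lookahead, and counts exactly $c$ obstructions against the $\tau$-positions (a finite-state check since $c \le C$); the polynomial $P_c(T_\tau)$ is polyblind, using a case analysis for the bounded regime $T_\tau < C+r$ where the function is bounded anyway. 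Polyregular functions (with unary output) are closed under sum and under Hadamard product with polyblind functions while growths add, so $g''$ is polyregular with the claimed growth.
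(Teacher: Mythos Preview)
Your decomposition $\Nod = \Nod_\tau \uplus \Nod_1$ and the bound $c(\Nod_1) \le 3(k-r)$ on cross-obstructions match the paper. But the claim that every $r$-subset of $\tau$-nodes in the middle region is automatically independent is false, and with it the identity $\cou_{\tree{\Le}}(\forest) = \sum_{\Nod_1} \binom{T_{\tau,\Nod_1}}{r}$ on which the rest of your argument rests. The paper defines ``ancestor'' non-strictly, so a node is its own ancestor; hence two $\tau$-typed nodes (both at depth $d$) that happen to be \emph{immediate siblings} fail independence, since one is the immediate sibling of an ancestor (namely the node itself) of the other. Such pairs do occur: in an idempotent node whose children all have image $e$, adjacent middle children share the same left and right monoid context by idempotency, so whenever their frontier words coincide they have identical type $\tau$. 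Your binomial therefore overcounts the inner term, the inequality ``each summand $\ge g'(\forest)$'' is not justified, and neither the nonnegativity of $g''$ nor its growth bound follows from what you wrote.

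This obstruction is exactly what the paper's proof works around. After removing $3k_1$ nodes to absorb the cross-obstructions (your $C$), the paper does \emph{not} take all $r$-subsets of the remaining set $C_{\Nod_1}$; it takes only those with no two elements adjacent in the $\prec$-order restricted to $C_{\Nod_1}$. Two immediate siblings at depth $d$ are always adjacent in this sense (no other depth-$d$ node lies between them), so the condition excludes the sibling obstruction; the resulting count $P_r(|A(\forest)|-3k_1) = \binom{|A(\forest)|-3k_1-r+1}{r}$ is independent of $\Nod_1$ and serves as $g'$. The valid $\Nod_2$ that do contain an adjacent pair (sibling or not) are pushed into $g''$, where the adjacency removes one degree of freedom and yields growth $\le k-1$.
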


The rest of this subsection is devoted to the proof of Sublemma \ref{slem:decompose-direct}.
We suppose that the $\forest_1$ and $\forest_n$
have no iterable nodes, thus $\cou_{\tree{\Le}}(\forest) = |\{\Nod \in \Indeps{\forest} :
\Types{\forest}{\Nod} = \Le \}|$
(this assumption is just used to simplify the notations).
We first show how to decompose $ \cou_{\tree{\Le}}$ as
a sum indexed by the independent sets of type $\Le_1$. Let $k_1 \defined k{-}r \ge 0$
(this way $|\Le_1| = k_1$) and $A(\forest) \defined  \left\{\nod \in \itera{\forest}:
\Types{\forest}{\{\nod\}} = \tau\right\}$.

\begin{claim} \label{claim:decdec}
$\displaystyle \cou_{\tree{\Le}}(\forest)
= \sum_{\substack{\Nod_1 \in \Indep^{k_1}(\forest) \\ \Types{\forest}{\Nod_1} = \Le_1}}
\left|\left\{ \Nod_2 \subseteq A(\forest) : \Nod_1 {\cup} \Nod_2 \in \Indeps{\forest} \right\}\right|$
\end{claim}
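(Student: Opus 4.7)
The plan is to prove Claim~\ref{claim:decdec} by constructing an explicit bijection between the set counted by the left-hand side and the family of pairs indexed by the right-hand side. The left-hand side is $\{\Nod \in \Indeps{\forest} : \Types{\forest}{\Nod} = \Le\}$, while the right-hand side sums, over $\Nod_1 \in \Indep^{k_1}(\forest)$ with $\Types{\forest}{\Nod_1} = \Le_1$, the number of $\Nod_2 \subseteq A(\forest)$ with $\Nod_1 \cup \Nod_2 \in \Indeps{\forest}$. The crucial observation is that the hypothesis $\tau \not\in \Le_1$ makes it unambiguous how to split a multiset of type $\Le$ into an $\Le_1$-part and a $\tau$-part: each node of $\Nod$ lands on one side or the other purely according to whether its type equals $\tau$.

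First I would define the forward map $\Nod \mapsto (\Nod_1, \Nod_2)$ by setting $\Nod_2 \defined \Nod \cap A(\forest)$ and $\Nod_1 \defined \Nod \smallsetminus \Nod_2$. Because $\Types{\forest}{\Nod} = \Le_1 \uplus \multi{\tau \con r}$ and $\tau$ does not occur in $\Le_1$, exactly $r$ nodes of $\Nod$ have type $\tau$; hence $|\Nod_2| = r$, $|\Nod_1| = k_1$, and $\Types{\forest}{\Nod_1} = \Le_1$. Independence is inherited by any submultiset of an independent multiset, so $\Nod_1 \in \Indep^{k_1}(\forest)$, and $\Nod_1 \cup \Nod_2 = \Nod \in \Indeps{\forest}$ by hypothesis.

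Conversely, the reverse map sends $(\Nod_1, \Nod_2) \mapsto \Nod_1 \cup \Nod_2$. Since $\Indeps{\forest}$ stands for $\Indep^k(\forest)$, the condition $\Nod_1 \cup \Nod_2 \in \Indeps{\forest}$ already enforces $|\Nod_2| = k - k_1 = r$. The union is disjoint because all nodes of $\Nod_1$ have types in $\Le_1$ (which excludes $\tau$) while every node of $\Nod_2 \subseteq A(\forest)$ has type $\tau$. Therefore $\Types{\forest}{\Nod_1 \cup \Nod_2} = \Le_1 \uplus \multi{\tau \con r} = \Le$, so $\Nod_1 \cup \Nod_2$ belongs to the left-hand set. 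The two maps are mutually inverse by construction, which yields the equality of cardinals. I do not anticipate any real obstacle: the argument is a standard bijective count, and the only subtle point is that the cardinality constraint $|\Nod_2| = r$ is not written explicitly on the right-hand side because it is already forced by the convention $\Indeps{\forest} = \Indep^k(\forest)$. The depth hypothesis on $\tau$ in the statement of Sublemma~\ref{slem:decompose-direct} plays no role here; it is used only in the outer induction to ensure that repeatedly peeling off a type of minimal depth reduces $\Le$ in a well-founded manner.
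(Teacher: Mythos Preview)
Your proof is correct and follows essentially the same approach as the paper: both construct the bijection $\Nod \mapsto (\Nod_1,\Nod_2)$ by splitting $\Nod$ into the nodes of type $\tau$ and the rest, relying on $\tau \notin \Le_1$ for disjointness. Your version is in fact more detailed, explicitly verifying that $|\Nod_2|=r$ is forced by the convention $\Indeps{\forest}=\Indep^k(\forest)$ and that independence passes to submultisets; the paper's proof leaves these checks implicit.
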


\begin{claimproof} Since $\tau \not \in \Le_1$, the function
$\Nod \mapsto (\Nod \cap \{\nod : \Types{\forest}{\{\nod\}} \in \Le_1\},
\Nod \cap \{\nod : \Types{\forest}{\{\nod\}} = \tau\})$
is a bijection between the set of sets
$\left\{\Nod \in \Indeps{\forest} : \Types{\forest}{\Nod} = \Le \right\}$
and the set of couples of sets $\{(\Nod_1, \Nod_2) : \Types{\forest}{\Nod_1}
= \Le_1, \Nod_2 \subseteq A(\forest) \text{~and~}
\Nod_1 {\cup} \Nod_2 \in \Indeps{\forest} \}$
\end{claimproof}

We shall give two constructions for $g'$ and $g''$, depending on
whether $|A(\forest)| < 3k_1 + 2r$ or not. Since this condition
is a regular property of $\forest$, it can be checked before the computation
and does not matter.
To simplify the notations, we shall describe transducers which range over the
nodes of $\forest$. This can be implemented in practice
by ranging over the opening $\lefttree$ corresponding to the nodes.
Using this convention, the ordering $<$ on the positions
of $\forest$ (seen as a word) induces a total ordering $\pc$ on $\Nodes{\forest}$.
Furthermore, the transducer can access (using a lookahead)
the type of a node, check if two nodes marked by its pebbles
are independent or not, or if one is $\pc$ than another.
Finally, note that ranging over ordered tuples of nodes $\nod_1 \pc \dots \pc \nod_{k}$
exactly corresponds to ranging over sets of $k$ nodes.

\subparagraph*{First case: if $|A(\forest)| < 3k_1 + 2r$.}
We set $g'(\forest) \defined 0$ and
$g''(\forest) \defined \cou_{\tree{\Le}}(\forest) $. We show in Algorithm \ref{algo:cnt1}
how to implement $g''$ with
$k$ pebbles which range over the sets of independent nodes.
Furthermore, $g''$ has growth at most
$k_1 \le k{-}1$ by Claim~\ref{claim:decdec} since for a given set $\Nod_1$, there is only a bounded
number of sets $\Nod_2 \subseteq A(\forest)$ such that $\Nod_1 \cup \Nod_2 \in \Indeps{\forest}$.

\begin{algorithm}[h!]
\SetKw{KwVar}{Variables:}
\SetKwProg{Fn}{Function}{}{}
\SetKw{In}{in}
\SetKw{Let}{let}
\SetKw{Out}{Output}
 \Fn{$\cou_{\tree{\Le}}(\forest)$}{

		\For{$\nod_1 \pc \dots \pc \nod_{k} \in \itera{\forest}{}$}{
			\If{${\set{\nod_1, \dots, \nod_k} \in \Indep^{k}(\forest)}$
			\tnorm{and}		
			$\Types{\forest}{\set{\nod_1, \dots, \nod_k}} = \Le$}{
						
			\Out{$1$}

			}
			
		}}
	
 \caption{\label{algo:cnt1} First case: computing $\cou_{\tree{\Le}}$ with a pebble transducer}
\end{algorithm}

\subparagraph*{Second case: if $|A(\forest)| \ge 3k_1 + 2r$.}
Given $\Nod_1 \in \Indep^{k_1}(\forest)$
such that $\Types{\forest}{\Nod_1} = \Le_1$,
we define $A_{\Nod_1}(\forest) \defined
\{\nod \in A(\forest) : \{\nod\} \cup \Nod_1 \in \Indep^{k_1+1}(\forest) \} \subseteq A(\forest)$.

\begin{claim}
If $\Types{\forest}{\Nod_1} = \Le_1$, we have
$|A(\forest) \smallsetminus A_{\Nod_1}(\forest)| \le 3k_1$
\end{claim}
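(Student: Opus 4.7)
The plan is to show that each $\nod \in A(\forest) \smallsetminus A_{\Nod_1}(\forest)$ can be charged to some $\nod' \in \Nod_1$ in a way that at most $3$ nodes are charged to each $\nod'$; since $|\Nod_1| = k_1$, this gives the bound $3k_1$.

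First I would observe that $A(\forest) \cap \Nod_1 = \varnothing$. Indeed, every node in $A(\forest)$ has type $\tau$, whereas $\Types{\forest}{\Nod_1} = \Le_1$ and $\tau \notin \Le_1$ by the hypothesis of Sublemma~\ref{slem:decompose-direct}. So if $\nod \in A(\forest) \smallsetminus A_{\Nod_1}(\forest)$, the multiset $\{\nod\} \cup \Nod_1$ has $k_1 + 1$ distinct elements and fails to be independent. Since $\Nod_1$ itself is already independent, the violation must come from a pair $(\nod, \nod')$ with $\nod' \in \Nod_1$. I then assign each such $\nod$ to one such witness $\nod'$ (any choice works).

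Next, I would fix $\nod' \in \Nod_1$ and count the number of possible $\nod \in A(\forest)$ (so $\nod \ne \nod'$, with $\depth{\forest}{\nod} = d$) for which the pair $(\nod, \nod')$ violates independence, using the crucial depth hypothesis $d \le d' := \depth{\forest}{\nod'}$. The three kinds of violation require $\nod$ to be an ancestor of $\nod'$, or an immediate left/right sibling of an ancestor of $\nod'$; the symmetric conditions with $\nod$ and $\nod'$ swapped need $\nod'$ to be a non-strict ancestor of $\nod$ (at depth $d' \ge d$), which forces $\nod = \nod'$ and is excluded, or $\nod'$ to be an immediate sibling of an ancestor of $\nod$ at depth $d'$, which again forces $d' = d$ and collapses into the symmetric case. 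So the bad $\nod$ lies in the depth-$d$ slice consisting of: the unique ancestor of $\nod'$ at depth $d$ (if $d < d'$, else $\nod'$ itself, which is excluded), together with its two immediate siblings. That yields at most $3$ candidates per $\nod'$.

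Summing over the $k_1$ nodes of $\Nod_1$ gives $|A(\forest) \smallsetminus A_{\Nod_1}(\forest)| \le 3 k_1$, as required. The only subtlety I anticipate is the careful enumeration of the six bad conditions from the definition of independent multisets, and verifying that the non-strict convention for ``ancestor'' combined with the depth inequality $d \le d'$ really collapses the symmetric cases, so that no contribution is double-counted or missed.
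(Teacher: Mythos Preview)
Your proof is correct and follows essentially the same approach as the paper: both use the depth constraint $d \le d'$ to argue that a node $\nod \in A(\forest)$ violating independence with some $\nod' \in \Nod_1$ must be the depth-$d$ ancestor of $\nod'$ or one of its two immediate siblings, yielding at most $3$ bad nodes per element of $\Nod_1$. Your version is more detailed in spelling out why the symmetric violation conditions (with the roles of $\nod$ and $\nod'$ swapped) collapse under the depth inequality, which the paper leaves implicit.
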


\begin{claimproof} The nodes of $A$ have a fixed
depth $d$, which is $\le$ than the depths of the nodes
of $\Nod_1$. Hence $A(\forest) \smallsetminus A_{\Nod_1}(\forest)$
contains the nodes of $A(\forest)$ which are either an ancestor
or the sibling of an ancestor of a node from $\Nod_1$,
and there are at most $3|\Nod_1| = 3k_1$ such nodes.
\end{claimproof}

Since $\pc$ is a total ordering on $\Nodes{\forest}$,
let $B_{\Nod_1}(\forest)$
denote the \mbox{$3k_1{-}|A(\forest) \smallsetminus A_{\Nod_1}(\forest)| \ge 0$} first
elements of $A_{\Nod_1}(\forest)$ (with respect to $\pc$) and
$C_{\Nod_1}(\forest) \defined A_{\Nod_1}(\forest) \smallsetminus B_{\Nod_1}(\forest)$.
It follows immediately that $|C_{\Nod_1}(\forest)| = |A(\forest)| - 3k_1 \ge 2r$.
We say that two nodes $\nod \pc \nod' \in C_{\Nod_1}(\forest)$ are \emph{close}
if there is no $\nod'' \in C_{\Nod_1}(\forest)$ such that $\nod \pc \nod'' \pc \nod'$.
Since $\Nod_1 \in \Indep^{k_1}(\forest)$, we have:
\begin{equation}
\begin{aligned}
\label{eq:dix}
&\left\{ \Nod_2 \subseteq A(\forest) :
\Nod_1 {\cup} \Nod_2 \in \Indeps{\forest} \right\}\\
&= \left\{ \Nod_2 \subseteq A_{\Nod_1}(\forest) :
\Nod_1 {\cup} \Nod_2 \in \Indeps{\forest} \right\}\\
&= \left\{ \Nod_2 \subseteq C_{\Nod_1}(\forest) :
\Nod_1 {\cup} \Nod_2 \in \Indeps{\forest} \text{ and no } t,t' \in \Nod_2 \text{ are close} \right\}\\
&\uplus \left\{ \Nod_2 \subseteq C_{\Nod_1}(\forest) :
\Nod_1 {\cup} \Nod_2 \in \Indeps{\forest} \text{ and } \exists t,t' \in \Nod_2 \text{ which are close} \right\}\\
&\uplus \left\{ \Nod_2 \subseteq A_{\Nod_1}(\forest) :
\Nod_1 {\cup} \Nod_2 \in \Indeps{\forest} \text{ and } \Nod_2 \cap B_{\Nod_1}(\forest) \neq \varnothing \right\}.\\
\end{aligned}
\end{equation}

Let us consider the function $P_r : \Nat \rightarrow \Nat$ which maps
$X \ge 0$ to the cardinal of the set $W$ of words $w \in \{0,1\}^{X}$ such that $|w|_1 = r$
and there are no two consecutive $1$ in $w$.

\begin{claim} For all $\Nod_1 \in \Indep^{k_1}(\forest)$ such that 
$\Types{\forest}{\Nod_1} = \Le_1$, we have
$ P_{r}(|A(\forest)| {-} 3k_1)$
$ =  |\{ \Nod_2 \subseteq C_{\Nod_1}(\forest) :
\Nod_1 {\cup} \Nod_2 \in \Indeps{\forest} \text{ and no } t,t' \in \Nod_2 \text{ are close} \}|$.
\end{claim}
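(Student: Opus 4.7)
My plan is to exhibit an explicit bijection between the set on the right-hand side and the set of binary words $w \in \{0,1\}^X$ with $|w|_1 = r$ and no two consecutive $1$s, where $X \defined |C_{\Nod_1}(\forest)|$; by definition that second set has cardinality $P_r(X) = P_r(|A(\forest)|-3k_1)$. First, note that although the displayed set is not written with a cardinality constraint, the bound $|\Nod_2|=r$ is implicit throughout this part of the argument: in the partition leading up to this claim we already need $\Types{\forest}{\Nod_2} = \multi{\tau\con r}$, and since every element of $C_{\Nod_1}(\forest) \subseteq A(\forest)$ carries exactly the same type $\tau = (m\cro{u}n, d)$, this forces $|\Nod_2|=r$.

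Next, I would enumerate $C_{\Nod_1}(\forest)$ in the total $\pc$-order as $c_1 \pc \cdots \pc c_X$ and send each $\Nod_2$ to its indicator word $w \in \{0,1\}^X$, defined by $w_i = 1 \iff c_i \in \Nod_2$. The constraint $|\Nod_2| = r$ translates immediately into $|w|_1 = r$; moreover, by the very definition of closeness (two nodes of $C_{\Nod_1}(\forest)$ are close iff they are $\pc$-consecutive inside $C_{\Nod_1}(\forest)$), the requirement ``no two elements of $\Nod_2$ are close'' corresponds exactly to ``no two $1$s of $w$ sit at consecutive positions''. This part of the argument is purely bookkeeping.

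The main obstacle, and really the only nontrivial step, will be to show that the condition $\Nod_1 \cup \Nod_2 \in \Indeps{\forest}$ is actually redundant under the other constraints, so that the bijection really counts what the statement asks for. Since $\Nod_2 \subseteq C_{\Nod_1}(\forest) \subseteq A_{\Nod_1}(\forest)$, the very definition of $A_{\Nod_1}(\forest)$ ensures that each single $\nod \in \Nod_2$ is compatible with $\Nod_1$, so only pairs within $\Nod_2$ can cause trouble. Because every element of $A(\forest)$ has depth exactly the $d$ recorded in $\tau$, neither of two distinct nodes $t \pc t'$ in $\Nod_2$ can be an ancestor of the other, and the only way $\set{t,t'}$ could fail to be independent is that $t$ is the immediate left sibling of $t'$. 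The decisive observation I will use is that if $t,t'$ were immediate siblings at depth $d$, then any node whose opening bracket lies strictly between those of $t$ and $t'$ would necessarily be a proper descendant of $t$ and hence have depth $> d$, so it could not belong to $A(\forest)$ and a fortiori not to $C_{\Nod_1}(\forest)$; this would make $t$ and $t'$ close in $C_{\Nod_1}(\forest)$, contradicting the hypothesis. Hence all pairs in $\Nod_2$ are independent, the redundancy is established, and the bijection yields $P_r(|A(\forest)|-3k_1)$ as claimed.
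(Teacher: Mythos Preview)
Your proposal is correct and follows essentially the same approach as the paper: show that the independence condition $\Nod_1 \cup \Nod_2 \in \Indeps{\forest}$ is redundant given ``no two close'', then count via the obvious bijection with $\{0,1\}^{|C_{\Nod_1}(\forest)|}$-words. The paper's own proof simply asserts the redundancy in one line; your depth-$d$ argument (immediate siblings would have nothing of $A(\forest)$ between them in $\pc$, hence would be close in $C_{\Nod_1}(\forest)$) is exactly the missing justification. One small correction: the size constraint $|\Nod_2|=r$ is already encoded by $\Nod_1 \cup \Nod_2 \in \Indeps{\forest}=\Indep^{k}(\forest)$ together with $|\Nod_1|=k_1$ and $\Nod_1 \cap A(\forest)=\varnothing$ (since $\tau\notin\Le_1$), rather than by an extra types condition as you phrase it---but the conclusion is the same.
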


\begin{claimproof} We first see that $ \{ \Nod_2 \subseteq C_{\Nod_1}(\forest) :
\Nod_1 {\cup} \Nod_2 \in \Indeps{\forest} \text{ and no } t,t' \in \Nod_2 \text{ are close} \}$
\linebreak
$= \{ \Nod_2 \subseteq C_{\Nod_1}(\forest) :
\text{no } t,t' \in \Nod_2 \text{ are close} \}$. Finally, we note that $|C_{\Nod_1}(\forest)| = |A(\forest)| - 3k_1$
and that $P_r$ exactly counts subsets of $r$ nodes
such that no two of them are close.
\end{claimproof}

Since this cardinal \emph{does not} depend on $\Nod_1$,
we define $g'(\forest) \defined P_{r}(|A(\forest)| {-} 3k_1)$. 

\begin{claim} \label{claim:P-poly}
The function $g'$ is polyblind 
and has growth at most $r$.
\end{claim}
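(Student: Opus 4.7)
The plan is to write $g'$ as a polynomial of degree $r$ in the regular function $h \defined |A(\cdot)|$, then convert that polynomial into a polyblind function via a residue-class trick. I would first identify $P_r(X)$ with $\binom{X - r + 1}{r}$ via the bijection $\{i_1 < \cdots < i_r\} \mapsto \{i_k - (k{-}1)\}_k$, which sends an $r$-subset of $[1, X]$ with no two consecutive entries to an arbitrary $r$-subset of $[1, X-r+1]$. Thus $P_r$ is a polynomial of degree $r$ in $X$ taking nonnegative integer values, and it vanishes for $X < 2r - 1$; since we are in the case $|A(\forest)| \ge 3k_1 + 2r$, the value $g'(\forest) = P_r(|A(\forest)| - 3k_1)$ always lies in the degree-$r$ regime.

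Next, I would verify that $h(\forest) \defined |A(\forest)|$ is regular. Both the depth and the linearization of an iterable node $\nod$ in a $\mu$-forest of height at most $\hei$ can be read off locally from the bracket structure of $\forest \in (\apar{A})^*$ and from the monoid images of sibling subtrees; hence the condition $\Types{\forest}{\set{\nod}} = \tau$ is regular, and a bimachine can output $1$ at each opening bracket corresponding to a node of $A(\forest)$. The predicate $|A(\forest)| \ge 3k_1 + 2r$ used to switch between the two cases of the construction is likewise regular.

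The main obstacle is the last step: expressing $P_r(h - 3k_1)$ as a polyblind function, since the formula $P_r(X) = \frac{1}{r!}(X-r+1)(X-r)\cdots(X-2r+2)$ involves a division by $r!$ that is not a closure operation in Theorem \ref{theo:hadamard}. I would handle this by a case analysis on the residue of $h$ modulo $r!$. The indicator of each residue class is regular (finite-state modular counting), and within a fixed class there is a distribution $r! = d_0 d_1 \cdots d_{r-1}$ such that $d_i$ divides the $i$-th factor of the product (this is the integrality witness for binomial coefficients, obtained by locating $p$-adic valuations among $r$ consecutive integers). Each quotient $(h - c)/d$ is regular because a bimachine can output $1$ at every $d$-th position of $A(\forest)$ past a suitable offset, so within each residue class the product becomes a Hadamard of $r$ integer-valued regular functions. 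Summing these Hadamard products over the (finitely many) residue classes, weighted by the regular class indicators, expresses $g'$ as a polyblind function via Theorem \ref{theo:hadamard}. The growth bound $g'(\forest) = \mc{O}(|\forest|^r)$ then follows from $h(\forest) \le |\forest|$ and $\deg P_r = r$.
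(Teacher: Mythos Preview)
Your proposal is correct but handles the division by $r!$ differently from the paper. Both you and the paper first identify $P_r(X) = \binom{X-r+1}{r}$ (via essentially the same bijection, phrased slightly differently) and observe that $r!\cdot g'(\forest)$ is a product of $r$ affine functions of the regular quantity $|A(\forest)|$, hence a Hadamard product of regular functions. The divergence is in the last step. The paper simply notes that dividing by $r!$ is a post-composition by the regular function $n \mapsto \lfloor n/r!\rfloor$ on unary alphabets, and invokes the closure of polyblind functions under post-composition by regular functions from~\cite{nous2020comparison}. Your residue-class argument instead stays entirely within the sum/Hadamard closure of Theorem~\ref{theo:hadamard}: for each residue of $|A(\forest)|$ modulo $r!$ you distribute the prime powers of $r!$ among the $r$ consecutive factors (which is always possible, prime by prime, since $r!$ divides the product of $r$ consecutive integers), making each individual quotient a regular function. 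Your route is more hands-on and self-contained (it avoids the external closure result), while the paper's is a one-line black-box application; both yield the growth bound $\mc{O}(|\forest|^r)$ immediately from $\deg P_r = r$.
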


\begin{proof} The function which
maps some $w \in W$ to itself in which each $10$ factor
(excepted the last one) is replaced by $1$, is a
bijection between $W$ and $\{ w \in \{0,1\}^{X-r+1} : |w|_r = 1\}$.
Hence $ P_r(X) = \binom{X{-}r{+}1}{r}
= \frac{(X{-}r{+}1)!}{r! (X{-}2r{+}1)!}$,
and thus
$g'(\forest) = \frac{1}{r!}{} \prod_{i=0}^{r-1} (|A(\forest)| {-} 3k_1{-}r{-}i{+}1)$.
It is clear that $r! {\times} g'$ is a polyblind function, since it
is the Hadamard product of $r$ regular functions. Then, dividing
by $r!$ consists in a post-composition by a regular function (with
both unary input and output alphabets),
which preserves polyblindness \cite{nous2020comparison}.
\end{proof}

If we denote by $c_{\Nod_1}(\forest)$ the cardinal
of the two last terms of Equations \ref{eq:dix}, we get:
\begin{equation*}
\begin{aligned}
\cou_{\tree{\Le}}(\forest)
= g'(\forest) \times \cou_{\tree{\Le_1} }(\forest)
+ \underbrace{\sum_{\substack{\Nod_1 \in \Indep^{k_1}(\forest) \\ \Types{\forest}{\Nod_1} = \Le_1}}
c_{\Nod_1}(\forest)}_{g''(\forest)}.
\end{aligned}
\end{equation*}

The function $g''$ is polyregular
(it can be computed by ranging over all possible
sets $\Nod_1$ as in Algorithm~\ref{algo:cnt1}
and then sets $\Nod_2$). Furthermore,
it has growth at most $k{-}1$ since
$c_{\Nod_1}(\forest)$ has growth at most $r{-}1$ (and
the bound is independent from $\Nod_1$).
Indeed, it has either two elements which are close in $A_{\Nod_1}(\forest)$,
or one element which is among the $3k_1$ first ones:
in both cases there is one less degree of freedom
(like in Appendix~\ref{ssec:bound}).

\newpage

\end{document}